\newcommand{\RR}{\mathbb{R}}
\newcommand{\ZZ}{\mathbb{Z}}
\newcommand{\NN}{\mathbb{N}}
\newcommand{\TT}{\mathbb{T}}
\newcommand{\Ric}{\mathrm{Ric}}
\newcommand{\osc}{\mathrm{osc}}
\newcommand{\av}{\mathrm{av}}
\newcommand{\de}{\mathrm{d}}
\newcommand{\mc}{\mathcal}
\newcommand{\mf}{\mathfrak}
\newcommand{\ol}{\overline}
\newcommand{\wh}{\widehat}
\newcommand{\wt}{\widetilde}
\newcommand{\lesa}{\lesssim}
\newcommand{\ve}{\varepsilon}
\newcommand{\pa}{\partial}
\newcommand{\bs}{\backslash}
\newcommand{\ra}{\rightarrow}
\title{Future stability of solutions of the Einstein-nonlinear scalar field system with decelerated expansion}
\author{Louie Bernhardt}
\affil{\small University of Melbourne, School of Mathematics and Statistics, Parkville~VIC~3010, Australia}
\date{}
\begin{document}
\numberwithin{equation}{section}
\newtheorem{theorem}{Theorem}[section]
\newtheorem{proposition}[theorem]{Proposition}
\newtheorem{corollary}[theorem]{Corollary}
\newtheorem{lemma}[theorem]{Lemma}
\theoremstyle{definition}
\newtheorem{definition}[theorem]{Definition}
\theoremstyle{remark}
\newtheorem{remark}[theorem]{Remark}
\maketitle
\begin{abstract}
    We study solutions to the Einstein equations coupled to a nonlinear scalar field with exponential potential. This system admits Friedmann-Lema\^itre-Robertson-Walker solutions undergoing decelerated expansion, with $\TT^3$ spatial topology and scale factor $a(t) = t^p$ for $1/3 < p < 1$. For each $p \in (2/3,1)$, we prove that the corresponding FLRW spacetime is future-stable as a solution to the Einstein-nonlinear scalar field system. Given initial data on a spacelike hypersurface that is sufficiently close to the FLRW data, we show the resulting solution is future-causal geodesically complete, and remains close to the FLRW solution for all time. Moreover, we show the perturbed metric components and scalar field converge to spatially homogeneous functions as $t \ra \infty$. A key feature of our analysis is the decomposition of the metric and scalar field perturbations into their spatial averages and oscillatory remainders with zero average.
\end{abstract}
\small\tableofcontents
\normalsize\section{Introduction}
The Friedmann-Lema\^itre-Robertson-Walker (FLRW) spacetimes are a family of Lorentzian manifolds that are fundamental objects in the study of cosmology \cite{Fri:FLRW:22,Rin:cosmologyreview:25}. They are spatially homogeneous and isotropic solutions to the Einstein equations, typically coupled to some matter model. FLRW spacetimes with flat spatial geometry have a metric of the form
\begin{equation}\label{sec:intro.eq:FLRWmetric}
    \wt{g} = -\de t^2 + a^2(t) \delta_{ij}\de x^i \de x^j,
\end{equation}
where the scale factor $a:[t_0,\infty) \ra \RR^+$ is a function of time coordinate $t$. If $a$ is strictly increasing then we say the spacetime is \emph{expanding}. In this paper we consider solutions to Einstein's equations that are undergoing \emph{decelerated} expansion, which for the metric \eqref{sec:intro.eq:FLRWmetric} corresponds to the case $a'(t) > 0$, $a''(t) < 0$. A simple model for spacetimes undergoing decelerated expansion are FLRW spacetimes with spatial factor $a(t) = t^p$, where the expansion exponent $p \in (0,1)$.

One mechanism by which decelerated expansion occurs is through the coupling of the Einstein equations to a scalar field with an exponential potential. The resulting Einstein-nonlinear scalar field (ENSF) system are geometric equations on a Lorentzian manifold $(\mc{M},\wt{g})$ which read
\begin{subequations}\label{sec:intro.eq:ensf}
    \begin{align}
        \Ric_{\mu\nu}[\wt{g}] - \frac{1}{2}R{}{\wt{g}}_{\mu\nu} &= T_{\mu\nu},\label{sec:intro.eq:einstein}\\
        \square_{\wt{g}}\phi - V'(\phi) &= 0\label{sec:intro.eq:scalarfield},
    \end{align}
\end{subequations}
where $\Ric_{\mu\nu}$ is the Ricci curvature tensor of $\wt{g}$, $R = \wt{g}^{\alpha\beta}\Ric_{\alpha\beta}$ is the Ricci scalar, 
\begin{equation}
    T_{\mu\nu} = \pa_\mu \phi \pa_\nu \phi - \frac{1}{2}\wt{g}_{\mu\nu}(\wt{g}^{\alpha\beta}\pa_\alpha \phi \pa_\beta \phi + 2V(\phi))
\end{equation}
is the stress-energy momentum tensor of the scalar field $\phi$, and $\square_{\wt{g}} := \wt{g}^{\alpha\beta}\pa_\alpha\pa_\beta - \wt{g}^{\alpha\beta}\Gamma_{\alpha\beta}^\lambda(\wt{g})$ is the scalar wave operator with respect to $\wt{g}$. The function $V(\phi)$ is an exponential potential of the form
\begin{equation}\label{sec:intro.eq:potential}
    V(\phi) = V_0 \exp(-(2/p)^{1/2}\,\phi),
\end{equation}
where $V_0$, $p$ are constants such that $V_0 > 0$, $p > 1/3$.
The equations \eqref{sec:intro.eq:ensf} describe the evolution of the metric $\wt{g}$ and scalar field $\phi$ on $\mc{M}$. We consider these equations in the physically relevant $3+1$ spacetime dimensions. Our analysis does not rely in a fundamental way on this assumption, however, and we expect a suitable extension of the results in this paper to hold for general $n+1$ dimensions with $n \geq 3$.

For fixed $p > 1/3$, the Einstein-nonlinear scalar field system with potential \eqref{sec:intro.eq:potential} admits the expanding FLRW solution $((0,\infty)\times \TT^3,\wt{g}_b,\phi_b)$ given by
\begin{subequations}\label{sec:intro.eq:background}
    \begin{align}
        \wt{g}_b &= -\de t^2 + t^{2p} \delta_{ij}\,\de x^i \de x^j,\label{sec:intro.eq:backgroundmetric}\\
        \phi_b &= (2p)^{1/2} \log t + (p/2)^{1/2}\log\Big|\frac{V_0}{p(3p-1)}\Big|.\label{sec:intro.eq:backgroundscalarfield}
    \end{align}
\end{subequations}
\begin{remark}[Expansion exponents $p \leq 1/3$]
    One can also consider FLRW spacetimes with expansion exponent $p \in (0,1/3]$. The FLRW spacetime with $0 < p < 1/3$ solves the ENSF system with potential \eqref{sec:intro.eq:potential}, however the constant $V_0$ must be negative. Meanwhile the spacetime with $p = 1/3$ is a solution to the Einstein-(linear) scalar field system.
\end{remark}
\begin{remark}[Spatial topology]
    One can take the FLRW solution \eqref{sec:intro.eq:background} to have either $\TT^3$ or $\RR^3$ spatial topology. In this paper we consider the spatially compact $\TT^3$ case, and we fix the torus to be $\TT^3 := [-\pi,\pi]^3$ with the ends identified.
\end{remark}

The Einstein-nonlinear scalar field system with exponential potential has thus far been studied in the regime of \emph{accelerated expansion}, with spacetime exponent $p > 1$. Originally of interest as a model for power-law inflation in the early universe (see \cite{GroPetRin:quescientbigban:23} for example), much work has also been carried out in the future dynamics of solutions as $t \ra \infty$. The evolution of spatially homogeneous, isotropic solutions to the ENSF system was studied by Halliwell in \cite{Hal:ENSFFLRW:87}. In \cite{HeiRen:powerlawstab:07} Heinzle and Rendall investigated the \emph{stability} of FLRW solutions to the Einstein-nonlinear scalar field system outside of symmetry. For a discrete set of expansion exponents $p_n >1$ such that $p_n \searrow 1$, they showed that initially small perturbations of the corresponding FLRW solution are future geodesically complete, and remain close the the FLRW solution for all time. This result was extended by Ringstr\"om in \cite{Rin:powerlaw:09} to show stability of the FLRW solution for all $p > 1$. See also \cite{KitMae:cosmicnohair:92,LuoIse:powerlawmag:13,GroPetRin:quescientbigban:23}. 

A key difference between the regimes of accelerated and decelerated expansion is that their causal structures are very different. For spacetimes undergoing accelerated expansion, two late-time observers can eventually become causally disconnected, with their future domains of communication becoming disjoint after sufficient time.\footnote{This can also be interpreted as the existence of a \emph{cosmological horizon}, a boundary around any single observer, beyond which a large part of the spacetime is permanently hidden.} These properties assist in the analysis of such spacetimes, and in some cases turn global problems into entirely local ones \cite{Fri:deSitterstab:86}.

In contrast, such localisation does not occur in spacetimes with decelerated expansion. In the above FLRW solutions with $p \leq 1$, all observers ``see'' all other observers for all time. More concretely, their future domains of communication never become disjoint, as is depicted in Figure~\ref{sec:intro.fig:FLRWpicture}. This leads to more complicated behaviour of perturbations of these spacetimes. Additionally, it has been shown that the expansion rate of a given spacetime is closely related to the asymptotics of perturbations of that spacetime \cite{Rin:powerlaw:09}. In particular, more expansion generally yields more decay for perturbations and/or their derivatives.

This all raises the question of whether decelerated FLRW spacetimes are stable. In this paper, we show that these decelerated spacetimes are indeed stable in the range of expansion exponents $2/3 < p < 1$. We now give an abridged version of our main result; see \textbf{Theorem~\ref{sec:globex.thm:globalexistence}} for a full statement.
\begin{theorem}[Future nonlinear stability of decelerated solutions of the Einstein-nonlinear scalar field system]\label{sec:intro.thm:main}
Fix the spacetime exponent $2/3 < p < 1$, and let $((0,\infty)\times \TT^3,\wt{g}_b,\phi_b)$ be the FLRW spacetime \eqref{sec:intro.eq:background} that solves the Einstein-nonlinear scalar field system \eqref{sec:intro.eq:ensf} with potential \eqref{sec:intro.eq:potential}. This solution is globally future-stable under small perturbations. More specifically, each solution $(\mc{M},\wt{g},\phi)$ to the ENSF system with initial data that is sufficiently close to that of $(\wt{g}_b,\phi_b)$ has a maximal globally hyperbolic development which is future-causal geodesically complete. 

The metric components $\wt{g}_{tt}$, $\wt{g}_{it}:i=1,2,3$, and scalar field $\phi$ satisfy the asymptotic bounds
\begin{equation}\label{sec:intro.eq:asymptotics1}
    |\wt{g}_{tt} + 1|\lesa \frac{1}{t^{3p-2}}, \qquad \frac{1}{t^p}|\wt{g}_{it}| \lesa \frac{1}{t^{2p-1-\delta}},\qquad \frac{1}{\log t}|\phi - \phi_b| \lesa \frac{1}{t^{3p-2+\delta}},
\end{equation}
where $\delta > 0$ is some parameter that can in principle be made arbitrarily small. For the horizontal metric components $\wt{g}_{ij}:i,j=1,2,3$,
there exists a symmetric, positive-definite $3\times 3$ matrix $(g_\infty)_{ij}$ that is close to the identity matrix, and satisfies
\begin{equation}\label{sec:intro.eq:asymptotics2}
    |t^{-2p}\wt{g}_{ij} - (g_{\infty})_{ij}| \lesa \frac{1}{t^{2p-1-\delta}}.
\end{equation}
\end{theorem}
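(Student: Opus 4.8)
The plan is to recast the Einstein--nonlinear scalar field system as a coupled system of damped quasilinear wave equations in a wave gauge adapted to the background, exploit the expansion-induced friction together with the Poincar\'e inequality on $\TT^3$ to obtain polynomial decay of the perturbations, and close everything with a bootstrap. \textbf{Gauge and reduced system.} First I would impose a generalized wave-coordinate condition relative to the FLRW background --- schematically $\square_{\wt{g}}x^\mu = \wt{g}^{\alpha\beta}\widehat{\Gamma}^\mu_{\alpha\beta}$ with $\widehat{\Gamma}$ the Christoffel symbols of $\wt{g}_b$ --- so that \eqref{sec:intro.eq:einstein} becomes a system of quasilinear wave equations for the components $\wt{g}_{\mu\nu}$, while \eqref{sec:intro.eq:scalarfield} is already of this form for $\phi$. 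The twice-contracted Bianchi identity shows that the gauge condition propagates provided it and the Hamiltonian and momentum constraints hold on the initial hypersurface, which one arranges by solving the constraints perturbatively around the FLRW data. I would then pass to logarithmic time $\tau = \log t$, which turns the cosmological friction $\tfrac{3p}{t}\partial_t$ into constant-coefficient damping, and introduce rescaled perturbation variables normalized to the background scales --- roughly $u_{00} = \wt{g}_{00}+1$, $u_{0i} = t^{-p}\wt{g}_{0i}$, $u_{ij} = t^{-2p}\wt{g}_{ij}-\delta_{ij}$, $\psi = \phi-\phi_b$ --- obtaining a system of damped quasilinear wave equations for $(u,\psi)$ with quadratic-and-higher nonlinearities.

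\textbf{Average--oscillation split and energies.} For each scalar component $f$ of $(u,\psi)$ I would write $f = \bar{f}(\tau) + \mathring{f}(\tau,x)$, with $\bar{f} = (2\pi)^{-3}\int_{\TT^3}f\,\de x$ and $\int_{\TT^3}\mathring{f}\,\de x = 0$. The averaged variables satisfy a coupled ODE system whose sources are spatial averages of the nonlinearities, so that they feel the oscillatory parts only quadratically; the oscillatory variables satisfy damped wave equations on $\TT^3$ for which the Poincar\'e inequality $\|\mathring{f}\|_{L^2}\le C\|\nabla\mathring{f}\|_{L^2}$ holds. Building a high-order Sobolev energy $\mathcal{E}_{\mathrm{osc}}$ of wave type with carefully chosen $t$-weights and using the damping together with Poincar\'e, I would derive $\tfrac{\de}{\de\tau}\mathcal{E}_{\mathrm{osc}} \le -c\,\mathcal{E}_{\mathrm{osc}} + (\text{nonlinear error})$, i.e.\ polynomial decay $\mathcal{E}_{\mathrm{osc}}\lesa t^{-2\gamma}$ in $t$ with $\gamma$ fixed by $p$ --- this is the source of the rates $t^{-(2p-1-\delta)}$ for $u_{0i}$ and $u_{ij}$, the $\delta$ absorbing borderline logarithmic losses and the loss from commuting with the $t$-weights. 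Integrating the averaged ODE system with these decay bounds on its sources then yields $\bar{u}_{00}\to 0$ at rate $t^{-(3p-2)}$, $\bar{u}_{0i}\to 0$, $\bar{u}_{ij}\to (g_\infty)_{ij}-\delta_{ij}$, and $|\bar{\psi}|\lesa \log t\, t^{-(3p-2+\delta)}$.

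\textbf{Bootstrap, completeness, asymptotics.} I would run a continuity argument on a maximal interval $[t_0,T_*)$: assuming bootstrap bounds of the form $\mathcal{E}_{\mathrm{osc}}(t)\le \varepsilon^2 t^{-2\gamma}$ together with matching weighted bounds on the averaged quantities, feeding them into the estimates above improves all the bootstrap constants by a definite factor once $\varepsilon$ is small, and local well-posedness for the reduced system plus a continuation criterion then force $T_* = \infty$. Future-causal geodesic completeness follows by estimating the affine length of future-directed causal geodesics: the metric is a small perturbation of the future-complete FLRW metric that decays fast enough, so the relevant affine-length integrals still diverge. Finally, the asymptotics \eqref{sec:intro.eq:asymptotics1}--\eqref{sec:intro.eq:asymptotics2} follow from the integrability in $t$ of the time-derivatives of the averaged quantities --- which is exactly what $3p-2>0$ guarantees --- combined with the decay of the oscillatory parts and Sobolev embedding on $\TT^3$; a last step transfers the conclusions from wave-gauge coordinates back to the geometric statement and identifies the maximal globally hyperbolic development.

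\textbf{Main obstacle.} The crux is closing the energy estimates for the oscillatory modes in the narrow window $2/3<p<1$. The expansion is so slow that the friction $\sim 3p/t$ provides only weak polynomial decay and, unlike the accelerated case, the wave zone never closes ($\int^\infty t^{-p}\,\de t = \infty$, so modes oscillate infinitely often with no dispersive gain), so all the decay must be squeezed out of friction and Poincar\'e alone. This forces a very careful choice of the rescaled variables and the $t$-weights in the energy hierarchy so that every quadratic term --- in particular the derivative-quadratic terms in the reduced Einstein equations, which lack an obvious null structure --- is integrable against the available decay; the threshold $p=2/3$ is precisely where the slowest variable, $\wt{g}_{tt}+1$, ceases to decay, below which the hierarchy cannot be closed this way. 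A secondary difficulty is the interplay between commuting the equations with spatial derivatives, needed for $L^\infty$ control via Sobolev embedding, and the time-dependent rescalings --- this is what produces the small loss $\delta$.
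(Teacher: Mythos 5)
Your overall architecture closely mirrors the paper's: a background-adapted wave gauge reducing the ENSF system to damped quasilinear wave equations, an average--oscillation split on $\TT^3$, Poincar\'e plus cosmological friction to decay the oscillatory remainders, ODE analysis for the averages, a continuity/bootstrap argument, and completeness via a lower bound on affine length. One computational difference: the paper passes to the conformal time $\tau = t^{1-p}$ and the conformal metric $g = \tau^{-2p/(1-p)}\wt{g}$, so the background becomes exact Minkowski and all friction terms are of the form $\tau^{-1}\pa_\tau$, rather than your logarithmic time; both normalizations are reasonable.

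The step of your plan that would actually fail, as written, is the treatment of the averaged lapse and shift. After fixing wave coordinates, the averaged second-order equations for $h^{00}_\av$ and $h^{0i}_\av$ take the schematic form $\pa_\tau^2 v + \tfrac{a}{\tau}\pa_\tau v + \tfrac{b}{\tau^2}v = F_\av$ with $b<0$ --- this is visible in \eqref{sec:gaugedeqs.eq:wavelapse}--\eqref{sec:gaugedeqs.eq:waveshift} after dividing through by $g^{00}<0$ --- so the indicial exponents $\lambda_\pm = \tfrac{1}{2}(a-1)\pm\tfrac{1}{2}\sqrt{(a-1)^2-4b}$ satisfy $\lambda_-<0$, and ``integrating the averaged ODE system'' as you propose produces growth, not the $t^{-(3p-2)}$ decay your bootstrap needs. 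The idea missing from your proposal is the paper's workaround: estimate $h^{00}_\av$ and $h^{0i}_\av$ not from the second-order wave equations but directly from the first-order wave-coordinate constraints $\Gamma^\mu(g)=\mc{Y}^\mu(g)$. Averaged over $\TT^3$, these become first-order ODEs $\pa_\tau u_\av + \tfrac{\eta}{\tau}u_\av = F'_\av$ with $\eta>0$ supplied by the specific choice of gauge source functions \eqref{sec:gaugedeqs.eq:gaugesourcechoice} (cf.\ \eqref{sec:avgsys.eq:odeavglapseshiftproof}), and this does yield decay. Without that move the bootstrap for the averages cannot close anywhere in $2/3<p<1$, regardless of how the oscillatory energies are handled. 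Your geodesic-completeness sketch also needs sharpening: in the decelerated regime future-causal geodesics can wind around $\TT^3$ infinitely many times, so $\gamma^i$ generally has no limit and the accelerated-case argument based on spatial localization is unavailable; one instead needs two-sided bounds on $\dot\gamma^0$ from the geodesic equation as in Lemma~\ref{sec:completethm.lem:geodesicbound}.
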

\begin{remark}
    Theorem~\ref{sec:intro.thm:main} is a proof of \emph{future} stability, meaning we only prove stability in the expanding direction as $t \ra \infty$. The direction $t \ra 0$ corresponds to the study of power-law big bang singularities, we refer the interested reader to \cite{GroPetRin:quescientbigban:23}.
\end{remark}
\begin{remark}[Leading-order asymptotics]
    In the asymptotic estimates \eqref{sec:intro.eq:asymptotics1}, \eqref{sec:intro.eq:asymptotics2}, the various rescalings of the metric components $\wt{g}_{it}$, $\wt{g}_{ij}$ and the scalar field $\phi$ normalise the leading-order behaviour for the perturbations.\footnote{This is not immediately clear for the metric shift components $\wt{g}_{it}$, as their counterparts for the background solution $(\wt{g}_b)_{it}$ are identically zero. The rescaling comes about from a change to a conformal time coordinate $\tau = t^{1-p}$, which rescales all metric components to have the same leading-order asymptotic behaviour. Boundedness of the metric components in the conformal time coordinate translates back to boundedness of the quantities $t^{-p}\wt{g}_{it}$. See Remark~\ref{sec:globex.rmk:asymptotics} for more details.} It follows that $\wt{g}_{tt}$, $\wt{g}_{it}$, and $\phi$ all converge at leading order to their background FLRW counterparts. Meanwhile Theorem~\ref{sec:intro.thm:main} shows that the rescaled horizontal metric components $t^{-2p}\wt{g}_{ij}$ converge not to their background counterparts $\delta_{ij}$, but to a nearby metric $(g_\infty)_{ij}$ that is spatially homogeneous. This means that to leading order, the perturbed metric converges to a spatially homogeneous (but not isotropic) metric that is close to the background FLRW solution.
\end{remark}
\begin{remark}[Higher-order asymptotics]
    Theorem~\ref{sec:intro.thm:main} does not tell us whether there exists an asymptotic expansion for the perturbed metric or scalar field (past leading-order). In fact, we do not expect such an expansion to exists, as an aspect of the decelerated setting is higher-order oscillatory behaviour. This behaviour plays an important role in our analysis which we will explain further in Section~\ref{sec:intro.subsec:proof}.
\end{remark}

We discuss Theorem~\ref{sec:intro.thm:main} and its broader context in more detail in the remainder of this introduction. In Section~\ref{sec:intro.subsec:related}, we discuss previous work on related problems, and in Section~\ref{sec:intro.subsec:proof} we comment on the analysis carried out in this paper to prove Theorem~\ref{sec:intro.thm:main}. Finally we explain the breakdown of our proof for expansion exponents $p \leq 2/3$ in Section~\ref{sec:intro.subsec:breakdown}.
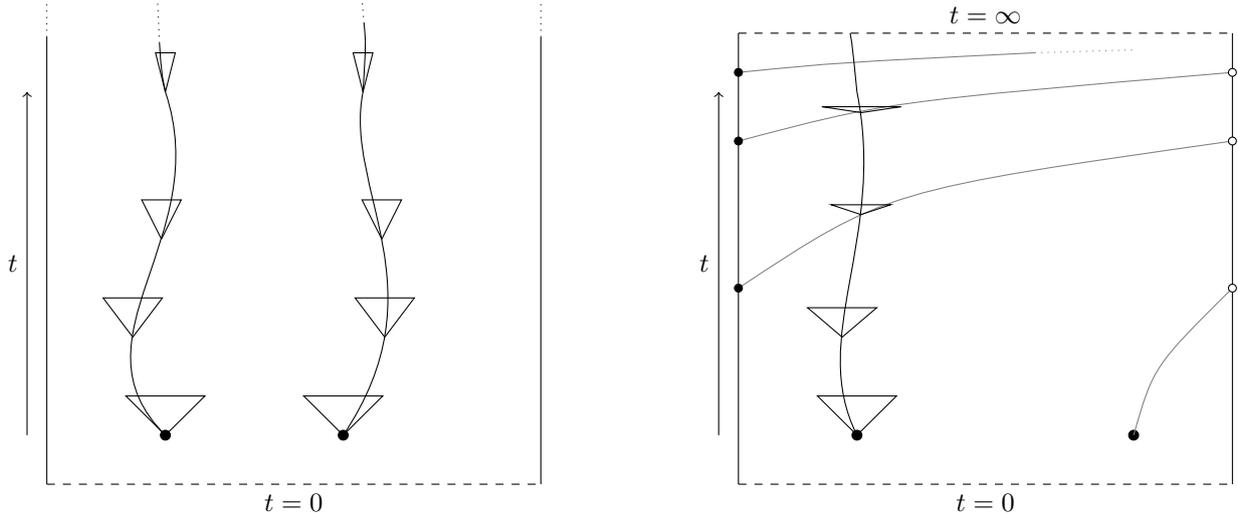
\begin{figure}\label{sec:intro.fig:FLRWpicture}
    \centering
    \begin{tikzpicture}[scale=1.3]
            \draw[dashed] (0,0) -- node[below]{$t = 0$}(5,0);
            \draw (0,0) --  (0,4.56);
            \draw[->] (-0.2,0.5) -- node[left]{$t$} (-0.2,4);
            \draw (5,0) -- (5,4.56);
            \draw[dotted] (0,4.56) -- (0,4.95);
            \draw[dotted] (5,4.56) -- (5,4.95);
            
            \filldraw (1.2,0.5) circle (0.05cm);
            \draw (1.2,0.5) .. controls (0.2,1.5) and (1.7,2.5) .. (1.2,4);
            \draw (1.2,4) .. controls (1.16,4.2) .. (1.14,4.5);
            \draw[dotted] (1.14,4.5) -- (1.12,4.95);
            \draw (1.2,0.5) -- (0.8,0.9) -- (1.6,0.9) -- (1.2,0.5);
            \draw (0.87,1.5) -- (0.57,1.9) -- (1.17,1.9) -- (0.87,1.5);
            \draw (1.16,2.5) -- (0.96,2.9) -- (1.36,2.9) -- (1.16,2.5);
            \draw (1.2,4) -- (1.1,4.4) -- (1.3,4.4) -- (1.2,4);

            \filldraw (3,0.5) circle (0.05cm);
            \draw (3,0.5) .. controls (4,2) and (3,3) .. (3.2,4);
            \draw (3.2,4) .. controls (3.23,4.5) .. (3.21,4.7);
            \draw (3,0.5) -- (2.6,0.9) -- (3.4,0.9) -- (3,0.5);
            \draw (3.42,1.5) -- (3.12,1.9) -- (3.72,1.9) -- (3.42,1.5);
            \draw (3.39,2.5) -- (3.19,2.9) -- (3.59,2.9) -- (3.39,2.5);
            \draw (3.2,4) -- (3.1,4.4) -- (3.3,4.4) -- (3.2,4);
            \draw[dotted] (3.21,4.7) -- (3.19,4.95);
            
            \draw[dashed] (7,0) -- node[below]{$t = 0$}(12,0);
            \draw (7,0) --  (7,4.6);
            \draw[->] (6.8,0.5) -- node[left]{$t$} (6.8,4);
            \draw (12,0) -- (12,4.6);
            \draw[dashed] (7,4.6) -- node[above]{$t = \infty$} (12,4.6);

            \filldraw (11,0.5) circle (0.05cm);
            \draw[gray] (11,0.5) .. controls (11.2,1.2) .. (12,2);
            \filldraw[fill=white] (12,2) circle (0.04cm);
            \draw[gray] (7,2) .. controls (8.5,3) .. (12,3.5);
            \filldraw (7,2) circle (0.04cm);
            \filldraw[fill=white] (12,3.5) circle (0.04cm);
            \draw[gray] (7,3.5) .. controls (8.5,3.9) .. (12,4.2);
            \filldraw (7,3.5) circle (0.04cm);
            \filldraw[fill=white] (12,4.2) circle (0.04cm);
            \draw[gray] (7,4.2) .. controls (8,4.3) .. (10,4.4);
            \filldraw (7,4.2) circle (0.04cm);
            \draw[gray,dotted] (10,4.4) --(11,4.43);

            \filldraw (8.2,0.5) circle (0.05cm);
            \draw (8.2,0.5) .. controls (7.7,1.5) and (8.5,2.5) .. (8.2,4);
            \draw (8.2,4) .. controls (8.15,4.5) .. (8.13,4.6);
            \draw (8.2,0.5) -- (7.8,0.9) -- (8.6,0.9) -- (8.2,0.5);
            \draw (8.05,1.5) -- (7.7,1.8) -- (8.4,1.8) -- (8.05,1.5);
            \draw (8.24,2.75) -- (7.94,2.85) -- (8.54,2.85) -- (8.24,2.75);
            \draw (8.245,3.79) -- (7.845,3.85) -- (8.645,3.85) -- (8.245,3.79);
        \end{tikzpicture}
    \caption{\small{The left diagram depicts (a portion of) a decelerated FLRW spacetime in $(x,t)$ coordinates, along with two causal observers. Due to the $\TT^3$ spatial topology, the left and right boundaries are identified. The relative velocity of the two observers is proportional to $t^{-p}$, which leads to the shrinking of the light cones. However, as shown in the compactified diagram on the right, this relative velocity does not decrease at a fast enough rate to localise the observers, and so so the two observers never become causally disconnected. In fact, observers can potentially traverse the entire spatial topology \emph{infinitely many times} along certain causal paths, as depicted by the grey observer. This matches the behaviour of geodesics in flat Minkowski space with $\TT^3$ spatial topology.}}
\end{figure}
\subsection{Related work}\label{sec:intro.subsec:related}
In the following we discuss some of the previous results in the literature, and how they relate to the present work.
\paragraph{Spacetimes undergoing decelerated expansion.}
Theorem~\ref{sec:intro.thm:main} is the first stability statement for decelerated spacetimes without additional symmetry assumptions. The recent work \cite{Tay:decelFLRWstab:24} studies the decelerated solutions to the Einstein-massless Vlasov system in spherical symmetry. The Einstein-massless Vlasov system describes the dynamics of an ensemble of massless, gravitationally attractive particles, and admits spatially homogeneous FLRW solutions with scale factor $a(t) = t^{1/2}$. In \cite{Tay:decelFLRWstab:24} spherically symmetric perturbations of these FLRW spacetimes (with $\RR^3$ spatial topology) are shown to be future-stable. We note that the noncompact spatial topology leads to additional decay via dispersion, and so we expect the spatially non-compact setting to have improved stability in comparison to the spatially compact setting. 

The stability of spacetimes undergoing decelerated expansion has also been investigated numerically, in the context of the Einstein-Euler system \cite{Mar:instabilityFLRW:25}. This system models a self-gravitating perfect fluid, and admits an FLRW solution with $\TT^3$ spatial topology undergoing decelerated expansion, provided the fluid satisfies a linear equation of state $P = c_s^2\rho$, where $P$ is the fluid pressure, $\rho$ is the fluid density, and $c_s$ is the fluid sound speed. For a given sound speed $c_s \in [0,1]$, the corresponding FLRW solution has a fixed expansion exponent in the range $1/3 \leq p \leq 2/3$.\footnote{The precise relation between the sound speed $c_s$ and expansion exponent $p$ is $p = \frac{2}{3}(1+c_s^2)^{-1}$, from which one can see that if $c_s^2 \in [0,1]$ then $p \in [1/3,2/3]$. Some important equations of state are (i) the \emph{dust} model with $c_s = 0$, $p = 2/3$, the \emph{radiation} model with $c_s=1/\sqrt{3}$, $p=1/2$, and the \emph{stiff fluid} model with $c_s=1$, $p=1/3$.} Numerical simulations carried out in \cite{Mar:instabilityFLRW:25} indicate that for all sound speeds $c_s \in [0,1]$, the FLRW solutions with $\TT^3$-spatial topology are \emph{unstable} as solutions to the Einstein-Euler system, with shocks forming in finite time for arbitrarily small perturbations of the FLRW solutions. We discuss this result and how it relates to the present paper in more detail in Section~\ref{sec:intro.subsec:breakdown}.

Related hyperbolic equations have been studied on \emph{fixed} decelerated FLRW backgrounds. The linear wave equation on FLRW spacetimes undergoing decelerated expansion has been studied in the works \cite{KlaSar:waveformulaFLRW:81,NatRos:waveformulaFLRW:23,Hag:waveFLRW:25}. The relativistic Euler equations on a fixed spacetime undergoing decelerated expansion have been considered in \cite{Spe:relEulerstab:13,Fajetal:decelEulerstab:25}.

\paragraph{Stability for linearly expanding spacetimes.}
Our result does not include the endpoint $p=1$ case corresponding to \emph{linear} expansion, meaning we do not prove the stability of linearly expanding FLRW solutions to the ENSF system. This specific problem is open, although linearly expanding solutions to the Einstein equations have been shown to be stable outside of symmetry. The first such result was \cite{AndMon:Milnestab:11}, in which Andersson and Moncrief proved that the \emph{Milne universe}, a linearly expanding spacetime with negative spatial Einstein geometry, is stable as a solution to the Einstein-vacuum equations. This has been extended in \cite{AndFaj:Milnestab:20} to include perturbations of Milne that contain Vlasov-type matter. We also mention \cite{FajOfnWya:linearduststab:24} which demonstrates stability of a class of linearly expanding FLRW solutions to the Einstein-dust system. Like the Milne model, these spacetimes also possess negative spatial curvature, and indeed this property is used in a fundamental way in all three papers. The spacetimes for which we prove stability do not have negative spatial geometry, and so the methods of proof here are different to those in \cite{AndMon:Milnestab:11,AndFaj:Milnestab:20,FajOfnWya:linearduststab:24}.
\paragraph{Stability in the accelerated expansion regime.}
In the setting of accelerated expansion, much more is understood regarding stability. We have already mentioned the works \cite{HeiRen:powerlawstab:07,Rin:powerlaw:09} which consider the stability of accelerated FLRW spacetimes to the ENSF system. We also mention Luo and Isenberg's \cite{LuoIse:powerlawmag:13}, in which Ringstr\"om's stability result is generalised by coupling the ENSF system to an electromagnetic field satisfying Maxwell's equations. Stability of the FLRW solution is then proved for the Einstein-nonlinear scalar field-Maxwell system.

Another mechanism for accelerated expansion is the inclusion of a positive cosmological constant in Einstein's vacuum equations, which produces spacetimes which expand at an \emph{exponential} rate. There is a rich and growing literature concerning the stability of solutions to the Einstein equations with positive cosmological constant, for an incomplete list see \cite{Fri:deSitterMaxwellstab:91,And:deSitteroddstab:04,LubKro:deSitterradstab11,RodSpe:irrEulerdeSitterstab:13, Spe:EulerdeSitterstab:12, HadSpe:deSitterduststab:15, Oli:EulerdeSitterstab:16,HinVas:KdSstationarystab:18,Fou:FLRWscalar:22,Fan:KdSstationarystab:25,FouMarOli:tiltedfluids:24,FouSch:KdSexpandingstab:24,HinVas:KdSexpandingstab:24,Cic:deSitterscat:24}. 

Study of the nonlinear stability of solutions to the Einstein vacuum equations with positive cosmological constant began with Friedrich's proof of stability of $3+1$-de Sitter in \cite{Fri:deSitterstab:86}. In that work, Friedrich introduced a formalism of Einstein's equations known as the conformal field equations, which reduces the global problem to an entirely local one. We also mention \cite{Rin:deSitterENSFstab08}, in which Ringstr\"om proved stability of de Sitter in general spacetime dimensions as a solution to the Einstein-nonlinear scalar field equations with a positive well-like potential. This work introduced the idea of fixing a choice of generalised wave coordinates with gauge source functions that capture the expansion of the spacetime; we will discuss this in more detail in Section~\ref{sec:intro.subsec:proof}.
\subsection{Outline of the proof of global existence }\label{sec:intro.subsec:proof}
The key components of the proof of Theorem~\ref{sec:intro.thm:main} are (i) a reduction of the Einstein-nonlinear scalar field system into a system of quasilinear wave equations by fixing a certain choice of wave coordinates, (ii) a decomposition of the metric components and scalar field into their spatial averages and oscillatory remainders, along with a derivation of decoupled systems for these quantities, (iii) ODE estimates for the spatial averages, and (iv) energy estimates for the oscillatory remainders. We discuss these steps in more detail below.

\paragraph{Wave coordinates.}
We use wave coordinates (also known as  \emph{harmonic gauge} or \emph{de Donder gauge}) in this paper \cite{Cho:Einsteinlwp:52}. This breaks the diffeomorphism invariance of the Einstein-equations and transforms them into a system of quasilinear wave equations for the metric components. The ``standard'' choice of wave coordinates is to fix coordinates $\{x^\mu\}_{\mu=0,1,2,3}$ such that $\square_{\wt{g}} x^\mu = 0$. This is equivalent to setting $\Gamma^\mu= 0$, $\mu = 0,1,2,3$, where $\Gamma^\mu = \wt{g}^{\alpha\beta}\Gamma_{\alpha\beta}^\mu(\wt{g})$ are contracted Christoffel symbols. Wave coordinates were used most prominently in Choquet-Bruhat's proof of the local well-posedness of the Einstein vacuum equations \cite{Cho:Einsteinlwp:52}. A key observation made by Choquet-Bruhat was that the wave coordinate condition $\Gamma^\mu = 0$ is propagated by the gauge-fixed Einstein vacuum equations, so that this condition is preserved provided it is satisfied by the initial data.

Wave coordinates have also proved useful in understanding the global theory of Einstein's equations. This was exemplified by Lindblad and Rodnianski's \cite{LinRod:Minkowskiwavestab10}, where they proved stability of Minkowski in wave coordinates, see also Christodoulou and Klainerman's original proof \cite{ChrKlai:minkowskistab:93}.

\paragraph{Gauge source functions.}
It has been shown that in the setting of spacetime expansion, a more general form of wave coordinates can be useful. These amount to adding ``gauge source functions'' and fixing $\Gamma^\mu = \mc{Y}^\mu$. The source functions $\mc{Y}^\mu$ can depend on the coordinate functions $x^\mu$, and the metric components $\wt{g}$, but \emph{not} on any derivatives of the metric. Gauge source functions can capture the spacetime's expansion, in a certain sense, and induce decay in the resulting system. Gauge source functions were used to this effect in Ringstr\"om's \cite{Rin:deSitterENSFstab08}, in which asymptotically de Sitter spacetimes were shown to be future-stable as solutions to the Einstein-nonlinear scalar field system with a positive well-type potential. Ringstr\"om's proof of stability of FLRW spacetimes undergoing accelerated expansion built on this, and also used a generalized form of wave coordinates \cite{Rin:powerlaw:09}. Numerous works since have applied this method in the cosmological setting; see for example \cite{RodSpe:irrEulerdeSitterstab:13,Spe:EulerdeSitterstab:12,HadSpe:deSitterduststab:15,Oli:EulerdeSitterstab:16,HinVas:KdSexpandingstab:24}.

\paragraph{The conformal metric \texorpdfstring{$g$}{g}.}
Before fixing wave coordinates, we reformulate the Einstein equations as equations for a conformally rescaled metric. To justify this, we consider a background FLRW spacetime $\wt{g}_b$ of the form given in \eqref{sec:intro.eq:backgroundmetric}, writing
\begin{equation}
    \wt{g}_b = t^{2p}\Big[-(t^{-p}\de t)^2 + \delta_{ij}\de x^i \de x^j\Big].
\end{equation}
Then we change to a conformal time coordinate $\tau$ so that $\de \tau = (1-p)t^{-p}\de t$. Integrating, it follows that
\begin{equation}
    \tau(t) = t^{1-p},
\end{equation}
and so $\tau \in (0,\infty)$. The background metric then takes the form
\begin{equation}
        \wt{g}_b = \tau^{2p/(1-p)}\big[-(1-p)^{-2}\de \tau^2 + \delta_{ij}\de x^i\de x^j\big].
\end{equation}
Thus the conformal metric $m = \tau^{-2p/(1-p)}\wt{g}_b$ is the flat Minkowski metric. We take advantage of this structure and derive equations for the conformal metric 
\begin{equation}
    g := \tau^{-2p/(1-p)}\wt{g}
\end{equation}
in $(\tau,x)$ coordinates. This rescales the leading-order asymptotics of the perturbed metric components so that they are small perturbations of their Minkowski counterparts.

\begin{remark}
    We point out that the conformal coordinate $\tau$ and associated conformal transformation do not compactify the spacetime. In particular we have $\tau(t=0) = 0$ and $\tau(t=\infty) = \infty$, so intervals of the form $[t_0,\infty)$ do not become compact intervals in $\tau$. This stands in contrast to spacetimes undergoing accelerated expansion, where the same process results in a \emph{conformal compactification} of the spacetime.\footnote{Since conformal transformations map null curves to null curves, this is another way one can show that causal observers on accelerated spacetimes can become causally disconnected, while observers on decelerated (or linearly expanding) spacetimes cannot.} 
    
    On a related note, while our setup reduces the ENSF system to equations for a conformal metric, we emphasise that our analysis does not appeal to the conformal field equations of Friedrich \cite{Fri:deSitterstab:86}. The conformal factor is a fixed function of the time coordinate, not an evolution variable. This approach can also be found in \cite{Oli:EulerdeSitterstab:16}.
\end{remark}
Additionally, we formulate the conformal system so that they constitute equations for the \emph{inverse} metric components $g^{\mu\nu}:=(g^{-1})^{\mu\nu}$. This is conceptually equivalent to analysing the lowered metric components, and simplifies the form of the resulting equations. We denote the metric and scalar field perturbations by 
\begin{equation}
    h^{\mu\nu} := g^{\mu\nu} - m^{\mu\nu},\qquad\psi := \phi - \phi_b.
\end{equation}

\paragraph{The gauge-fixed system.}
We derive the gauge-fixed system for the conformal metric and scalar field perturbations $(h^{\mu\nu},\psi)$ in \textbf{Section~\ref{sec:gaugedeqs}}. We fix wave coordinates with a particular choice of gauge source functions $\mc{Y}^\mu$ (see \eqref{sec:gaugedeqs.eq:gaugesourcechoice} for the specific choice). The resulting equations for the $(h^{\mu\nu},\psi)$ are quasilinear wave equations, schematically of the form
\begin{equation}\label{sec:intro.eq:waveeqschem}
    \square_{g(u)} u = -\frac{a}{\tau}g^{00}\pa_\tau u - \frac{b}{\tau^2}g^{00}u + \mc{N},
\end{equation}
for constants $a,b \in \RR$ and semilinear error term $\mc{N}$. The first and second term on the right hand side of the above equation are dissipative terms. Depending on their signs, they can have a damping effect, inducing decay in solutions and/or their derivatives. We emphasise though that the signs and sizes of $a,b$ are different for $h^{00}$, $h^{0i}$, $h^{ij}$ and $\psi$. See \eqref{sec:gaugedeqs.eq:wave} for the full system. Moreover, the values of the coefficients in the full system depend on the choice of gauge source functions, and the motivation for our particular choice is discussed in Remark~\ref{sec:gaugedeqs.rmk:gaugemotivation}.

For the purpose of exposition, in this section we will relate our analysis of the gauge-fixed system to the ``toy model''
\begin{equation}\label{sec:intro.eq:waveeqtoy}
    \square u = \frac{a}{\tau}\pa_\tau u + \frac{b}{\tau^2}u + F,
\end{equation}
where $\square = -\pa_\tau^2 + \Delta$ is the flat wave operator and $F$ is an inhomogeneous source term. The dynamics of the two equations \eqref{sec:intro.eq:waveeqschem}, \eqref{sec:intro.eq:waveeqtoy} are essentially the same, particularly because the perturbed metric $g$ remains globally close to the Minkowski metric for all time.
\begin{remark}[The linear wave equation on a fixed FLRW background]
    A related problem in this setting is the dynamics of solutions to the linear wave equation
    \begin{equation}\label{sec:intro.eq:linwave}
        \square_{\wt{g}_b}\phi = 0
    \end{equation}
    on a fixed decelerated FLRW background. A short computation reveals that \eqref{sec:intro.eq:linwave} is equivalent to the equation \eqref{sec:intro.eq:waveeqtoy} with vanishing $F$ and coefficients of the dissipative terms taking the values
    \begin{equation}
        (a,b) = \Big(\frac{2p}{1-p},0\Big).
    \end{equation}
    In \cite{Hag:waveFLRW:25}, the linear wave equation is studied on a fixed decelerated FLRW spacetime with topology $(0,\infty)\times \RR^3$, and all expansion exponents $p \in (0,1)$. In that paper boundedness and decay of solutions is shown using energy methods which capture the \emph{dispersion} of waves on non-compact manifolds. Wave equations with dissipative terms like \eqref{sec:intro.eq:waveeqtoy} have also been studied on $\RR^{3+1}$-Minkowski, see for example \cite{Mat:dissipativewave:77,Ues:dissipativewave:80}. We emphasise that for spatially compact spacetimes, this decay via dispersion does not occur, and so the only source of decay present is from from spacetime expansion.
\end{remark}
\paragraph{Estimating the spatial averages.}
Our analysis of the perturbed quantities $(h^{\mu\nu},\psi)$ is based on a decomposition of these functions into their spatial averages, which are obtained by integrating these scalar functions over the torus, and oscillatory remainders which have zero average. We write this decomposition like $u = u_{\av} + u_{\osc}$, where
\begin{equation}
    u_{\av}(\tau) = \frac{1}{(2\pi)^3}\int_{\TT^3} u(\tau,x)\de^3 x
\end{equation}
is the spatial average of $u$ over $\TT^3$. A key observation in our analysis is that the gauge-fixed system \eqref{sec:gaugedeqs.eq:wave} decouples (up to some decaying error) into an ODE system for the averaged quantities $(h_{\av}^{\mu\nu},\psi_{\av})$ and a system for the remainders $(h_{\osc}^{\mu\nu},\psi_{\osc})$. This allows us to analyse the behaviour of the averages and the remainders separately.
\begin{remark}[Spatial average decomposition for the relativistic Euler equations]\label{sec:intro.rmk:linwaveFLRW}
    A similar decomposition of the evolution variables into average contributions and oscillations was used in \cite{Fajetal:decelEulerstab:25} in their analysis of the relativistic Euler equations on a fixed slowly expanding background. In that paper the authors decomposed the fluid density and velocity into their averaged values and an oscillating remainder, and estimated the two separately.
\end{remark}
\begin{remark}[Relation of spatial averages to mode decomposition]
    The spatial average $u_{\av}$ of a function is precisely the zeroth mode of $u$ with respect to the Laplacian $\Delta$ on $\TT^3$:
    \begin{equation}
        u_{\av} = \langle u,e_0\rangle_{L^2(\TT^3)}e_0,
    \end{equation}
    where $e_0$ satisfies $\Delta e_0 = 0$, $\|e_0\|_{L^2(\TT^3)} = 1$. Moreover, the remainder $u_{\osc}$ is the projection of $u$ onto the eigen-functions of $\Delta$ with positive eigenvalues, which are of course $e_k = e^{ix\cdot k}$, $|k| >1$.
    
    One can introduce more general spatial means that respect the geometry of the spacetime by projecting onto the eigen-functions of the Laplace-Beltrami operator $\Delta_{\ol{g}_\tau}$, where $\ol{g}_\tau$ is the induced metric on the level sets of $\tau$. Here we stick to the spatial means on the flat torus, as they are simpler to analyse, and we are still able to close the argument with them because all spatial derivatives of the perturbations decay sufficiently fast.
\end{remark}

We now explain our strategy for estimating the averaged quantities $(h_{\av}^{\mu\nu},\psi_{\av})$, which is carried out in \textbf{Section~\ref{sec:avgsys}}. One can derive equations for the averaged metric and scalar field perturbations $(h_{\av}^{\mu\nu},\psi_{\av})$ by simply integrating the gauge-fixed system \eqref{sec:gaugedeqs.eq:wave} over the torus. In analogy we integrate the toy model \eqref{sec:intro.eq:waveeqtoy} over the torus and obtain the ODE
\begin{equation}
    -\pa_\tau^2 u_{\av}(\tau) = \frac{a}{\tau}\pa_\tau u_{\av}(\tau) + \frac{b}{\tau^2}u_{\av}(\tau) + F_\av.
\end{equation}
Ignoring the inhomogeneous term in the above equation for a moment, the fundamental solutions of this ODE are\footnote{There is also the case where $(a-1)^2 = 4b$, so that $\lambda_- = \lambda_+ = (a-1)/2$. In this case the fundamental solutions of the ODE are $\{\tau^{-\lambda_-},\tau^{-\lambda_-}\log \tau\}$. This does not occur in the equations for $\{h_{\av}^{\mu\nu},\psi_{\av}\}$, and so we ignore this case.} $\{\tau^{-\lambda_-},\tau^{-\lambda_+}\}$, where the exponents $\lambda_\pm$ are
\begin{equation}\label{sec:intro.eq:indicialroots}
    \lambda_{\pm} = \frac{1}{2}(a-1)\pm \frac{1}{2}\sqrt{(a-1)^2-4b}.
\end{equation}
If the $\lambda_{\pm}$ are real, then $\lambda_+ > \lambda_-$, and so $u_{\av} = O(\tau^{-\lambda_-})$. If $\lambda_\pm$ are complex, so that $\lambda_\pm = c \pm di$, then the functions $\tau^{-\lambda{\pm}}$ become linear combinations of $\tau^{-c}\cos(d\log \tau)$ and $\tau^{-c}\sin(d\log \tau)$, and so either way we have $u_\av = O(\tau^{-\mathrm{Re}(\lambda_-)})$.

Including the inhomogeneous term, one can subsequently derive estimates of the form
\begin{equation}
    |u_\av(\tau)| \leq C \tau^{-\mathrm{Re}(\lambda_-)} \Big(|u_\av(\tau_0)| + |\pa_\tau u_{\av}(\tau_0)|\ + \int_{\tau_0}^\tau s^{\mathrm{Re}(\lambda_-)}\|F(s)\|_{L^2}\de s\Big).
\end{equation}
It follows that if $\mathrm{Re}(\lambda_-) \geq 0$ and $F$ has sufficient decay in time, then $u_\av$ is bounded. It is clear from the identity \eqref{sec:intro.eq:indicialroots} that boundedness/decay for $u_{\av}$ occurs only for specific ranges of coefficients $a,b$. In particular, $u_{\av}(\tau)$ is bounded in general only if $a - 1 \geq 0$ and $b \geq 0$, so that $\mathrm{Re}(\lambda_-) \geq 0$.

\paragraph{Improved decay for the spatial averages from the gauge equations.}
The condition $a-1 \geq 0$ is satisfied by the equations for all the perturbed quantities $(h^{\mu\nu},\psi)$. It turns out, however, that while the condition $b \geq 0$ is fulfilled by the equations \eqref{sec:gaugedeqs.eq:wavehori}, \eqref{sec:gaugedeqs.eq:wavescalar} for the horizontal metric and scalar field perturbation $h^{ij}$, $\psi$, it is not fulfilled by the equations \eqref{sec:gaugedeqs.eq:wavelapse}, \eqref{sec:gaugedeqs.eq:waveshift} for the lapse and shift metric perturbations $h^{00}$, $h^{0i}$.

To overcome this obstruction, we derive estimates for the lapse and shift directly from the wave coordinate equations $\Gamma^\mu(g) = \mc{Y}^\mu(g)$. Expanding the contracted Christoffel symbols via the formula
\begin{equation}
    \Gamma^\mu(g) = -\pa_\lambda g^{\mu\lambda} + \frac{1}{2}g^{\mu\lambda}g_{\alpha\beta}\pa_\lambda g^{\alpha\beta},
\end{equation}
the wave coordinate equations become transport equations relating the various metric components. These take the form 
\begin{equation}
    -\pa_\tau g^{0\mu} -\pa_a g^{a\mu} + \dots = \mc{Y}^\mu(g),\quad \mu=0,1,2,3.\\
\end{equation}
One can integrate these equations over the torus too, which leads to first order ODE for the $h_{\av}^{00}$, $h_{\av}^{0i}$. These are schematically of the form
\begin{equation}
    \pa_\tau u_\av(\tau) + \frac{\eta}{\tau}u_\av(\tau) = F'_\av,
\end{equation}
where the $\eta\tau^{-1}u_\av$ arises from our particular choice \eqref{sec:gaugedeqs.eq:gaugesourcechoice} of gauge source functions, and $F'$ is an error term; see \eqref{sec:avgsys.eq:odeavglapseshiftproof} for the actual equations. This implies the estimate
\begin{equation}
    |u_{\av}(\tau)|\leq C\tau^{-\eta}\Big(|u_{\av}(\tau_0)| +\int_{\tau_0}^\tau s^\eta \|F'\|_{L^2(\TT^3)}\de s\Big).
\end{equation}
If $\eta > 0$ (and $\|F'\|_{L^2}$ decays sufficiently fast), then $u_{\av}$ decays. This is the case for the equations for $h^{00}$, $h^{0i}$. In particular, our choice \eqref{sec:gaugedeqs.eq:gaugesourcechoice} of source functions $\mc{Y}^\mu(g)$ implies that $\eta > 0$.
\begin{remark}[Obstructions to decay of the spatial averages]\label{sec:intro.rmk:bottleneck}
    The decay one can obtain for all the averaged quantities $(h_{\av}^{\mu\nu},\psi_{\av})$ obviously depend on the decay of the various nonlinear terms in the equations. This error depend on all of the functions $(h^{\mu\nu},\psi)$, including the oscillatory remainders $(h_{\osc}^{\mu\nu},\psi_{\osc})$. It turns out that the principle equations for the averages imply more decay than the principle equations for the remainders. Due to the nonlinear coupling of the systems, the decay of the remainders ``bottleneck'' the decay of the averages, reducing the amount of decay of the averaged quantities. We discuss this more in Remark~\ref{sec:avgsys.rmk:bottleneck}. An additional complication (one that is also true of the equations for the oscillatory quantities $(h_{\osc}^{\mu\nu},\psi_{\osc})$) is that the system for the averages does not diagonalise entirely. There are mixed linear terms in the equations for $h_{\av}^{00}$ $h_{\av}^{ij}$, see \eqref{sec:avgsys.eq:odeavg2}.
\end{remark}
\paragraph{Estimates for the oscillatory remainders.}
The functions $h_{\osc}^{\mu\nu} = h^{\mu\nu} - h_{\av}^{\mu\nu}$, $\psi_{\osc} = \psi - \psi_{\av}$ capture the higher-order oscillations of the perturbations, and have zero spatial average. In line with these observations, we estimate these terms using energy estimates and the  Poincar\'e inequality\footnote{See Section~\ref{sec:prelims.subsec:notation} for a description of the notation we use for differential operators and Sobolev spaces in this paper.}
\begin{equation}\label{sec:intro.sec:avgsys.eq:poincare}
    \|u_{\osc}\|_{L^2(\TT^3)}\leq C\|\ol{\pa} u_{\osc}\|_{L^2(\TT^3)}.
\end{equation}
In brief, we use energy estimates to control the top-order derivatives in the system, while the Poincar\'e inequality allows us to control lower-order terms.

The oscillatory remainders $(h_{\osc}^{\mu\nu},\psi_{\osc})$ satisfy the same wave equation as the full functions $(h^{\mu\nu},\psi)$, at least up to a nonlinear error. One can simply subtract off the equations satisfied by the spatial averages, which are estimated separately. For the toy model \eqref{sec:intro.eq:waveeqtoy}, this yields an equation of the form
\begin{equation}
    \square u_\osc = \frac{a}{\tau}\pa_\tau u_\osc + \frac{b}{\tau^2}u_\osc + F_\osc.
\end{equation}
As previously mentioned, the first two terms on the right hand side of the above equation have a damping effect. The signs already play a role in determining the behaviour of the spatial averages. This is also true for the oscillating remainders, although it turns out only the sign of $a$ plays a role. If $a > 0$, a short calculation shows that the rescaled function $v = \tau^{a/2}u_\osc$ satisfies the equation
\begin{equation}\label{sec:intro.eq:oscexplain1}
    \square v = \frac{c}{\tau^2}v + \tau^{a/2}F_\osc,
\end{equation}
where $c = b-a(a-2)/4$. We let $E_0[v]$ denote the standard wave equation energy on the torus 
\begin{equation}
    E_0[v](\tau) = \frac{1}{2}(\|\pa_{\tau} v(\tau)\|_{L^2(\TT^3)}^2 + \|\ol{\pa} v\|_{L^2(\TT^3)}^2).
\end{equation}
Then $E_0[v]$ satisfies the differential inequality
\begin{equation}\label{sec:intro.eq:oscexplain2}
    \pa_\tau E_0[v] \leq \Big(\frac{|c|}{\tau^2}\|v\|_{L^2} + \tau^{a/2}\|F\|_{L^2}\Big)(E_0[v])^{1/2}.
\end{equation}
The $\tau^{-2}\|v\|_{L^2}$ term could form an obstruction to this argument, since $c = b-a(a-2)/4$ may be nonzero,\footnote{This is the case for all equations in the system for $(h_{\osc}^{\mu\nu},\psi_\osc)$.} However, $v$ has zero spatial average, and so we use the aforementioned Poincar\'e inequality \eqref{sec:intro.sec:avgsys.eq:poincare}, and bound $\tau^{-2}\|v\|_{L^2} \leq \tau^{-2}(E_0[v])^{1/2}$. This turns the lower-order term into an error term in terms of decay, and can be controlled in a straightforward manner via a Gr\"onwall-type inequality.
 
The upshot is that the decay of the oscillatory remainders are dictated by the $\tau^{-1}\pa_\tau u_{\osc}$ term (as well as the error term $F$), and not by the $\tau^{-2} u_{\osc}$ term. For the actual system \eqref{sec:gaugedeqs.eq:wave}, we derive the relevant energy inequalities in \textbf{Section~\ref{sec:energyests}}, and use them to control the remainders in \textbf{Section~\ref{sec:oscsys}}.

\begin{remark}
    Our analysis of the system for $(h_{\osc}^{\mu\nu},\psi_{\osc})$ is less straight-forward than rescaling the quantities and deriving standard energy estimates for them. We instead define a family of weighted energies which have the same effect, but allow for more freedom in estimating mixed linear terms which appear in the system. See Section~\ref{sec:energyests} for more details.
\end{remark}
\begin{remark}[Analysis of linear wave equations]
    The decomposition of solutions that we apply to the toy model \eqref{sec:intro.eq:waveeqtoy} can also be used to analyse several linear wave equations of note.
    
    The flat wave equation on Minkowski with $\TT^3$ spatial topology has $a=b=0$, and the averaged equation is $\pa_\tau^2 u_\av = 0$, which admits solutions $u_\av(\tau) = c_1 + c_2 \tau$, for $c_1,c_2\in \RR$. It follows that generic solutions grow linearly in time.\footnote{In the setting of $\RR^3$-spatial topology, we exclude these solutions from consideration, as they are not finite-energy.} This destroys any chance of stability for nonlinear versions of this equation, an obvious example of which is the Einstein-vacuum equations with $\TT^3$ spatial topology.
    
    We compare this to the linear wave equation on a flat FLRW background (see Remark~\ref{sec:intro.rmk:linwaveFLRW}). The equation for the average is
    \begin{equation}
        \pa_\tau^2 u_\av = -\frac{2p}{1-p}\frac{1}{\tau}\pa_\tau u_\av,
    \end{equation}
    which has solutions
    \begin{equation}
        u_\av(\tau) = c_1 + \frac{c_2}{\tau^{2p/(1-p)-1}},\qquad c_1,c_2 \in \RR.
    \end{equation}
    If $p > 1/3$, then $2p/(1-p) > 1$, and so solutions are bounded in this case. Moreover, one can show using the methods discussed above that $\|u_\osc\|_{L^2} \leq C\tau^{-p/(1-p)}$, and so $u$ has the partial asymptotic expansion
    \begin{equation}
        u = c_1 + c_2\tau^{-(2p/(1-p)-1)} + \tau^{-p/(1-p)}v,
    \end{equation}
    where $v$ is an oscillatory function that does not have a limit as $\tau \ra \infty$. This means that to leading order, $u$ smooths out and becomes spatially homogeneous as $\tau \ra \infty$, but this homogenisation does not occur at higher order.
\end{remark}
\subsection{Breakdown of the proof for \texorpdfstring{$1/3 < p \leq 2/3$}{p between one third and two thirds}}\label{sec:intro.subsec:breakdown}
It is interesting to note that the numerical simulations in \cite{Mar:instabilityFLRW:25} indicate instability of the FLRW solutions of the Einstein-Euler system with expansion exponent $1/3 \leq p \leq 2/3$. We recall that the ENSF system admits FLRW solutions with expansion exponent $p > 1/3$. While we do not prove instability for the ENSF system in the range $1/3 < p \leq 2/3$ of expansion exponents, the proof of Theorem~\ref{sec:intro.thm:main} does break down in a fundamental way for this range. In brief, the restricted range is due to insufficient decay from the principle part of the gauge-fixed system, which leads specifically to insufficient decay of $h_{\osc}^{00}$. In this section we discuss how this breakdown of the stability proof occurs.
\begin{remark}[Issues with gauge]\label{sec:intro.rmk:proofbreakdown}
     Given that the lapse and shift equations are intrinsically tied to the choice of wave coordinates (more so than the equations for the horizontal components and scalar field), there is the possibility that this is an artefact of the gauge we have set. We do not know whether other choices of gauge source functions, for example, could resolve this issue.
\end{remark}

A model problem which captures the obstruction in the gauge-fixed system is the system of semilinear wave equations
\begin{subequations}\label{sec:intro.eq:bdownmodel}
    \begin{align}
        \square u_1 &= \Big[\frac{2p}{1-p}-2\Big]\frac{1}{\tau}\pa_\tau u_1 + \mc{N}(\pa u_1,\pa u_1),\label{sec:intro.eq:bdownmodel1}\\
        \square u_2 &= \frac{2p}{1-p}\frac{1}{\tau}\pa_\tau u_2 + \mc{N}(\pa u_2,\pa u_2),\label{sec:intro.eq:bdownmodel2}
    \end{align}
\end{subequations}
where $\mc{N}$ is arbitrary quadratic form.\footnote{A more realistic model system would include mixed terms in the nonlinearity, so that $\mc{N} = \mc{N}(\pa u_1,\pa u_2)$. With this system one would see that not only does the equation \eqref{sec:intro.eq:bdownmodel1} for $u_1$ form the primary obstruction to stability if $p \leq 2/3$, it also ``bottlenecks'' the decay rate of $u_2$.} The equation \eqref{sec:intro.eq:bdownmodel1} for $u_1$ models the equation for the remainder quantities of the lapse perturbation $h_{\osc}^{00}$, while the equation \eqref{sec:intro.eq:bdownmodel2} for $u_2$ models the equations for $h_{\osc}^{0i}$, $h_{\osc}^{ij}$, $\psi_{\osc}$. Since the equations \eqref{sec:intro.eq:bdownmodel} model remainder functions, we assume that $u_1$ and $u_2$ have zero spatial average. The absence of terms of the form $\tau^{-2}u$, $\tau^{-2}v$ in either equations are because these are essentially error terms which can be controlled via the Poincar\'e inequality \eqref{sec:intro.sec:avgsys.eq:poincare}, hence we ignore them in this discussion. The spatial averages $(h_{\av}^{\mu\nu},\psi_{\av})$ do not play a role in this breakdown of stability, as the ODE governing their evolution are somewhat better behaved, cf. Remark~\ref{sec:intro.rmk:bottleneck}. 

Decay of the functions $(u_1,u_2)$ stem from the dissipative terms $\tau^{-1}\pa_\tau u_1$, $\tau^{-1}\pa_\tau u_2$ in the equations \eqref{sec:intro.eq:bdownmodel} respectively. We introduce the rescaled quantities
\begin{equation}
    v_1 = \tau^{p/(1-p)-1}u_1,\qquad v_2 = \tau^{p/(1-p)}u_2.
\end{equation}
A brief computation shows that $(v,V)$ satisfy the system
\begin{align}
    \square v_1 &= \frac{c_1}{\tau^2}v_1 + \tau^{-(p/(1-p)-1)}\mc{N}(\pa v_1,\pa v_1),\\
    \square v_2 &= \frac{c_2}{\tau^2}v_2 + \tau^{-p/(1-p)}\mc{N}(\pa v_2,\pa v_2),
\end{align}
for constants $c_1,c_2$. Again, the functions $(v_1,v_2)$ have zero spatial average, so we can safely ignore the $\tau^{-2}v_1$, $\tau^{-2}v_2$ terms in the above equations (cf. \eqref{sec:intro.eq:oscexplain1}, \eqref{sec:intro.eq:oscexplain2}). We can only close the global existence argument if the semilinear terms have an integrable weight in time. This is true of the equation for $v_2$ if
\begin{equation}
    \frac{p}{1-p} > 1\iff p\in (1/2,1),
\end{equation}
while it is true of the equation for $v_1$ if
\begin{equation}
    \frac{p}{1-p}-1 > 1 \iff p \in (2/3,1).
\end{equation}
Hence global existence for this model system holds only if $p \in (2/3,1)$.

Relating this back to the gauge-fixed ENSF system, the primary obstruction for global existence beyond this range is the equation \eqref{sec:gaugedeqs.eq:wavelapse} for the lapse. On the one hand, this could be an issue of gauge; recall the discussion in Remark~\ref{sec:intro.rmk:proofbreakdown}. On the other, there is some numerical evidence that the range $p \leq 2/3$ of expansion exponents is unstable. As previously mentioned in Section~\ref{sec:intro.subsec:related}, the work \cite{Mar:instabilityFLRW:25} considers the stability of decelerated FLRW solutions to the Einstein-Euler equations. Marshall shows via numerical simulation that the FLRW solution is unstable in the full range of sound speed $c_s \in [0,1]$, which corresponds to expansion exponents $p \in [1/3,2/3]$. It is not well-understood whether this instability is due to the specific matter model (a perfect fluid with linear equation of state), or whether such instability is a product of the Einstein equations irrespective of matter.\footnote{One way to ``test'' such a question is by considering the decoupled problem, that is, the relativistic Euler equations on a fixed FLRW background. This has been investigated in the works \cite{Spe:relEulerstab:13,Fajetal:decelEulerstab:25}. In \cite{Spe:relEulerstab:13}, Speck shows that the dust model ($c_s = 0$) is stable on a fixed FLRW background for all $p > 1/2$. Note that the FLRW solution to Einstein-dust system has expansion exponent $p = 2/3$. In the recent \cite{Fajetal:decelEulerstab:25}, Fajman et. al prove stability of the relativistic Euler equations on a region of the parameter space $(c_s,p)$, which for all $c_s > 0$ excludes the parameter values $p = \frac{2}{3}(1+c_s^2)^{-1}$ attained by FLRW solutions to Einstein-Euler. Moreover, it is shown that radiative fluids $(c_s = 1/\sqrt{3}$) are unstable for all expansion exponents $p \leq 1$.}
\subsection{Acknowledgments}
I would like to thank Volker Schlue for his feedback during the writing of this paper, as well as Elliot Marshall for several helpful discussions. I acknowledge financial support provided by the University of Melbourne.\newpage
\section{Preliminaries}\label{prelims}
This section contains various preliminaries for the global stability theorem. First we introduce notation that we use throughout this paper. Then we state several important Sobolev space embeddings and inequalities. Finally we show that the FLRW spacetime $(\wt{g}_b,\phi_b)$ is indeed a solution to the Einstein-nonlinear scalar field system.
\subsection{Notation}\label{sec:prelims.subsec:notation}
\paragraph*{Index notation.}
We employ abstract index notation in this paper. Greek letters $\mu,\nu,\alpha,\beta,\dots $ represent spacetime coordinates that range over values $\{0,1,2,3\}$, and lowercase Latin letters $a,b,,\dots$ represent spatial coordinates that range over $\{1,2,3\}$. Pairs of repeated indices in a product are summed. Unless stated otherwise, we raise and lower indices with respect to the conformal metric $g$ introduced in Section~\ref{sec:gaugedeqs}.

\paragraph*{Coordinates.}
We will work in a standard local coordinate system $(x^1,x^2,x^3)$ on $\TT^3$. We remark that this coordinate system is not globally well-defined on all of $\TT^3$; however the coordinate vector-fields
\begin{equation}
    \pa_i = \frac{\pa}{\pa x^i},\qquad i=1,2,3
\end{equation}
are globally well-defined.\footnote{One could equivalently take $(x^1,x^2,x^3)$ to be periodic coordinates on $\TT^3$.} This coordinate system extends to a coordinate system $(x^0,x^1,x^2,x^3)$ on an $(n+1)$-dimensional manifold with boundary $\mc{M} = [\tau_0,T) \times \TT^3$. For the time coordinate, we often write $x^0 = \tau$.

\paragraph*{Differential operators.}
We denote the coordinate vector-fields
\begin{equation}
    \pa_\mu = \frac{\pa}{\pa x^\mu},
\end{equation}
and we often write $\pa_0 = \pa_\tau$ for the time derivative. We perform all computations with respect to the frame $\{\pa_\mu\}_{\mu=01,2,3}$.

Given a multi-index $I = (I_1,I_2,I_3) \in \NN^3$, the operator $\ol{\pa}{}^I$ denotes the spatial derivative
\begin{equation}
    \ol{\pa}{}^I = \pa_1^{I_1}\cdot \pa_2^{I_2}\cdot \pa_3^{I_3}.
\end{equation}
For a spacetime function $u$, we will use the schematic notation $\pa u$ to denote any/all spacetime derivatives of $u$, and $\ol{\pa} u$ to denote any/all purely spatial derivatives of $u$.
\paragraph{Norms and spatial averages.}
Let $u$ be a function on $\TT^3$. We often write the integral of $u$ over $\TT^3$ as
\begin{equation}
    \int_{\TT^3} u := \int_{\TT^3}u(x)\,\de^3 x,
\end{equation}
omitting the volume form $\de^3 x = \de x^1 \wedge \de x^2 \wedge \de x^3$.
We define the standard inhomogeneous Sobolev norm of order $K$ on $\TT^3$ by
\begin{equation}
    \|u\|_{H^K}^2 := \sum_{|I|\leq K}\int_{\TT^3}(\ol{\pa}{}^I u)^2\de^3 x,
\end{equation}
and write $L^2(\TT^3) := H^0(\TT^3)$. Similarly, we define the $W^{K,\infty}(\TT^3)$ norms:
\begin{equation}
    \|u\|_{W^{K,\infty}} := \sum_{|I|\leq K} \mathrm{ess}\,\sup_{\TT^3}|\ol{\pa}{}^Iu|,
\end{equation}
and write $L^\infty(\TT^3):= W^{0,\infty}(\TT^3)$. We denote the spatial average of $u$ over $\TT^3$ by
\begin{equation}
    u_{\av} := \frac{1}{(2\pi)^3}\int_{\TT^3}u(x)\de^3 x.
\end{equation}
We denote the zero-average remainders by $u_{\osc} = u-u_{\av}$.
\paragraph*{Running constants.}
We use a running constant in this paper, which we denote by $C$. This constant can vary from line to line, but is generally used in the context where $C$ must be chosen to be sufficiently large for the inequality to hold. The constant $C$ can depend on the regularity $K$ and smallness parameter $\ve_0$ that determine the size of the initial data, but are independent of $(g^{\mu\nu},\phi)$, assuming of course that these quantities are sufficiently close to the background quantities $(m^{\mu\nu},\phi_b)$. Other constants may be used when they have a specific role in the analysis; we will denote these constants by $c_1$, $c_2$, etc.
\subsection{Sobolev space embeddings}\label{sec:prelims.subsec:sobolev}
We give several Sobolev space based embeddings that are used throughout this paper. Aside from Lemma~\ref{sec:prelims.lem:sobolevcomp}, these are all standard estimates, and we point the interested reader to Chapter 2 of \cite{Heb:sob:00}.
\begin{lemma}[Sobolev embedding, \protect{\cite[Theorem~2.7]{Heb:sob:00}}]\label{sec:prelims.lem:sobolev}
Let $K \in \NN$. There exists a constant $C$ such that if $u \in H^K(\TT^3)$, then $u \in W^{K-2,\infty}(\TT^3)$ and
   \begin{equation}\label{sec:prelims.eq:sobolev}
       \|u\|_{W^{K-2,\infty}} \leq  C\|u\|_{H^K}.
   \end{equation} 
\end{lemma}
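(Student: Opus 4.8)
The plan is to reduce the statement to the single critical embedding $H^2(\TT^3)\hookrightarrow L^\infty(\TT^3)$ and then bootstrap in the order of differentiation. Assuming that base case, pick any multi-index $I\in\NN^3$ with $|I|\le K-2$; then $\ol{\pa}{}^I u\in H^2(\TT^3)$ with $\|\ol{\pa}{}^I u\|_{H^2}\le \|u\|_{H^K}$, so the base case gives $\|\ol{\pa}{}^I u\|_{L^\infty}\le C\|\ol{\pa}{}^I u\|_{H^2}\le C\|u\|_{H^K}$. Summing over the finitely many such $I$ yields \eqref{sec:prelims.eq:sobolev}; along the way this produces a $C^{K-2}$ representative of $u$, so that the essential suprema in the definition of $\|\cdot\|_{W^{K-2,\infty}}$ are genuine suprema. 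Everything therefore reduces to the case $K=2$.

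For the base case I would argue directly with Fourier series, which is the most transparent route on the flat torus. Expanding $u=\sum_{k\in\ZZ^3}\hat u(k)e^{ik\cdot x}$ in $L^2(\TT^3)$, Cauchy--Schwarz gives
\[
    \sum_{k\in\ZZ^3}|\hat u(k)|\;\le\;\Big(\sum_{k\in\ZZ^3}(1+|k|^2)^{-2}\Big)^{1/2}\Big(\sum_{k\in\ZZ^3}(1+|k|^2)^{2}|\hat u(k)|^2\Big)^{1/2}.
\]
The second factor is comparable to $\|u\|_{H^2}$ by Plancherel, and the first factor is finite because $\sum_{k\in\ZZ^3}(1+|k|^2)^{-2}$ is comparable to $\int_{\RR^3}(1+|\xi|^2)^{-2}\,\de\xi<\infty$, the exponent $4$ exceeding the dimension $3$. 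Hence the Fourier series converges absolutely and uniformly, its sum is continuous, agrees with $u$ almost everywhere, and is bounded pointwise by $C\|u\|_{H^2}$; this is precisely $H^2(\TT^3)\hookrightarrow L^\infty(\TT^3)$ with the asserted estimate.

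This is a standard result and I do not anticipate a genuine obstacle; the only place the hypotheses are really used is the convergence of $\sum_{k\in\ZZ^3}(1+|k|^2)^{-2}$, which encodes the sharp threshold $H^s(\TT^3)\hookrightarrow L^\infty(\TT^3)$ for $s>3/2$ — the lemma's loss of two derivatives being a convenient integer choice. An alternative proof avoiding Fourier analysis uses a smooth partition of unity subordinate to a finite atlas of $\TT^3$ together with the Euclidean Morrey-type estimate $\|v\|_{L^\infty(\RR^3)}\lesa\|v\|_{H^2(\RR^3)}$ on each chart; this is essentially the route of the cited reference \cite[Ch.~2]{Heb:sob:00} in the general Riemannian setting, but on the flat torus the Fourier argument is shorter and self-contained.
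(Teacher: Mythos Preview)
The paper does not supply its own proof of this lemma; it simply cites \cite[Theorem~2.7]{Heb:sob:00} and remarks that the result is standard. Your argument is correct and self-contained: the reduction to the base case $H^2(\TT^3)\hookrightarrow L^\infty(\TT^3)$ via $\ol{\pa}{}^I u$ is immediate, and the Fourier-series proof of the base case is valid because $\sum_{k\in\ZZ^3}(1+|k|^2)^{-2}<\infty$ in dimension three. The cited reference proceeds in the general compact Riemannian setting through a partition of unity and the Euclidean embedding, which is the route you mention as an alternative; on the flat torus your direct Fourier argument is shorter and avoids localisation entirely, at the cost of being specific to $\TT^3$. Either approach is adequate here since the paper only ever applies the lemma on $\TT^3$.
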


An immediate consequence of the above Sobolev embedding and the Leibniz product rule is the following Sobolev space estimate for products.
\begin{lemma}[Sobolev product estimate]\label{sec:prelims.lem:sobolevproduct}
    Let $K \in \NN$, $K \geq 3$. Then there exists a constant $C > 0$ such that if $u,v\in H^K(\TT^3)$, the product $u\cdot v \in H^K(\TT^3)$, and
    \begin{equation}\label{sec:prelims.bootbound.eq:sobolevproduct}
        \|u\cdot v\|_{H^K} \leq C\|u\|_{H^K} \|v\|_{H^K}.
    \end{equation}
\end{lemma}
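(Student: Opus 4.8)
The plan is to reduce \eqref{sec:prelims.bootbound.eq:sobolevproduct} to the Sobolev embedding of Lemma~\ref{sec:prelims.lem:sobolev} together with the Leibniz rule and Cauchy--Schwarz. First I would expand $\ol{\pa}{}^I(u\cdot v)$ for a multi-index $I$ with $|I|\leq K$ using the Leibniz product rule, writing it as a finite sum $\sum_{J\leq I}\binom{I}{J}(\ol{\pa}{}^J u)(\ol{\pa}{}^{I-J}v)$. In each term at least one of the two factors has at most $\lfloor K/2\rfloor$ derivatives; since $K\geq 3$ gives $\lfloor K/2 \rfloor \leq K-2$, that factor lies in $W^{K-2,\infty}(\TT^3)$ by Lemma~\ref{sec:prelims.lem:sobolev}, hence is bounded in $L^\infty$ by a constant times the $H^K$ norm. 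The other factor has at most $K$ derivatives and so is controlled in $L^2(\TT^3)$ by $\|u\|_{H^K}$ or $\|v\|_{H^K}$ respectively.

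The key steps, in order, are: (i) apply Leibniz to each $\ol{\pa}{}^I(uv)$; (ii) in the term $(\ol{\pa}{}^J u)(\ol{\pa}{}^{I-J}v)$, split into the case $|J|\leq |I-J|$ (so $|J|\leq \lfloor K/2\rfloor \leq K-2$, put $\ol{\pa}{}^J u$ in $L^\infty$) and the case $|J|>|I-J|$ (symmetric, put $\ol{\pa}{}^{I-J}v$ in $L^\infty$); (iii) estimate $\|(\ol{\pa}{}^J u)(\ol{\pa}{}^{I-J}v)\|_{L^2} \leq \|\ol{\pa}{}^J u\|_{L^\infty}\|\ol{\pa}{}^{I-J}v\|_{L^2}$ by Hölder, then invoke $\|\ol{\pa}{}^J u\|_{L^\infty}\leq \|u\|_{W^{K-2,\infty}}\leq C\|u\|_{H^K}$ and $\|\ol{\pa}{}^{I-J}v\|_{L^2}\leq \|v\|_{H^K}$; (iv) sum the finitely many terms over $J\leq I$ and then over $|I|\leq K$, absorbing all combinatorial factors into the running constant $C$. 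This also shows $u\cdot v\in H^K(\TT^3)$, since the right-hand side is finite.

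I do not expect any serious obstacle here; the only point requiring a moment's care is the counting argument in step (ii) that guarantees the low-derivative factor always has at most $K-2$ derivatives, which is exactly where the hypothesis $K\geq 3$ enters (for $K=3$ one has $\lfloor K/2\rfloor = 1 = K-2$, and the embedding $H^3\hookrightarrow W^{1,\infty}$ of Lemma~\ref{sec:prelims.lem:sobolev} is what makes the split work). Everything else is routine bookkeeping with the Leibniz rule and Hölder's inequality.
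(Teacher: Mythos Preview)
Your proposal is correct and is exactly the argument the paper has in mind: the paper does not give a written proof but simply states that the lemma is ``an immediate consequence of the above Sobolev embedding and the Leibniz product rule,'' which is precisely the Leibniz-plus-$L^\infty$/$L^2$ splitting you describe. Your counting observation $\lfloor K/2\rfloor \leq K-2$ for $K\geq 3$ is the only nontrivial point, and you handle it correctly.
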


The next Lemma allows us to estimate compositions of functions $F\circ u$, where $F$ is a smooth function and $u \in H^K(\TT^3)$. Importantly, if $F(x) \leq C|x|^l$ for small $x$, then $\|F\circ u\|_{H^K(\TT^3)} \leq C\|u\|_{H^K(\TT^3)}^l$ for $u$ with small $H^K(\TT^3)$ norm. We delay its proof to Appendix~\ref{app:sobolevcomp}.
\begin{lemma}[Sobolev composition estimate]\label{sec:prelims.lem:sobolevcomp}
    Let $K,l \in \NN$, $K \geq 3$. Let $S \subset \RR^N$ be a compact set with $0 \in S$, and suppose $F: S \ra \RR$ is a smooth function such that 
    \begin{equation}
        \ol{\pa}{}^I F(0) = 0
    \end{equation}
    for all $|I| \leq l-1$, where $\pa^I = \pa_{x_1}^{I_1}\cdots \pa_{x_N}^{I_N}$. Then there exists a constant $C > 0$ such that if $u \in H^K(\TT^3)$ and 
    $u(x) \subset S$ for almost all $x \in \TT^3$, then $F \circ u \in H^K(\TT^3)$, moreover
    \begin{equation}\label{sec:prelims.eq:sobolevcomp}
        \|F \circ u\|_{H^K(\TT^3)} \leq C\|u\|_{H^K(\TT^3)}^l
    \end{equation}
\end{lemma}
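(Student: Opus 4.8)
The plan is to reduce the estimate to a pointwise Taylor bound on $F$ and its derivatives, combined with the Leibniz/chain rule, the Sobolev embedding of Lemma~\ref{sec:prelims.lem:sobolev} and the product estimate of Lemma~\ref{sec:prelims.lem:sobolevproduct}. It is cleanest to treat separately a \emph{small regime} $\|u\|_{H^K(\TT^3)} \le \delta$ and a \emph{large regime} $\|u\|_{H^K(\TT^3)} > \delta$, for a fixed $\delta \in (0,1]$ to be chosen.

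First I would record two elementary ingredients. Since $K \ge 3$, Lemma~\ref{sec:prelims.lem:sobolev} gives the chain of embeddings $H^K(\TT^3) \hookrightarrow W^{K-2,\infty}(\TT^3) \hookrightarrow L^\infty(\TT^3)$, so $\|u\|_{W^{K-2,\infty}} + \|u\|_{L^\infty} \le C_0 \|u\|_{H^K}$. Next, the hypothesis $\ol{\pa}{}^I F(0) = 0$ for $|I| \le l-1$ says exactly that each partial derivative $\pa^\beta F$ (with $\beta \in \NN^N$, $|\beta| \le l-1$) vanishes to order $l - |\beta|$ at the origin; after extending $F$ to a smooth function on a neighbourhood of $0$ in $\RR^N$ (harmless --- $S$ is a closed ball in the applications, and a Whitney/Seeley extension handles the general case), Taylor's theorem provides a ball $U \ni 0$ and a constant $C$ with $|(\pa^\beta F)(y)| \le C|y|^{l-|\beta|}$ for $y \in U$ and $|\beta| \le l-1$, and $\sup_U |\pa^\beta F| < \infty$ for all $\beta$. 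Finally I would fix $\delta \in (0,1]$ small enough that $\|u\|_{H^K} \le \delta$ forces $\|u\|_{L^\infty} \le C_0\delta$ to be so small that $u(x) \in U$ for a.e.\ $x \in \TT^3$.

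In the small regime, for each multi-index $I$ with $1 \le |I| \le K$, I would expand $\ol{\pa}{}^I(F\circ u)$ by the multivariate chain rule into a finite sum of terms $(\pa^\beta F)(u)\,\prod_{m=1}^{|\beta|}\ol{\pa}{}^{J_m}u_{a_m}$, where each $|J_m| \ge 1$ and $\sum_m |J_m| = |I| \le K$. In each such product I would put the \emph{single} highest-order factor into $L^2(\TT^3)$ --- it has order $\le K$, hence norm $\le \|u\|_{H^K}$ --- and note that an elementary count (using $K \ge 3$: two factors of orders $d_{(1)} \ge d_{(2)} \ge 1$ with $d_{(1)}+d_{(2)}\le K$ force $d_{(2)} \le \lfloor K/2\rfloor \le K-2$) shows \emph{every remaining} factor has order $\le K-2$, so each is bounded in $L^\infty(\TT^3)$ by $C_0\|u\|_{H^K}$. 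For the coefficient I would use $\|(\pa^\beta F)(u)\|_{L^\infty} \le C\|u\|_{H^K}^{\max(0,\,l - |\beta|)}$, which is the Taylor estimate when $|\beta| \le l-1$ and plain boundedness of $\pa^\beta F$ on $U$ otherwise. Multiplying, each term is bounded by $C\|u\|_{H^K}^{\max(|\beta|,\,l)} \le C\|u\|_{H^K}^l$ because $\|u\|_{H^K} \le \delta \le 1$; the zeroth-order term satisfies $\|F\circ u\|_{L^2} \le C\|u\|_{L^\infty}^l \le C\|u\|_{H^K}^l$ directly. Summing over $|I| \le K$ yields the claim for $\|u\|_{H^K} \le \delta$. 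For the large regime $\|u\|_{H^K} > \delta$, I would use only that $u$ maps into the compact set $S$, so $\|u\|_{L^\infty} \le \sup_{y\in S}|y| =: R$, and invoke the standard Moser-type bound $\|F\circ u\|_{H^K} \le C(R,K,F)(1 + \|u\|_{H^K})$ (chain rule plus Gagliardo--Nirenberg interpolation, cf.\ \cite{Heb:sob:00}); since $\|u\|_{H^K} > \delta$ and $l \ge 1$ one has $1 + \|u\|_{H^K} \le 2\delta^{-l}\|u\|_{H^K}^l$, so the desired bound follows here too.

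The routine parts are the chain-rule bookkeeping and the Sobolev and Moser estimates. The one point requiring genuine care --- and what produces the power $l$ rather than a merely linear bound --- is the interplay between the vanishing of $\pa^\beta F$ at the origin, worth $l - |\beta|$ powers of $\|u\|_{L^\infty} \sim \|u\|_{H^K}$, and the fact that a product of $|\beta|$ derivative factors of total order at most $K$ contributes exactly $|\beta|$ powers of $\|u\|_{H^K}$ once one observes (for $K \ge 3$) that at most one factor can be of top order. A secondary, purely technical annoyance is that the clean form of the estimate (no smallness of $u$ assumed) forces the split into the two norm regimes, and the smooth-extension remark is needed only so that Taylor's theorem applies on an honest neighbourhood of the origin.
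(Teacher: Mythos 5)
Your proof is correct, and in the small-norm regime it is essentially the paper's argument: Taylor-expand $F$ (and its derivatives $\pa^\beta F$ for $|\beta|\le l-1$) about the origin to convert the vanishing hypothesis into a pointwise gain $|(\pa^\beta F)(u)|\lesssim |u|^{l-|\beta|}$, expand $\ol{\pa}{}^I(F\circ u)$ by the chain rule, place the single top-order derivative factor in $L^2$, and absorb the remaining factors (all of order $\le K-2$ by the counting argument that uses $K\ge 3$) into $L^\infty$ via the Sobolev embedding. The one genuinely different element is your explicit split into $\|u\|_{H^K}\le\delta$ and $\|u\|_{H^K}>\delta$ with a Moser-type bound in the latter regime. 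The paper's proof does not make this split: in the case $i\ge l$ it puts one factor in $L^2$ and the rest in $L^\infty$ and writes the resulting $\|u\|_{H^K}^i$ directly as $C\|u\|_{H^K}^l$, which is only valid when $\|u\|_{H^K}\lesssim 1$ (the only situation in which the lemma is subsequently applied, so no harm comes of it there). Your two-regime treatment closes this gap and makes the lemma hold as literally stated; the cost is the need for the Moser estimate and the small case analysis, which you correctly flag as the "technical annoyance." Your explicit attention to the multivariate case and the Whitney/Seeley extension of $F$ off $S$ is also more complete than the paper, which proves only $N=1$ and silently assumes $F$ is smooth on a neighbourhood of $0$; both points are cosmetic but worth noting.
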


We end this section with the Poincar\'e inequality on the torus.
\begin{lemma}[Poincar\'e inequality, \protect{\cite[Theorem~2.11]{Heb:sob:00}}]\label{sec:prelims.lem:poincare}
    There exists a constant $C > 0$ such that for all $u \in H^1(\TT^3)$, we have the following inequality: 
    \begin{equation}\label{sec:avgsys.eq:poincare}
        \|u_{\osc}\|_{L^2} \leq C\|\ol{\pa} u\|_{L^2}.
    \end{equation}
\end{lemma}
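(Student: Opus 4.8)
The plan is to reduce to the zero-average case and then use the Fourier decomposition on $\TT^3 = [-\pi,\pi]^3$. Since $u_{\osc} = u - u_{\av}$ differs from $u$ by a constant, one has $\ol{\pa} u = \ol{\pa} u_{\osc}$, so without loss of generality I would assume $u_{\av} = 0$ from the outset and prove the equivalent statement $\|u\|_{L^2} \leq C\|\ol{\pa} u\|_{L^2}$.

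First I would expand $u \in H^1(\TT^3)$ in its Fourier series $u(x) = \sum_{k\in\ZZ^3}\wh u(k)\,e^{ik\cdot x}$. The condition $u_{\av}=0$ means precisely that $\wh u(0) = 0$, so the sum runs over $k\neq 0$. Parseval's identity then gives $\|u\|_{L^2}^2 = (2\pi)^3\sum_{k\neq 0}|\wh u(k)|^2$, while differentiating term by term yields, for each spatial direction $i=1,2,3$, the Fourier coefficient $i k_i\wh u(k)$ of $\pa_i u$ and hence $\|\pa_i u\|_{L^2}^2 = (2\pi)^3\sum_{k}k_i^2|\wh u(k)|^2$.

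Next, summing over the three spatial directions and recalling that in the schematic notation of Section~\ref{sec:prelims.subsec:notation} one has $\|\ol{\pa} u\|_{L^2}^2 = \sum_{i=1}^3\|\pa_i u\|_{L^2}^2$, I would invoke the elementary bound $|k|^2 = k_1^2+k_2^2+k_3^2 \geq 1$ for every $k\in\ZZ^3\setminus\{0\}$ to conclude
\begin{equation*}
    \|\ol{\pa} u\|_{L^2}^2 = (2\pi)^3\sum_{k\neq 0}|k|^2|\wh u(k)|^2 \geq (2\pi)^3\sum_{k\neq 0}|\wh u(k)|^2 = \|u\|_{L^2}^2,
\end{equation*}
which is the claimed inequality, in fact with $C=1$ for this normalization of the torus. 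The computation is valid verbatim for complex-valued $u$, interpreting $|\wh u(k)|^2$ as the squared modulus, so the real-valued case is a special case.

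An alternative I could use is a Rellich--Kondrachov compactness argument: were the inequality to fail, there would be a sequence $u_n$ with $(u_n)_{\av}=0$, $\|u_n\|_{L^2}=1$ and $\|\ol{\pa} u_n\|_{L^2}\to 0$; this sequence is bounded in $H^1(\TT^3)$, so a subsequence converges in $L^2$ to some $u$ with $\|u\|_{L^2}=1$, $\ol{\pa} u = 0$ and $u_{\av}=0$, forcing $u$ to be a constant of zero average, i.e. $u\equiv 0$, a contradiction. Either route is completely standard — this is the classical Poincaré inequality — and there is no substantive obstacle; the only point requiring a little care is the bookkeeping that $u$ and $u_{\osc}$ carry identical spatial derivatives, so that the gradient on the right-hand side may be taken of either one, together with tracking the $(2\pi)^3$ normalization factors, both of which are routine.
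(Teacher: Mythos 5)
Your proof is correct. The paper does not give an argument of its own here — it simply cites Theorem~2.11 of Hebey — so there is no internal proof to compare against; but your Fourier-mode argument is exactly the standard one, and it is in fact the picture the paper gestures at in the remark following the toy model, where $u_{\osc}$ is described as the projection of $u$ onto the eigenfunctions $e^{ik\cdot x}$, $k \neq 0$, of the Laplacian on $\TT^3$. Both of your routes work: the Fourier computation gives the explicit constant $C=1$ on $[-\pi,\pi]^3$ (since the smallest nonzero eigenvalue of $-\Delta$ is $|k|^2 = 1$), while the Rellich--Kondrachov compactness argument is less quantitative but requires no explicit spectral information and generalises directly to curved compact manifolds. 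The only point worth stating a little more carefully in a write-up is the term-by-term differentiation of the Fourier series, which for general $u \in H^1$ is justified by density of trigonometric polynomials and the fact that $\hat{u}(k) \mapsto ik_i\hat{u}(k)$ is the $L^2$ extension of $\partial_i$; as you say, this is routine.
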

\subsection{The background FLRW solution}\label{sec:prelims.subsec:FLRW}
The Einstein-nonlinear scalar field system \eqref{sec:intro.eq:ensf} can equivalently be written in trace-reversed form as
\begin{subequations}
    \begin{align}
    \Ric_{\mu\nu}[\wt{g}] &= \pa_\mu \phi \pa_\nu \phi + \wt{g}_{\mu\nu}V(\phi),\label{sec:prelims.eq:einstein}\\
    \square_g \phi - V'(\phi)&=0.\label{sec:prelims.eq:scalarfield}
\end{align}
\end{subequations}
We recall from \eqref{sec:intro.eq:backgroundmetric}, \eqref{sec:intro.eq:backgroundscalarfield} the background solution $((0,\infty)\times\TT^3,\wt{g}_b,\phi_b)$, given by
\begin{equation}
    \wt{g}_b = -\de t^2 + t^{2p} \delta_{ij}\,\de x^i \de x^j,\qquad 
        \phi_b = (2p)^{1/2} \log t + (p/2)^{1/2}\log\Big[\frac{V_0}{p(3p-1)}\Big].
\end{equation}
In the next lemma, we prove that the above spacetime is in fact a solution to the ENSF system. This was also proved in Lemma 1 of \cite{Rin:powerlaw:09}, albeit for expansion exponents restricted to $p > 1$.
\begin{lemma}[Background solution to the Einstein-nonlinear scalar field system]\label{sec:prelims.lem:background}
    The triple $((0,\infty)\times \TT^3,\wt{g}_b,\phi_b)$ is a solution to the Einstein nonlinear scalar field system \eqref{sec:prelims.eq:einstein}, \eqref{sec:prelims.eq:scalarfield}.
\end{lemma}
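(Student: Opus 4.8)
The plan is a direct verification: plug the explicit background metric $\wt{g}_b = -\de t^2 + t^{2p}\delta_{ij}\de x^i\de x^j$ and scalar field $\phi_b = (2p)^{1/2}\log t + (p/2)^{1/2}\log[V_0/(p(3p-1))]$ into both equations of the trace-reversed system \eqref{sec:prelims.eq:einstein}, \eqref{sec:prelims.eq:scalarfield} and check that both sides agree. Since the background is spatially homogeneous and diagonal, everything reduces to one-variable calculus in $t$, so no serious obstacle is expected; the work is purely computational.

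\medskip

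First I would record the nonzero Christoffel symbols of $\wt{g}_b$. With $\wt{g}_{tt} = -1$ and $\wt{g}_{ij} = t^{2p}\delta_{ij}$, the only nonvanishing symbols are $\Gamma^t_{ij} = p\,t^{2p-1}\delta_{ij}$ and $\Gamma^i_{tj} = \Gamma^i_{jt} = (p/t)\delta^i_j$. From these one computes the Ricci tensor: $\Ric_{tt} = -3\big(p(p-1)/t^2 + p^2/t^2\big) = 3p(1-2p)/t^2 \cdot (-1)$ — more carefully, $\Ric_{tt} = -3p''\!/\ldots$; I would use the standard FLRW formula $\Ric_{tt} = -3\ddot a/a$ and $\Ric_{ij} = (a\ddot a + 2\dot a^2)\delta_{ij}$ with $a(t) = t^p$, giving $\ddot a/a = p(p-1)/t^2$ and hence
\begin{equation}
    \Ric_{tt} = -\frac{3p(p-1)}{t^2},\qquad \Ric_{ij} = \big(p(p-1) + 2p^2\big)t^{2p-2}\delta_{ij} = p(3p-1)\,t^{2p-2}\delta_{ij}.
\end{equation}
Next I compute the right-hand side of \eqref{sec:prelims.eq:einstein}. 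Since $\phi_b$ depends only on $t$ with $\dot\phi_b = (2p)^{1/2}/t$, we get $\pa_t\phi_b\,\pa_t\phi_b = 2p/t^2$ and $\pa_i\phi_b\,\pa_j\phi_b = 0$. For the potential term, $V(\phi_b) = V_0\exp(-(2/p)^{1/2}\phi_b)$; substituting $\phi_b$ and simplifying the exponent (the $\log t$ term contributes $t^{-2}$, the constant term contributes $p(3p-1)/V_0$) yields $V(\phi_b) = p(3p-1)/t^2$. Then the $tt$-component of the RHS is $2p/t^2 + \wt{g}_{tt}V(\phi_b) = 2p/t^2 - p(3p-1)/t^2 = p(1 - 3p + 2)/t^2$... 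I would recompute: $2p - p(3p-1) = 2p - 3p^2 + p = 3p - 3p^2 = -3p(p-1)$, matching $\Ric_{tt}$. The $ij$-component of the RHS is $0 + \wt{g}_{ij}V(\phi_b) = t^{2p}\delta_{ij}\cdot p(3p-1)/t^2 = p(3p-1)t^{2p-2}\delta_{ij}$, matching $\Ric_{ij}$. The off-diagonal and $ti$ components vanish on both sides trivially.

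\medskip

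Finally I would verify the scalar field equation \eqref{sec:prelims.eq:scalarfield}. Using $\square_{\wt{g}_b}\phi_b = -\ddot\phi_b - 3(\dot a/a)\dot\phi_b = -\pa_t^2\phi_b - (3p/t)\pa_t\phi_b$ with $\pa_t\phi_b = (2p)^{1/2}t^{-1}$ and $\pa_t^2\phi_b = -(2p)^{1/2}t^{-2}$, this gives $\square_{\wt{g}_b}\phi_b = (2p)^{1/2}t^{-2} - 3p(2p)^{1/2}t^{-2} = (2p)^{1/2}(1-3p)t^{-2}$. On the other side, $V'(\phi) = -(2/p)^{1/2}V(\phi)$, so $V'(\phi_b) = -(2/p)^{1/2}\cdot p(3p-1)/t^2 = -(2p)^{1/2}(3p-1)/t^2 = (2p)^{1/2}(1-3p)/t^2$, which equals $\square_{\wt{g}_b}\phi_b$, so \eqref{sec:prelims.eq:scalarfield} holds. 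This completes the verification; the only points requiring care are the bookkeeping of the constant inside the logarithm in $\phi_b$ (chosen precisely so that $V(\phi_b)$ has the clean form $p(3p-1)/t^2$, which in turn requires $p(3p-1)$ and $V_0$ to have the same sign, consistent with the sign conventions in the remark on expansion exponents $p \leq 1/3$) and the sign in $\Ric_{tt}$.
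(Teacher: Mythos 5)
Your proposal is correct and takes essentially the same route as the paper: direct verification by substituting the explicit background into the trace-reversed Einstein equations and the scalar field equation, with both sides reducing to identities in $t$. The only cosmetic difference is that you invoke the standard FLRW curvature formulas $\Ric_{tt}=-3\ddot a/a$ and $\Ric_{ij}=(a\ddot a+2\dot a^2)\delta_{ij}$ (and the homogeneous-field form of $\square_{\wt{g}_b}$) rather than recomputing Ricci directly from the Christoffel symbols as the paper does; the intermediate false start for $\Ric_{tt}$ is abandoned and the final values match the paper's in every component.
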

\begin{proof}
    The nonzero Christoffel symbols of the metric $\wt{g}_b$ are
    \begin{equation}
        \Gamma_{ij}^t(\wt{g}_b) = p t^{2p-1} \delta_{ij},\qquad \Gamma_{ti}^j(\wt{g}_b) = \frac{p}{t}\delta_i^j,\quad i,j=1,2,3.
    \end{equation}
    The components of the Ricci tensor for an arbitrary metric $\wt{g}$ satisfies the identity
    \begin{equation}
        \Ric_{\mu\nu}[\wt{g}]= \pa_\lambda \Gamma_{\mu\nu}^\lambda(\wt{g}) - \pa_{(\mu}\Gamma_{\nu)\lambda}^\lambda(\wt{g}) + \Gamma_{\mu\nu}^\lambda(\wt{g})\Gamma_{\lambda\delta}^\delta(\wt{g}) - \Gamma_{\mu\lambda}^\delta(\wt{g})\Gamma_{\nu\delta}^\lambda(\wt{g}).
    \end{equation}
    The off-diagonal components of $\Ric[\wt{g}_b]$ are identically zero, and the diagonal components are
    \begin{subequations}
        \begin{align}
            R_{tt} &= -\sum_{i=1}^3\pa_t \Gamma_{ta}^a(\wt{g}_b) - \sum_{i=1}^3 \big(\Gamma_{ta}^a(\wt{g}_b)\big)^2 = \frac{3p(1-p)}{t^2},\\
            R_{ij} &= \pa_t \Gamma_{ij}^t(\wt{g}_b) + \Gamma_{ij}^t(\wt{g}_b)\sum_{j=1}^3\Gamma_{ta}^a(\wt{g}_b) - 2\sum_{a=1}^3\Gamma_{ia}^t(\wt{g}_b)\Gamma_{tj}^a(\wt{g}_b) = \frac{p(3p-1)}{t^{2-2p}}\delta_{ij},\quad i,j=1,2,3.
        \end{align}
    \end{subequations}
    For the background scalar field, we compute
    \begin{equation}
        V(\phi_b) = \frac{p(3p-1)}{t^2}, \qquad \pa_\mu \phi_b\pa_\nu\phi_b = \frac{2p}{t^2}\delta_{\mu}^0\delta_{\nu}^0.
    \end{equation}
    The trace-reversed stress-energy-momentum tensor has the formula $T_{\mu\nu} - \frac{1}{2}T\wt{g}_{\mu\nu} = \pa_\mu\phi\pa_\nu\phi + V(\phi)\wt{g}_{\mu\nu}$. Clearly the off-diagonal components are identically zero, and the diagonal components are
    \begin{equation}
        T_{tt} - \frac{1}{2}T (\wt{g}_b)_{tt} = \frac{3p(1-p)}{t^2},\qquad T_{ij} - \frac{1}{2}T(\wt{g}_b)_{ij} = \frac{p(3p-1)}{t^{2-2p}}\delta_{ij}, \quad i,j=1,2,3,
    \end{equation}
    matching those of $\Ric_{\mu\nu}[\wt{g}]$. Hence the Einstein equations \eqref{sec:prelims.eq:einstein} are satisfied. For the scalar field equations \eqref{sec:prelims.eq:scalarfield} we have
    \begin{align}
        \square_{\wt{g}_b}\phi_b - V'(\phi_b) &= -\pa_t^2 \phi_b - \sum_{a,b=1}^3g^{ab}\Gamma_{ab}^t(\wt{g}_b)\pa_t\phi_b+(2/p)^{1/2}V(\phi_b)\nonumber\\
        &=\frac{(2p)^{1/2}}{t^2} - \frac{3p(2p)^{1/2}}{t^2} + \frac{(2p)^{1/2}(3p-1)}{t^2} = 0.
    \end{align}
    This completes the proof.
\end{proof}

\paragraph{The FLRW solution in conformal coordinates.} We recall that the FLRW metric $\wt{g}_b$ is conformally flat. We change time coordinate
\begin{equation}
    \tau = \frac{1}{1-p}\int \frac{\de t}{t^p} = t^{1-p}.
\end{equation}
In the coordinates $(\tau,x)$, the metric and scalar field for the background solution take the form 
\begin{equation}\label{sec:prelims.eq:backgroundconf}
\wt{g}_b = \tau^{2p/(1-p)}\big(-(1-p)^{-2}\de \tau^2 + \delta_{ij}\,\de x^i\de x^j\big),\qquad \phi_b = \frac{(2p)^{1/2}}{1-p}\log \tau + (p/2)^{1/2}\log\Big|\frac{V_0}{p(3p-1)}\Big|.
\end{equation}
So that the conformal metric $m := \tau^{-2p/(1-p)}\wt{g}_b$ is the flat Minkowski metric. We will henceforth refer to the pair $(m,\phi_b)$ as the background solution. We note that the future boundary $\Sigma^+ = \{t = \infty\}$ is the level set $\{\tau = \infty\}$, while the past singularity $\Sigma^- = \{t = 0\}$ is the level set $\{\tau = 0\}$.
\section{The gauge-fixed Einstein-nonlinear scalar field system}\label{sec:gaugedeqs}
This section is dedicated to deriving the gauge-fixed Einstein-nonlinear scalar field system \eqref{sec:gaugedeqs.eq:wave} which is central to our global existence argument. First we restate the Einstein-nonlinear scalar field system as a system of equations on a conformally transformed metric. Then we reformulate these equations as a system for the metric and scalar field perturbations, with the background FLRW spacetime constituting the ``zero solution''. We introduce the particular choice of wave coordinates that we make for this paper, and and derive the resulting gauge-fixed system.
\subsection{The Einstein-nonlinear scalar field system on a conformal metric}
The gauge-fixed system we derive amounts to a system for a conformal metric $g = \Omega^2 \wt{g}$ (and the the original scalar field $\phi$), where $(\wt{g},\phi)$ satisfy the Einstein-nonlinear scalar field system. Due to the particularly simple nature of our choice of conformal factor (see \eqref{sec:gaugedeqs.eq:conformalfactor}), this is completely equivalent to the ENSF system, and also rescales the leading order asymptotics of the metric components, simplifying the analysis. We will solve for the inverse metric components $g^{\mu\nu}$: $\mu,\nu = 0,1,2,3$, rather than the (lowered) metric components $g_{\mu\nu}$.

We give two preliminary lemmas which amount to Ricci curvature identities for a Lorentzian metric. These are fairly standard identities, and so we will delay their proof to Appendix~\ref{app:ricciids}. The first lemma is an identity satisfied by the Ricci tensors of two metrics which are related by a conformal transformation.
\begin{lemma}[Ricci curvatures for conformal metrics]\label{sec:gaugedeqs.lem:ricciconformal}
    Let $\wt{g}$ be a Lorentzian metric, and introduce the conformally transformed metric $g = \Omega^2\wt{g}$, where $\Omega\in C^\infty(\mc{M}\ra \RR)$ is a scalar function that is smooth and positive everywhere. Then the Ricci curvature of the metrics $g$ and $\wt{g}$ are related by the identity
    \begin{align}
        \Ric^{\mu\nu}[g] =\>&g^{\mu\alpha}g^{\nu\beta}\Ric_{\alpha\beta}[\wt{g}] + g^{\delta\lambda}\pa_\lambda g^{\mu\nu} \pa_\delta (\log \Omega) - 2g^{\lambda(\mu}\pa_\lambda g^{\nu)\delta}\pa_\delta (\log \Omega) \nonumber\\
        &- g^{\mu\nu} \Big[g^{\alpha\beta}\pa_\alpha \pa_\beta (\log\Omega) - 2g^{\alpha\beta}\pa_\alpha (\log \Omega) \pa_\beta (\log \Omega) - \Gamma^\lambda \pa_\lambda (\log \Omega)\Big]\nonumber\\
        &-2g^{\mu\alpha}g^{\nu\beta}\Big[\pa_\alpha\pa_\beta(\log \Omega) + \pa_\alpha (\log \Omega)\pa_\beta (\log \Omega)\Big],\label{sec:gaugedeqs.eq:ricciconformal}
    \end{align}
    where $\Gamma^\lambda:= g^{\alpha\beta}\Gamma_{\alpha\beta}^\lambda(g)$.
\end{lemma}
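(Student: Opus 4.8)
The plan is to do a direct computation, starting from the well-known transformation law for the \emph{lowered} Ricci tensor under a conformal change, and then raising indices carefully with $g$ (not $\wt g$), keeping track of how the conformal factor interacts with the index-raising. First I would recall the standard pointwise identity in $n+1=4$ dimensions: if $g=\Omega^2\wt g$, then writing $\omega:=\log\Omega$,
\begin{equation}
  \Ric_{\mu\nu}[\wt g] = \Ric_{\mu\nu}[g] + 2\big(\nabla^g_\mu\nabla^g_\nu \omega - \nabla^g_\mu\omega\,\nabla^g_\nu\omega\big) + g_{\mu\nu}\big(\square_g\omega + 2\,g^{\alpha\beta}\nabla^g_\alpha\omega\,\nabla^g_\beta\omega\big),
\end{equation}
which is itself proved by expanding the Christoffel symbols $\Gamma_{\alpha\beta}^\lambda(g) = \Gamma_{\alpha\beta}^\lambda(\wt g) + \delta_\alpha^\lambda\pa_\beta\omega + \delta_\beta^\lambda\pa_\alpha\omega - \wt g_{\alpha\beta}\wt g^{\lambda\delta}\pa_\delta\omega$ and inserting into the formula for $\Ric_{\mu\nu}$; since this is routine I would relegate it to the appendix as the excerpt indicates. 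Solving for $\Ric_{\mu\nu}[g]$ and then raising both free indices with $g^{\mu\alpha}g^{\nu\beta}$ gives $\Ric^{\mu\nu}[g]$ in terms of $g^{\mu\alpha}g^{\nu\beta}\Ric_{\alpha\beta}[\wt g]$ plus conformal-factor terms expressed through covariant derivatives $\nabla^g$.

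The second step is to convert every covariant derivative $\nabla^g$ of $\omega$ appearing in that expression into coordinate partials $\pa$, since the target identity \eqref{sec:gaugedeqs.eq:ricciconformal} is stated purely in terms of $\pa g^{\mu\nu}$, $\pa\omega$, $\pa\pa\omega$ and the contracted Christoffel $\Gamma^\lambda$. The key substitutions are $\nabla^g_\mu\nabla^g_\nu\omega = \pa_\mu\pa_\nu\omega - \Gamma_{\mu\nu}^\lambda(g)\pa_\lambda\omega$ and $\square_g\omega = g^{\alpha\beta}\pa_\alpha\pa_\beta\omega - \Gamma^\lambda\pa_\lambda\omega$. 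The genuinely fiddly part — and what I expect to be the main obstacle — is handling the raised second covariant derivative $g^{\mu\alpha}g^{\nu\beta}\Gamma_{\alpha\beta}^\lambda(g)\pa_\lambda\omega$ and rewriting it in terms of $\pa g^{\mu\nu}$. Here one uses the identity $g^{\mu\alpha}g^{\nu\beta}\Gamma_{\alpha\beta}^\lambda(g) = -\tfrac12\big(\pa_\delta g^{\mu\nu}\,g^{\delta\lambda}\big)$-type manipulations; more precisely, from $\pa_\lambda g^{\mu\nu} = -g^{\mu\alpha}g^{\nu\beta}\pa_\lambda g_{\alpha\beta}$ and the definition of $\Gamma_{\alpha\beta}^\lambda(g)$ one derives $g^{\mu\alpha}g^{\nu\beta}\Gamma_{\alpha\beta}^\delta(g)\pa_\delta\omega = -\tfrac12 g^{\delta\lambda}\pa_\lambda g^{\mu\nu}\pa_\delta\omega + g^{\lambda(\mu}\pa_\lambda g^{\nu)\delta}\pa_\delta\omega$ (the symmetrization accounting for the two terms in the Christoffel symbol that survive the contraction with two inverse-metric factors). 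Substituting this into the raised version of the transformation law is what produces exactly the first line of \eqref{sec:gaugedeqs.eq:ricciconformal}, namely $g^{\delta\lambda}\pa_\lambda g^{\mu\nu}\pa_\delta\omega - 2g^{\lambda(\mu}\pa_\lambda g^{\nu)\delta}\pa_\delta\omega$.

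Finally I would collect terms: the $g^{\mu\nu}$-proportional bracket on the second line of \eqref{sec:gaugedeqs.eq:ricciconformal} comes from rewriting $g^{\mu\nu}(\square_g\omega + 2|\nabla\omega|_g^2)$ with $\square_g$ expanded as above, which produces the $-\Gamma^\lambda\pa_\lambda\omega$ correction; the sign flips because one solves for $\Ric^{\mu\nu}[g]$ by moving $\Ric_{\mu\nu}[\wt g]$ to the other side. The last line, $-2g^{\mu\alpha}g^{\nu\beta}(\pa_\alpha\pa_\beta\omega + \pa_\alpha\omega\,\pa_\beta\omega)$, is simply the raised form of the $2(\nabla_\mu\nabla_\nu\omega - \nabla_\mu\omega\nabla_\nu\omega)$ term after the $\Gamma_{\alpha\beta}^\lambda(g)\pa_\lambda\omega$ piece has been split off into the first-line terms — note the sign on the $(\pa\omega)^2$ term becomes $+$ because we moved the term across the equation. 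Throughout, the only real bookkeeping risk is sign errors and keeping the symmetrization brackets $(\mu\nu)$ consistent; I would double-check by specializing to $\Omega$ depending only on $\tau$ (the case actually used in the paper) where many terms drop and the identity can be verified against a direct FLRW computation.
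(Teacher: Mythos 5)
Your overall strategy matches the paper's exactly: start from the lowered-index conformal transformation law for Ricci, raise both indices with $g$, expand the covariant derivatives of $\log\Omega$ into coordinate partials plus Christoffels of $g$, and convert the contracted Christoffel term to derivatives of $g^{\mu\nu}$. The problem is that two of your intermediate identities carry sign errors that do not compensate and would not produce the target formula.

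First, your starting transformation law is wrong. Writing $\omega = \log\Omega$, you state $\Ric_{\mu\nu}[\wt g] = \Ric_{\mu\nu}[g] + 2(\nabla^g_\mu\nabla^g_\nu\omega - \nabla^g_\mu\omega\nabla^g_\nu\omega) + g_{\mu\nu}(\square_g\omega + 2g^{\alpha\beta}\nabla^g_\alpha\omega\nabla^g_\beta\omega)$. But this is the identity one gets when $\nabla$ is the connection of the \emph{base} metric $\wt g$ and the trace is taken with $\wt g$. When $\nabla^g$ is the connection of the conformally rescaled $g = e^{2\omega}\wt g$ (which is what you and the paper need, since you then raise indices with $g$), the correct identity has both quadratic first-order terms with the opposite sign: $\Ric_{\mu\nu}[\wt g] = \Ric_{\mu\nu}[g] + 2\nabla^g_\mu\nabla^g_\nu\omega + 2\nabla^g_\mu\omega\nabla^g_\nu\omega + g_{\mu\nu}\square_g\omega - 2g_{\mu\nu}g^{\alpha\beta}\nabla^g_\alpha\omega\nabla^g_\beta\omega$. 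Your version would propagate through to give $-2g^{\mu\alpha}g^{\nu\beta}[\pa_\alpha\pa_\beta\omega - \pa_\alpha\omega\pa_\beta\omega]$ in the third line of \eqref{sec:gaugedeqs.eq:ricciconformal} instead of the required $+$. Your closing remark that the sign on the $(\pa\omega)^2$ term "becomes $+$ because we moved the term across the equation" does not save this: moving across once flips once, but the correct $+$ comes from the starting identity itself having the two second-line terms with the same sign.

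Second, your Christoffel contraction identity has the overall sign reversed. You wrote $g^{\mu\alpha}g^{\nu\beta}\Gamma_{\alpha\beta}^\delta(g) = -\tfrac12 g^{\delta\lambda}\pa_\lambda g^{\mu\nu} + g^{\lambda(\mu}\pa_\lambda g^{\nu)\delta}$, but direct computation from $\Gamma_{\alpha\beta}^\delta(g) = \tfrac12 g^{\delta\lambda}(2\pa_{(\alpha}g_{\beta)\lambda} - \pa_\lambda g_{\alpha\beta})$ together with $g^{\mu\alpha}g^{\nu\beta}\pa_\lambda g_{\alpha\beta} = -\pa_\lambda g^{\mu\nu}$ gives $g^{\mu\alpha}g^{\nu\beta}\Gamma_{\alpha\beta}^\delta(g) = +\tfrac12 g^{\delta\lambda}\pa_\lambda g^{\mu\nu} - g^{\lambda(\mu}\pa_\lambda g^{\nu)\delta}$. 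Feeding your version into the expanded formula gives the first line of \eqref{sec:gaugedeqs.eq:ricciconformal} with both terms sign-flipped; so this error and the first are independent and the computation as you have laid it out would not close. Fixing both sign errors gives precisely the paper's derivation.
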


The second lemma is a decomposition of the Ricci curvature into a scalar wave operator, non-hyperbolic principal terms which vanish upon fixing wave coordinates, and semilinear terms.
\begin{lemma}[Ricci curvature wave decomposition]\label{sec:gaugedeqs.lem:ricciwave}
    Let $g$ be a Lorentzian metric, and let $\wh{\square}_g$ denote the reduced wave operator 
    \begin{equation}\label{sec:gaugedeqs.eq:reducedwaveoperator}
    \wh{\square}_g := g^{\alpha\beta}\pa_\alpha\pa_\beta.
    \end{equation}
    Then the Ricci curvature of $g$ satisfies the identity
    \begin{equation}
        2\Ric^{\mu\nu}[g] = \wh{\square}_g g^{\mu\nu} + 2g^{\lambda(\mu}\pa_\lambda \Gamma^{\nu)} - \mc{F}^{\mu\nu}(g,\pa g),
    \end{equation}
    where the $\mc{F}^{\mu\nu}$ are the functions
    \begin{multline}\label{sec:gaugedeqs.eq:semilinearterms}
        \mc{F}^{\mu\nu}(g,\pa g) = g^{\alpha\beta}g^{\lambda\delta}\pa_\lambda g^{\mu\nu}\pa_{(\alpha} g_{\beta)\delta} - \frac{1}{2}g^{\alpha\beta}g^{\lambda\delta}\pa_\lambda g^{\mu\nu}\pa_\delta g_{\alpha\beta} + 2g^{\lambda(\mu}g^{\nu)\delta}\pa_\delta g^{\alpha\beta}\pa_\alpha g_{\beta\lambda}\\- \frac{1}{2}g^{\lambda(\mu}g^{\nu)\delta}\pa_\lambda g^{\alpha\beta}\pa_\delta g_{\alpha\beta}
        -g^{\alpha\beta}g^{\lambda(\mu|}\pa_\alpha g^{|\nu)\delta}\pa_\delta g_{\beta\lambda} - g^{\alpha\beta}g^{\lambda(\mu|}\pa_\alpha g^{|\nu)\delta}\pa_\beta g_{\lambda\delta}.
    \end{multline}
\end{lemma}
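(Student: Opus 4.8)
The plan is to obtain \eqref{sec:gaugedeqs.eq:semilinearterms} by a direct coordinate computation: first establish the classical ``reduced Ricci'' decomposition for the \emph{lowered} components $\Ric_{\alpha\beta}[g]$, then raise both free indices with $g^{\mu\alpha}g^{\nu\beta}$ and convert every derivative of $g_{\alpha\beta}$ into a derivative of $g^{\mu\nu}$, keeping careful track of the quadratic remainder.

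Concretely, I would start from the usual expression $\Ric_{\alpha\beta}[g] = \pa_\lambda \Gamma^\lambda_{\alpha\beta}(g) - \pa_{(\alpha}\Gamma^\lambda_{\beta)\lambda}(g) + \Gamma^\lambda_{\alpha\beta}(g)\Gamma^\delta_{\lambda\delta}(g) - \Gamma^\delta_{\alpha\lambda}(g)\Gamma^\lambda_{\beta\delta}(g)$ --- the same identity used to verify the background solution in the proof of Lemma~\ref{sec:prelims.lem:background} --- and substitute $\Gamma^\lambda_{\alpha\beta} = \tfrac12 g^{\lambda\delta}(\pa_\alpha g_{\beta\delta} + \pa_\beta g_{\alpha\delta} - \pa_\delta g_{\alpha\beta})$. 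The only genuine second derivatives then come from $\pa_\lambda\Gamma^\lambda_{\alpha\beta}$ and $\pa_{(\alpha}\Gamma^\lambda_{\beta)\lambda}$, all other contributions being quadratic in $\pa g$; here one uses $\Gamma^\lambda_{\beta\lambda} = \tfrac12 g^{\rho\sigma}\pa_\beta g_{\rho\sigma}$. Writing $\Gamma^\lambda := g^{\alpha\beta}\Gamma^\lambda_{\alpha\beta}$ and applying the contracted identity $g^{\rho\sigma}\pa_\rho g_{\sigma\beta} = g_{\lambda\beta}\Gamma^\lambda + \tfrac12 g^{\rho\sigma}\pa_\beta g_{\rho\sigma}$ to re-express the second-derivative terms, one checks that the ``trace-type'' pieces of the form $\pa_{(\alpha}(g^{\rho\sigma}\pa_{\beta)}g_{\rho\sigma})$ cancel against each other, which leaves
\begin{equation*}
    2\Ric_{\alpha\beta}[g] = -\wh{\square}_g g_{\alpha\beta} + 2g_{\lambda(\alpha}\pa_{\beta)}\Gamma^\lambda + Q_{\alpha\beta}(g,\pa g),
\end{equation*}
with $Q_{\alpha\beta}$ an explicit sum of terms quadratic in $\pa g$, coming from the two $\Gamma\Gamma$ products and from the commutators used to pull the $\Gamma^\lambda$ out of the second derivatives.

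Next I would raise indices: $2\Ric^{\mu\nu}[g] = -g^{\mu\alpha}g^{\nu\beta}\wh{\square}_g g_{\alpha\beta} + 2g^{\mu\alpha}g^{\nu\beta}g_{\lambda(\alpha}\pa_{\beta)}\Gamma^\lambda + g^{\mu\alpha}g^{\nu\beta}Q_{\alpha\beta}$. The middle term collapses to $2g^{\lambda(\mu}\pa_\lambda\Gamma^{\nu)}$, matching the statement. For the Laplacian term I would differentiate the exact relation $g^{\mu\alpha}g^{\nu\beta}\pa_\lambda g_{\alpha\beta} = -\pa_\lambda g^{\mu\nu}$ once more to get $g^{\mu\alpha}g^{\nu\beta}\pa_\lambda\pa_\delta g_{\alpha\beta} = -\pa_\lambda\pa_\delta g^{\mu\nu} + (\text{quadratic in }\pa g)$, so that $-g^{\mu\alpha}g^{\nu\beta}\wh{\square}_g g_{\alpha\beta} = \wh{\square}_g g^{\mu\nu} + (\text{quadratic})$ --- this is precisely the source of the sign flip relative to the lowered identity. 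Finally I would convert every $\pa g_{\alpha\beta}$ still appearing inside the quadratic terms into $\pa g^{\rho\sigma}$ via $\pa_\lambda g_{\alpha\beta} = -g_{\alpha\rho}g_{\beta\sigma}\pa_\lambda g^{\rho\sigma}$, collect all the quadratic pieces, symmetrise in $(\mu\nu)$, and check that their total equals $-\mc{F}^{\mu\nu}(g,\pa g)$ in the form \eqref{sec:gaugedeqs.eq:semilinearterms}; the mixed raised/lowered index structure there (factors $g^{..}g^{..}\pa g^{..}\pa g_{..}$) simply reflects the convenient choice of which factors to leave lowered during this conversion.

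The conceptual content is minimal; the genuine obstacle is the bookkeeping in the last step --- there are on the order of a dozen quadratic-in-$\pa g$ contributions (from $\Gamma^\lambda_{\alpha\beta}\Gamma^\delta_{\lambda\delta}$, from $\Gamma^\delta_{\alpha\lambda}\Gamma^\lambda_{\beta\delta}$, from the commutators in the lowered derivation, and from the two index-raising conversions), and one must verify with the correct signs and combinatorial factors that they assemble into exactly the six terms of \eqref{sec:gaugedeqs.eq:semilinearterms}. I would organise this by the index pattern of each term (the groups $g^{\alpha\beta}g^{\lambda\delta}\pa g^{\mu\nu}\pa g$, $\ g^{\lambda(\mu}g^{\nu)\delta}\pa g\,\pa g$, and $g^{\alpha\beta}g^{\lambda(\mu|}\pa g^{|\nu)\delta}\pa g$) and match group by group. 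One could instead cite the lowered-index reduced-Ricci identity, which is standard (e.g.\ \cite{LinRod:Minkowskiwavestab10}), and carry out only the raising step, but producing the precise expression for $\mc{F}^{\mu\nu}$ needed in Section~\ref{sec:gaugedeqs} still requires essentially this conversion.
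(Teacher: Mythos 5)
Your proposal matches the paper's route: start from the lowered-index reduced Ricci decomposition $2\Ric_{\alpha\beta}[g] = -\wh{\square}_g g_{\alpha\beta} + 2g_{\lambda(\alpha}\pa_{\beta)}\Gamma^\lambda + (\text{quadratic})$, raise both free indices, use $g^{\mu\alpha}g^{\nu\beta}\pa_\lambda g_{\alpha\beta} = -\pa_\lambda g^{\mu\nu}$ to produce the sign-flipped $\wh{\square}_g g^{\mu\nu}$ plus quadratic corrections, and track the bookkeeping into $\mc{F}^{\mu\nu}$. The only difference is cosmetic: the paper cites the lowered identity from \cite{HunLuk:Burnett:24} rather than deriving it from the Christoffel formula, but the index-raising step and the conversion of quadratic remainders are identical in spirit.
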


We will now fix the conformal factor to be
\begin{equation}\label{sec:gaugedeqs.eq:conformalfactor}
    \Omega = \tau^{-\frac{p}{1-p}},
\end{equation}
which is precisely the conformal factor which transforms the background FLRW metric $\wt{g}_b$ into the Minkowski metric $m$ (cf. the discussion of regarding the conformal metric in Section~\ref{sec:intro.subsec:proof}). This means that instead of solving for a metric $\wt{g}$ which is initially a small perturbation of the FLRW solution $\wt{g}_b$, we will instead derive a system of equations for the (inverse) conformal metric components $g^{\mu\nu}$, which are initially small perturbations of Minkowski.

\smallskip
We combine the previous two lemmas to derive equations for the components of the inverse conformal metric components $g^{\mu\nu}$ and scalar field $\phi$. As we have not yet fixed a gauge, the equations for the $g^{\mu\nu}$ are not wave equations, but the structure of the non-hyperbolic principle terms are made clear.
\begin{lemma}\label{sec:gaugedeqs.prop:confwave}
    Suppose $(\wt{g},\phi)$ satisfy the Einstein-nonlinear scalar field system \eqref{sec:prelims.eq:einstein}, \eqref{sec:prelims.eq:scalarfield}. Let $(\tau = x^0,x^1,x^2,x^3)$ be coordinates on $\mc{M}$, and let $g = \Omega^2\wt{g}$ be the conformally related metric with conformal factor given by \eqref{sec:gaugedeqs.eq:conformalfactor}. Then the quantities $\wh{\square}_g g^{\mu\nu}$, $\wh{\square}_g \phi$ satisfy the identities
    \begin{subequations}
        \begin{align}
            \wh{\square}_g g^{\mu\nu} =\>& -\frac{2p}{1-p}\frac{1}{\tau}g^{0\lambda}\pa_\lambda g^{\mu\nu} + \frac{4p}{1-p}\frac{1}{\tau}g^{\lambda(\mu}\pa_\lambda g^{\nu)0} - 2g^{\lambda(\mu}\pa_\lambda \Gamma^{\nu)}(g)\nonumber\\
            &-\frac{2p}{1-p}\frac{1}{\tau}\Gamma^0(g)g^{\mu\nu} +  \frac{2p(3p-1)}{(1-p)^2}\frac{1}{\tau^2}g^{00}g^{\mu\nu} - \frac{4p}{(1-p)^2}\frac{1}{\tau^2}g^{0\mu}g^{0\nu}\nonumber\\
            &+ \mc{F}^{\mu\nu}(g,\pa g) + 2g^{\mu\alpha}g^{\mu\beta}\pa_\alpha\phi\pa_\beta\phi + 2\tau^{2p/(1-p)}g^{\mu\nu}V(\phi),\label{sec:gaugedeqs.eq:confmet1}\\
            \wh{\square}_g\phi =&-\frac{2p}{1-p}\frac{1}{\tau}g^{0\lambda}\pa_\lambda \phi + \Gamma^\lambda \pa_\lambda\phi+ \tau^{2p/(1-p)}V'(\phi).\label{sec:gaugedeqs.eq:confsf1}
        \end{align}
    \end{subequations}
\end{lemma}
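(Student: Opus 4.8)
The plan is to assemble \eqref{sec:gaugedeqs.eq:confmet1} and \eqref{sec:gaugedeqs.eq:confsf1} by feeding the trace-reversed Einstein equations \eqref{sec:prelims.eq:einstein} into the conformal Ricci identity of Lemma~\ref{sec:gaugedeqs.lem:ricciconformal}, then applying the wave decomposition of Lemma~\ref{sec:gaugedeqs.lem:ricciwave}, and finally evaluating everything on the explicit conformal factor \eqref{sec:gaugedeqs.eq:conformalfactor}. For the metric equation I would begin by raising both indices of \eqref{sec:prelims.eq:einstein} with $g$ and using $\wt{g}_{\alpha\beta} = \Omega^{-2}g_{\alpha\beta}$ together with $\Omega^{-2} = \tau^{2p/(1-p)}$ to get $g^{\mu\alpha}g^{\nu\beta}\Ric_{\alpha\beta}[\wt{g}] = g^{\mu\alpha}g^{\nu\beta}\pa_\alpha\phi\,\pa_\beta\phi + \tau^{2p/(1-p)}g^{\mu\nu}V(\phi)$. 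Substituting this for the first term on the right-hand side of \eqref{sec:gaugedeqs.eq:ricciconformal} expresses $\Ric^{\mu\nu}[g]$ as this matter source plus the conformal correction terms.

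Next I would evaluate those correction terms on $\log\Omega = -\tfrac{p}{1-p}\log\tau$, for which $\pa_\lambda(\log\Omega) = -\tfrac{p}{1-p}\tfrac{1}{\tau}\delta_\lambda^0$ and $\pa_\alpha\pa_\beta(\log\Omega) = \tfrac{p}{1-p}\tfrac{1}{\tau^2}\delta^0_\alpha\delta^0_\beta$. Term by term this produces the transport contributions $\tfrac{1}{\tau}g^{0\lambda}\pa_\lambda g^{\mu\nu}$, $\tfrac{1}{\tau}g^{\lambda(\mu}\pa_\lambda g^{\nu)0}$ and $\tfrac{1}{\tau}\Gamma^0 g^{\mu\nu}$, together with the zeroth-order terms $\tfrac{1}{\tau^2}g^{00}g^{\mu\nu}$ and $\tfrac{1}{\tau^2}g^{0\mu}g^{0\nu}$; their coefficients are fixed by the elementary identities $\tfrac{p}{1-p} - \tfrac{2p^2}{(1-p)^2} = -\tfrac{p(3p-1)}{(1-p)^2}$ and $\tfrac{p}{1-p} + \tfrac{p^2}{(1-p)^2} = \tfrac{p}{(1-p)^2}$. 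With $\Ric^{\mu\nu}[g]$ now fully expanded, I apply Lemma~\ref{sec:gaugedeqs.lem:ricciwave} in the form $\wh{\square}_g g^{\mu\nu} = 2\Ric^{\mu\nu}[g] - 2g^{\lambda(\mu}\pa_\lambda\Gamma^{\nu)} + \mc{F}^{\mu\nu}(g,\pa g)$; multiplying the expansion of $\Ric^{\mu\nu}[g]$ by two and rearranging yields \eqref{sec:gaugedeqs.eq:confmet1} directly.

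For the scalar field equation I would first record the conformal transformation law for the Laplace--Beltrami operator. Writing $\square_g\phi = |g|^{-1/2}\pa_\mu\big(|g|^{1/2}g^{\mu\nu}\pa_\nu\phi\big)$ and inserting $|g|^{1/2} = \Omega^{4}|\wt{g}|^{1/2}$, $g^{\mu\nu} = \Omega^{-2}\wt{g}^{\mu\nu}$ (valid in $3+1$ dimensions) gives, after expanding the $\Omega$-derivatives, $\square_g\phi = \Omega^{-2}\square_{\wt{g}}\phi + 2\,g^{\mu\nu}\pa_\mu(\log\Omega)\pa_\nu\phi$. I then substitute $\square_{\wt{g}}\phi = V'(\phi)$ from \eqref{sec:prelims.eq:scalarfield}, the value of $\pa_\mu(\log\Omega)$, and $\Omega^{-2} = \tau^{2p/(1-p)}$, and pass to the reduced operator via $\square_g = \wh{\square}_g - \Gamma^\lambda\pa_\lambda$. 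This gives \eqref{sec:gaugedeqs.eq:confsf1}.

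The argument is essentially bookkeeping and I expect no conceptual obstacle; the one place demanding care is the collection of the numerous $\tau^{-1}$- and $\tau^{-2}$-weighted lower-order terms coming out of \eqref{sec:gaugedeqs.eq:ricciconformal} and the verification of the rational identities for their coefficients, together with keeping track of the factor $n-1 = 2$ specific to $3+1$ dimensions in the conformal law for $\square_g$.
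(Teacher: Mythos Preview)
Your proposal is correct and follows essentially the same route as the paper for the metric equation: feed \eqref{sec:prelims.eq:einstein} into Lemma~\ref{sec:gaugedeqs.lem:ricciconformal}, evaluate the $\log\Omega$-terms on the explicit conformal factor, and then invoke Lemma~\ref{sec:gaugedeqs.lem:ricciwave}. Your coefficient identities match the paper's computations exactly.

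For the scalar field there is a minor but genuine difference in packaging. The paper works directly with the reduced operator, writing $\wh{\square}_g\phi = \tau^{2p/(1-p)}\wt{g}^{\alpha\beta}\pa_\alpha\pa_\beta\phi = \tau^{2p/(1-p)}\big[\square_{\wt{g}}\phi + \wt{g}^{\alpha\beta}\Gamma_{\alpha\beta}^\lambda(\wt{g})\pa_\lambda\phi\big]$, and then derives the relation $\wt{g}^{\alpha\beta}\Gamma_{\alpha\beta}^\lambda(\wt{g}) = \tau^{-2p/(1-p)}\big[-\tfrac{2p}{1-p}\tfrac{1}{\tau}g^{0\lambda} + \Gamma^\lambda(g)\big]$ between the contracted Christoffel symbols of $g$ and $\wt{g}$. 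You instead pass through the full Laplace--Beltrami operator and the dimension-dependent conformal law $\square_g\phi = \Omega^{-2}\square_{\wt{g}}\phi + (n-1)g^{\mu\nu}\pa_\mu(\log\Omega)\pa_\nu\phi$, then convert $\square_g$ to $\wh{\square}_g$ at the end. Both computations are short and arrive at the same place; the paper's version has the mild advantage that the Christoffel relation it records is reused elsewhere (to show the equivalence of wave gauges for $g$ and $\wt{g}$), while yours is arguably more self-contained for this lemma alone.
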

\begin{proof}
    The equation \eqref{sec:gaugedeqs.eq:confsf1} follows from \eqref{sec:prelims.eq:scalarfield} with the computation
    \begin{align}
        \wh{\square}_g \phi &= \tau^{2p/(1-p)}\wt{g}^{\alpha\beta}\pa_\alpha\pa_\beta \phi\nonumber\\
        &= \tau^{2p/(1-p)}\big[\square_{\wt{g}} \phi +\wt{g}^{\alpha\beta}\Gamma_{\alpha\beta}^\lambda(\wt{g}) \pa_\lambda \phi\big].\label{sec:gaugedeqs.eq:confwaveproof1}
    \end{align}
    We expand the contracted Christoffel symbol $\wt{g}^{\alpha\beta}\Gamma_{\alpha\beta}^\lambda(\wt{g})$:
    \begin{align}
        \wt{g}^{\alpha\beta}\Gamma_{\alpha\beta}^{\lambda}(\wt{g}) &= \frac{1}{2}\wt{g}^{\alpha\beta}\wt{g}^{\lambda\delta}(2\pa_{(\alpha}\wt{g}_{\beta)\delta} - \pa_\delta \wt{g}_{\alpha\beta})\nonumber\\
        &= \frac{1}{2}\tau^{-4p/(1-p)}g^{\alpha\beta}g^{\lambda\delta}\big[2\pa_{(\alpha|}(\tau^{2p/(1-p)}g_{|\beta)\delta}) - \pa_\delta (\tau^{2p/(1-p)}g_{\alpha\beta})\big]\nonumber\\
        &=\tau^{-2p/(1-p)}\Big[-\frac{2p}{1-p}\frac{1}{\tau}g^{0\lambda} +\Gamma^\lambda(g)\Big],\label{sec:gaugedeqs.eq:christrel}
    \end{align}
    combining this with \eqref{sec:gaugedeqs.eq:confwaveproof1} yields \eqref{sec:gaugedeqs.eq:confsf1}.
    
    Next we derive \eqref{sec:gaugedeqs.eq:confmet1}. We compute for the terms on the right hand side of \eqref{sec:gaugedeqs.eq:ricciconformal}:
    \begin{subequations}
        \begin{gather}
            g^{\delta\lambda}\pa_\lambda g^{\mu\nu} \pa_\delta (\log \Omega) - 2g^{\lambda(\mu}\pa_\lambda g^{\nu)\delta}\pa_\delta (\log \Omega) = -\frac{p}{1-p}\frac{1}{\tau}g^{0\lambda}\pa_\lambda g^{\mu\nu} + \frac{2p}{1-p}\frac{1}{\tau}g^{\lambda(\mu}\pa_\lambda g^{\nu)0},\\
            - g^{\mu\nu} \Big[g^{\alpha\beta}\pa_\alpha \pa_\beta (\log\Omega) - 2g^{\alpha\beta}\pa_\alpha (\log \Omega) \pa_\beta (\log \Omega) - \Gamma^\lambda \pa_\lambda (\log \Omega)\Big] = g^{\mu\nu}\Big[\frac{p(3p-1)}{(1-p)^2}\frac{1}{\tau^2}g^{00} - \frac{p}{1-p}\frac{1}{\tau}\Gamma^0 \Big],\\
            -2g^{\mu\alpha}g^{\nu\beta}\Big[\pa_\alpha\pa_\beta(\log \Omega) + \pa_\alpha (\log \Omega)\pa_\beta (\log \Omega)\Big] = -\frac{2p}{(1-p)^2}\frac{1}{\tau^2}g^{0\mu}g^{0\nu}.
        \end{gather}
    \end{subequations}
    Therefore by Lemma~\ref{sec:gaugedeqs.lem:ricciconformal} the Ricci curvature of $g$ obeys the identity
    \begin{multline}
        2\Ric^{\mu\nu}[g] = 2g^{\mu\alpha}g^{\nu\beta}\Ric_{\alpha\beta}[\wt{g}] -\frac{2p}{1-p}\frac{1}{\tau}g^{0\lambda}\pa_\lambda g^{\mu\nu} + \frac{4p}{1-p}\frac{1}{\tau}g^{\lambda(\mu}\pa_\lambda g^{\nu)0}\\-\frac{2p}{1-p}\frac{1}{\tau}\Gamma^0(g)g^{\mu\nu} +  \frac{2p(3p-1)}{(1-p)^2}\frac{1}{\tau^2}g^{00}g^{\mu\nu} - \frac{4p}{(1-p)^2}\frac{1}{\tau^2}g^{0\mu}g^{0\nu}.\label{sec:gaugedeqs.eq:confwaveproof2}
    \end{multline}
    The Einstein equations imply that the Ricci curvature of $\wt{g}$ satisfies
    \begin{align}
        g^{\mu\alpha}g^{\nu\beta}\Ric_{\alpha\beta}[\wt{g}] &= g^{\mu\alpha}g^{\nu\beta}\pa_\alpha\phi\pa_\beta\phi + g^{\mu\alpha}g^{\nu\beta}\wt{g}_{\alpha\beta}V(\phi)\nonumber\\
        &= g^{\mu\alpha}g^{\nu\beta}\pa_\alpha\phi\pa_\beta\phi + \tau^{2p/(1-p)}g^{\mu\nu}V(\phi).\label{sec:gaugedeqs.eq:confwaveproof3}
    \end{align}
    Combining \eqref{sec:gaugedeqs.eq:confwaveproof2} and \eqref{sec:gaugedeqs.eq:confwaveproof3} with the decomposition of the Ricci curvature from Lemma~\ref{sec:gaugedeqs.lem:ricciwave}, we obtain \eqref{sec:gaugedeqs.eq:confmet1}.
\end{proof}
\subsection{Equations for the metric and scalar field perturbations}
We wish to solve the system \eqref{sec:prelims.eq:einstein}, \eqref{sec:prelims.eq:scalarfield} for small perturbations of the background FLRW spacetime. With that in mind, we introduce the functions
\begin{equation}
     h^{\mu\nu} := g^{\mu\nu} - m^{\mu\nu},\qquad \psi := \phi - \phi_b,
\end{equation}
where $(m,\phi_b)$ are the conformal metric and scalar field of the background FLRW solution \eqref{sec:intro.eq:background}
Note that the background components $m^{\mu\nu}$ are constant, and so $\pa g^{\mu\nu} = \pa h^{\mu\nu}$. Moreover, the background shift components are identically zero, therefore $g^{0i} = h^{0i}$. We will be use these identities throughout the rest of the paper.

In the following Lemma we derive equations for the perturbations $h^{\mu\nu}$, $\psi$, which are equivalent to the Einstein-nonlinear scalar field system. We fix the scalar field potential to be
\begin{equation}
    V(\phi) = V_0 \exp(-(2/p)^{1/2}\phi).
\end{equation}
\begin{lemma}[Equations for metric and scalar field perturbations]\label{lem:waveequationsperturbations}
    The system of equations for the $(g^{\mu\nu},\phi)$ in Proposition~\ref{sec:gaugedeqs.prop:confwave} are equivalent to the following equations for the metric perturbations $(h^{\mu\nu},\psi)$:
    \begin{subequations}
        \begin{align}
            \wh{\square}_g  h^{00} =\,& \frac{1}{\tau}g^{0\lambda}\pa_\lambda\Big[\frac{2p}{1-p} h^{00} - 2\tau\Gamma^0\Big]\nonumber\\
            &+\frac{1}{\tau^2}\Big[-\frac{4p-2}{1-p}g^{00}\Gamma^0 + \frac{2p(3p-1)}{(1-p)^2}g^{00} h^{00}\Big] +\mc{F}^{00}+\mc{G}^{00},\label{sec:gaugedeqs.eq:wavelapseungauged}\\
            \wh{\square}_g  h^{0i} =\,& \frac{1}{\tau}g^{i\lambda}\pa_\lambda\Big[\frac{2p}{1-p} h^{00} - \tau \Gamma^{0}\Big] - \frac{1}{\tau}g^{0\lambda}\pa_\lambda \Big[\tau \Gamma^{i}\Big]\nonumber\\
            &+ \frac{1}{\tau^2}\Big[\tau g^{00}\Gamma^i-\frac{3p-1}{1-p}\tau g^{0i}\Gamma^{0}   + \frac{2p(3p-1)}{(1-p)^2}g^{0i} h^{00}\Big] +\mc{F}^{0i}+\mc{G}^{0i},\label{sec:gaugedeqs.eq:waveshiftungauged}\\
            \wh{\square}_g  h^{ij} =& -\frac{2p}{1-p}\frac{1}{\tau}g^{0\lambda}\pa_\lambda  h^{ij} + \boxed{\frac{1}{\tau}\Big[\frac{4p}{1-p}g^{\lambda(i}\pa_\lambda h^{j)0} - 2g^{\lambda(i}\pa_\lambda\big(\tau \Gamma^{j)}\big)\Big]}\nonumber\\
            &+\frac{2}{\tau} g^{0(i}\Gamma^{j)} + \boxed{\frac{1}{\tau^2}\Big[-\frac{2p}{1-p}\tau\Gamma^0g^{ij} + \frac{2p(3p-1)}{(1-p)^2} h^{00}g^{ij}\Big]} + \mc{F}^{ij}+\mc{G}^{ij},\label{sec:gaugedeqs.sec:gaugedeqs.eq:wavehoriungauged}\\
            \wh{\square}_g \psi =\,& -\frac{2p}{1-p}\frac{1}{\tau}g^{0\lambda}\pa_\lambda \psi - \frac{2(3p-1)}{(1-p)^2}\frac{1}{\tau^2}g^{00}\psi + \boxed{\frac{(2p)^{1/2}}{1-p}\frac{1}{\tau^2}\Big[\tau\Gamma^0 - \frac{3p-1}{1-p}h^{00}\Big]}\nonumber\\
            &+ \Gamma^\lambda \pa_\lambda \psi + \frac{2(3p-1)}{(1-p)^2}\frac{1}{\tau^2} h^{00}\psi - \frac{(2p)^{1/2}}{\tau^2}\mc{W}(\psi),\label{sec:gaugedeqs.eq:wavescalarungauged}
        \end{align}
    \end{subequations}
    where $\mc{W}(\psi)$ is the function
    \begin{equation}\label{sec:gaugedeqs.eq:potenterror}
        \mc{W}(\psi) :=  (3p-1)\big[\exp(-(2/p)^{1/2}\psi) -1 + (2/p)^{1/2}\psi\big].
    \end{equation}
    and the $\mc{G}^{\mu\nu}$ are the functions
    \begin{equation}\label{sec:gaugedeqs.eq:errorG}
        \mc{G}^{\mu\nu} := \frac{(32p)^{1/2}}{1-p}\frac{1}{\tau}g^{0(\mu}g^{\nu)\lambda}\pa_\lambda \psi + 2g^{\mu\alpha}g^{\nu\beta}\pa_\alpha\psi\pa_\beta\psi -\frac{(8p)^{1/2}(3p-1)}{\tau^2}\psi g^{\mu\nu} + \frac{2p}{\tau^2}\mc{W}(\psi)g^{\mu\nu}.
    \end{equation}
\end{lemma}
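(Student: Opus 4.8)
The plan is to obtain the stated system from the identities \eqref{sec:gaugedeqs.eq:confmet1}--\eqref{sec:gaugedeqs.eq:confsf1} of Proposition~\ref{sec:gaugedeqs.prop:confwave} by the substitution $g^{\mu\nu} = m^{\mu\nu} + h^{\mu\nu}$, $\phi = \phi_b + \psi$, followed by algebraic reorganisation. Since this substitution is a bijection on solutions and every step below is reversible, proving the identity in one direction gives the claimed equivalence. Using that $m^{\mu\nu}$ is constant, $\wh{\square}_g m^{\mu\nu} = 0$ and $\pa g^{\mu\nu} = \pa h^{\mu\nu}$, so the left sides become $\wh{\square}_g h^{\mu\nu}$; moreover $g^{0i} = h^{0i}$ because $m^{0i} = 0$. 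The external inputs needed are the conformal-frame background data: from \eqref{sec:prelims.eq:backgroundconf}, $\pa_\lambda\phi_b = \tfrac{(2p)^{1/2}}{1-p}\tfrac{1}{\tau}\delta_\lambda^0$, $\pa_\lambda\pa_\delta\phi_b = -\tfrac{(2p)^{1/2}}{1-p}\tfrac{1}{\tau^2}\delta_\lambda^0\delta_\delta^0$, and $m^{00} = -(1-p)^2$; Lemma~\ref{sec:prelims.lem:background} with $\tau = t^{1-p}$ gives the key identity $\tau^{2p/(1-p)}V(\phi_b) = \tfrac{p(3p-1)}{\tau^2}$; and the exponential potential gives $V(\phi) = V(\phi_b)\exp(-(2/p)^{1/2}\psi)$, $V'(\phi) = -(2/p)^{1/2}V(\phi)$.

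For the metric equations I would first process the matter source in \eqref{sec:gaugedeqs.eq:confmet1}. Expanding $\pa\phi = \pa\phi_b + \pa\psi$ in $2g^{\mu\alpha}g^{\nu\beta}\pa_\alpha\phi\pa_\beta\phi$, the purely-background term equals $\tfrac{4p}{(1-p)^2}\tfrac{1}{\tau^2}g^{0\mu}g^{0\nu}$ and cancels the $-\tfrac{4p}{(1-p)^2}\tfrac{1}{\tau^2}g^{0\mu}g^{0\nu}$ already in \eqref{sec:gaugedeqs.eq:confmet1}; the cross term and the $\pa\psi$-quadratic term become the first two terms of $\mc{G}^{\mu\nu}$ (using $4(2p)^{1/2} = (32p)^{1/2}$ and a symmetrisation identity). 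Writing $\exp(-(2/p)^{1/2}\psi) = 1 - (2/p)^{1/2}\psi + (3p-1)^{-1}\mc{W}(\psi)$ \emph{defines} $\mc{W}$ as in \eqref{sec:gaugedeqs.eq:potenterror} and splits the potential term into a leading piece $\tfrac{2p(3p-1)}{\tau^2}g^{\mu\nu}$ plus the remaining two terms of $\mc{G}^{\mu\nu}$; combining this leading piece with $\tfrac{2p(3p-1)}{(1-p)^2}\tfrac{1}{\tau^2}g^{00}g^{\mu\nu}$ and substituting $g^{00} = -(1-p)^2 + h^{00}$ collapses both into $\tfrac{2p(3p-1)}{(1-p)^2}\tfrac{1}{\tau^2}h^{00}g^{\mu\nu}$ — this is the mechanism by which the background drops out. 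The remaining work is to regroup the differentiated terms $-\tfrac{2p}{1-p}\tfrac{1}{\tau}g^{0\lambda}\pa_\lambda h^{\mu\nu}$, $\tfrac{4p}{1-p}\tfrac{1}{\tau}g^{\lambda(\mu}\pa_\lambda h^{\nu)0}$, $-2g^{\lambda(\mu}\pa_\lambda\Gamma^{\nu)}$ together with the undifferentiated $-\tfrac{2p}{1-p}\tfrac{1}{\tau}\Gamma^0 g^{\mu\nu}$ into the factored combinations in the statement, by running the Leibniz rule $\pa_\lambda(\tau\Gamma^0) = \delta^0_\lambda\Gamma^0 + \tau\pa_\lambda\Gamma^0$ backwards; the leftover lower-order $\tau^{-1}g^{00}\Gamma^0$-type terms land in the displayed $\tau^{-1}$ brackets. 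Throughout, $\mc{F}^{\mu\nu}$ is carried through verbatim since it depends only on $g$ and $\pa h$. Splitting the free indices into the cases $00$, $0i$, $ij$ and tracking the symmetrised derivative terms in each yields \eqref{sec:gaugedeqs.eq:wavelapseungauged}--\eqref{sec:gaugedeqs.sec:gaugedeqs.eq:wavehoriungauged}.

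For the scalar equation I would write $\wh{\square}_g\phi = \wh{\square}_g\phi_b + \wh{\square}_g\psi$ with $\wh{\square}_g\phi_b = -g^{00}\tfrac{(2p)^{1/2}}{1-p}\tfrac{1}{\tau^2}$, move this to the right side of \eqref{sec:gaugedeqs.eq:confsf1}, and expand $\pa\phi = \pa\phi_b + \pa\psi$ in the transport and $\Gamma^\lambda\pa_\lambda\phi$ terms and $\tau^{2p/(1-p)}V'(\phi) = -(2p)^{1/2}(3p-1)\tau^{-2}\exp(-(2/p)^{1/2}\psi)$ in the potential term. All $\psi$-free $g^{00}\tau^{-2}$ contributions then add up with coefficient $-\tfrac{(2p)^{1/2}(3p-1)}{(1-p)^2}$; combining with the $\psi$-free constant $-\tfrac{(2p)^{1/2}(3p-1)}{\tau^2}$ from $V'$ and using $g^{00} = -(1-p)^2 + h^{00}$ leaves $-\tfrac{(2p)^{1/2}(3p-1)}{(1-p)^2}\tfrac{1}{\tau^2}h^{00}$, which with the $\tfrac{(2p)^{1/2}}{1-p}\tfrac{1}{\tau}\Gamma^0$ piece from $\Gamma^\lambda\pa_\lambda\phi_b$ is the boxed term. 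The linear-in-$\psi$ contribution from $V'$ is $\tfrac{2(3p-1)}{\tau^2}\psi$, which is rewritten as $-\tfrac{2(3p-1)}{(1-p)^2}\tfrac{1}{\tau^2}g^{00}\psi + \tfrac{2(3p-1)}{(1-p)^2}\tfrac{1}{\tau^2}h^{00}\psi$ using $g^{00} - h^{00} = -(1-p)^2$, and the higher-order remainder is $-\tfrac{(2p)^{1/2}}{\tau^2}\mc{W}(\psi)$.

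The obstacle is organisational rather than analytic: every step is an exact identity, so the real difficulty is bookkeeping a large number of terms with the correct index symmetrisations — the $g^{\lambda(\mu}\pa_\lambda h^{\nu)0}$ contributions behave quite differently in the lapse, shift and horizontal equations — and, more delicately, recognising which groupings to factor so that precisely the boxed terms end up in the form that the later gauge choice $\Gamma^\mu = \mc{Y}^\mu$ will eliminate. Achieving that reorganisation in the intended form, not merely in a correct one, is where the care lies.
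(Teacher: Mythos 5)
Your proposal is correct and follows essentially the same route as the paper's proof: substitute $g^{\mu\nu}=m^{\mu\nu}+h^{\mu\nu}$, $\phi=\phi_b+\psi$ into the identities of Proposition~\ref{sec:gaugedeqs.prop:confwave}, expand the stress-energy and potential terms using the conformal background data and $\tau^{2p/(1-p)}V(\phi_b)=p(3p-1)\tau^{-2}$, Taylor expand the exponential to isolate $\mc{W}(\psi)$, observe the cancellation of the Minkowski-background terms $\frac{2p(3p-1)}{(1-p)^2}\tau^{-2}g^{00}g^{\mu\nu}$ and $\frac{4p}{(1-p)^2}\tau^{-2}g^{0\mu}g^{0\nu}$ against the corresponding pieces of $2g^{\mu\alpha}g^{\nu\beta}\pa_\alpha\phi\pa_\beta\phi$ and $2\tau^{2p/(1-p)}V(\phi)g^{\mu\nu}$, and regroup. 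The only surface-level difference is order (you treat the metric equations before the scalar; the paper does the reverse), and both write-ups are equally terse about the final component-by-component regrouping into the boxed/factored combinations.
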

\begin{remark}
The boxed terms in the equations \eqref{sec:gaugedeqs.sec:gaugedeqs.eq:wavehoriungauged} and \eqref{sec:gaugedeqs.eq:wavescalarungauged} are terms which will, upon fixing wave coordinates, vanish identically. See Remark~\ref{sec:gaugedeqs.rmk:gaugemotivation} for more details.
\end{remark}
\begin{remark}
The function $\mc{W}(\psi)$ is the nonlinear error part of $V(\phi)$. We note that $\mc{W}(0) = \mc{W}'(0) = 0$, and so $\mc{W}(\psi) = O(\psi^2)$ for small $\psi$.
\end{remark}
\begin{proof}
    First we prove the equation \eqref{sec:gaugedeqs.eq:wavescalarungauged} for the scalar field perturbation. Using the formula \eqref{sec:prelims.eq:backgroundconf} for $\phi_b$, We compute
    \begin{align}
        \wh{\square}_g \psi &= \wh{\square}_g\phi - g^{00}\pa_0^2\phi_b\nonumber\\
        &=\wh{\square}_g\phi - \frac{(2p)^{1/2}(1-p)}{\tau^2} + \frac{(2p)^{1/2}}{1-p}\frac{1}{\tau^2} h^{00}.
    \end{align}
    We expand the terms on the right hand side of \eqref{sec:gaugedeqs.eq:confsf1}:
    \begin{subequations}
        \begin{align}
            -\frac{2p}{1-p}\frac{1}{\tau}g^{0\lambda}\pa_\lambda \phi =&\>\frac{(2p)^{3/2}}{\tau^2} - \frac{(2p)^{3/2}}{(1-p)^2}\frac{1}{\tau^2} h^{00} - \frac{2p}{1-p}\frac{1}{\tau}g^{0\lambda}\pa_\lambda\psi,\\
            \Gamma^\lambda \pa_\lambda \phi =&\> \frac{(2p)^{1/2}}{1-p}\frac{1}{\tau}\Gamma^0 + \Gamma^\lambda \pa_\lambda \psi.
        \end{align}
    \end{subequations}
    For the last term on the RHS of \eqref{sec:gaugedeqs.eq:confsf1}, we compute for the potential:
    \begin{equation}
        V(\phi_b) = p(3p-1)\tau^{-2/(1-p)},
    \end{equation}
    which implies
    \begin{equation}
        \tau^{2p/1-p}V(\phi) = \tau^{2p/1-p}V(\phi_b)\cdot\exp(-(2/p)^{1/2}\psi) = \frac{p(3p-1)}{\tau^2}\exp(-(2/p)^{1/2}\psi).
    \end{equation}
    Similarly, since $V'(\phi) = -(2/p)^{1/2}V(\phi)$, we have
    \begin{equation}
        \tau^{2p/1-p}V'(\phi) = -\frac{(2p)^{1/2}(3p-1)}{\tau^2}\exp(-(2/p)^{1/2}\psi).
    \end{equation}
    We Taylor expand to quadratic order in $\psi$:
    \begin{equation}
        \exp(-(2/p)^{1/2}\psi) = 1 - (2/p)^{1/2}\psi + O(\psi^2) = 1 + (2/p)^{1/2}\Big[\frac{1}{(1-p)^2}g^{00} - h^{00}\Big]\psi + O(\psi^2),
    \end{equation}
    and so we define the function $\mc{W}(\psi)$ like \eqref{sec:gaugedeqs.eq:potenterror}, which yields
    \begin{equation}
        \tau^{2p/(1-p)}V'(\phi) = - \frac{(2p)^{1/2}(3p-1)}{\tau^2}-\frac{2(3p-1)}{(1-p)^2}\frac{1}{\tau^2}g^{00}\psi +\frac{2(3p-1)}{(1-p)^2}\frac{1}{\tau^2} h^{00}\psi -\frac{(2p)^{1/2}}{\tau^2}\mc{W}(\psi).
    \end{equation}
    The terms that depend only on $\tau$ cancel. Combining this with the earlier computation yields \eqref{sec:gaugedeqs.eq:wavescalarungauged}.
    
    Next we derive the equations for the metric perturbations. We expand the terms on the right hand side of \eqref{sec:gaugedeqs.eq:confmet1} that depend on the scalar field:
    \begin{subequations}
        \begin{align}
            2g^{\mu\alpha}g^{\nu\beta}\pa_\alpha \phi \pa_\beta \phi =&\> \frac{4p}{(1-p)^2}\frac{1}{\tau^2}g^{0\mu}g^{0\nu} + \frac{(32p)^{1/2}}{1-p}\frac{1}{\tau}g^{0(\mu}g^{\nu)\lambda}\pa_\lambda \psi +2 g^{\mu\alpha}g^{\nu\alpha}\pa_\alpha \psi\pa_\beta \psi\label{sec:gaugedeqs.eq:metricpertproof1}\\
            2\tau^{2p/(1-p)}V(\phi)g^{\mu\nu}=& -\frac{2p(3p-1)}{(1-p)^2}\frac{1}{\tau^2}g^{00}g^{\mu\nu} + \frac{2p(3p-1)}{(1-p)^2}\frac{1}{\tau^2} h^{00}g^{\mu\nu}\nonumber\\
            &- \frac{(8p)^{1/2}(3p-1)}{\tau^2}\psi g^{\mu\nu} + \frac{2p}{\tau^2}\mc{W}(\psi)g^{\mu\nu}.\label{sec:gaugedeqs.eq:metricpertproof2}
        \end{align}
    \end{subequations}
    Observe that the first term on the right hand side of each of \eqref{sec:gaugedeqs.eq:metricpertproof1} and \eqref{sec:gaugedeqs.eq:metricpertproof2} cancel with the corresponding terms in \eqref{sec:gaugedeqs.eq:confmet1}, leaving us with
    \begin{multline}
        \wh{\square}_g  h^{\mu\nu} = -\frac{2p}{1-p}\frac{1}{\tau}g^{0\lambda}\pa_\lambda  h^{\mu\nu} + \frac{4p}{1-p}\frac{1}{\tau}g^{\lambda(\mu}\pa_\lambda  h^{\nu)0} - 2g^{\lambda(\mu}\pa_\lambda \Gamma^{\nu)}\\
        -\frac{2p}{1-p}\frac{1}{\tau}\Gamma^0g^{\mu\nu} + \frac{2p(3p-1)}{(1-p)^2}\frac{1}{\tau^2} h^{00}g^{\mu\nu}+ \mc{F}^{\mu\nu}+\mc{G}^{\mu\nu},
    \end{multline}
    where the $\mc{G}^{\mu\nu}$ are defined in \eqref{sec:gaugedeqs.eq:errorG}. We group terms appropriately to obtain
    \begin{multline}
        \wh{\square}_g  h^{\mu\nu} = -\frac{2p}{1-p}\frac{1}{\tau}g^{0\lambda}\pa_\lambda  h^{\mu\nu} + \frac{1}{\tau}\Big[\frac{4p}{1-p}g^{\lambda(\mu}\pa_\lambda  h^{\nu)0} - 2 g^{\lambda(\mu}\pa_\lambda(\tau \Gamma^{\nu)})\Big]\\
        +\frac{1}{\tau^2}\Big[2\tau g^{0(\mu}\Gamma^{\nu)}-\frac{2p}{1-p}\tau\Gamma^0g^{\mu\nu} + \frac{2p(3p-1)}{(1-p)^2} h^{00}g^{\mu\nu}\Big]+ \mc{F}^{\mu\nu}+\mc{G}^{\mu\nu}.
    \end{multline}
    The result now follows after considering the $00$-, $0i$-, and $ij$-components individually.
\end{proof}
\begin{remark}
We recall the cancellation of the terms
    \begin{equation}\label{sec:gaugedeqs.eq:cancellingterms}
        \frac{2p(3p-1)}{(1-p)^2}\frac{1}{\tau^2}g^{00}g^{\mu\nu} ,\qquad-\frac{4p}{(1-p)^2}\frac{1}{\tau^2}g^{0\mu}g^{0\nu},
    \end{equation}
    present in the system \eqref{sec:gaugedeqs.eq:confmet1} with the corresponding terms in \eqref{sec:gaugedeqs.eq:metricpertproof1} - \eqref{sec:gaugedeqs.eq:metricpertproof2}. While important for our argument, such a cancellation is expected because the background metric (the Minkowski metric) must also solve the system \eqref{sec:gaugedeqs.eq:confmet1}. The terms \eqref{sec:gaugedeqs.eq:cancellingterms} are the only ones present in \eqref{sec:gaugedeqs.eq:confmet1} that do not depend on any derivatives of the metric. Since all terms containing derivatives vanish for the Minkowski metric, these are the only potentially nonzero terms present in the system, and so their cancellation with the terms arising from $g^{\mu\alpha}g^{\nu\beta}\pa_\alpha \phi \pa_\beta \phi$ and $\tau^{2p/(1-p)}V(\phi) g^{\mu\nu}$ is necessary.
\end{remark}
\subsection{Wave coordinates}
We fix wave coordinates
\begin{equation}\label{sec:gaugedeqs.eq:waveconstraint}
    \Gamma^\mu(g) = \mc{Y}^{\mu}(g),\qquad \mu=0,1,2,3,
\end{equation}
where $\mc{Y}^\mu$ are the gauge source functions
\begin{equation}\label{sec:gaugedeqs.eq:gaugesourcechoice}
    \mc{Y}^0(g) = \frac{3p-1}{1-p}\frac{1}{\tau} h^{00},\qquad \mc{Y}^i(g) = \frac{2p}{1-p}\frac{1}{\tau} h^{0i}, \qquad i=1,2,3.
\end{equation}
\begin{remark}[Motivation for choice of gauge source functions]\label{sec:gaugedeqs.rmk:gaugemotivation}
    Our motivation for this choice of the functions $\mc{Y}^\mu$ is that they simplify the system more than any other choice. We highlight the boxed terms in the equations \eqref{sec:gaugedeqs.sec:gaugedeqs.eq:wavehoriungauged} for the horizontal metric perturbations and \eqref{sec:gaugedeqs.eq:wavescalarungauged} for the scalar field perturbations. These are all mixed linear terms, which complicate the structure of the equations. Fixing wave coordinates \eqref{sec:gaugedeqs.eq:waveconstraint} with source functions \eqref{sec:gaugedeqs.eq:gaugesourcechoice} ensures that these terms vanish completely. Despite this, there are still mixed linear terms present in the system, see the terms $\mc{L}^{\mu\nu}$, $\mc{L}^{(\psi)}$ in the gauge-fixed system \eqref{sec:gaugedeqs.eq:wave}.  
\end{remark}
\begin{remark}[Propagation of wave coordinate condition]
It is a standard property of the Einstein equations that wave coordinate conditions of the form \eqref{sec:gaugedeqs.eq:waveconstraint} are propagated by the reduced Einstein equations. One can prove that if \eqref{sec:gaugedeqs.eq:waveconstraint} are satisfied on some initial Cauchy surface, then they are satisfied for all future time such that the solution to the reduced system exists. See Chapter VII of \cite{Cho:grbook:15} for details.

In other works, wave coordinates are usually fixed with respect to the original metric $\wt{g}$ that solves the Einstein equations, rather than to a conformal metric. In the present setting these two formulations are equivalent due to the choice of conformal factor \eqref{sec:gaugedeqs.eq:conformalfactor}. Indeed, by \eqref{sec:gaugedeqs.eq:christrel}, the contracted Christoffel symbols of $g$ and $\wt{g}$ satisfy the relation 
    \begin{equation}
        \wt{g}^{\alpha\beta}\Gamma_{\alpha\beta}^\mu(\wt{g}) = \tau^{-2p/(1-p)}\Big(g^{\alpha\beta}\Gamma_{\alpha\beta}^\mu(g)-\frac{2p}{1-p}\frac{1}{\tau}g^{0\mu}\Big),
    \end{equation}
    and so fixing $\Gamma^\lambda(g) = \mc{Y}^\lambda(g)$ is equivalent to setting
    \begin{equation}
        \wt{g}^{\alpha\beta}\Gamma_{\alpha\beta}^\mu(\wt{g}) = \tau^{-2p/(1-p)}\mc{Y}^\mu\big(\tau^{2p/(1-p)}\wt{g}\big) - \frac{2p}{1-p}\frac{1}{\tau}\wt{g}^{0\mu}.
    \end{equation}
    Hence fixing wave coordinates for the conformal metric $g$ are equivalent to fixing (different) wave coordinates for $\wt{g}$.
\end{remark}
\subsection{The gauge-fixed system}
We now state the gauge-fixed system, which is equivalent to the ENSF system provided the wave coordinate condition \eqref{sec:gaugedeqs.eq:waveconstraint} holds. We introduce the parameter
\begin{equation}
    q := \frac{p}{1-p}-1,
\end{equation}
and observe that for $p \in (2/3,1)$, $q > 1$.
The gauge-fixed Einstein-nonlinear scalar field system is
\begin{subequations}\label{sec:gaugedeqs.eq:wave}
    \begin{align}
        &\wh{\square}_gh^{00} + \frac{2q}{\tau}g^{00}\pa_\tau h^{00} - \frac{4q+2}{\tau^2}g^{00}h^{00} = \mc{L}^{00} + \mc{N}^{00},\label{sec:gaugedeqs.eq:wavelapse}\\
        &\wh{\square}_g h^{0i} + \frac{2q+2}{\tau}g^{00}\pa_\tau h^{0i} - \frac{2q+2}{\tau^2}g^{00}h^{0i} = \mc{L}^{0i} + \mc{N}^{0i},\label{sec:gaugedeqs.eq:waveshift}\\
        &\wh{\square}_g h^{ij} + \frac{2q+2}{\tau} g^{00}\pa_\tau h^{ij} = \mc{L}^{ij} + \mc{N}^{ij},\label{sec:gaugedeqs.eq:wavehori}\\
        &\wh{\square}_g \psi + \frac{2q+2}{\tau} g^{00}\pa_\tau \psi + \frac{2(2q+1)(q+2)}{\tau^2}g^{00}\psi = \mc{N}^{(\psi)},\label{sec:gaugedeqs.eq:wavescalar}
    \end{align}
\end{subequations}
where the $\mc{L}^{00}$, $\mc{L}^{0i}$, $\mc{L}^{ij}$ are the linear terms
\begin{subequations}
    \begin{align}
        \mc{L}^{00} &:= \frac{(32p)^{1/2}}{1-p}\frac{1}{\tau} (g^{00})^2\pa_\tau \psi - \frac{(8p)^{1/2}(3p-1)}{\tau^2}g^{00}\psi,\\
        \mc{L}^{0i} &:= \frac{1}{\tau}g^{i a}\pa_a h^{00} + \frac{(8p)^{1/2}}{1-p}\frac{1}{\tau}g^{00}g^{ia}\pa_a \psi,\\
        \mc{L}^{ij} &:= -\frac{(8p)^{1/2}(3p-1)}{\tau}g^{ij}\psi,
    \end{align}
\end{subequations}
and the $\mc{N}^{00}$, $\mc{N}^{0i}$, $\mc{N}^{ij}$, $\mc{N}^{(\psi)}$ are the following error terms:
\begin{subequations}
    \begin{align}
        \mc{N}^{\mu\nu} &:= \mc{N}_0^{\mu\nu} + \mc{N}_1^{\mu\nu} + \mc{N}_2^{\mu\nu},\\
        \mc{N}^{(\psi)} &:=\mc{N}_0^{(\psi)} + \mc{N}_1^{(\psi)},
    \end{align}
\end{subequations}
\begin{subequations}
    \begin{align}
        \mc{N}_0^{00} :=&\>\frac{1}{\tau^2}\Big[2pg^{00}\mc{W}(\psi)\Big],\\
        \mc{N}_0^{0i} :=&\>\frac{1}{\tau^2}\Big[\frac{3p-1}{1-p}h^{0i}h^{00} - (8p)^{1/2}(3p-1)h^{0i}\psi + 2ph^{0i}\mc{W}(\psi)\Big],\\
        \mc{N}_0^{ij} := &\>\frac{1}{\tau^2}\Big[\frac{4p}{1-p}h^{0i}h^{0j} + 2pg^{ij}\mc{W}(\psi)\Big],&\\
        \mc{N}_0^{(\psi)} := &\>\frac{1}{\tau^2}\Big[\frac{2(3p-1)}{(1-p)^2}h^{00}\psi-(2p)^{1/2}\mc{W}(\psi)\Big],
        \end{align}
    \end{subequations}
\begin{subequations}
    \begin{align}
        \mc{N}_1^{00} :=&\> \frac{1}{\tau}\Big[-\frac{2(2p-1)}{1-p}h^{0a}\pa_a h^{00} + \frac{(32p)^{1/2}}{1-p}g^{00}h^{0a}\pa_a\psi\Big],\\
        \mc{N}_1^{0i} := &\frac{1}{\tau}\Big[-\frac{2p}{1-p}h^{0a}\pa_a h^{0i} + h^{0i}\pa_\tau h^{00} + \frac{(32p)^{1/2}}{1-p}g^{00}h^{0i}\pa_\tau \psi +\frac{(8p)^{1/2}}{1-p}h^{0a}h^{0i}\pa_a \psi\Big],\\
        \mc{N}_1^{ij} := &\>\frac{1}{\tau}\Big[-\frac{2p}{1-p}\frac{1}{\tau}h^{0a}\pa_a h^{ij} + \frac{(32p)^{1/2}}{1-p}g^{\lambda(i}h^{j)0}\pa_\lambda \psi\Big],\\
        \mc{N}_1^{(\psi)} :=& \frac{1}{\tau} \Big[\frac{3p-1}{1-p}h^{00}\pa_\tau \psi\Big],
    \end{align}
\end{subequations}
\begin{equation}
    \mc{N}_2^{\mu\nu} := \>2g^{\mu\alpha}g^{\nu\alpha}\pa_\alpha\psi\pa_\beta \psi + \mc{F}^{\mu\nu}.
\end{equation}
\begin{remark}
    The error terms are broken up into the terms in $\mc{N}_0$ that are quadratic in $(\tau^{-1}h^{00},\tau^{-1}h^{0i},\tau^{-1}\psi)$, terms in $\mc{N}_2$ that are quadratic in $(\pa h^{\mu\nu},\pa \psi)$, and the $\mc{N}_1$ are comprised of ``mixed'' error terms that are quadratic in one each of $(\tau^{-1}h^{00},\tau^{-1}h^{0i},\tau^{-1}\psi)$ and $(\pa h^{\mu\nu},\pa \psi)$. 
    
    We point out that none of the error terms depend on the undifferentiated horizontal metric perturbations $h^{ij}$. This is important for later estimates of these error terms, as the $h^{ij}$ do not decay at leading order, while the other perturbed quantities do decay.
\end{remark}
\section{The global existence theorem}\label{sec:globex}
In this section we give the full statement of the main theorem of this paper, Theorem~\ref{sec:globex.thm:globalexistence}, and begin its proof.

\paragraph{Norms for global existence.}
There are several norms that are central to the global existence theorem. We fix an integer $K \geq 3$, and fix a parameter $\delta > 0$ sufficiently small that $5\delta \leq q - 1$. Since $q > 1$ for all $p > 2/3$, such a choice can always be achieved.

We define the following norms for the metric and scalar field perturbations:
\begin{subequations}
    \begin{align}
        S_{h^{00},K} &:= \tau^{q-1}\| h^{00}\|_{L^2(\TT^3)} + \tau^{q-\delta}\|\pa  h^{00}\|_{H^K(\TT^3)},\\
        S_{h^{0*},K} &:= \sum_{a=1}^3\Big\{\tau^{q-\delta}\| h^{0a}\|_{L^2(\TT^3)} + \tau^{q-\delta}\|\pa  h^{0a}\|_{H^K(\TT^3)}\Big\},\\
        S_{h^{**},K} &:= \sum_{a,b=1}^3\Big\{\| h^{ab}\|_{L^2(\TT^3)} + \tau^{q-\delta}\|\pa  h^{ab}\|_{H^K(\TT^3)}\Big\},\\
        S_{\psi,K} &:=\tau^{q-1+\delta}\|\psi\|_{L^2(\TT^3)} + \tau^q\|\pa\psi\|_{H^K(\TT^3)},\\
        S_K &:= S_{h^{00},K} + S_{h^{0*},K} + S_{h^{**},K} +S_{\psi,K}.
    \end{align}
\end{subequations}
Central to the global existence argument is proving that the norm $S_K$ is small for all time.
\begin{remark}
    In our argument, we fix $\delta$ for the rest of the paper so that $0 < 5\delta < q-1$. The parameter $\delta$ can in fact be arbitrarily small, and assuming the initial data still satisfies the smallness condition \eqref{sec:gaugedeqs.eq:bounddata}, the global existence part of Theorem~\ref{sec:globex.thm:globalexistence} goes through unchanged. In this way one may obtain (slightly) improved asymptotics for the resulting perturbations.
\end{remark}

We now state the main theorem.
\begin{theorem}[Global existence of the Einstein-nonlinear scalar field system]\label{sec:globex.thm:globalexistence}
    Fix an exponent $p \in (2/3,1)$, an integer $K \geq 3$, and an initial time $\tau_0 > 0$. Consider initial data $\wt{g}_0^{\mu\nu}, \phi_0 \in H^{K+1}(\TT^3)$, $\wt{g}_1^{\mu\nu}, \phi_1 \in H^K(\TT^3)$ for $\mu,\nu=0,1,2,3$ so that
    \begin{equation}\label{sec:globex.eq:initdata}
        \big(\wt{g}^{\mu\nu},\pa_\tau \wt{g}^{\mu\nu},\phi,\pa_\tau \phi \big)\Big|_{\tau = \tau_0} = (\wt{g}_0^{\mu\nu},\wt{g}_1^{\mu\nu},\phi_0,\phi_1).
    \end{equation}
    Assume that $\wt{g}_0^{00} < 0$ and that $(\wt{g}_0^{ij})_{i,j=1,2,3}$ is the inverse of a Riemannian metric on $\TT^3$. Suppose additionally that $(\wt{g}_0^{\mu\nu},\wt{g}_1^{\mu\nu},\phi_0,\phi_1)$ satisfy the Hamiltonian and momentum constraint equations
    \begin{subequations}
        \begin{align}
            \Big(\Ric_{00}[\wt{g}]- \frac{1}{2}R\wt{g}_{00} - T_{00}[\wt{g},\phi]\Big)\Big|_{\tau = \tau_0} &=0,\\
            \Big(\Ric_{0i}[\wt{g}]- \frac{1}{2}R\wt{g}_{0i} - T_{0i}[\wt{g},\phi]\Big)\Big|_{\tau = \tau_0} &=0,\quad i=1,2,3,
        \end{align}
    \end{subequations}
    as well as the initial wave coordinate condition
    \begin{equation}
        \big(\Gamma^\mu - \mc{Y}^\mu\big)\Big|_{\tau = \tau_0} = 0,\quad \mu=0,1,2,3.
    \end{equation}
    \begin{enumerate}
        \item Then for some $T > \tau_0$, there exists a unique classical solution $([\tau_0,T)\times \TT^3,g,\phi)$ to the Einstein nonlinear scalar field system \eqref{sec:gaugedeqs.eq:wave} with initial data \eqref{sec:globex.eq:initdata} and regularity properties
        \begin{gather}
            \wt{g}^{\mu\nu},\phi \in C_{\mathrm{loc}}^0\big([\tau_0,T), H^{K+1}(\TT^3)\big), \qquad
            \pa_\tau \wt{g}^{\mu\nu}, \pa_\tau \phi \in C_{\mathrm{loc}}^0\big([\tau_0,T), H^{K}(\TT^3)\big).
        \end{gather}
        The wave coordinate condition
        \begin{equation}\label{sec:globex.eq:wavecoord}
            \Gamma^\mu- \mc{Y}^\mu= 0,\quad \mu=0,1,2,3
        \end{equation}
        is satisfied for all $(\tau,x) \in [\tau_0,T)\times\TT^3$ and so, letting $g^{\mu\nu} = \tau^{2p/(1-p)}\wt{g}^{\mu\nu}$, the functions
        $h^{\mu\nu} = g^{\mu\nu}-m^{\mu\nu}$, $\psi = \phi-\phi_b$ satisfy the gauge-fixed equations \eqref{sec:gaugedeqs.eq:wave} on $[\tau_0,T)\times\TT^3$.
        
        \item There exists a constant $\ve_0 > 0$ sufficiently small such that if $\ve < \ve_0$ and the solution $(\wt{g},\phi)$ satisfies the initial bound
        \begin{equation}\label{sec:gaugedeqs.eq:bounddata}
        S_K(\tau_0) \leq \ve^2,
        \end{equation}
        then the time interval on which the solution exists is $[\tau_0,\infty)$, i.e. the solution to the ENSF system exists for all time. Additionally, the bound
        \begin{equation}
            S_K(\tau) \leq \ve
        \end{equation}
        holds for all $\tau \geq \tau_0$.
        \item The spacetime $(\mc{M},\wt{g})$ is future-causally geodesically complete, where $\mc{M} = [\tau_0,\infty)\times \TT^3$. That is, all future-causal inextendible geodesics $\gamma: (s_-,s_+) \ra \mc{M}$ are future-complete, i.e. $s_+ = \infty$.
        
        \item The future-global perturbed solution $(\wt{g},\phi)$ from above has the following asymptotics. The quantities $g^{00}$, $g^{0i}$, $\phi$ decay to leading order to their background counterparts, so that
        \begin{subequations}\label{sec:globex.eq:asymp}
            \begin{align}
                \tau^{2p/(1-p)}\|\wt{g}^{00} - \wt{g}_b^{00}\|_{W^{K-1,\infty}} &\leq \frac{C\ve}{\tau^{p/(1-p)-2}},\label{sec:globex.eq:asymplapse}\\
                \tau^{2p/(1-p)}\|\wt{g}^{0i}\|_{W^{K-1,\infty}} &\leq \frac{C\ve}{\tau^{p/(1-p) - 1-\delta}},\qquad i=1,2,3,\label{sec:globex.eq:asympshift}\\
                \frac{1}{\log \tau}\|\phi - \phi_b\|_{W^{K-1,\infty}} &\leq \frac{C\ve}{\tau^{p/(1-p)-2+\delta}\log \tau}.\label{sec:globex.eq:asympscalar}
            \end{align}
        \end{subequations}
        For the horizontal metric components, there exists a symmetric, positive definite $3\times 3$ matrix $(g_\infty^{ij})_{i,j=1,2,3} \in \RR^{3\times 3}$ such that
        \begin{equation}
            \|\tau^{2p/(1-p)}\wt{g}^{ij} - g_{\infty}^{ij}\|_{W^{K-2,\infty}} \leq \frac{C\ve}{\tau^{p/(1-p)-2}}, \qquad i,j=1,2,3,\label{sec:globex.eq:asymphori}
        \end{equation}
        moreover $g_\infty^{ij}$ satisfies $|g_\infty^{ij} - \delta^{ij}| \leq C\ve$ for $i,j=1,2,3$.
    \end{enumerate}
\end{theorem}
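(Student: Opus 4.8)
The plan is to establish the four parts of Theorem~\ref{sec:globex.thm:globalexistence} in order, with the overwhelming majority of the work concentrated in the continuity (bootstrap) argument underlying part~(2). \textbf{Part~(1) (local existence and gauge propagation).} The gauge-fixed system \eqref{sec:gaugedeqs.eq:wave} is a quasilinear system of wave equations for $(h^{\mu\nu},\psi)$ with principal part $\wh{\square}_g = g^{\alpha\beta}\pa_\alpha\pa_\beta$, which is hyperbolic as long as $g^{00}<0$ and $(g^{ij})$ stays positive-definite; these open conditions hold at $\tau_0$ by hypothesis, so standard energy estimates for quasilinear wave systems (Chapter~VI--VII of \cite{Cho:grbook:15}) give a unique classical solution on a maximal interval $[\tau_0,T)$ with the stated regularity. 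Propagation of the wave-coordinate condition is the usual argument: because the source functions \eqref{sec:gaugedeqs.eq:gaugesourcechoice} depend only on $g$ and not $\pa g$, the contracted-Bianchi identity applied to the reduced equations shows $C^\mu := \Gamma^\mu - \mc{Y}^\mu$ satisfies a homogeneous linear wave equation; since $C^\mu|_{\tau_0}=0$ by assumption and $\pa_\tau C^\mu|_{\tau_0}=0$ follows from the Hamiltonian and momentum constraints, we get $C^\mu\equiv 0$, and hence solving \eqref{sec:gaugedeqs.eq:wave} is equivalent to solving the ENSF system.

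\textbf{Part~(2), the bootstrap --- spatial averages.} Fix $\delta$ with $5\delta<q-1$, and on the maximal interval assume the bootstrap hypothesis $S_K(\tau)\le \ve^{3/2}$ with $\ve$ small; the goal is to improve this to $S_K(\tau)\le C\ve^2<\ve^{3/2}$, which by continuity forces $T=\infty$ and $S_K\le \ve$. Decompose each perturbation as $u=u_{\av}+u_{\osc}$. Integrating \eqref{sec:gaugedeqs.eq:wave} over $\TT^3$ kills the Laplacian and yields, up to nonlinear error, decoupled second-order ODEs $\pa_\tau^2 u_{\av} + (a/\tau)\pa_\tau u_{\av} + (b/\tau^2)u_{\av} = F_{\av}$ governed by the indicial roots \eqref{sec:intro.eq:indicialroots}. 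For $h^{ij}_{\av}$ and $\psi_{\av}$ one has $a-1\ge 0$ and $b\ge 0$, so $\mathrm{Re}(\lambda_-)\ge 0$ and a Duhamel/Gr\"onwall estimate controls them with the asserted decay; for $h^{00}_{\av}$ and $h^{0i}_{\av}$ the coefficient $b$ has the wrong sign, so instead one integrates the \emph{first-order} transport equations $\Gamma^\mu(g)=\mc{Y}^\mu(g)$ over $\TT^3$, which thanks to \eqref{sec:gaugedeqs.eq:gaugesourcechoice} take the form $\pa_\tau u_{\av}+(\eta/\tau)u_{\av}=F'_{\av}$ with $\eta>0$, giving $|u_{\av}(\tau)|\lesa \tau^{-\eta}(\cdots)$. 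This is carried out in Section~\ref{sec:avgsys}; the mixed linear terms coupling $h^{00}_{\av}$ to $h^{ij}_{\av}$ (see \eqref{sec:avgsys.eq:odeavg2}) are absorbed using the margin built into the weights of $S_K$.

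\textbf{Part~(2), the bootstrap --- oscillatory remainders, and the main obstacle.} The remainders satisfy the same equations up to the subtracted averages. For each equation, rescaling $v:=\tau^{\sigma}u_{\osc}$ with $\sigma$ equal to half the coefficient of the $\tau^{-1}\pa_\tau$ term removes the first-order dissipative term and produces $\wh{\square}_g v = (c/\tau^2)v + (\text{rescaled error})$; since $v$ has zero spatial average, the Poincar\'e inequality \eqref{sec:avgsys.eq:poincare} converts the $\tau^{-2}v$ term into an error controllable by the energy. Top-order derivatives are then controlled by the family of weighted energy estimates derived in Section~\ref{sec:energyests}, lower orders by Poincar\'e, and a Gr\"onwall argument closes the remainder bounds in Section~\ref{sec:oscsys}. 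The nonlinear terms $\mc{N}^{\mu\nu}$, $\mc{L}^{\mu\nu}$, $\mc{F}^{\mu\nu}$ are estimated via the Sobolev product and composition lemmas (Lemmas~\ref{sec:prelims.lem:sobolevproduct}, \ref{sec:prelims.lem:sobolevcomp}) together with the bootstrap decay rates, the key structural point being that none of them contains an undifferentiated $h^{ij}$, so all inherit the decay of $(h^{00},h^{0i},\psi,\pa h^{\mu\nu},\pa\psi)$. The main obstacle is making the entire hierarchy close simultaneously: after rescaling, every error term must carry an integrable-in-$\tau$ weight, and this is exactly where $q>1$ (i.e.\ $p>2/3$) enters --- the rescaled lapse equation for $h^{00}_{\osc}$ has weight $\tau^{-(q-1)}$ on its nonlinearity, integrable only for $q>1$, mirroring the model \eqref{sec:intro.eq:bdownmodel}. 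A secondary difficulty is that neither the average system nor the oscillatory system fully diagonalizes (mixed terms via $\mc{L}^{0i}$, $\mc{L}^{ij}$), which is what motivates the flexible weighted-energy family rather than a single rescaled energy, and the ``bottleneck'' phenomenon whereby the slower remainders cap the decay of the averages; both are handled by choosing the weights $q-\delta$, $q-1$, $q-1+\delta$, $q$ in $S_K$ so that every coupling is subcritical with margin at least $\delta$.

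\textbf{Parts~(3) and~(4).} Once $S_K\le\ve$ globally, the rescaled metric $g=\tau^{-2p/(1-p)}\wt g$ stays uniformly close to Minkowski on $[\tau_0,\infty)\times\TT^3$, so $\pa_\tau$ remains uniformly timelike and the Christoffel symbols $\Gamma^\lambda_{\mu\nu}(g)$ are integrable in $\tau$ along causal directions; a Gr\"onwall estimate on the geodesic equations for $g$ shows any inextendible future-causal geodesic reaches $\tau=\infty$, and the conformal factor $\tau^{2p/(1-p)}\to\infty$ forces the $\wt g$-affine length to diverge, giving future-causal geodesic completeness. Finally, the pointwise bounds \eqref{sec:globex.eq:asymplapse}--\eqref{sec:globex.eq:asympscalar} follow from the $S_K$-bounds and the Sobolev embedding (Lemma~\ref{sec:prelims.lem:sobolev}), converting $H^K$-control of $\pa h^{\mu\nu},\pa\psi$ and $L^2$-control of $h^{00},h^{0i},\psi$ into $W^{K-1,\infty}$-decay; for $h^{ij}$, the weight $\tau^{q-\delta}$ on $\|\pa h^{ij}\|_{H^K}$ with $q-\delta>1$ makes $\pa_\tau h^{ij}$ integrable in $\tau$, so $h^{ij}_{\osc}\to 0$ in $W^{K-2,\infty}$ and $h^{ij}_{\av}(\tau)$ converges to a constant matrix; setting $g_\infty^{ij}:=\delta^{ij}+\lim_{\tau\to\infty}h^{ij}_{\av}(\tau)$ yields \eqref{sec:globex.eq:asymphori}, and smallness forces $|g_\infty^{ij}-\delta^{ij}|\lesa\ve$, hence symmetry and positive-definiteness of $(g_\infty^{ij})$.
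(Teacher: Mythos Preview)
Your proposal is essentially correct and tracks the paper's strategy closely: the average/oscillatory decomposition, the use of the gauge constraints to get first-order ODEs for $h^{00}_{\av},h^{0i}_{\av}$, the weighted-energy family together with Poincar\'e for the remainders, and the observation that no undifferentiated $h^{ij}$ appears in the nonlinear terms are all exactly how the paper proceeds. The cosmetic difference in bootstrap threshold ($\ve^{3/2}$ versus the paper's $\ve$) is immaterial.

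The one place your sketch diverges from the paper and has a genuine gap is Part~(3). You propose to run a Gr\"onwall argument on the $g$-geodesic equations and then invoke the divergence of the conformal factor to conclude $\wt g$-affine completeness. The problem is that conformal transformations do not take timelike geodesics to timelike geodesics, so $g$-completeness does not translate to $\wt g$-completeness in the way you suggest; even for null geodesics the reparametrisation $ds/d\lambda = \tau^{-2p/(1-p)}$ goes the wrong direction for your claim. The paper instead works directly with $\wt g$-geodesics: it computes $\Gamma^0_{\alpha\beta}(\wt g)$ as $\Gamma^0_{\alpha\beta}(g)$ plus the dominant conformal-factor terms $\frac{p}{(1-p)\tau}$ and $\frac{p(1-p)}{\tau}\delta_{ij}$, and from the $0$-component of the geodesic equation derives \emph{both} an upper and a lower bound $c_1(\gamma^0)^{-4p/(1-p)}\le \dot\gamma^0\le c_2(\gamma^0)^{-p/(2(1-p))}$. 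The lower bound is essential here precisely because, in the decelerated regime, causal geodesics can wind around $\TT^3$ indefinitely and one cannot argue (as in the accelerated case) that the spatial components $\gamma^i(s)$ converge; the lower bound rules out $\gamma^0$ approaching a finite limit, and the upper bound then forces $s_{\max}=\infty$. You should replace your conformal-factor argument with this direct two-sided estimate on $\dot\gamma^0$.
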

\begin{remark}[Local theory]
    Part (I) of Theorem~\ref{sec:globex.thm:globalexistence} is a local existence statement that is implicitly used in our global existence argument. The local theory for the Einstein equations is by now standard, and we refer the reader to the texts \cite{Rin:Cauchyproblembook:09,Cho:grbook:15} for thorough treatments of this topic. The propagation of the wave coordinate condition $\mc{D}^{\mu} = 0$ (which implies the equivalence of the Einstein equations and the gauge-fixed equations) is central to the local theory, and were used in the foundational works \cite{Cho:Einsteinlwp:52} by Choquet-Bruhat and \cite{ChoGer:mghd:69} by Choquet-Bruhat/Geroch to prove local well-posedness of the Einstein equations.
\end{remark}
\begin{remark}[Asymptotics]\label{sec:globex.rmk:asymptotics}
    The asymptotics described in part (IV) of Theorem~\ref{sec:globex.thm:globalexistence} are expressed in terms of the inverse metric $\wt{g}^{\mu\nu}$ in the coordinates $(\tau,x)$ fixed by the wave coordinate condition \eqref{sec:gaugedeqs.eq:waveconstraint}. We note that since $5\delta < p/(1-p)-2$, the right hand side of the inequalities \eqref{sec:globex.eq:asymp} and \eqref{sec:globex.eq:asymphori} all decay in $\tau$.
    
    We can also present the asymptotics in coordinates $(t,x)$, where $t = [(1-p)\tau]^{1/1-p}$ is the time coordinate with respect to which the background metric $\wt{g}_b$ takes the usual form $\wt{g}_b = -\de t^2 +t^{2p}\delta_{ij}\de x^i\de x^j$. The metric components in coordinates $(t,x)$ then satisfy
    \begin{align}
        \wt{g}^{tt}(t,x) = \Big(\frac{\de t}{\de \tau}\Big)^2 \wt{g}^{00}(\tau(t),x),\quad \wt{g}^{it}(t,x) = \frac{\de t}{\de \tau}\wt{g}^{0i}(\tau(t),x),\qquad \wt{g}^{ij}(t,x) = \wt{g}^{ij}(\tau(t),x).
    \end{align}
    Since $t'(\tau) = t^p$, the bounds \eqref{sec:globex.eq:asymp} and \eqref{sec:globex.eq:asymphori} imply the following bounds for the components $\wt{g}^{tt}$, $\wt{g}^{it}$, $\wt{g}^{ij}$, $\phi$:
    \begin{gather}
        \|\wt{g}^{tt} + 1\|_{W^{K-1,\infty}} \leq \frac{C\ve}{t^{3p-2}},\qquad
        t^p\|\wt{g}^{it}\|_{W^{K-1,\infty}} \leq \frac{C\ve}{t^{2p-1-\delta'}},\qquad
        \|t^{2p}\wt{g}^{ij}-g_\infty^{ij}\|_{W^{K-2,\infty}} \leq \frac{C\ve}{t^{3p-2}},\\
        \frac{1}{\log t}\|\phi - \phi_b\|_{W^{K-1,\infty}} \leq \frac{C\ve}{t^{3p-2+\delta'}\log t},
    \end{gather}
    where $\delta' = (1-p)\delta$. Using algebraic estimates like those found in Lemma 2 and Lemma 7 in \cite{Rin:deSitterENSFstab08}, one can also derive the following asymptotics for the lowered metric components in $(t,x)$ coordinates:
    \begin{equation}
        \|\wt{g}_{tt} + 1\|_{W^{K-1,\infty}} \leq \frac{C\ve}{t^{3p-2}},\qquad
        t^{-p}\|\wt{g}_{it}\|_{W^{K-1,\infty}} \leq \frac{C\ve}{t^{2p-1-\delta'}},\qquad \|t^{-2p}\wt{g}_{ij}-(g_\infty)_{ij}\|_{W^{K-2,\infty}} \leq \frac{C\ve}{t^{3p-2}},
    \end{equation}
    where $(g_\infty)_{ij}$ is the inverse matrix of $(g_\infty^{ij})_{i,j=1,2,3}$. This yields the asymptotic description of $\wt{g}$ presented in Theorem~\ref{sec:intro.thm:main} in the introduction.
\end{remark}
\begin{paragraph}{Proof of parts (II), (III), and (IV) of Theorem~\ref{sec:globex.thm:globalexistence}.}
Part (II) of Theorem~\ref{sec:globex.thm:globalexistence} comprises a global existence statement, part (III) states that the perturbed spacetimes are geodesically complete, and part (IV) describes the asymptotics of these spacetimes. Before beginning the proof of global existence (which comprises the majority of the rest of the paper), we mention that the geodesic completeness statement from part (III) is proven in Proposition~\ref{sec:completethm.prop:complete}, while the asymptotics of part (IV) are proven in Proposition~\ref{sec:completethm.prop:asymp}.

We now begin the global existence part of the proof. Due to the propagation of the wave coordinate condition \eqref{sec:globex.eq:wavecoord}, the Einstein-nonlinear scalar field system is equivalent to the gauge-fixed system \eqref{sec:gaugedeqs.eq:wave}. Thus it suffices to prove global existence of the solutions $(h^{\mu\nu},\psi)$ of the gauged-fixed system. We do this by a bootstrap argument. Let $I \subset [\tau_0,\infty)$ denote the connected set where $T \in I$ if and only if the unique solution to the gauge-fixed system \eqref{sec:gaugedeqs.eq:wave} exists for all $\tau_0 \leq \tau \leq T$ and obeys the bound
\begin{equation}\label{sec:globex.eq:bootstrap}
    S_K(\tau) \leq \ve
\end{equation}
for all $\tau_0 \leq \tau \leq T$. We will henceforth call \eqref{sec:globex.eq:bootstrap} the bootstrap assumption. Clearly $I$ is nonempty, as there exists a unique solution in a neighbourhood of $\tau_0$ that obeys \eqref{sec:globex.eq:bootstrap}, this follows by the local existence part of Theorem~\ref{sec:globex.thm:globalexistence}. Moreover, $I$ is closed by standard continuity properties of the equations. If we can show that $I$ is open in $[\tau_0,\infty)$, that would then imply that $I = [\tau_0,\infty)$, and so the solution $(g^{\mu\nu},\phi)$ exists for all time. Suppose $T \in I$, so that the bootstrap assumption holds for all $\tau_0 \leq \tau \leq T$. By part (I) of Theorem~\ref{sec:globex.thm:globalexistence}, the solution can be extended to a small neighbourhood in time of $T$, and so it remains to improve the bootstrap assumption \eqref{sec:globex.eq:bootstrap}. This is proven in Theorem~\ref{sec:completethm.thm:bootimp}, using the various estimates derived in  Sections~\ref{sec:bootbound} -~\ref{sec:completethm}.
\qed
\end{paragraph}
\section{Bounds from the bootstrap assumption}\label{sec:bootbound}
From this section until Theorem~\ref{sec:completethm.thm:bootimp}, we assume the hypotheses of Theorem~\ref{sec:globex.thm:globalexistence}, and we assume that the bootstrap assumption \eqref{sec:globex.eq:bootstrap} holds on the interval $[\tau_0,T]$. In this section we prove an array of estimates directly from the bootstrap assumption. These include bounds for the lowered metric components $g_{\mu\nu}$, and for the nonlinear terms $\mc{N}^{\mu\nu}$, $\mc{N}^{(\psi)}$ present in the gauge-fixed system \eqref{sec:gaugedeqs.eq:wave}. We make extensive use of the Sobolev embedding theorems from Section~\ref{sec:prelims.subsec:sobolev}.

First we derive bounds for several miscellaneous quantities which appear later in the global existence argument.
\begin{lemma}\label{sec:bootbound.lem:miscests}
    Assuming the bootstrap assumption \eqref{sec:globex.eq:bootstrap}, the following estimates hold:
    \begin{align}
        \|(g^{00})^{-1} - (m^{00})^{-1}\|_{H^K} &\leq C\ve\tau^{-(q-1-\delta)},\label{sec:bootbound.eq:miscests1}\\
        \|\mc{W}(\psi)\|_{H^K} &\leq C\ve^2\tau^{-2(q-1-\delta)}.\label{sec:bootbound.eq:miscests2}
    \end{align}
\end{lemma}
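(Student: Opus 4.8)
The plan is to prove both bounds in Lemma~\ref{sec:bootbound.lem:miscests} by combining the bootstrap control on $h^{\mu\nu}$ and $\psi$ with the Sobolev product and composition estimates from Section~\ref{sec:prelims.subsec:sobolev}. The key input from the bootstrap assumption \eqref{sec:globex.eq:bootstrap} is that $S_K(\tau) \leq \ve$, which in particular gives $\|h^{00}\|_{H^K} \lesa \ve\tau^{-(q-\delta)} \leq \ve \tau^{-(q-1-\delta)}$ (using the $\tau^{q-\delta}\|\pa h^{00}\|_{H^K}$ term together with the $L^2$ bound and the Poincar\'e inequality to recover the full $H^K$ norm of $h^{00}$ itself from its derivatives and average), and similarly $\|\psi\|_{H^K} \lesa \ve\tau^{-(q-1+\delta)}$. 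Here one must be a little careful: the norms $S_{h^{00},K}$ etc.\ control $\|h^{00}\|_{L^2}$ and $\|\pa h^{00}\|_{H^K}$ separately, so to get $\|h^{00}\|_{H^K}$ one writes $h^{00} = h^{00}_{\av} + h^{00}_{\osc}$, bounds $\|h^{00}_{\av}\|_{L^2} \leq \|h^{00}\|_{L^2}$, and bounds $\|\ol{\pa}{}^I h^{00}_{\osc}\|_{L^2} \lesa \|\ol{\pa} h^{00}\|_{H^{K-1}} \leq \|\pa h^{00}\|_{H^K}$ for $1 \leq |I| \leq K$ via Poincar\'e. The net effect is $\|h^{00}\|_{H^K} \lesa \ve\tau^{-(q-1-\delta)}$, which is the worst of the relevant rates.

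For \eqref{sec:bootbound.eq:miscests1}, I would start from the algebraic identity $(g^{00})^{-1} - (m^{00})^{-1} = -(g^{00})^{-1}(m^{00})^{-1}(g^{00} - m^{00}) = -(g^{00})^{-1}(m^{00})^{-1}h^{00}$. Since $m^{00} = -(1-p)^2$ is a nonzero constant and, by the bootstrap assumption and Sobolev embedding $H^K \hookrightarrow L^\infty$ (Lemma~\ref{sec:prelims.lem:sobolev}, using $K \geq 3$), $g^{00}$ stays within a small neighbourhood of $m^{00}$ — in particular bounded away from zero — the function $x \mapsto x^{-1}$ is smooth on a compact neighbourhood $S$ of $m^{00}$ not containing $0$. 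Then I apply Lemma~\ref{sec:prelims.lem:sobolevcomp} with $F(x) = x^{-1} - (m^{00})^{-1}$, which vanishes to first order at $m^{00}$ (so $l = 1$ after recentering), or more simply use Lemma~\ref{sec:prelims.lem:sobolevproduct} to write $\|(g^{00})^{-1}(m^{00})^{-1}h^{00}\|_{H^K} \lesa \|(g^{00})^{-1}\|_{H^K}\|h^{00}\|_{H^K}$, with $\|(g^{00})^{-1}\|_{H^K} \lesa 1$ by the composition lemma applied to a smooth function of $g^{00} = m^{00} + h^{00}$. This gives $\|(g^{00})^{-1} - (m^{00})^{-1}\|_{H^K} \lesa \|h^{00}\|_{H^K} \lesa \ve\tau^{-(q-1-\delta)}$, as desired.

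For \eqref{sec:bootbound.eq:miscests2}, recall from \eqref{sec:gaugedeqs.eq:potenterror} that $\mc{W}(\psi) = (3p-1)[\exp(-(2/p)^{1/2}\psi) - 1 + (2/p)^{1/2}\psi]$, so $F(x) := \exp(-(2/p)^{1/2}x) - 1 + (2/p)^{1/2}x$ is smooth with $F(0) = F'(0) = 0$; that is, it vanishes to order $l = 2$ at the origin. Applying Lemma~\ref{sec:prelims.lem:sobolevcomp} with this $F$ and $l = 2$ (the hypothesis that $\psi(\tau,\cdot)$ takes values in a fixed compact set is guaranteed by the bootstrap bound and Sobolev embedding, since $\psi$ is small in $L^\infty$), we get $\|\mc{W}(\psi)\|_{H^K} \lesa \|\psi\|_{H^K}^2 \lesa \ve^2 \tau^{-2(q-1+\delta)} \leq \ve^2\tau^{-2(q-1-\delta)}$, which is even stronger than claimed. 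The main (very mild) obstacle is the bookkeeping at the start: correctly extracting $\|h^{00}\|_{H^K} \lesa \ve\tau^{-(q-1-\delta)}$ and $\|\psi\|_{H^K} \lesa \ve\tau^{-(q-1+\delta)}$ from the split norm definitions $S_{h^{00},K}$, $S_{\psi,K}$ via the Poincar\'e inequality, and verifying that $g^{00}$ and $\psi$ genuinely remain in the compact domains where the relevant smooth functions are defined — but all of this follows routinely from the bootstrap assumption and the smallness of $\ve_0$.
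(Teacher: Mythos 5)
Your proposal is correct and follows essentially the same route as the paper: both bounds are obtained by feeding the bootstrap control on $\|h^{00}\|_{H^K}$ and $\|\psi\|_{H^K}$ into the Sobolev composition estimate Lemma~\ref{sec:prelims.lem:sobolevcomp}, with $l=1$ for $F(x)=(m^{00}+x)^{-1}-(m^{00})^{-1}$ and $l=2$ for $\mc{W}$. The paper treats the step $\|h^{00}\|_{H^K}\lesa\ve\tau^{-(q-1-\delta)}$ as immediate from the bootstrap assumption, which it is (the $L^2$ term in $S_{h^{00},K}$ handles $|I|=0$ and the $\|\pa h^{00}\|_{H^K}$ term handles $1\le|I|\le K$, so no appeal to Poincar\'e is actually required); your alternative product-estimate argument for \eqref{sec:bootbound.eq:miscests1}, writing $(g^{00})^{-1}-(m^{00})^{-1}=-(g^{00})^{-1}(m^{00})^{-1}h^{00}$ and using Lemma~\ref{sec:prelims.lem:sobolevproduct}, is a fine cosmetic variant and gives the same conclusion.
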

\begin{proof}
    Both inequalities follow from the estimate \eqref{sec:prelims.eq:sobolevcomp} from Lemma~\ref{sec:prelims.lem:sobolevcomp}, as well as the bootstrap assumption. We introduce the function $F(x) = (m^{00}+x)^{-1}-(m^{00})^{-1}$. Since $F(0) = 0$ and $h^{00}$ is contained in a small ball around $0$, we apply Lemma~\ref{sec:prelims.lem:sobolevcomp} with $l =1$ to $F \circ h^{00}$:
    \begin{equation}
        \|(g^{00})^{-1} - (m^{00})^{-1}\|_{H^K} = \|F\circ h^{00}\|_{H^K} \leq C\|h^{00}\|_{H^K} \leq C\ve \tau^{-(q-1-\delta)},
    \end{equation}
    where the last inequality follows from the bootstrap assumption \eqref{sec:globex.eq:bootstrap}. Similarly for $\mc{W}(\psi)$, we have $\mc{W}(0) = \mc{W}'(0) = 0$. We apply Lemma~\ref{sec:prelims.lem:sobolevcomp} with $l = 2$ to $\mc{W} \circ \psi$, along with the bootstrap assumption \eqref{sec:globex.eq:bootstrap}:
    \begin{equation}
        \|\mc{W}(\psi)\|_{H^K} \leq C\|\psi\|_{H^K}^2 \leq C\ve^2\tau^{-2(q-1-\delta)},
    \end{equation}
    and we are done.
\end{proof}
\subsection{The lowered metric components}
Our strategy for proving global existence revolves around the gauge-fixed system \eqref{sec:gaugedeqs.eq:wave}, which we have written as a system for the inverse metric perturbations $h^{\mu\nu} = g^{\mu\nu} - m^{\mu\nu}$ (as well as the scalar field perturbation $\psi$). While we are primarily estimating the raised metric perturbations, we do need to derive bounds for the lowered metric components $g_{\mu\nu}$, as they appear in the gauge-fixed system in the nonlinear terms $\mc{N}^{\mu\nu}$.
\begin{lemma}[Estimates for the lowered metric components]\label{sec:bootbound.lem:loweredmetests}
    Suppose the bootstrap assumption \eqref{sec:globex.eq:bootstrap} holds. Let $g_{\mu\nu}$ denote the (lowered) metric components of $g$, with inverse $(g^{-1})^{\mu\nu} = g^{\mu\nu}$, and let $m_{\mu\nu}$ denote the (lowered) components of the Minkowski metric
    \begin{equation}
        m = -(1-p)^{-2}\de \tau^2 + \delta_{ij}\de x^i \de x^j.
    \end{equation}
    Then the components $g_{\mu\nu}$ satisfy the following estimates:
    \begin{align}
        \sum_{\alpha,\beta = 0}^3\|g_{\alpha\beta} - m_{\alpha\beta}\|_{L^2} &\leq C\ve,\label{sec:bootbound.eq:loweredmetl2}\\
        \sum_{\alpha,\beta = 0}^3 \|\pa g_{\alpha\beta}\|_{H^K}&\leq C\ve \tau^{-(q-\delta)}.\label{sec:bootbound.eq:loweredmetderiv}
    \end{align}
\end{lemma}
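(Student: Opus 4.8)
The plan is to exploit the fact that the lowered components $g_{\mu\nu}$ are nothing but the matrix inverse of $g^{\alpha\beta} = m^{\alpha\beta} + h^{\alpha\beta}$, so that $g_{\mu\nu} - m_{\mu\nu}$ is a smooth function of the $h^{\alpha\beta}$ that vanishes at $h = 0$, provided $h$ stays in a region where $\det(g^{\alpha\beta})$ is bounded away from zero. Thus the first step is to record that the bootstrap assumption \eqref{sec:globex.eq:bootstrap} makes all the $h^{\mu\nu}$ uniformly small: since $q > 1$ and $q - \delta > 0$, every $\tau$-weight appearing in the norms $S_{h^{00},K}$, $S_{h^{0*},K}$, $S_{h^{**},K}$ is bounded below on $[\tau_0,\infty)$, so $\|h^{\mu\nu}\|_{H^{K+1}} \leq C\ve$ for all $\mu,\nu$, and hence $\|h^{\mu\nu}\|_{L^\infty} \leq C\ve$ by the Sobolev embedding of Lemma~\ref{sec:prelims.lem:sobolev} (recall $K \geq 3$). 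After fixing a small closed ball $S$ around the origin in the space of symmetric $4\times 4$ matrices — small enough, in a $p$-dependent way, that matrix inversion is smooth on $m + S$ — and shrinking $\ve_0$ if needed, we may assume $(h^{\mu\nu}(x))_{\mu\nu}$ lands in $S$ for almost every $x \in \TT^3$.

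For \eqref{sec:bootbound.eq:loweredmetl2} I would then write $g_{\alpha\beta} - m_{\alpha\beta} = F_{\alpha\beta}(h)$ with $F_{\alpha\beta} : S \ra \RR$ smooth and $F_{\alpha\beta}(0) = 0$, and apply the Sobolev composition estimate of Lemma~\ref{sec:prelims.lem:sobolevcomp} with $l = 1$; this gives $\|g_{\alpha\beta} - m_{\alpha\beta}\|_{L^2} \leq \|g_{\alpha\beta} - m_{\alpha\beta}\|_{H^K} \leq C\sum_{\mu,\nu}\|h^{\mu\nu}\|_{H^K} \leq C\ve$. For the derivative estimate \eqref{sec:bootbound.eq:loweredmetderiv}, instead of differentiating the inversion map it is cleaner to differentiate the identity $g^{\mu\alpha}g_{\alpha\nu} = \delta^\mu_\nu$, which yields
\[
    \pa g_{\mu\nu} = -g_{\mu\alpha} g_{\nu\beta}\, \pa g^{\alpha\beta} = -g_{\mu\alpha} g_{\nu\beta}\, \pa h^{\alpha\beta}
\]
(using $\pa g^{\alpha\beta} = \pa h^{\alpha\beta}$, since the $m^{\alpha\beta}$ are constants). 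Applying the Sobolev product estimate of Lemma~\ref{sec:prelims.lem:sobolevproduct} twice, together with $\|g_{\mu\alpha}\|_{H^K} \leq \|m_{\mu\alpha}\|_{H^K} + \|g_{\mu\alpha} - m_{\mu\alpha}\|_{H^K} \leq C$ (the $m_{\mu\alpha}$ are constants, hence in $H^K(\TT^3)$, and the difference is $O(\ve)$ by the previous step), gives $\|\pa g_{\mu\nu}\|_{H^K} \leq C\sum_{\alpha,\beta}\|\pa h^{\alpha\beta}\|_{H^K}$; the bootstrap assumption bounds each $\|\pa h^{\alpha\beta}\|_{H^K}$ by $\ve\tau^{-(q-\delta)}$, which yields \eqref{sec:bootbound.eq:loweredmetderiv}.

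I expect the only genuinely delicate point, as opposed to routine bookkeeping, to be the first step — verifying that matrix inversion remains a smooth operation throughout the bootstrap region, which is exactly what forces the smallness threshold $\ve_0$ and the set $S$ to be chosen depending on $p$ (and on $\tau_0$). Beyond that, the argument is just the composition and product estimates of Section~\ref{sec:prelims.subsec:sobolev} applied to the weighted bounds already encoded in $S_K$, so I would not expect any further obstacles.
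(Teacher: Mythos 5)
Your proposal is correct and follows essentially the same route as the paper: the $L^2$ bound via the Sobolev composition estimate of Lemma~\ref{sec:prelims.lem:sobolevcomp} applied to the (smooth, vanishing at $h=0$) map $h \mapsto g_{\mu\nu} - m_{\mu\nu}$, and the derivative bound via the identity $\pa g_{\mu\nu} = -g_{\mu\alpha}g_{\nu\beta}\pa h^{\alpha\beta}$ together with the Sobolev product estimate, the $L^2$ bound just proved, and the bootstrap assumption. The only cosmetic difference is that the paper spells out the rational-function form $g_{\mu\nu} = P_{\mu\nu}(g^{-1})/Q(g^{-1})$ to justify smoothness, whereas you invoke smoothness of matrix inversion directly; both are equivalent and the extra remarks you make about choosing the compact set $S$ and shrinking $\ve_0$ are the same hypotheses the paper uses implicitly.
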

\begin{proof}
    The components of the lowered metric $g_{\mu\nu}$ are rational functions of the inverse metric components $g^{\mu\nu}$, in the sense that
    \begin{equation}
        g_{\mu\nu} = \frac{P_{\mu\nu}(g^{-1})}{Q(g^{-1})},
    \end{equation}
    where $P$, $Q$ are polynomials over $\RR^{4\times 4}$. Then the functions
    \begin{equation}
        F_{\mu\nu}:\RR^{4\times 4} \ra \RR,\qquad F_{\mu\nu}(A) = \frac{P_{\mu\nu}(A +m^{-1})}{Q(A+m^{-1})} - \frac{P_{\mu\nu}(m^{-1})}{Q(m^{-1})}
    \end{equation}
    satisfy $F_{\mu\nu}(A=0) = 0$. Since the components of $g^{-1}$ lie in a small ball around the components of $m^{-1}$ by the bootstrap assumption we may apply Lemma~\ref{sec:prelims.lem:sobolevcomp} to $F_{\mu\nu} \circ h$ to bound
    \begin{equation}
        \sum_{\alpha,\beta=0}^3 \|g_{\alpha\beta} - m_{\alpha\beta}\|_{L^2} \leq C\sum_{\alpha\beta=0}^3 \|h^{\alpha\beta}\|_{L^2} \leq C\ve,
    \end{equation}
    yielding \eqref{sec:bootbound.eq:loweredmetl2}. To prove the estimate \eqref{sec:bootbound.eq:loweredmetderiv} we use the identity
    \begin{equation}
        \pa_\mu g_{\nu\lambda} = -g_{\nu\alpha}g_{\lambda\beta}\pa_\mu g^{\alpha\beta} = -g_{\nu\alpha}g_{\lambda\beta}\pa_\mu h^{\alpha\beta}.
    \end{equation}
    Applying the Sobolev product estimate from Lemma~\ref{sec:prelims.lem:sobolevproduct}, the previous estimate \eqref{sec:bootbound.eq:loweredmetl2}, and the bootstrap assumption \eqref{sec:globex.eq:bootstrap} then implies the bound
    \begin{align}
        \|\pa g_{\mu\nu}\|_{H^K} &\leq C\Big(\sum_{\alpha,\beta=0}^3\|g_{\alpha\beta}\|_{H^K}^2\Big)\sum_{\lambda,\delta = 0}^3\|\pa h^{\lambda\delta}\|_{H^K},\\
        &\leq C\ve\tau^{-(q-\delta)},
    \end{align}
    hence we have \eqref{sec:bootbound.eq:loweredmetderiv}.
\end{proof}
\begin{remark}
    The above estimates are not particularly sharp. We could, for example, bound the lowered lapse $g_{00}$ in terms of $g^{00}$ like $\|g_{00}-m_{00}\|_{L^2} \leq C\|g^{00}-m^{00}\|_{L^2}$, and exploit the extra decay of $h^{00}$ (this also applies to the lowered shift). However this is not necessary as all lowered metric components in the gauge-fixed system \eqref{sec:gaugedeqs.eq:wave} are differentiated at least once (see Proposition~\ref{sec:bootbound.prop:errorests}). Hence we may use the estimate \eqref{sec:bootbound.eq:loweredmetderiv} instead.
\end{remark}
\subsection{Estimates of the nonlinear terms from the gauge-fixed system}
In the last part of this section, we derive estimates for the nonlinear error terms present in the gauge-fixed system \eqref{sec:gaugedeqs.eq:wave}.
\begin{proposition}[Error term estimates]\label{sec:bootbound.prop:errorests}
    Suppose the bootstrap assumption \eqref{sec:globex.eq:bootstrap} holds. Then we have the following error estimates:
    \begin{subequations}
        \begin{align}
            \|\mc{N}^{\mu\nu}\|_{H^K} \leq&\> C\ve^2 \tau^{-2(q-\delta)}, \quad \mu,\nu = 0,1,2,3,\label{sec:bootbound.eq:errorests1}\\
            \|\mc{N}^{(\psi)}\|_{H^K} \leq&\> C\ve^2 \tau^{-2(q-\delta)}.\label{sec:bootbound.eq:errorests2}
        \end{align}
    \end{subequations}
\end{proposition}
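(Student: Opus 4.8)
The plan is to prove the error estimates term-by-term, working through the explicit definitions of $\mc{N}_0^{\mu\nu}$, $\mc{N}_1^{\mu\nu}$, $\mc{N}_2^{\mu\nu}$ and $\mc{N}_0^{(\psi)}$, $\mc{N}_1^{(\psi)}$ given just before the statement, and in each case reducing to an application of the Sobolev product estimate (Lemma~\ref{sec:prelims.lem:sobolevproduct}), the bootstrap norm bounds contained in $S_K(\tau) \le \ve$, the lowered-metric bounds of Lemma~\ref{sec:bootbound.lem:loweredmetests}, and the bound \eqref{sec:bootbound.eq:miscests2} for $\mc{W}(\psi)$. The organising principle — flagged in the remark following the definition of the $\mc{N}$'s — is that the nonlinearities are quadratic, with each factor being either an undifferentiated quantity among $\{\tau^{-1}h^{00},\tau^{-1}h^{0i},\tau^{-1}\psi\}$ or a differentiated quantity among $\{\pa h^{\mu\nu},\pa\psi\}$, and crucially \emph{never} an undifferentiated $h^{ij}$. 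So for each summand I just read off the decay rate of each factor from the norm definitions and multiply.

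Concretely, I would proceed as follows. First, record the ``atomic'' bounds implied by the bootstrap assumption: $\|h^{00}\|_{H^K} \le \ve\tau^{-(q-1)}$ (using Sobolev embedding to pass from the $L^2$-plus-$\pa$-in-$H^K$ norm to a full $H^K$ bound, absorbing the weaker $\tau^{-(q-\delta)}$ weight), $\|h^{0i}\|_{H^K}\le\ve\tau^{-(q-\delta)}$, $\|\psi\|_{H^K}\le\ve\tau^{-(q-1+\delta)}$, $\|\pa h^{\mu\nu}\|_{H^K}\le\ve\tau^{-(q-\delta)}$ for all $\mu,\nu$ (the $h^{ij}$ derivative also has this weight), $\|\pa\psi\|_{H^K}\le\ve\tau^{-q}$, together with $\|g^{\mu\nu}\|_{H^K}\le C$ and $\|\pa g_{\mu\nu}\|_{H^K}\le C\ve\tau^{-(q-\delta)}$ from Lemma~\ref{sec:bootbound.lem:loweredmetests}. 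Then for $\mc{N}_2^{\mu\nu} = 2g^{\mu\alpha}g^{\nu\beta}\pa_\alpha\psi\pa_\beta\psi + \mc{F}^{\mu\nu}$: the first piece is $H^K$-bounded by $C\cdot C^2\cdot(\ve\tau^{-q})^2 \le C\ve^2\tau^{-2(q-\delta)}$; the term $\mc{F}^{\mu\nu}$ from \eqref{sec:gaugedeqs.eq:semilinearterms} is a sum of products of three inverse-metric factors (bounded), one $\pa g^{\mu\nu}=\pa h^{\mu\nu}$ factor and one $\pa g_{\alpha\beta}$ factor, each contributing $\tau^{-(q-\delta)}$, so again $\le C\ve^2\tau^{-2(q-\delta)}$. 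For $\mc{N}_1^{\mu\nu}$: every summand carries an explicit $\tau^{-1}$ and is a product of an undifferentiated $h^{0a}$ or $h^{00}$ (weight $\tau^{-(q-\delta)}$, resp. $\tau^{-(q-1)}$) with a first derivative (weight $\tau^{-(q-\delta)}$ or $\tau^{-q}$); multiplying gives at worst $\tau^{-1}\cdot\tau^{-(q-1)}\cdot\tau^{-(q-\delta)} = \tau^{-(2q-\delta)}\le\tau^{-2(q-\delta)}$ once $\delta\le q$, which holds since $5\delta < q-1$. For $\mc{N}_0^{\mu\nu}$: summands carry $\tau^{-2}$ and are quadratic in $\{h^{00},h^{0i},\psi,\mc{W}(\psi)\}$; using $\|\mc{W}(\psi)\|_{H^K}\le C\ve^2\tau^{-2(q-1-\delta)}$ the slowest-decaying combination is $\tau^{-2}\cdot h^{00}\cdot\mc{W}(\psi)\sim\tau^{-2}\tau^{-(q-1)}\tau^{-2(q-1-\delta)}$, which decays far faster than $\tau^{-2(q-\delta)}$, and likewise $\tau^{-2}(h^{00})^2\sim\tau^{-2}\tau^{-2(q-1)} = \tau^{-2q}\le\tau^{-2(q-\delta)}$. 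The scalar-field terms $\mc{N}_0^{(\psi)}$, $\mc{N}_1^{(\psi)}$ are handled identically.

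The bookkeeping obstacle — the one place where the exponent arithmetic is genuinely tight rather than comfortably slack — is the collection of summands whose total weight is exactly $\tau^{-2q}$ or $\tau^{-(2q-\delta)}$ and which must be shown to be bounded by $\tau^{-2(q-\delta)} = \tau^{-2q+2\delta}$; this is where one uses that $\delta$ is positive (so $-2q \le -2q+2\delta$ and $-(2q-\delta) = -2q+\delta \le -2q + 2\delta$), and one should double-check that no summand carries total weight strictly worse than $\tau^{-2q+\delta}$. Inspecting the list, the worst offenders are the $\mc{N}_2^{\mu\nu}$ term $g g\pa\psi\pa\psi$ with weight $\tau^{-2q}$ and the $\mc{F}^{\mu\nu}$ and $\mc{N}_1$ terms with weight $\tau^{-2(q-\delta)}$ or $\tau^{-(2q-\delta)}$ — all acceptable — so the estimate closes with room to spare except in those exactly-marginal cases. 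I would state these bounds as a short chain of inequalities for a generic representative summand of each type ($\mc{N}_2$, $\mc{N}_1$, $\mc{N}_0$), remark that the remaining summands are estimated in the same way and decay at least as fast, and conclude \eqref{sec:bootbound.eq:errorests1}--\eqref{sec:bootbound.eq:errorests2}.
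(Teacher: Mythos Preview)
Your proposal is correct and follows essentially the same strategy as the paper's proof: invoke the Sobolev product estimate, the bootstrap bounds, the lowered-metric bounds (Lemma~\ref{sec:bootbound.lem:loweredmetests}), and the $\mc{W}(\psi)$ bound (Lemma~\ref{sec:bootbound.lem:miscests}), then verify that every summand is at least quadratic in factors each of $H^K$-size $\le C\ve\tau^{-(q-\delta)}$. The paper's version is a bit more streamlined --- it bounds all the atomic factors $\tau^{-1}h^{00},\tau^{-1}h^{0i},\tau^{-1}\psi,\pa h^{\mu\nu},\pa g_{\mu\nu},\pa\psi$ by the single rate $C\ve\tau^{-(q-\delta)}$ up front and then squares --- and note that your listed ``slowest'' examples in $\mc{N}_0$ (namely $\tau^{-2}h^{00}\mc{W}(\psi)$ and $\tau^{-2}(h^{00})^2$) do not actually occur: the genuinely marginal $\mc{N}_0$ terms are $\tau^{-2}g^{\mu\nu}\mc{W}(\psi)$ and $\tau^{-2}\mc{W}(\psi)$, which land exactly on $C\ve^2\tau^{-2(q-\delta)}$.
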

\begin{proof}
    We write schematic forms of the error, letting $\cdot$ denote any possible contraction or regular product of the given factors.  We will explicitly write the components of the undifferentiated components of $h$ which appear, as the $h^{00}$ and $h^{0i}$ terms decay, while the $h^{ij}$ terms do not. We have:
    \begin{subequations}
        \begin{align}
            \mc{N}^{00} &= g^{-1}\cdot g^{-1} \cdot (\pa  h \cdot \pa  g + \pa \psi \cdot \pa \psi) + \frac{1}{\tau^2}g^{-1} \cdot \mc{W}(\psi),\\
            \mc{N}^{0i} &= g^{-1}\cdot g^{-1} \cdot (\pa  h \cdot \pa  g + \pa \psi \cdot \pa \psi) +  \frac{1}{\tau}g^{-1}\cdot  h^{0i}\cdot \pa \psi +  \frac{1}{\tau^2}h^{0i} \cdot ( h^{00} + \psi  +\mc{W}(\psi)),\\
            \mc{N}^{ij}& = g^{-1}\cdot g^{-1}\cdot (\pa  h \cdot \pa  g + \pa \psi \cdot \pa \psi) +  \frac{1}{\tau}g^{-1}\cdot  h^{0i}\cdot \pa \psi +  \frac{1}{\tau^2}g^{-1}\cdot \mc{W}(\psi),\\
            \mc{N}^{(\psi)} &=   \frac{1}{\tau}(h^{0i} \cdot \pa \psi +  h^{00} \cdot \pa \psi)  +  \frac{1}{\tau^2}( h^{00}\cdot \psi + \mc{W}(\psi)).
        \end{align}
    \end{subequations}
    The $\mc{W}(\psi)$ terms can be estimated by the Sobolev product estimate from Lemma~\ref{sec:prelims.lem:sobolevproduct}, Lemma~\ref{sec:bootbound.lem:miscests} and the bootstrap assumption:
    \begin{equation}
        \tau^{-2}\|g^{\mu\nu}\mc{W}(\psi)\|_{H^K} \leq C\tau^{-2}\|g^{\mu\nu}\|_{H^K}\|\mc{W}(\psi)|\|_{H^K} \leq C\ve^2\tau^{-2(q-\delta)}.
    \end{equation}
    There are no undifferentiated $h^{ij}$ components present in any of the error terms, thus all remaining terms in $\mc{N}^{\mu\nu}$ depend quadratically on $(\tau^{-1}h^{00},\tau^{-1}h^{0i},\tau^{-1}\psi,\pa h^{\mu\nu},\pa g_{\mu\nu},\pa \psi)$, which by the bootstrap assumption and Lemma~\ref{sec:bootbound.lem:loweredmetests} obey the bound
    \begin{equation}
        \frac{1}{\tau}\Big(\|h^{00}\|_{H^K} + \sum_{a = 1}^3 \|h^{0a}\|_{H^K} + \|\psi\|_{H^K}\Big) + \sum_{\mu,\nu=0}^3\|\pa h^{\mu\nu}\|_{H^K} + \sum_{\mu,\nu=0}^3 \|\pa g_{\mu\nu}\|_{H^K} + \|\pa \psi\|_{H^K} \leq C \ve \tau^{-(q-\delta)}.
    \end{equation}
    The estimates \eqref{sec:bootbound.eq:errorests1}, \eqref{sec:bootbound.eq:errorests2} now follow after repeated applications of the Sobolev product Lemma~\ref{sec:prelims.lem:sobolevproduct}, the bootstrap assumption, and the estimates for the lowered metric components from Lemma~\ref{sec:bootbound.lem:loweredmetests}.
\end{proof}
\section{Estimates for the spatially averaged perturbations \texorpdfstring{$h_{\av}^{\mu\nu}$, $\psi_{\av}$}{ }}\label{sec:avgsys}
The goal of this section is to derive a system of equations satisfied by the spatial averages of the quantities $(h^{\mu\nu},\psi)$. Recall that we denote the spatial average of a scalar function $u$ on $\TT^3$ by
\begin{equation}
    u_{\av}:=\frac{1}{(2\pi)^3}\int_{\TT^3}u(x)\de^3 x,
\end{equation}
and we denote the oscillatory functions with zero average by $u_{\osc} = u - u_{\av}$.
The two functions $u_{\osc}$, $u_{\av}$ satisfy the bounds
\begin{equation}\label{sec:avgsys.eq:avgbound}
    |u_\av| \leq C\|u\|_{L^2},\quad \|u_{\osc}\|_{L^2}\leq C\|u\|_{L^2},
\end{equation}
which follow from the Cauchy-Schwartz inequality on $\TT^3$.

We analyse the spatial averages of $h_{\av}^{\mu\nu}$, $\psi_{\av}$ and the functions $h_{\osc}^{\mu\nu}$, $\psi_{\osc}$ separately, as the equations for these quantities decouple (up to some decaying error), and their principle systems yield substantially different asymptotics. On the one hand, the equations satisfied by the averages $h_{\av}^{\mu\nu}$, $\psi_{\av}$ do not possess terms which depend linearly on spatial derivatives, which can limit decay. On the other hand, the $L^2$ norm of the oscillating remainders $h_{\osc}^{\mu\nu}$, $\psi_{\osc}$ can be estimated directly by the following Poincar\'e inequality.

For a given scalar function $u$, we have the following trivial commutator identities
\begin{gather}
    \pa_\tau (u_{\av}) = (\pa_\tau u)_{\av},\qquad \pa_\mu (u_{\osc}) = (\pa_\mu u)_{\osc}, \quad \mu=0,1,2,3,\\
    \pa_i (u_{\av}) = (\pa_i u)_{\av} = 0,\quad i=1,2,3.
\end{gather}
Due to the first two equalities, we will unambiguously write $\pa_\tau u_{\av}$, $\pa_\tau u_{\osc}$. Moreover, by the last line we have $\pa_i u = \pa_i u_{\osc}$. We also observe that since the background solution is spatially homogeneous, the remainder functions satisfy $g^{\mu\nu}_{\osc} =  h^{\mu\nu}_{\osc}$ and $\phi_{\osc} = \psi_{\osc}$.

We will derive the equations for the averaged horizontal metric components $h_{\av}^{ij}$ and the averaged scalar field $\psi_{\av}$ from the evolution equations. Then we will derive equations for the averaged lapse and shift perturbations $h_{\av}^{00}$, $h_{\av}^{0i}$ directly from the gauge equations
\begin{equation}
    \Gamma^\mu(g) = \mc{Y}^{\mu}(g).
\end{equation}
We remind the reader that we continue to assume the bootstrap assumption \eqref{sec:globex.eq:bootstrap} throughout this section.
\subsection{ODE for the averaged horizontal metric and scalar field perturbations}
We recall the equations for satisfied by the horizontal metric perturbations and scalar field perturbation:
\begin{align}
    &\wh{\square}_g  h^{ij} + \frac{2q+2}{\tau}g^{00}\pa_\tau  h^{ij} - \mc{L}^{ij} - \mc{N}^{ij}=0,\label{sec:gaugedeqs.eq:wavehorimeansection}\\
    &\wh{\square}_g \psi +\frac{2q+2}{\tau}g^{00}\pa_\tau \psi + \frac{2(2q+1)(q+2)}{\tau^2}g^{00}\psi - \mc{N}^{(\psi)}= 0,\label{sec:gaugedeqs.eq:wavescalarmeansection}
\end{align}
where the functions $\mc{L}^{ij}$ are
\begin{equation}
    \mc{L}^{ij} = -\frac{(8p)^{1/2}(3p-1)}{\tau^2}g^{ij}\psi.
\end{equation}

We derive ordinary differential equations for $h_{\av}^{ij}$, $\psi_{\av}$ using these equations. These will be second order ODE, due to the wave equations from which they are derived. 
\begin{lemma}[ODE for the averaged horizontal metric and scalar field perturbations]\label{sec:avgsys.lem:odehoriscalar}
    The spatial averages of the horizontal metric and scalar field perturbations satisfy the differential inequalities
    \begin{subequations}
        \begin{gather}
            \Big|\pa_\tau^2 h_{\av}^{ij}(\tau) + \frac{2q+2}{\tau}\pa_\tau  h_{\av}^{ij}(\tau)\Big| \leq C\Big(\frac{1}{\tau^2}|\psi_{\av}(\tau)| + \ve^2\tau^{-2(q-\delta)}\Big),\quad i,j=1,2,3,\label{sec:avgsys.eq:odeavghori}\\
            \Big|\pa_\tau^2 \psi_{\av}(\tau) + \frac{2q+2}{\tau}\pa_\tau \psi_{\av}(\tau) + \frac{2(2q+1)(q+2)}{\tau^2}\psi_{\av}(\tau)\Big| \leq C\ve^2\tau^{-2(q-\delta)}.\label{sec:avgsys.eq:odeavgscalar}
        \end{gather}
    \end{subequations}
\end{lemma}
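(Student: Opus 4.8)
\emph{Proof strategy.} The plan is to integrate each of the evolution equations \eqref{sec:gaugedeqs.eq:wavehorimeansection}, \eqref{sec:gaugedeqs.eq:wavescalarmeansection} over $\TT^3$ and isolate the spatial average $u_\av$ of the relevant field $u\in\{h^{ij},\psi\}$, treating everything else as error. Writing $\wh{\square}_g = g^{00}\pa_\tau^2 + 2g^{0a}\pa_\tau\pa_a + g^{ab}\pa_a\pa_b$ and splitting $g^{\alpha\beta} = m^{\alpha\beta}+h^{\alpha\beta}$ (recalling $m^{0a}=0$, so $g^{0a}=h^{0a}$), the Minkowski spatial part $m^{ab}\pa_a\pa_b u = \Delta u$ integrates to zero, while the terms $h^{ab}\pa_a\pa_b u$ and $h^{0a}\pa_\tau\pa_a u$ can be integrated by parts in $x$ to become $-\langle \pa_a h^{ab}\,\pa_b u\rangle$ and $-\langle \pa_a h^{0a}\,\pa_\tau u\rangle$; by the bootstrap assumption each factor then carries a derivative and hence a weight $\tau^{-(q-\delta)}$, so both are $O(\ve^2\tau^{-2(q-\delta)})$. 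The damping term $\tfrac{2q+2}{\tau}g^{00}\pa_\tau u$ and, for the scalar, the potential term $\tfrac{2(2q+1)(q+2)}{\tau^2}g^{00}\psi$ are handled by the same average/oscillatory bookkeeping described next.

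The only delicate term is $\langle g^{00}\pa_\tau^2 u\rangle$. Here I would decompose both factors as $g^{00}=g^{00}_\av+h^{00}_\osc$ (with $g^{00}_\av := m^{00}+h^{00}_\av$, constant in $x$) and $\pa_\tau^2 u = \pa_\tau^2 u_\av + \pa_\tau^2 u_\osc$; the two cross terms each contain a single oscillatory factor and therefore vanish upon integration, leaving $\langle g^{00}\pa_\tau^2 u\rangle = g^{00}_\av\,\pa_\tau^2 u_\av + \langle h^{00}_\osc\,\pa_\tau^2 u_\osc\rangle$, and likewise $\langle g^{00}\pa_\tau u\rangle = g^{00}_\av\,\pa_\tau u_\av + \langle h^{00}_\osc\,\pa_\tau u_\osc\rangle$ and $\langle g^{00}\psi\rangle = g^{00}_\av\psi_\av + \langle h^{00}_\osc\psi_\osc\rangle$. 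The purpose of this is twofold: $h^{00}_\av$ is reabsorbed into $g^{00}_\av$, which by the bootstrap assumption and Lemma~\ref{sec:bootbound.lem:miscests} stays uniformly bounded and bounded away from zero (so one may divide by it at the end, leaving the coefficients $2q+2$ and $2(2q+1)(q+2)$ of the ODE untouched); and the genuinely oscillatory remainder $h^{00}_\osc$ \emph{decays}, since the Poincar\'e inequality (Lemma~\ref{sec:prelims.lem:poincare}) gives $\|h^{00}_\osc\|_{L^2}\lesa\|\pa h^{00}\|_{L^2}\lesa\ve\tau^{-(q-\delta)}$. To close the estimate of $\langle h^{00}_\osc\pa_\tau^2 u_\osc\rangle = \langle h^{00}_\osc\pa_\tau^2 u\rangle$ I would first record the a priori bound $\|\pa_\tau^2 u\|_{L^2}\lesa \ve\tau^{-(q-\delta)}$ (and $\lesa\ve\tau^{-q}$ for $u=\psi$), obtained by simply solving the wave equation for $\pa_\tau^2 u = (g^{00})^{-1}[\,\cdots]$ and estimating the right-hand side: its dominant contribution is the two-spatial-derivative terms $g^{0a}\pa_\tau\pa_a u$ and $g^{ab}\pa_a\pa_b u$, bounded by $\|\pa u\|_{H^1}\le\|\pa u\|_{H^K}$, while the remaining pieces (the $\tau^{-1}$-damping, the $\tau^{-2}$ potential terms, $\mc{L}$, and $\mc{N}$) decay strictly faster. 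This yields $\langle h^{00}_\osc\pa_\tau^2 u_\osc\rangle$, $\tau^{-1}\langle h^{00}_\osc\pa_\tau u_\osc\rangle$, $\tau^{-2}\langle h^{00}_\osc\psi_\osc\rangle = O(\ve^2\tau^{-2(q-\delta)})$.

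It remains to handle the inhomogeneities. The nonlinear terms contribute $\langle\mc{N}^{ij}\rangle$, $\langle\mc{N}^{(\psi)}\rangle$, which are bounded by $\ve^2\tau^{-2(q-\delta)}$ directly via Proposition~\ref{sec:bootbound.prop:errorests} and $|u_\av|\leq C\|u\|_{L^2}$. For $\mc{L}^{ij} = -\frac{(8p)^{1/2}(3p-1)}{\tau^2}g^{ij}\psi$ I would split once more: $\langle g^{ij}\psi\rangle = g^{ij}_\av\psi_\av + \langle h^{ij}_\osc\psi_\osc\rangle$, where $g^{ij}_\av = \delta^{ij}+h^{ij}_\av$ so the first piece is $O(\tau^{-2}|\psi_\av|)$, and the second is $\lesa\|h^{ij}_\osc\|_{L^2}\|\psi_\osc\|_{L^2}\lesa \ve^2\tau^{-(q-\delta)}\tau^{-q}$ by Poincar\'e, which after the $\tau^{-2}$ weight is $O(\ve^2\tau^{-2(q-\delta)})$. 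Assembling all pieces, dividing through by $g^{00}_\av$, and using $5\delta\leq q-1$ to absorb the various subleading decay rates into $\ve^2\tau^{-2(q-\delta)}$, produces exactly \eqref{sec:avgsys.eq:odeavghori} and \eqref{sec:avgsys.eq:odeavgscalar}. The main obstacle, as indicated, is the $\langle g^{00}\pa_\tau^2 u\rangle$ term: one must perform the average/oscillatory split precisely so that $h^{00}_\av$ recombines with the leading coefficient $m^{00}$, while the purely oscillatory remainder $h^{00}_\osc$ — which decays only thanks to the Poincar\'e inequality — is paired against the self-improving $\pa_\tau^2 u$ bound read off from the equation itself.
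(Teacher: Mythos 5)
Your proposal is correct but takes a genuinely different route from the paper's. The paper's key move, which you do not make, is to multiply the equation by $(g^{00})^{-1}$ \emph{before} integrating over $\TT^3$: this turns $g^{00}\pa_\tau^2 u$ into $\pa_\tau^2 u$ exactly (and likewise for the $\frac{2q+2}{\tau}g^{00}\pa_\tau u$ and potential terms), so that upon averaging the ODE with the correct coefficients $2q+2$ and $2(2q+1)(q+2)$ appears immediately, with no average/oscillatory decomposition of $g^{00}$, no final division by $g^{00}_{\av}$, and crucially no need for an a priori bound on $\|\pa_\tau^2 u\|_{L^2}$. The only terms left to estimate are then $(g^{00})^{-1}\big[2h^{0a}\pa_a\pa_\tau u + g^{ab}\pa_a\pa_b u - \mc{N}\big]$, which the paper handles by integration by parts in $x$ and Cauchy--Schwartz, plus the $\mc{L}^{ij}$ term (treated by the same average/oscillatory split you describe). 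Your approach instead integrates the equation as is, decomposes $g^{00}=g^{00}_{\av}+h^{00}_{\osc}$ and $\pa_\tau^2 u$ likewise, discards the two cross terms with a single oscillatory factor, and is then forced to bound $\langle h^{00}_{\osc}\,\pa_\tau^2 u_{\osc}\rangle$ by reading an $L^2$ estimate on $\pa_\tau^2 u$ off the equation itself before dividing by $g^{00}_{\av}$. That extra step is sound --- the dominant contribution to $\pa_\tau^2 u = (g^{00})^{-1}[\cdots]$ is indeed $g^{ab}\pa_a\pa_b u$, giving $\|\pa_\tau^2 u\|_{L^2}\lesssim\ve\tau^{-(q-\delta)}$ (resp.\ $\ve\tau^{-q}$ for $\psi$), and pairing with the Poincar\'e bound $\|h^{00}_{\osc}\|_{L^2}\lesssim\ve\tau^{-(q-\delta)}$ does give $O(\ve^2\tau^{-2(q-\delta)})$ --- but it is extra machinery the paper avoids. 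The $(g^{00})^{-1}$ multiplication is worth remembering: it localizes the coefficient cancellation at the pointwise level rather than the averaged level, which is exactly what you want when the principal coefficient is a small perturbation of a negative constant.
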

\begin{proof}
    First we derive \eqref{sec:avgsys.eq:odeavgscalar} by multiplying the equation \eqref{sec:gaugedeqs.eq:wavescalarmeansection} by $(g^{00})^{-1}$ and integrating over $\TT^3$, which yields
    \begin{align}
        0 =\>& \frac{1}{(2\pi)^3}\int_{\TT^3}(g^{00})^{-1}\Big[\wh{\square}_g \psi +\frac{2q+2}{\tau}g^{00}\pa_\tau \psi + \frac{2(2q+1)(q+2)}{\tau^2}g^{00}\psi - \mc{N}^{(\psi)}\Big]\nonumber\\
        =\>& \pa_\tau^2 \psi_{\av} + \frac{2q+2}{\tau}\pa_\tau \psi_{\av} + \frac{2(2q+1)(q+2)}{\tau^2}\psi_{\av}\nonumber\\
        &+\frac{1}{(2\pi)^3}\int_{\TT^3}(g^{00})^{-1}\Big[2h^{0a}\pa_a\pa_\tau \psi + g^{ab}\pa_a\pa_b\psi - \mc{N}^{(\psi)}\Big]. \label{sec:avgsys.eq:odeavgscalarproof1}   
    \end{align}
    Then we estimate the integral in the last line. The bound for $\mc{N}^{(\psi)}$ from Proposition~\ref{sec:bootbound.prop:errorests}, the bound for $(g^{00})^{-1}$ from Lemma~\ref{sec:bootbound.lem:miscests}, and Cauchy-Schwartz give
    \begin{equation}
        \Big|\int_{\TT^3}(g^{00})^{-1}\mc{N}^{(\psi)}\Big| \leq \|(g^{00})^{-1}\|_{L^2}\|\mc{N}^{(\psi)}\|_{L^2} \leq C\ve^2 \tau^{-2(q-\delta)}.\label{sec:avgsys.eq:odeavgscalarproof2}
    \end{equation}
    For the other terms in the integral in \eqref{sec:avgsys.eq:odeavgscalarproof1}, we integrate-by parts and estimate via Cauchy-Schwartz, the Sobolev embedding, Lemma~\ref{sec:bootbound.lem:miscests} to bound $(g^{00})^{-1})$, and the bootstrap assumption:
    \begin{subequations}
    \begin{align}
            \Big|\int_{\TT^3}(g^{00})^{-1}h^{0a}\pa_a \pa_\tau \psi\Big| \leq &\> \Big|\int_{\TT^3}(g^{00})^{-2}h^{0a}\pa_a h^{00}\pa_\tau \psi\Big| + \Big|\int_{\TT^3}(g^{00})^{-1}\pa_a h^{0a}\pa_\tau \psi\Big|\nonumber\\
            \leq &\> \sum_{a=1}^3\big(\|(g^{00})^{-1}\|_{L^\infty}^2\|h^{0a}\|_{L^\infty}\|\pa_a h^{00}\|_{L^2}\|\pa_\tau \psi\|_{L^2}+\|(g^{00})^{-1}\|_{L^\infty} \|\pa_a h^{0a}\|_{L^2}\|\pa_\tau \psi\|_{L^2}\big)\nonumber\\ 
            \leq&\> C\ve^2 \tau^{-2(q-\delta)},\label{sec:avgsys.eq:odeavgscalarproof3}\\
            \Big|\int_{\TT^3}(g^{00})^{-1}g^{ab}\pa_a\pa_b \psi\Big| \leq&\>\Big|\int_{\TT^3}(g^{00})^{-2}g^{ab}\pa_a h^{00}\pa_b \psi\Big| + \Big|\int_{\TT^3}(g^{00})^{-1}\pa_a g^{ab}\pa_b \psi\Big|\nonumber\\
            \leq&\>\sum_{a,b=1}^3\big(\|(g^{00})^{-1}\|_{L^\infty}^2\|g^{ab}\|_{L^\infty}\|\pa_a h^{00}\|_{L^2}\|\pa_b \psi\|_{L^2}+\|(g^{00})^{-1}\|_{L^\infty} \|\pa_a h^{ab}\|_{L^2}\|\pa_b \psi\|_{L^2}\big)\nonumber\\ 
        \leq&\> C\ve^2 \tau^{-2(q-\delta)}.\label{sec:avgsys.eq:odeavgscalarproof4}
        \end{align}
    \end{subequations}
    Plugging the estimates  \eqref{sec:avgsys.eq:odeavgscalarproof2} - \eqref{sec:avgsys.eq:odeavgscalarproof4} into the equation \eqref{sec:avgsys.eq:odeavgscalarproof1}, we obtain \eqref{sec:avgsys.eq:odeavgscalar}. We proceed analogously for \eqref{sec:avgsys.eq:odeavghori}, multiplying \eqref{sec:gaugedeqs.eq:wavehorimeansection} by $(g^{00})^{-1}$ and integrating over the torus:
    \begin{align}
        0 = &\>\pa_\tau^2 h_{\av}^{ij} + \frac{2(q+1)}{\tau}\pa_\tau h_{\av}^{ij} + \frac{(8p)^{1/2}(3p-1)}{(2\pi)^3\tau^2}\int_{\TT^3}(g^{00})^{-1}g^{ij}\psi\nonumber\\
        &+\frac{1}{(2\pi)^3}\int_{\TT^3}(g^{00})^{-1}\Big[2h^{0a}\pa_a\pa_\tau h^{ij} + g^{ab}\pa_a\pa_b h^{ij} - \mc{N}^{ij}\Big].\label{sec:avgsys.eq:odeavghoriproof1}
    \end{align}
    We estimate the latter integral in \eqref{sec:avgsys.eq:odeavghoriproof1} the same way as we did the for the last integral in \eqref{sec:avgsys.eq:odeavgscalarproof1} (using Lemmas~\ref{sec:bootbound.prop:errorests} and \ref{sec:bootbound.lem:miscests}) and calculate a similar bound:
    \begin{equation}\label{sec:avgsys.eq:odeavghoriproof2}
        \Big|\int_{\TT^3}(g^{00})^{-1}\Big[2h^{0a}\pa_a\pa_\tau h^{ij} + g^{ab}\pa_a\pa_b h^{ij} - \mc{N}^{ij}\Big]\Big| \leq C\ve^2 \tau^{-2(q-\delta)}.
    \end{equation}
    For the former integral in \eqref{sec:avgsys.eq:odeavghoriproof1}, we decompose the integrand $(g^{00})^{-1}g^{ij}\psi$ into averaged and oscillating factors, beginning with
    \begin{equation}\label{sec:avgsys.eq:odeavghoriproof3}
        \int(g^{00})^{-1}g^{ij}\psi =\> \int_{\TT^3}(g^{00})^{-1}g^{ij} \psi_{\av}+ \int_{\TT^3}(g^{00})^{-1} g^{ij}\psi_{\osc}.
    \end{equation}
    We bound the first integral on the RHS of \eqref{sec:avgsys.eq:odeavghoriproof3} using Cauchy-Schwartz, the Sobolev embedding, Lemma~\ref{sec:bootbound.lem:miscests} and the bootstrap assumption:
    \begin{equation}\label{sec:avgsys.eq:odeavghoriproof4}
        \Big|\int_{\TT^3}(g^{00})^{-1}g^{ij} \psi_{\av}\Big| \leq C\|(g^{00})^{-1}\|_{L^\infty}\|g^{ij}\|_{L^2}|\psi_{\av}| \leq C|\psi_{\av}|.
    \end{equation}
    The second integral on the RHS of \eqref{sec:avgsys.eq:odeavghoriproof3} appears to be linear, but upon splitting the factors further into averaged and oscillating factors, we find that it depends nonlinearly on $(\pa h^{\mu\nu},\pa \psi)$. We have
    \begin{align}
        \int_{\TT^3}(g^{00})^{-1} g^{ij}\psi_{\osc} 
        =&\>\int_{\TT^3}\psi_{\osc}\Big[((g^{00})^{-1})_{\av} g_{\av}^{ij} + ((g^{00})^{-1})_{\av} h_{\osc}^{ij} + ((g^{00})^{-1})_{\osc} g_{\av}^{ij} + ((g^{00})^{-1})_{\osc} h_{\osc}^{ij}\Big]\nonumber\\
        =&\>\int_{\TT^3}\psi_{\osc}\Big[((g^{00})^{-1})_{\av} h_{\osc}^{ij} + ((g^{00})^{-1})_{\osc} g_{\av}^{ij} + ((g^{00})^{-1})_{\osc} h_{\osc}^{ij}\Big].
    \end{align}
    where for the second equality we used the fact that the term $\psi_{\osc}((g^{00})^{-1})_{\av} g_{\av}^{ij}$ has zero spatial average. We estimate the remaining terms using (in rough order) Cauchy-Schwartz, the Sobolev embedding, Lemma~\ref{sec:bootbound.lem:miscests}, the Poincar\'e inequality from Lemma~\ref{sec:prelims.lem:poincare}, the bound \eqref{sec:avgsys.eq:avgbound}, and the bootstrap assumption to obtain
    \begin{align}
        \Big|\int(g^{00})^{-1}g^{ij}\psi_\osc\Big| \leq&\> \Big|\int_{\TT^3}(g^{00})^{-1}h_{\osc}^{ij} \psi_{\osc}\Big| + |g_{\av}^{ij}|\cdot\Big|\int_{\TT^3}((g^{00})^{-1})_{\osc}\psi_{\osc}\Big|\nonumber\\
        \leq &\> C\big(\|(g^{00})^{-1}\|_{L^\infty}\|h_{\osc}^{ij}\|_{L^2}\|\psi_{\osc}\|_{L^2} + |g_{\av}^{ij}|\cdot\|((g^{00})^{-1})_{\osc}\|_{L^2}\|\psi_{\osc}\|_{L^2}\big)\nonumber\\
        \leq &\>C\big(\|(g^{00})^{-1}\|_{H^2}\|\ol{\pa}h^{ij}\|_{L^2}\|\ol{\pa}\psi\|_{L^2} + \|g^{ij}\|_{L^2}\|\ol{\pa}h^{00}\|_{L^2}\|\ol{\pa}\psi\|_{L^2}\big)\nonumber\\
        \leq&\> C\ve^2 \tau^{-2(q-\delta)}\label{sec:avgsys.eq:odeavghoriproof5}.
    \end{align}
    Plugging the estimates \eqref{sec:avgsys.eq:odeavghoriproof2}, \eqref{sec:avgsys.eq:odeavghoriproof4} and \eqref{sec:avgsys.eq:odeavghoriproof5} into \eqref{sec:avgsys.eq:odeavghoriproof1} yields \eqref{sec:avgsys.eq:odeavghori}.
\end{proof}
\begin{remark}[ODE for the averaged lapse and shift from the gauge-fixed system]\label{sec:avgsys.rmk:odelapseshiftextra}
    Second-order differential inequalities can also be derived from the wave equations for the averaged lapse and shift perturbations $h_{\av}^{00}$, $h_{\av}^{0i}$. By the same argument as in Lemma~\ref{sec:avgsys.lem:odehoriscalar}, $h_{\av}^{00}$ and $h_{\av}^{0i}$ satisfy the following inequalities:
    \begin{subequations}
        \begin{gather}
            \Big|\pa_\tau^2 h_{\av}^{00}(\tau) + \frac{2q}{\tau}\pa_\tau h_{\av}^{00} (\tau) - \frac{4q+2}{\tau^2}h_{\av}^{00}(\tau)\Big| \leq\>C\Big(\frac{1}{\tau^2}|\psi_{\av}(\tau)|+\frac{1}{\tau}|\pa_\tau \psi_{\av}(\tau)|+\ve^2 \tau^{-2(q-\delta)}\Big),\label{sec:avgsys.eq:odeavglapseextra}\\
            \Big|\pa_\tau^2 h_{\av}^{0i}(\tau) + \frac{2q+2}{\tau}\pa_\tau h_{\av}^{0i} (\tau) - \frac{2q+2}{\tau^2}h_{\av}^{0i}(\tau)\Big| \leq\>C\ve^2 \tau^{-2(q-\delta)}.\label{sec:avgsys.eq:odeavgshiftextra}
        \end{gather}
    \end{subequations}
    These are derived from the wave equations \eqref{sec:gaugedeqs.eq:wavelapse}, \eqref{sec:gaugedeqs.eq:waveshift}. Note the absence of linear terms on the right hand side of \eqref{sec:avgsys.eq:odeavgshiftextra}, despite the presence of a mixed linear term $\mc{L}^{0i}$ in \eqref{sec:gaugedeqs.eq:waveshift}. To see why this is the case, computing the spatial average of $(g^{00})^{-1}\mc{L}^{0i}$ we get
    \begin{align}
        \int(g^{00})^{-1}\mc{L}^{0i}= &\>\frac{1}{\tau}\int_{\TT^3}\Big[(g^{00})^{-1}g^{ia}\pa_a h^{00} + \frac{(8p)^{1/2}}{1-p}g^{ia}\pa_a \psi\Big]\nonumber\\
        =&-\frac{1}{\tau}\int_{\TT^3}\Big[\pa_a((g^{00})^{-1} g^{ia})h^{00} + \frac{(8p)^{1/2}}{1-p} \psi\pa_a h^{ia}\Big],
    \end{align}
    where we integrated by parts in the second line. In an analogous manner to \eqref{sec:avgsys.eq:odeavgscalarproof3}, \eqref{sec:avgsys.eq:odeavgscalarproof4} in the previous lemma, this reveals that these terms are in fact error terms, and can be estimated as such.

    We will not use \eqref{sec:avgsys.eq:odeavglapseextra}, \eqref{sec:avgsys.eq:odeavgshiftextra} to estimate $h_{\av}^{00}$, $h_{\av}^{0i}$ directly, as the principle system of these equations (the left hand side of \eqref{sec:avgsys.eq:odeavglapseextra}, \eqref{sec:avgsys.eq:odeavgshiftextra}) produces not decay but growth, due to the bad signs of the
    \begin{equation}
        -\frac{4q+2}{\tau^2}h_{\av}^{00},\qquad -\frac{2q+2}{\tau^2}h_{\av}^{0i}
    \end{equation}
    terms present in \eqref{sec:avgsys.eq:odeavglapseextra}, \eqref{sec:avgsys.eq:odeavgshiftextra} respectively. However, we will use them  (along with \eqref{sec:avgsys.eq:odeavghori} and \eqref{sec:avgsys.eq:odeavgscalar}) to derive energy estimates for the oscillating remainders ($h_{\osc}^{\mu\nu},\psi_\osc)$ in Section~\ref{sec:oscsys}.
\end{remark}
\subsection{ODE for the averaged lapse and shift perturbations}
Next we derive a system of first-order ODE for the spatial averages of the lapse and shift metric $h_{\av}^{00}$, $h_{\av}^{0i}$. We derive these from the gauge constraint equations $D^\mu = \Gamma^\mu(g) - \mc{Y}^\mu = 0$, as they have better structure than the evolution equations for $h^{00}$, $h^{0i}$, at the cost of a loss of derivatives. Since we're only using these to estimate the $L^2$ norms of the functions $h^{00}$, $h^{0i}$, $\pa_\tau  h^{00}$, $\pa_\tau h^{0i}$, this is not an issue.

Using the identity
\begin{equation}
    \pa_\lambda g^{\mu\nu} = -g^{\mu\alpha}g^{\nu\beta}\pa_\lambda g_{\alpha\beta},
\end{equation}
The contracted Christoffel symbols satisfy
\begin{align}
    \Gamma^\mu &= g^{\alpha\beta}g^{\mu\lambda}\big(\pa_{(\alpha}g_{\beta)\lambda} - \frac{1}{2}\pa_\lambda g_{\alpha\beta}\big)\nonumber\\
    &= - \pa_\lambda g^{\lambda\mu} + \frac{1}{2}g_{\alpha\beta}g^{\mu\lambda}\pa_\lambda g^{\alpha\beta}.
\end{align}
Keeping in mind that the components of $m$ are constant and $m^{0i} = 0$, this implies that
\begin{subequations}
    \begin{align}
        \Gamma^0 &= -\pa_\tau  h^{00} - \pa_a  h^{0a} + \frac{1}{2}g_{\alpha\beta}g^{00}\pa_\tau  h^{\alpha\beta} + \frac{1}{2}g_{\alpha\beta} h^{0a}\pa_a  h^{\alpha\beta},\\
        \Gamma^i &= -\pa_\tau  h^{0i} - \pa_a  h^{ia} + \frac{1}{2}g_{\alpha\beta} h^{0i}\pa_\tau  h^{\alpha\beta} + \frac{1}{2}g_{\alpha\beta}g^{ia}\pa_a  h^{\alpha\beta},
    \end{align}
\end{subequations}
moreover we expand the term
\begin{align}
    g_{\alpha\beta}g^{00}\pa_\tau  h^{\alpha\beta} &= g_{00}g^{00}\pa_\tau  h^{00} + 2 h_{0i}g^{00}\pa_\tau  h^{0i} + g_{ab}g^{00}\pa_\tau  h^{ab}\nonumber\\
    &= \pa_\tau  h^{00} -  h_{0a} h^{0a}\pa_\tau h^{00} + 2 h_{0i}g^{00}\pa_\tau  h^{0i} + g_{ab}g^{00}\pa_\tau  h^{ab}.
\end{align}
We assume the wave constraint equations \eqref{sec:gaugedeqs.eq:waveconstraint} are satisfied, so that
\begin{equation}
    \Gamma^0(g) = \frac{2q+1}{\tau}h^{00},\qquad \Gamma^i(g) = \frac{2q+2}{\tau} h^{0i},\quad  i = 1,2,3.
\end{equation}
These are therefore equivalent to the following transport equations:
\begin{subequations}
    \begin{align}
        \pa_\tau  h^{00} + \frac{4q+2}{\tau} h^{00} = \>& g_{ab}g^{00}\pa_\tau  h^{ab}-2\pa_a  h^{0a} + h_{0a} h^{0a}\pa_\tau  h^{00} + 2 h_{0a}g^{00}\pa_\tau  h^{0a}+g_{\alpha\beta} h^{0a}\pa_a  h^{\alpha\beta},\label{sec:avgsys.eq:odelapse0}\\
        \pa_\tau  h^{0i} + \frac{2q+2}{\tau} h^{0i} = \>& -\pa_a  h^{ab} +\frac{1}{2}g_{\alpha\beta} h^{0i}\pa_\tau  h^{\alpha\beta} +\frac{1}{2}g_{\alpha\beta}g^{ia}\pa_a  h^{\alpha\beta}.\label{sec:avgsys.eq:odeshift0}
    \end{align}
\end{subequations}
We use equations \eqref{sec:avgsys.eq:odelapse0}, \eqref{sec:avgsys.eq:odeshift0} to derive differential equations satisfied by the averaged lapse and shift.
\begin{lemma}[ODE for the spatial averages of the lapse and shift]\label{sec:avgsys.lem:odelapseshift}
    Suppose $\mc{D}^{\mu}=0$ for $\mu = 0,1,2,3$. The averaged lapse and shift perturbations $h_{\av}^{00}$, $h_{\av}^{0i}$ satisfy the following differential inequalities:
    \begin{subequations}
        \begin{align}
            \Big|\pa_\tau  h_{\av}^{00}(\tau) + \frac{4q+2}{\tau}h_{\av}^{00}(\tau)\Big| \leq&\> C\Big(\sum_{a,b=1}^3|\pa_\tau h_{\av}^{ab}(\tau)| + \ve^2 \tau^{-2(q-\delta)}\Big),\label{sec:avgsys.eq:odeavglapse}\\
            \Big|\pa_\tau   h_{\av}^{0i}(\tau) + \frac{2q+2}{\tau}  h_{\av}^{0i}(\tau)\Big| \leq&\> C\ve^2\tau^{-2(q-\delta)},\quad i=1,2,3.\label{sec:avgsys.eq:odeavgshift}
        \end{align}
    \end{subequations}
\end{lemma}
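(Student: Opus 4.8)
The plan is to integrate the two transport equations \eqref{sec:avgsys.eq:odelapse0} and \eqref{sec:avgsys.eq:odeshift0} — which, granting the wave coordinate condition $\mc{D}^\mu = 0$, are equivalent to the gauge constraints — over $\TT^3$ and divide by $(2\pi)^3$. The commutation identity $\pa_\tau(u_{\av}) = (\pa_\tau u)_{\av}$ turns the left-hand sides into exactly $\pa_\tau h_{\av}^{00} + \frac{4q+2}{\tau}h_{\av}^{00}$ and $\pa_\tau h_{\av}^{0i} + \frac{2q+2}{\tau}h_{\av}^{0i}$, while $\int_{\TT^3}\pa_a(\cdot) = 0$ kills the pure spatial divergences $\pa_a h^{0a}$, $\pa_a h^{ia}$ on the right-hand sides outright. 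Everything then reduces to bounding the spatial averages of the remaining right-hand side terms.

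The only term that does not reduce to a quadratic error is $g_{ab}g^{00}\pa_\tau h^{ab}$ in the lapse equation \eqref{sec:avgsys.eq:odelapse0}; since the horizontal perturbations $h^{ij}$ do not decay this needs to be treated carefully. I would split the coefficient as $g_{ab}g^{00} = (g_{ab}g^{00})_{\av} + (g_{ab}g^{00})_{\osc}$. The averaged factor is a function of $\tau$ alone and pulls out of the integral, and $\frac{1}{(2\pi)^3}\int_{\TT^3}\pa_\tau h^{ab} = \pa_\tau h_{\av}^{ab}$, so this piece contributes $\sum_{a,b}(g_{ab}g^{00})_{\av}\pa_\tau h_{\av}^{ab}$; its coefficient is bounded (it is $O(\ve)$-close to $-(1-p)^2\delta_{ab}$), which is precisely the $\sum_{a,b}|\pa_\tau h_{\av}^{ab}|$ on the right of \eqref{sec:avgsys.eq:odeavglapse}. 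For the oscillatory factor, the Leibniz rule together with the decay $\|\pa g_{\alpha\beta}\|_{H^K}\lesa \ve\tau^{-(q-\delta)}$ from Lemma~\ref{sec:bootbound.lem:loweredmetests}, the bootstrap assumption, and the Poincar\'e inequality (Lemma~\ref{sec:prelims.lem:poincare}) give $\|(g_{ab}g^{00})_{\osc}\|_{L^2}\lesa \ve\tau^{-(q-\delta)}$; by Cauchy--Schwarz and Sobolev embedding the resulting integral is $\lesa \ve^2\tau^{-2(q-\delta)}$. The same average/oscillation decomposition disposes of $g_{\alpha\beta}g^{ia}\pa_a h^{\alpha\beta}$ in the shift equation \eqref{sec:avgsys.eq:odeshift0}, the difference being that there even the averaged part vanishes because $\int_{\TT^3}\pa_a h^{\alpha\beta} = 0$, leaving a single $\ve^2\tau^{-2(q-\delta)}$ error.

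All other right-hand side terms ($h_{0a}h^{0a}\pa_\tau h^{00}$, $h_{0a}g^{00}\pa_\tau h^{0a}$, $g_{\alpha\beta}h^{0a}\pa_a h^{\alpha\beta}$, $g_{\alpha\beta}h^{0i}\pa_\tau h^{\alpha\beta}$) are at least quadratic in quantities decaying at rate $\tau^{-(q-\delta)}$, namely $h^{0a}$, $\pa_a h^{\mu\nu}$, $\pa_\tau h^{\mu\nu}$; they are handled by Cauchy--Schwarz, the embedding $H^2(\TT^3)\hookrightarrow L^\infty(\TT^3)$ (Lemma~\ref{sec:prelims.lem:sobolev}), the bound $\|g^{00}\|_{L^\infty}\lesa 1$ from Lemma~\ref{sec:bootbound.lem:miscests}, and the bootstrap assumption, yielding $\lesa \ve^2\tau^{-2(q-\delta)}$ in each case. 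The one place where the crude estimates of Lemma~\ref{sec:bootbound.lem:loweredmetests} are insufficient is $h_{0a}g^{00}\pa_\tau h^{0a}$, since that lemma gives only $\|h_{0a}\|_{L^2}\lesa \ve$, with no decay. To repair this I would use that the lowered shift $g_{0a}$ is a smooth function of the inverse metric components that vanishes identically on $\{g^{0b} = h^{0b} = 0\}$, so $g_{0a} = \sum_b h^{0b}G_{ab}(h)$ with $G_{ab}$ smooth near the origin; Lemma~\ref{sec:prelims.lem:sobolevcomp} and the bootstrap assumption then give $\|h_{0a}\|_{H^K}\lesa \sum_b\|h^{0b}\|_{H^K}\lesa \ve\tau^{-(q-\delta)}$, which closes that estimate. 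Collecting all contributions gives \eqref{sec:avgsys.eq:odeavglapse} and \eqref{sec:avgsys.eq:odeavgshift}. I expect the handling of $g_{ab}g^{00}\pa_\tau h^{ab}$ — cleanly isolating the non-decaying-but-spatially-constant part and controlling the oscillatory remainder — to be the only step requiring genuine thought; the rest is routine given the product and composition estimates of Section~\ref{sec:prelims.subsec:sobolev}.
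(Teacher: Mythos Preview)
Your proposal is correct and follows essentially the same approach as the paper: integrate the gauge transport equations over $\TT^3$, kill spatial divergences, and use an average/oscillation decomposition to isolate the non-decaying $\pa_\tau h_{\av}^{ab}$ contribution from $g_{ab}g^{00}\pa_\tau h^{ab}$, with all remaining terms controlled as quadratic errors. Your decomposition of the \emph{coefficient} $g_{ab}g^{00}$ (rather than the paper's decomposition of $\pa_\tau h^{ab}$ followed by a further splitting of the factors) is a slightly cleaner variant of the same idea, and your explicit observation that $h_{0a} = g_{0a} = -(g^{00})^{-1}h^{0b}g_{ba}$ inherits the decay of $h^{0b}$ is a point the paper's proof uses without comment (Lemma~\ref{sec:bootbound.lem:loweredmetests} alone would not suffice for the term $h_{0a}g^{00}\pa_\tau h^{0a}$).
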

\begin{proof}
We integrate the equations \eqref{sec:avgsys.eq:odelapse0}, \eqref{sec:avgsys.eq:odeshift0} over the torus. Any terms which are solely spatial derivatives have zero average, giving
\begin{subequations}\label{sec:avgsys.eq:odeavglapseshiftproof}
    \begin{align}
        \pa_\tau h_{\av}^{00} +  \frac{4q+2}{\tau}  h_{\av}^{00} &=  \frac{1}{(2\pi)^3}\int_{\TT^3}g_{ab}g^{00}\pa_\tau  h^{ab} + \frac{1}{(2\pi)^3}\int_{\TT^3}\Big[h_{0a} h^{0a}\pa_\tau  h^{00} + 2 h_{0a}g^{00}\pa_\tau  h^{0a}+g_{\alpha\beta} h^{0a}\pa_a  h^{\alpha\beta}\Big]\label{sec:avgsys.eq:odeavglapseproof},\\
        \pa_\tau   h_{\av}^{0i} + \frac{2q+2}{\tau} h_{\av}^{0i} &= \frac{1}{2(2\pi)^3}\int[g_{\alpha\beta}g^{ia}\pa_a  h^{\alpha\beta} +g_{\alpha\beta} h^{0i}\pa_\tau  h^{\alpha\beta}\Big],\label{sec:avgsys.eq:odeavgshiftproof}
    \end{align}
\end{subequations}
We treat the equation \eqref{sec:avgsys.eq:odeavglapseproof} for $h_{\av}^{00}$ first. Estimating the second integral using Cauchy-Schwartz and the bootstrap assumption:
\begin{align}
    \Big|\int_{\TT^3}\Big[ h_{0a} h^{0a}\pa_\tau  h^{00} +&2 h_{0a}g^{00}\pa_\tau  h^{0a}+g_{\alpha\beta} h^{0a}\pa_a  h^{\alpha\beta}\Big]\Big|\nonumber\\
    \leq&\> C\sum_{\alpha,\beta = 0}^4\sum_{a=1}^3\Big[\| h_{0a}\|_{L^\infty}\| h^{0a}\|_{L^2}\|\pa  h^{00}\|_{L^2}\nonumber\\
    &+\| h_{0a}\|_{L^\infty}\|g^{00}\|_{L^2}\|\pa  h^{0a}\|_{L^2}+  \|g_{\alpha\beta}\|_{L^\infty}\| h^{0a}\|_{L^2}\|\pa g^{\alpha\beta}\|_{L^2}\Big]\nonumber\\
    \leq &\>C\ve^2\tau^{-2(q-\delta)}.\label{sec:avgsys.eq:odeavglapseerror1}
\end{align}
For the first integral term on the right hand side of \eqref{sec:avgsys.eq:odeavglapseproof} we decompose into average and oscillating factors. Keeping in mind that all terms which contain a single oscillating factor have zero average, we have 
\begin{align}
    \int_{\TT^3}g_{ab}g^{00}\pa_\tau  h^{ab} =&\> \sum_{a,b=1}^3\Big(\>\int_{\TT^3}g_{ab}g^{00}\pa_\tau h_{\av}^{ab} + \int_{\TT^3}g_{ab}g^{00}\pa_\tau h_{\osc}^{ab}\Big)\nonumber\\
    =&\> \sum_{a,b=1}^3\Big(\pa_\tau h_{\av}^{ab}\int_{\TT^3}g_{ab}g^{00} + \int_{\TT^3}\Big[g_{ab}h_{\osc}^{00} \pa_\tau h_{\osc}^{ab} + (g_{ab})_{\osc} h_{\av}^{00}\pa_\tau h_{\osc}^{ab}\Big]\Big).
\end{align}
Then we estimate
\begin{align}
    \Big|\int_{\TT^3}g_{ab}g^{00}\pa_\tau  h^{ab}\Big| \leq&\> C\sum_{a,b = 1}^3\Big(|\pa_\tau h_{\av}^{ab}|\cdot \|g_{ab}\|_{L^2}\|g^{00}\|_{L^2}\nonumber\\
    &+ \|g_{ab}\|_{L^\infty}\|h_{\osc}^{00}\|_{L^2}\|\pa_\tau h_{\osc}^{ab}\|_{L^2} + |h_{\av}^{00}|\cdot\|(g_{ab})_{\osc}\|_{L^2}\|\pa_\tau h_{\osc}^{ab}\|\Big)\nonumber\\
    \leq&\>C\sum_{a,b=1}^3\Big(|\pa_\tau h_{\av}^{ab}|\cdot\|g_{ab}\|_{L^2}\|g^{00}\|_{L^2}\nonumber\\
    &+\|g_{ab}\|_{H^2}\|\ol{\pa} h^{00}\|_{L^2}\|\pa_\tau h^{ab}\|_{L^2}+\|h^{00}\|_{L^2}\|\ol{\pa} g_{ab}\|_{L^2}\|\pa_\tau h^{ab}\|_{L^2}\Big)\nonumber\\\leq&\> C\sum_{a,b=1}^3|\pa_\tau h_{\av}^{ab}(\tau)| + C\ve^2 \tau^{-2(q-\delta)},\label{sec:avgsys.eq:odeavglapseerror2}
\end{align}
with the first inequality following from Cauchy-Schwartz, the second from a combination of the Sobolev embedding, the Poincar\'e inequality, \eqref{sec:avgsys.eq:avgbound}, and the final inequality from the bootstrap assumption and Lemma~\ref{sec:bootbound.lem:loweredmetests} to bound the lowered metric components. 
The equation \eqref{sec:avgsys.eq:odeavglapseproof} combined with the error bounds \eqref{sec:avgsys.eq:odeavglapseerror1}, \eqref{sec:avgsys.eq:odeavglapseerror2}, yields the ODE \eqref{sec:avgsys.eq:odeavglapse}.

For the equation \eqref{sec:avgsys.eq:odeavgshiftproof} we estimate one error term with Cauchy-Schwartz and the bootstrap assumption:
\begin{equation}
    \Big|\int_{\TT^3}g_{\alpha\beta} h^{0i}\pa_\tau  h^{\alpha\beta}\Big|\lesa \sum_{\alpha,\beta = 0}^3\|g_{\alpha\beta}\|_{L^\infty}\| h^{0i}\|_{L^2}\|\pa h^{\alpha\beta}\|_{L^2} \leq C \ve^2\tau^{-2(q-\delta)},\label{sec:avgsys.eq:odeavgshifterror1}
\end{equation}
while we decompose the other term into averaged and oscillatory factors and estimate in an analogous manner to \eqref{sec:avgsys.eq:odeavglapseerror2}:
\begin{align}
    \Big|\int_{\TT^3}g_{\alpha\beta} g^{ia}\pa_a  h^{\alpha\beta}\Big| &= \Big|\int_{\TT^3}\Big[(g_{\av})_{\alpha\beta}g_{\osc}^{ia}\pa_a  h^{\alpha\beta} + (g_{\osc})_{\alpha\beta}   h_{\av}^{ia}\pa_a  h^{\alpha\beta} + (g_{\osc})_{\alpha\beta} h_{\osc}^{ia}\pa_a  h^{\alpha\beta}\Big]\de^3 x\Big|\nonumber\\
    &\leq C\sum_{\alpha,\beta=0}^4\sum_{a=1}^3\Big(|(g_{\av})_{\alpha\beta}|\cdot\|g_{\osc}^{ia}\|_{L^2}\|\ol{\pa}  h^{\alpha\beta}\|_{L^2} + \|(g_{\osc})_{\alpha\beta}\|_{L^2}\|g^{ia}\|_{L^\infty}\|\ol{\pa}  h^{\alpha\beta}\|_{L^2}\Big)\nonumber\\
    &\leq C \sum_{\alpha,\beta=0}^4\sum_{a=1}^3\Big(\|g_{\alpha\beta}\|_{L^2}\|\ol{\pa} g^{ia}\|_{L^2}\|\ol{\pa}  h^{\alpha\beta}\|_{L^2} + \|\ol{\pa} g_{\alpha\beta}\|_{L^2}\|g^{ia}\|_{H^2}\|\ol{\pa}  h^{\alpha\beta}\|_{L^2}\Big)\nonumber\\
    &\leq C \ve^2\tau^{-2(q-\delta)}.\label{sec:avgsys.eq:odeavgshifterror2}
\end{align}
Plugging the estimates \eqref{sec:avgsys.eq:odeavgshifterror1}, \eqref{sec:avgsys.eq:odeavgshifterror2} into \eqref{sec:avgsys.eq:odeavgshiftproof} we obtain the desired inequality \eqref{sec:avgsys.eq:odeavgshift}.
\end{proof}
\subsection{Estimates for the averaged quantities}
Combining Lemmas~\ref{sec:avgsys.lem:odehoriscalar} and~\ref{sec:avgsys.lem:odelapseshift} we arrive at the following system of ordinary differential inequalities:
\begin{subequations}\label{sec:avgsys.eq:odeavg2}
    \begin{gather}
        \Big|\pa_\tau  h_{\av}^{00}(\tau) + \frac{4q+2}{\tau}h_{\av}^{00}(\tau)\Big| \leq\> C\Big(\sum_{a,b=1}^3|\pa_\tau h_{\av}^{ab}(\tau)| + \ve^2 \tau^{-2(q-\delta)}\Big),\label{sec:avgsys.eq:odeavglapse2}\\
        \Big|\pa_\tau   h_{\av}^{0i}(\tau) + \frac{2q+2}{\tau}  h_{\av}^{0i}(\tau)\Big| \leq\> C\ve^2\tau^{-2(q-\delta)},\label{sec:avgsys.eq:odeavgshift2}\\
        \Big|\pa_\tau^2 h_{\av}^{ij}(\tau) + \frac{2q+2}{\tau}\pa_\tau  h_{\av}^{ij}(\tau)\Big| \leq C\Big(\frac{1}{\tau^2}|\psi_{\av}(\tau)| + \ve^2\tau^{-2(q-\delta)}\Big),\label{sec:avgsys.eq:odeavghori2}\\
        \Big|\pa_\tau^2 \psi_{\av}(\tau) + \frac{2q+2}{\tau}\pa_\tau \psi_{\av}(\tau) + \frac{2(2q+1)(q+2)}{\tau^2}\psi_{\av}(\tau)\Big| \leq C\ve^2\tau^{-2(q-\delta)}.\label{sec:avgsys.eq:odeavgscalar2}
    \end{gather}
\end{subequations}
We introduce the norm $S_{\av}$ for the averaged quantities $(h_{\av}^{\mu\nu},\pa_\tau h_{\av}^{\mu\nu},\psi_{\av},\pa_\tau \psi_{\av})$:
\begin{multline}\label{sec:avgsys.eq:avgenergy}
    S_{\av}(\tau):= \tau^{q-1}|h_{\av}^{00}(\tau)| + \tau^{q}|\pa_\tau h_{\av}^{00}(\tau)| + \sum_{a=1}^3\Big\{\tau^{q-1}|h_{\av}^{0a}(\tau)| + \tau^q|\pa_\tau h_{\av}^{0a}(\tau)|\Big\}\\
    +\sum_{a,b=1}^3\Big\{|h_{\av}^{ab}(\tau)|+\tau^{q}|\pa_\tau h_{\av}^{ab}(\tau)|\Big\} + \tau^{q+\delta-1}|\psi_{\av}(\tau)|+\tau^{q+\delta}|\pa_\tau \psi_{\av}(\tau)|.
\end{multline}
This norm satisfies the bound 
\begin{equation}\label{sec:avgsys.eq:avgestrough}
    S_{\av}(\tau) \leq C\ve\tau^\delta
\end{equation}
as an immediate consequence of the bound \eqref{sec:avgsys.eq:avgbound} and the bootstrap assumption. In the next proposition we use the system \eqref{sec:avgsys.eq:odeavg2} to improve this bound.

\begin{proposition}[Estimate for the spatial averages]\label{sec:avgsys.prop:avgest}
    Suppose the bootstrap assumption \eqref{sec:globex.eq:bootstrap} holds on the interval $I = [\tau_0,T]$, and suppose the initial data satisfies the bound \eqref{sec:gaugedeqs.eq:bounddata}. Then the norm $S_{\av}$ satisfies the following estimate for all $\tau \in I$:
    \begin{equation}\label{sec:avgsys.eq:avgest}
        S_{\av}(\tau) \leq C\ve^2.
    \end{equation}
\end{proposition}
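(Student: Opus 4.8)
The plan is to exploit the essentially triangular structure of the coupled system \eqref{sec:avgsys.eq:odeavg2} to upgrade the crude bound \eqref{sec:avgsys.eq:avgestrough} to \eqref{sec:avgsys.eq:avgest}, by estimating the four scalar ODE in the order $\psi_{\av}$, then $h_{\av}^{ij}$, then $h_{\av}^{0i}$, then $h_{\av}^{00}$. This order is forced by which already-estimated quantity enters each source term: \eqref{sec:avgsys.eq:odeavgscalar2} for $\psi_{\av}$ is self-contained, \eqref{sec:avgsys.eq:odeavghori2} for $h_{\av}^{ij}$ is sourced only by $\psi_{\av}$, \eqref{sec:avgsys.eq:odeavgshift2} for $h_{\av}^{0i}$ is self-contained, and \eqref{sec:avgsys.eq:odeavglapse2} for $h_{\av}^{00}$ is sourced only by $\pa_\tau h_{\av}^{ab}$; every remaining term on the right-hand sides is a genuine error of size $C\ve^2\tau^{-2(q-\delta)}$. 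From the hypothesis \eqref{sec:gaugedeqs.eq:bounddata} together with the elementary bound \eqref{sec:avgsys.eq:avgbound} one has $S_{\av}(\tau_0)\leq C\ve^2$, so each of $|h_{\av}^{\mu\nu}(\tau_0)|$, $|\pa_\tau h_{\av}^{\mu\nu}(\tau_0)|$, $|\psi_{\av}(\tau_0)|$, $|\pa_\tau\psi_{\av}(\tau_0)|$ is at most $C\ve^2$ (the fixed powers of $\tau_0$ absorbed into $C$); and the standing assumption $5\delta\leq q-1$ with $q>1$ is what makes the exponent bookkeeping below close, with a margin of at least $q-1-3\delta\geq 2\delta>0$ at every step.

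For $\psi_{\av}$ I would treat \eqref{sec:avgsys.eq:odeavgscalar2} as an inhomogeneous Euler equation with principal part $\pa_\tau^2+\tfrac{2q+2}{\tau}\pa_\tau+\tfrac{2(2q+1)(q+2)}{\tau^2}$ and source $F_\psi$ satisfying $|F_\psi|\leq C\ve^2\tau^{-2(q-\delta)}$. Formula \eqref{sec:intro.eq:indicialroots} with $(a,b)=(2q+2,\,2(2q+1)(q+2))$ gives $(a-1)^2-4b=-3(2q+1)(2q+5)<0$, so the indicial exponents are complex with real part $q+\tfrac12$, and the homogeneous solutions equal $\tau^{-(q+1/2)}$ times bounded oscillations. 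Using the variation-of-parameters representation for this Euler operator — equivalently, rescaling $w:=\tau^{q+1/2}\psi_{\av}$ and running a weighted energy estimate as in \eqref{sec:intro.eq:oscexplain1}--\eqref{sec:intro.eq:oscexplain2} — one obtains
\begin{equation}
    |\psi_{\av}(\tau)|+\tau\,|\pa_\tau\psi_{\av}(\tau)| \;\leq\; C\tau^{-(q+1/2)}\Big(|\psi_{\av}(\tau_0)|+|\pa_\tau\psi_{\av}(\tau_0)|+\int_{\tau_0}^\tau s^{q+3/2}\,|F_\psi(s)|\,\de s\Big).
\end{equation}
Since $\int_{\tau_0}^\tau s^{q+3/2}|F_\psi|\,\de s\lesa \ve^2\int_{\tau_0}^\tau s^{-q+3/2+2\delta}\,\de s\lesa \ve^2(1+\tau^{-q+5/2+2\delta})$ (a logarithmic boundary case being harmless by the strict inequality $q>1+3\delta$), and since $\tau^{-(q+1/2)}\cdot\tau^{-q+5/2+2\delta}=\tau^{-2q+2+2\delta}\leq\tau^{-(q-1+\delta)}$, one concludes $|\psi_{\av}(\tau)|\leq C\ve^2\tau^{-(q-1+\delta)}$ and $|\pa_\tau\psi_{\av}(\tau)|\leq C\ve^2\tau^{-(q+\delta)}$, which are exactly the $\psi_{\av}$-entries of \eqref{sec:avgsys.eq:avgest}.

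The three metric averages are easier, since their principal operators are not oscillatory. For $h_{\av}^{ij}$ the operator in \eqref{sec:avgsys.eq:odeavghori2} has fundamental solutions $\{1,\tau^{-(2q+1)}\}$, so $h_{\av}^{ij}$ merely stays bounded, consistent with the $\tau^0$ weight in \eqref{sec:avgsys.eq:avgenergy}; feeding in the bound just obtained, its source is $\leq C\ve^2\tau^{-(q+1+\delta)}$ (the error term $\ve^2\tau^{-2(q-\delta)}$ being the faster-decaying one because $q-1>3\delta$), and integrating the resulting first-order equation for $\pa_\tau h_{\av}^{ij}$ with integrating factor $\tau^{2q+2}$ gives $|\pa_\tau h_{\av}^{ij}(\tau)|\leq C\ve^2\tau^{-(q+\delta)}$, which is integrable since $q+\delta>1$, so a further integration from $\tau_0$ gives $|h_{\av}^{ij}(\tau)|\leq C\ve^2$. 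The shift and lapse equations \eqref{sec:avgsys.eq:odeavgshift2}, \eqref{sec:avgsys.eq:odeavglapse2} are first order with \emph{favourable} (positive) damping coefficients $\tfrac{2q+2}{\tau}$ and $\tfrac{4q+2}{\tau}$; with integrating factors $\tau^{2q+2}$, resp.\ $\tau^{4q+2}$, and source bounds $C\ve^2\tau^{-2(q-\delta)}$, resp.\ $C\ve^2\tau^{-(q+\delta)}$ (the latter using the decay of $\pa_\tau h_{\av}^{ab}$ from the previous step), one obtains $|h_{\av}^{0i}(\tau)|\leq C\ve^2\tau^{-(2q-1-2\delta)}$, $|\pa_\tau h_{\av}^{0i}(\tau)|\leq C\ve^2\tau^{-2(q-\delta)}$, $|h_{\av}^{00}(\tau)|\leq C\ve^2\tau^{-(q-1+\delta)}$, $|\pa_\tau h_{\av}^{00}(\tau)|\leq C\ve^2\tau^{-(q+\delta)}$; multiplying by the weights in \eqref{sec:avgsys.eq:avgenergy} and using $q>2\delta$ turns each of these into $\leq C\ve^2$. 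Adding the four contributions yields \eqref{sec:avgsys.eq:avgest}.

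I expect the only genuinely delicate point to be the first step: the scalar average solves an Euler-type ODE with complex indicial exponents, so there is no monotone integrating factor available, and the weighted time integral $\int s^{q+3/2}|F_\psi|$ of its source can actually grow as a power of $\tau$ when $q$ is close to $1$. The estimate survives because this growth is slower than the $\tau^{-(q+1/2)}$ homogeneous decay by a margin governed by $q-1\geq5\delta$ — which is precisely where the restriction $q>1$ (i.e.\ $2/3<p<1$) and the smallness of $\delta$ relative to $q-1$ are used. The only other thing to keep in mind is structural rather than analytic: the lapse equation \eqref{sec:avgsys.eq:odeavglapse2} is driven by $\pa_\tau h_{\av}^{ab}$ and not by $h_{\av}^{ab}$ itself (which does not decay), so the decay of $\pa_\tau h_{\av}^{ab}$ must already be in hand before the lapse is estimated — hence the prescribed order of the four estimates.
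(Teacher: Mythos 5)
Your proof is correct, but it takes a genuinely different route from the paper's. The paper rescales the averaged quantities to $\mathfrak{u}^{00},\mathfrak{u}^{0i},\mathfrak{u}^{ij},\mathfrak{v}^{ij},\mathfrak{r},\mathfrak{s}$ and runs a Lyapunov/Gr\"onwall argument on quadratic functionals: for the scalar-field pair $(\mathfrak{r},\mathfrak{s})$ it picks the combination $\tfrac{c_1}{2}\mathfrak{r}^2 + \tfrac{1}{2}\mathfrak{s}^2$ with $c_1=3q^2+9q+6-3\delta+\delta^2$ precisely so the cross-terms $\tfrac{c_1}{\tau}\mathfrak{r}\mathfrak{s}$ cancel, and for the lapse/horizontal pair $(\mathfrak{u}^{00},\mathfrak{v}^{ij})$ it picks $\tfrac{1}{2}(\mathfrak{u}^{00})^2+\tfrac{c_2}{2}\sum(\mathfrak{v}^{ab})^2$ with $c_2$ large so the damping dominates the cross term. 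You instead write the scalar-average inequality \eqref{sec:avgsys.eq:odeavgscalar2} as an Euler equation with explicit complex indicial roots $q+\tfrac12\pm i\omega$, solve it by variation of parameters (Wronskian $\propto s^{-(2q+2)}$, so the kernel is bounded by $\tau^{-(q+1/2)}s^{q+3/2}$), and then exploit the genuinely triangular coupling of the remaining three ODEs ($h_{\av}^{ij}$ sourced only by $\psi_{\av}$; $h_{\av}^{00}$ sourced only by $\pa_\tau h_{\av}^{ab}$, which decays, and not by $h_{\av}^{ab}$, which does not --- a structural observation the paper absorbs silently into its joint energy). The exponent bookkeeping you carry out is correct throughout, and the margin you identify, $q-1\geq 5\delta$, is exactly the one the paper relies on. The trade-off: the paper's Lyapunov approach avoids any explicit fundamental solutions and is therefore robust if the coefficients were only asymptotically Euler, while your approach is more elementary, more transparent about where the $\tau^{-(q+1/2)}$ decay comes from, and in fact yields slightly sharper bounds (e.g.\ $|h_{\av}^{00}|\lesssim\ve^2\tau^{-(q-1+\delta)}$ rather than $\tau^{-(q-1)}$). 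Both arguments close.
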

\begin{remark}[Reduced decay of averages from nonlinear coupling]\label{sec:avgsys.rmk:bottleneck}
    The weights in $\tau$ in the various terms that make up the norm $S_{\av}$ rescale the leading-order behaviour of the quantities $(h_{\av}^{\mu\nu},\pa_\tau h_{\av}^{\mu\nu},\psi_{\av},\pa_\tau \psi_{\av})$. Consequently, the improved bound \eqref{sec:avgsys.eq:avgest} on $S_{\av}$ implies the corresponding amount of decay of these quantities. These decay rates are weaker than the principle ODE (the left hand side of the system \eqref{sec:avgsys.eq:odeavg2}) predict. For example, the principle terms in the ODE \eqref{sec:avgsys.eq:odeavgshift2} for the $h_{\av}^{0i}$ imply that $h_{\av}^{0i} = O(\tau^{-(q+1)})$, not $O(\tau^{-q})$ as implied by \eqref{sec:avgsys.eq:avgest}. Ultimately the weaker decay is due to the $\ve^2 \tau^{-2(q-\delta)}$ inhomogeneous terms present on the right hand side of the system \eqref{sec:avgsys.eq:odeavg2}, which arise from the nonlinear coupling of the system to the oscillating terms $(h_{\osc}^{\mu\nu},\psi_{\osc})$. The decay of the oscillating terms limit or bottleneck the decay available from the principle system for the averaged quantities.
\end{remark}
\begin{proof}
    Let us define the following rescaled functions
    \begin{subequations}
        \begin{gather}
            \mf{u}^{00}:= \tau^{q-1}h_{\av}^{00},\qquad \mf{u}^{0i}:= \tau^{q}h_{\av}^{0i},\qquad
            \mf{u}^{ij}:= h_{\av}^{ij},\qquad \mf{v}^{ij}:= \tau^q\pa_\tau\mf{u}^{ij},\\ \mf{r}:= \tau^{q+\delta-1}\psi_{\av},\qquad \mf{s}:=\tau\pa_\tau \mf{r}.
        \end{gather}
    \end{subequations}
    We claim that the system \eqref{sec:avgsys.eq:odeavg2} implies the following differential equalities/inequalities:
    \begin{subequations}
        \begin{gather}
            \frac{1}{2}\pa_\tau (\mf{u}^{00})^2 \leq -\frac{3q+3}{\tau}(\mf{u}^{00})^2 + \frac{C}{\tau}\sum_{a,b=1}^3|\mf{u}^{00}||\mf{v}^{ab}| + C\ve^2\tau^{-(q+1-2\delta)}|\mf{u}^{00}|,\label{sec:avgsys.eq:avgestproof1}\\
            \frac{1}{2}\pa_\tau(\mf{u}^{0i})^2 \leq-\frac{q+2}{\tau}(\mf{u}^{0i})^2 + C\ve^2\tau^{-(q-2\delta)}|\mf{u}^{0i}|,\label{sec:avgsys.eq:avgestproof2}\\
            \frac{1}{2}\pa_\tau(\mf{u}^{ij})^2 = \frac{1}{\tau^{q}}\mf{u}^{ij}\mf{v}^{ij},\qquad \frac{1}{2}\pa_\tau \mf{r}^2 =\frac{1}{\tau}\mf{r}\mf{s}\label{sec:avgsys.eq:avgestproof3}\\
            \frac{1}{2}\pa_\tau (\mf{v}^{ij})^2 \leq - \frac{q+2}{\tau}(\mf{v}^{ij})^2 + \frac{C}{\tau^{1+\delta}}|\mf{v}^{ij}||\mf{r}| + C\ve^2 \tau^{-(q-2\delta)}|\mf{v}^{ij}|,\label{sec:avgsys.eq:avgestproof4}\\
            \frac{1}{2}\pa_\tau \mf{s}^2 \leq -\frac{3-2\delta}{\tau}\mf{s}^2 - \frac{c_1}{\tau}\mf{s}\mf{r} + C\ve^2\tau^{-(q-3\delta)}|\mf{s}|,\label{sec:avgsys.eq:avgestproof5}
        \end{gather}
    \end{subequations}
    where $c_1$ is the positive constant
    \begin{equation}
        c_1 = 3q^2+9q+6-3\delta+\delta^2.
    \end{equation}
    The identities in \eqref{sec:avgsys.eq:avgestproof3}, are true by the definitions of the $\mf{v}^{ij}$ and $\mf{s}$ respectively. The inequality \eqref{sec:avgsys.eq:avgestproof1} follows from the computation
    \begin{align}
        \frac{1}{2}\pa_\tau (\mf{u}^{00})^2 &= \tau^{q-1}\mf{u}^{00}\big(\pa_\tau h_{\av}^{00} + \frac{q-1}{\tau}h_{\av}^{00}\big)\nonumber\\
        &= - \frac{3q+3}{\tau}(\mf{u}^{00})^2 + \tau^{q-1}\mf{u}^{00}\big(\pa_\tau h_{\av}^{00} + \frac{4q+2}{\tau}h_{\av}^{00}\big)\nonumber\\
        &\leq -\frac{3q+3}{\tau}(\mf{u}^{00})^2 + \frac{C}{\tau}\sum_{a,b=1}^3|\mf{u}^{00}||\pa_\tau h_{\av}^{ab}| + C\ve^2\tau^{-(q+1-2\delta)}|\mf{u}^{00}|,
    \end{align}
    where in the last line we used \eqref{sec:avgsys.eq:odeavglapse2}. The inequality \eqref{sec:avgsys.eq:avgestproof2} follows similarly from \eqref{sec:avgsys.eq:odeavgshift2}. For \eqref{sec:avgsys.eq:avgestproof4}, we have
    \begin{align}
        \frac{1}{2}\pa_\tau (\mf{v}^{ij})^2 =&\>\tau^q\mf{v}^{ij}\big(\pa_\tau^2 h_{\av}^{ij} + \frac{q}{\tau}\pa_\tau h_{\av}^{ij}\big)\nonumber\\
        =&\>\tau^q\mf{v}^{ij}\big(\pa_\tau^2h_{\av}^{ij}+\frac{2q+2}{\tau}\pa_\tau h_{\av}^{ij}\big) - \frac{q+2}{\tau}(\mf{v}^{ij})^2\nonumber\\
        \leq&\> \frac{C}{\tau}|\mf{v}^{ij}||\mf{r}| + C\ve^2 \tau^{-(q-2\delta)}|\mf{v}^{ij}|- \frac{q+2}{\tau}(\mf{v}^{ij})^2,
    \end{align}
    where in the last line we used \eqref{sec:avgsys.eq:odeavghori2}. Finally for \eqref{sec:avgsys.eq:avgestproof5}, we compute that
    \begin{align}
        \frac{1}{2}\pa_\tau \mf{s}^2 = &\>\mf{s}\pa_\tau (\tau^{q+\delta} \pa_\tau \psi_{\av} + (q+\delta-1)\tau^{q+\delta-1}\psi_{\av})\nonumber\\
        =&\>\tau^{q+\delta} \mf{s}\Big(\pa_\tau^2\psi_{\av} + \frac{2q+2\delta-1}{\tau}\pa_\tau \psi_{\av} + \frac{(q+\delta-1)^2}{\tau^2}\psi_{\av}\Big)\nonumber\\
        =&\>\tau^{q+\delta}\mf{s}\Big(\pa_\tau^2\psi_{\av} + \frac{2q+2}{\tau}\pa_\tau \psi_{\av} + \frac{2(2q+1)(q+2)}{\tau^2}\psi_{\av}\Big)\nonumber\\
        &-(3-2\delta)\tau^{q+\delta-1}\mf{s}\pa_\tau \psi_{\av} - \big(3q^2+(12-2\delta)q+(3+2\delta-\delta^2)\big)\tau^{q+\delta-2}\mf{s}\psi_{\av}.
    \end{align}
    We estimate by \eqref{sec:avgsys.eq:odeavgscalar2}:
    \begin{equation}
        \tau^{q+\delta}\mf{s}\Big(\pa_\tau^2\psi_{\av} + \frac{2q+2}{\tau}\pa_\tau \psi_{\av} + \frac{2(2q+1)(q+2)}{\tau^2}\psi_{\av}\Big) \leq C\ve^2\tau^{-(q-3\delta)}|\mf{s}|,
    \end{equation}
    moreover
    \begin{multline}
        -(3-2\delta)\tau^{q+\delta-1}\mf{s}\pa_\tau \psi_{\av} - \Big[3q^2+(12-2\delta)q+(3+2\delta-\delta^2)\Big]\tau^{q+\delta-2}\mf{s}\psi_{\av}\\
        = - \frac{3-2\delta}{\tau}\mf{s}^2 - \frac{3q^2+9q+6-3\delta+\delta^2}{\tau^2}\mf{s}\mf{r}.
    \end{multline}
    The constant $c_1$ is clearly positive for $\delta$ sufficiently small. Combining the above computations, \eqref{sec:avgsys.eq:avgestproof5} follows.
    
    Next we derive estimates for the system $(\mf{u}^{\mu\nu}, \mf{v}^{ij}, \mf{r},\mf{s})$ using the system \eqref{sec:avgsys.eq:avgestproof1} - \eqref{sec:avgsys.eq:avgestproof5}. The estimate
    \begin{equation}\label{sec:avgsys.eq:avgestproof7}
     (\mf{u}^{0i}(\tau))^2 \leq C\big((\mf{u}^{0i}(\tau_0))^2 + \ve^4\tau^{-\delta}\big)
    \end{equation}
    follows from \eqref{sec:avgsys.eq:avgestproof2} by a straightforward application of Cauchy-Schwartz and Gr\"onwall's inequality. For the remaining quantities, we estimate the $\mf{r}$, $\mf{s}$ (which are derived from $\psi_{\av}$) first, since these equations do not contain mixed linear terms, while the equations for the $h_{\av}^{00}$, $h_{\av}^{ij}$ quantities contain linear terms that depend on $\psi_{\av}$. We compute from \eqref{sec:avgsys.eq:avgestproof3} and \eqref{sec:avgsys.eq:avgestproof5} the inequality
    \begin{equation}
        \pa_\tau\big(\frac{c_1}{2}\mf{r}^2 + \frac{1}{2}\mf{s}^2\big) \leq \frac{c_1}{\tau}\mf{r}\mf{s} -\frac{3-2\delta}{\tau}\mf{s}^2 - \frac{c_1}{\tau}\mf{r}\mf{s} + C\ve^2 \tau^{-q + 3\delta}|\mf{s}| \leq C\ve^2\tau^{-q+3\delta}|\mf{s}|.
    \end{equation}
    Note the cancellation of the mixed $c_1\tau^{-1}\mf{r}\mf{s}$ terms. This implies by a Gr\"onwall-type inequality that
    \begin{equation}\label{sec:avgsys.eq:avgestproof8}
        \frac{c_1}{2}\mf{r}^2(\tau) + \frac{1}{2}\mf{s}^2(\tau) \leq C(\mf{r}^2(\tau_0) + \mf{s}^2(\tau_0) + \ve^4 \tau^{-\delta}),
    \end{equation}
    as long as $\delta$ is chosen sufficiently small that $q \geq 1+4\delta$. We must estimate the $\mf{u}^{00}$ and $\mf{v}^{ij}$ jointly, due to the mixed $|\mf{u}^{00}||\mf{v}^{ab}|$ term present in \eqref{sec:avgsys.eq:avgestproof1}. Letting $c_2$ denote a positive constant to be set in a moment, we compute
    \begin{align}
        \pa_\tau\Big[\frac{1}{2}(\mf{u}^{00})^2 + \frac{c_2}{2} \sum_{a,b=1}^3(\mf{v}^{ab})^2\Big] \leq &\> -\frac{3q+3}{\tau}(\mf{u}^{00})^2 + \frac{C}{\tau}\sum_{a,b=1}^3 |\mf{u}^{00}||\mf{v}^{ab}| - \frac{c_2(q+2)}{\tau}\sum_{a,b=1}^3(\mf{v}^{ab})^2\nonumber\\
        &+ C\ve^2\tau^{-(q+1-2\delta)}|\mf{u}^{00}| + C\sum_{a,b=1}^3\Big(\frac{1}{\tau^{1+\delta}}|\mf{r}||\mf{v}^{ab}| + \ve^2\tau^{-(q-2\delta)}|\mf{v}^{ab}|\Big).
    \end{align}
    We fix $c_2$ to be sufficiently large that the quadratic terms on the RHS of the above equation are non-positive, i.e. we have
    \begin{equation}
        -\frac{3q+3}{\tau}(\mf{u}^{00})^2 + \frac{C}{\tau}\sum_{a,b=1}^3 |\mf{u}^{00}||\mf{v}^{ab}| - \frac{c_2(q+2)}{\tau}\sum_{a,b=1}^3(\mf{v}^{ab})^2 \leq 0.
    \end{equation}
    Then we apply the inequality \eqref{sec:avgsys.eq:avgestproof8} to estimate $|\mf{r}|$ and apply Cauchy-Schwartz to obtain
    \begin{equation}
        \pa_\tau\Big(\frac{1}{2}(\mf{u}^{00})^2 + \frac{c_2}{2} \sum_{a,b=1}^3(\mf{v}^{ab})^2\Big) \leq \frac{C}{\tau^{1+\delta}}\Big((\mf{u}^{00})^2+\sum_{a,b=1}^3(\mf{v}^{ab})^2\Big) + \frac{C}{\tau^{1+\delta}}(\mf{r}^2(\tau_0)+\mf{s}^2(\tau_0)+\ve^4),
    \end{equation}
    from which we obtain
    \begin{equation}\label{sec:avgsys.eq:avgestproof9}
        \frac{1}{2}(\mf{u}^{00}(\tau))^2 + \frac{c_2}{2} \sum_{a,b=1}^3(\mf{v}^{ab}(\tau))^2 \leq C\Big((\mf{u}^{00}(\tau_0))^2 + \sum_{a,b=1}^3(\mf{v}^{ab}(\tau_0))^2 + \mf{r}^2(\tau_0) +\mf{s}^2(\tau_0) + \ve^4\tau^{-\delta}\Big).
    \end{equation}
    by Gr\"onwall. We then derive the bound
    \begin{equation}\label{sec:avgsys.eq:avgestproof10}
        (\mf{u}^{ij})^2 \leq C\Big((\mf{u}^{00}(\tau_0))^2 + \sum_{a,b=1}^3\big[(\mf{u}^{ab})^2+(\mf{v}^{ab}(\tau_0))^2\big] + \mf{r}^2(\tau_0) +\mf{s}^2(\tau_0) + \ve^4\tau^{-\delta}\Big)
    \end{equation}
    from \eqref{sec:avgsys.eq:avgestproof3} through a straightforward application of Cauchy-Schwartz and Gr\"onwall. We convert the above estimates for $(\mf{u}^{\mu\nu},\mf{v}^{\mu\nu},\mf{r},\mf{s})$ into a bound on $S_{\av}(\tau)$ using
    \begin{subequations}
        \begin{gather}
            \tau^{q-1-\delta}| h_{\av}^{00}| \sim \tau^{-\delta}|\mf{u}^{00}|,\\
            \tau^{q-\delta}|h_{\av}^{0i}| \sim |\mf{u}^{0i}|,\\
            |h_{\av}^{ij}| + \tau^q|\pa_\tau h_{\av}^{ij}| \sim |\mf{u}^{ij}| + |\mf{v}^{ij}|,\\
            \tau^{q+\delta-1}|\psi_{\av}| + \tau^{q+\delta}|\pa_\tau \psi_{\av}|  \sim  |\mf{r}| + |\mf{s}|,
        \end{gather}
    \end{subequations}
    and applying the estimates \eqref{sec:avgsys.eq:avgestproof7} - \eqref{sec:avgsys.eq:avgestproof10}. The quantities $\tau^q |\pa_\tau h_{\av}^{00}|$ and $\tau^{q+1}|\pa_\tau h_{\av}^{0i}|$ can be estimated by taking the inequalities \eqref{sec:avgsys.eq:odeavglapse2}, \eqref{sec:avgsys.eq:odeavgshift2} and applying the estimates \eqref{sec:avgsys.eq:avgestproof7} - \eqref{sec:avgsys.eq:avgestproof10} to the right hand side. It follows that
    \begin{equation}
        S_{\av}(\tau) \leq C(S_{\av}(\tau_0) + \ve^2\tau^{-\delta/2}).
    \end{equation}
    The average of a function satisfies the bound \eqref{sec:avgsys.eq:avgbound} and hence the initial data bound \eqref{sec:gaugedeqs.eq:bounddata} implies
    \begin{equation}
        S_{\av}(\tau_0) \leq C\sum_{\alpha,\beta=0}^3\Big(\|h^{\alpha\beta}(\tau_0)\|_{L^2} + \|\pa_\tau h^{\alpha\beta}(\tau_0)\|_{L^2}\Big) + C\big(\|\psi(\tau_0)\|_{L^2} + \|\pa_\tau \psi(\tau_0)\|_{L^2}\big) \leq C\ve^2.
    \end{equation}
    This completes the proof of Proposition~\ref{sec:avgsys.prop:avgest}.
\end{proof}
\section{Energy estimates}\label{sec:energyests}
In this section we introduce a family of weighted energies, then we derive differential inequalities for them, provided the bootstrap assumption \eqref{sec:globex.eq:bootstrap} holds. We derive higher-order versions of these energies, and prove analogous differential inequalities for those. These weighted energies will be used to derive estimates for the oscillatory functions $(h_{\osc}^{\mu\nu},\psi_{\osc})$ in the next section. For this section however, we will derive inequalities that hold for any (sufficiently regular) scalar function $u$ with zero spatial average.

We introduce the parameters $\alpha, \beta \in \RR_{\geq 0}$ and define the following energy: 
\begin{equation}\label{sec:energyests.eq:energydef}
    E_{\alpha,\beta}[u](\tau) := \frac{1}{2\tau^{2\beta}}\int_{\TT^3} \Big[-g^{00}(\pa_\tau( \tau^{\alpha} u))^2 + \tau^{2\alpha}g^{ab}\pa_a u \pa_b u\Big] \de^3 x.
\end{equation}
In the following lemma we demonstrate that the family of energies $E_{\alpha,\beta}$ are coercive for functions $u$ with zero spatial average.
\begin{lemma}\label{sec:energyests.lem:energycoercivity}
    Suppose $u$ is a function with zero spatial average, so that $u = u_{\osc}$. Then $E_{\alpha,\beta}[u]$ satisfies the following inequalities:
    \begin{equation}\label{sec:energyests.eq:energycoercivity}
        \frac{1}{C}\tau^{2(\alpha-\beta)}(\|u\|_{L^2}^2 + \|\pa u\|_{L^2}^2) \leq E_{\alpha,\beta}[u] \leq C\tau^{2(\alpha-\beta)}(\|u\|_{L^2}^2 + \|\pa u\|_{L^2}^2).
    \end{equation}
\end{lemma}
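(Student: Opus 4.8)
The plan is to pull the entire time weight out of $E_{\alpha,\beta}[u]$, reducing the statement to a $\tau$-independent comparison, and then to dispose of the one non-positive term that appears by invoking the Poincar\'e inequality — which is exactly where the zero-average hypothesis is used.

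First I would use the identity $\pa_\tau(\tau^\alpha u) = \tau^\alpha\big(\pa_\tau u + \tfrac{\alpha}{\tau}u\big)$ to rewrite
\begin{equation*}
    E_{\alpha,\beta}[u] = \tau^{2(\alpha-\beta)}\, Q[u],\qquad Q[u] := \frac12\int_{\TT^3}\Big[-g^{00}\big(\pa_\tau u + \tfrac{\alpha}{\tau}u\big)^2 + g^{ab}\pa_a u\,\pa_b u\Big]\,\de^3 x,
\end{equation*}
so it suffices to show $Q[u]$ is comparable to $\|u\|_{L^2}^2 + \|\pa u\|_{L^2}^2$ with constants independent of $\tau$ (they may depend on $\alpha$, $\tau_0$ and $p$). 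Next I would observe that the bootstrap assumption \eqref{sec:globex.eq:bootstrap} together with the Sobolev embedding of Lemma~\ref{sec:prelims.lem:sobolev} makes $\|h^{00}\|_{L^\infty}$ and $\sum_{a,b}\|h^{ab}\|_{L^\infty}$ small once $\ve_0$ is small, so there are fixed constants $0 < c_0 \le C_0$ with $c_0 \le -g^{00} \le C_0$ and $c_0|\xi|^2 \le g^{ab}\xi_a\xi_b \le C_0|\xi|^2$ pointwise on $\TT^3$; that is, $g$ is uniformly elliptic.

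For the upper bound one just expands $\big(\pa_\tau u + \tfrac\alpha\tau u\big)^2 \le 2(\pa_\tau u)^2 + \tfrac{2\alpha^2}{\tau^2}u^2$ and uses the ellipticity bounds and $\tau \ge \tau_0$. For the lower bound the key point is that both integrands of $Q[u]$ are non-negative, which gives simultaneously $Q[u] \ge \tfrac{c_0}{2}\|\ol{\pa}u\|_{L^2}^2$ and $Q[u] \ge \tfrac{c_0}{2}\big\|\pa_\tau u + \tfrac\alpha\tau u\big\|_{L^2}^2$. Since $u = u_{\osc}$ has zero average, the Poincar\'e inequality (Lemma~\ref{sec:prelims.lem:poincare}) bounds $\|u\|_{L^2} \le C\|\ol{\pa}u\|_{L^2}$, so $\|u\|_{L^2}^2 + \|\ol{\pa}u\|_{L^2}^2 \le C\,Q[u]$; and writing $\pa_\tau u = \big(\pa_\tau u + \tfrac\alpha\tau u\big) - \tfrac\alpha\tau u$ and using $\tau \ge \tau_0$ one then gets $\|\pa_\tau u\|_{L^2}^2 \le C\,Q[u]$ as well. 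Adding the three estimates gives $\|u\|_{L^2}^2 + \|\pa u\|_{L^2}^2 \le C\,Q[u]$, and reinstating the factor $\tau^{2(\alpha-\beta)}$ finishes the proof.

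The main obstacle, modest as it is, is the cross term $\tfrac{2\alpha}{\tau}\int_{\TT^3}u\,\pa_\tau u$ hidden inside $\big(\pa_\tau u + \tfrac\alpha\tau u\big)^2$: it is the only part of $Q[u]$ that is not manifestly signed, and controlling it is precisely where the zero-average hypothesis enters (through Poincar\'e, to dominate $\|u\|_{L^2}$ by $\|\ol{\pa}u\|_{L^2}$) and where $\tau \ge \tau_0 > 0$ is needed (to keep the weight $\alpha/\tau$ bounded). Everything else is a routine consequence of the uniform ellipticity of $g$ supplied by the bootstrap bounds.
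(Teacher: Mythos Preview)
Your proof is correct and follows essentially the same approach as the paper: both use the bootstrap assumption to get uniform ellipticity of $g$, apply the Poincar\'e inequality to control $\|u\|_{L^2}$ by $\|\ol{\pa}u\|_{L^2}$, and handle the mismatch between $\pa_\tau u$ and $\pa_\tau(\tau^\alpha u)$ via the expansion $\pa_\tau(\tau^\alpha u)=\tau^\alpha(\pa_\tau u+\tfrac{\alpha}{\tau}u)$ together with $\tau\ge\tau_0$. Your presentation of factoring out $\tau^{2(\alpha-\beta)}$ first and working with the unweighted quadratic form $Q[u]$ is slightly cleaner, but the content is the same.
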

\begin{proof}
    This is a straightforward application of the bootstrap assumption and the Poincar\'e inequality. We will prove the first inequality, the second follows by a similar argument. The bootstrap assumption implies
    \begin{equation}
        1 \leq -Cg^{00},\qquad \delta^{ij}\leq Cg^{ij},\quad i,j=1,2,3,
    \end{equation}
    and therefore
    \begin{equation}\label{sec:energyests.eq:energycoercivityproof1}
        \|\pa_\tau (\tau^\alpha u)\|_{L^2}^2 + \tau^{2\alpha}\|\ol{\pa} u\|_{L^2}^2 \leq C\int_{\TT^3}\Big[-g^{00}(\pa_\tau (\tau^\alpha u))^2 + \tau^{2\alpha} g^{ij}\pa_i u \pa_j u].
    \end{equation}
    As $u$ has zero spatial average, we apply the Poincar\'e inequality to bound
    \begin{align}
        \tau^{2(\alpha-\beta)}(\|u\|_{L^2}^2 + \|\pa u\|_{L^2}^2) &\leq \frac{C}{\tau^{2\beta}}\big(\tau^{2\alpha}\|u\|_{L^2}^2 + \|\pa_\tau (\tau^\alpha u)\|_{L^2}^2+ \tau^{2\alpha}\|\ol{\pa} u\|_{L^2}^2\big)\nonumber\\
        &\leq \frac{C}{\tau^{2\beta}} \big(\|\pa_\tau (\tau^\alpha u)\|_{L^2}^2 + \tau^{2\alpha}\|\ol{\pa} u\|_{L^2}^2\big).\label{sec:energyests.eq:energycoercivityproof2}
    \end{align}
    Combining \eqref{sec:energyests.eq:energycoercivityproof1} and \eqref{sec:energyests.eq:energycoercivityproof2} yields \eqref{sec:energyests.eq:energycoercivity}.
\end{proof}
\subsection{Energy differential inequality}
We now derive the basic energy inequality for the family of energies $E_{\alpha,\beta}$.
\begin{lemma}[Energy inequality]\label{sec:energyests.lem:energyineq}
    Suppose $u$ is a function with zero spatial average, and let $\beta, \alpha \geq 0$. The energy $E_{\alpha,\beta}[u]$ satisfies the differential inequality
    \begin{subequations}
        \begin{equation}\label{sec:energyests.eq:energyineq}
            \pa_\tau E_{\alpha,\beta}[u](\tau) \leq -\frac{2\beta}{\tau}E_{\alpha,\beta}[u](\tau) +C\Big(\frac{1}{\tau^{1+\delta}}E_{\alpha,\beta}[u](\tau)+ \tau^{2(\alpha - \beta) +1}\|P_\alpha u\|_{L^2}^2\Big),
        \end{equation}
    \end{subequations}
    where $P_\alpha$ is the wave operator
    \begin{equation}
    P_\alpha := \wh{\square}_g + \frac{2\alpha}{\tau}g^{00}\pa_\tau.
\end{equation}
\end{lemma}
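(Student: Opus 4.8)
The plan is to conjugate out the $\tau^{\alpha}$ weight and reduce to a standard first-order wave-energy computation. Write $v := \tau^{\alpha}u$, so that $\pa_a v = \tau^{\alpha}\pa_a u$ and
\[
    E_{\alpha,\beta}[u](\tau) = \frac{1}{\tau^{2\beta}}\,\mc{E}[v](\tau), \qquad \mc{E}[v] := \frac{1}{2}\int_{\TT^3}\big(-g^{00}(\pa_\tau v)^2 + g^{ab}\pa_a v\,\pa_b v\big)\,\de^3 x .
\]
Then $\pa_\tau E_{\alpha,\beta}[u] = -\tfrac{2\beta}{\tau}E_{\alpha,\beta}[u] + \tau^{-2\beta}\pa_\tau\mc{E}[v]$, so it suffices to bound $\tau^{-2\beta}\pa_\tau\mc{E}[v]$. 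I would differentiate $\mc{E}[v]$ under the integral, integrate by parts on $\TT^{3}$ (no boundary terms) in the term $\int g^{ab}\pa_a v\,\pa_b(\pa_\tau v)$, and use the pointwise identity $-g^{00}\pa_\tau^{2}v - g^{ab}\pa_a\pa_b v = -\wh{\square}_g v + 2g^{0a}\pa_a\pa_\tau v$ to arrive at
\[
    \pa_\tau\mc{E}[v] = -\int_{\TT^3}\wh{\square}_g v\,\pa_\tau v + 2\int_{\TT^3}g^{0a}(\pa_a\pa_\tau v)\,\pa_\tau v + \mc{R},
\]
where $\mc{R}$ gathers the terms $-\int(\pa_b g^{ab})\pa_a v\,\pa_\tau v$, $-\tfrac12\int(\pa_\tau g^{00})(\pa_\tau v)^{2}$ and $\tfrac12\int(\pa_\tau g^{ab})\pa_a v\,\pa_b v$, each carrying a factor of $\pa g = \pa h$.

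The second term, $2\int g^{0a}(\pa_a\pa_\tau v)\pa_\tau v$, is the delicate one: it involves the second derivative $\pa_a\pa_\tau v$, which the first-order energy $E_{\alpha,\beta}$ does not control. The key step is to integrate by parts once more in $x^{a}$, rewriting it as $-\int(\pa_a g^{0a})(\pa_\tau v)^{2} = -\int(\pa_a h^{0a})(\pa_\tau v)^{2}$, which is first-order and carries the fast-decaying factor $\pa_a h^{0a}$. After this, every term on the right except $-\int\wh{\square}_g v\,\pa_\tau v$ is quadratic in $\pa v$ with a coefficient of the form $\pa h = \pa g$; by the bootstrap assumption and Sobolev embedding, $\|\pa h\|_{L^{\infty}} + \|h^{0a}\|_{L^{\infty}}\lesa\ve\tau^{-(q-\delta)}\le C\tau^{-(1+\delta)}$ (using $5\delta\le q-1$), while Lemma~\ref{sec:energyests.lem:energycoercivity} together with the bootstrap bounds $-g^{00}\geq 1/C$ and $(g^{ab})$ uniformly positive definite gives $\|\pa_\tau v\|_{L^{2}}^{2} + \|\ol{\pa}v\|_{L^{2}}^{2}\lesa\mc{E}[v]$. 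Hence all of these contributions are $\le C\tau^{-(1+\delta)}\mc{E}[v] = C\tau^{-(1+\delta)}\tau^{2\beta}E_{\alpha,\beta}[u]$.

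It remains to treat $-\int\wh{\square}_g v\,\pa_\tau v$. A direct computation with $v = \tau^{\alpha}u$ shows
\[
    \wh{\square}_g v = \tau^{\alpha}P_\alpha u + \alpha(\alpha-1)\tau^{\alpha-2}g^{00}u + 2\alpha\tau^{\alpha-1}g^{0a}\pa_a u ,
\]
where the two $\pm 2\alpha\tau^{\alpha-1}g^{00}\pa_\tau u$ contributions cancel --- this is precisely why $P_\alpha$ carries the $\tfrac{2\alpha}{\tau}g^{00}\pa_\tau$ term, and why $E_{\alpha,\beta}$ (with the $\tau^{\alpha}$ weight inside) is the energy adapted to $P_\alpha$. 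The two explicit lower-order terms are again errors: since $u = u_{\osc}$, the Poincar\'e inequality gives $\|u\|_{L^{2}}\lesa\|\ol{\pa}u\|_{L^{2}} = \tau^{-\alpha}\|\ol{\pa}v\|_{L^{2}}\lesa\tau^{-\alpha}\mc{E}[v]^{1/2}$, whence $\tau^{\alpha-2}|\int g^{00}u\,\pa_\tau v|\lesa\tau^{-2}\mc{E}[v]$ and $\tau^{\alpha-1}|\int g^{0a}\pa_a u\,\pa_\tau v|\lesa\ve\tau^{-1-(q-\delta)}\mc{E}[v]$, both $\le C\tau^{-(1+\delta)}\mc{E}[v]$ on $[\tau_0,T]$. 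For the genuine source coupling $-\tau^{\alpha}\int P_\alpha u\,\pa_\tau v$, Cauchy--Schwarz and the coercivity estimate give the bound $C\tau^{\alpha}\|P_\alpha u\|_{L^{2}}\mc{E}[v]^{1/2}$, and an application of Young's inequality converts $\tau^{-2\beta}$ times this into the source term $C\tau^{2(\alpha-\beta)+1}\|P_\alpha u\|_{L^{2}}^{2}$ of \eqref{sec:energyests.eq:energyineq} together with a further multiple of $E_{\alpha,\beta}[u]/\tau^{1+\delta}$. Collecting all pieces gives \eqref{sec:energyests.eq:energyineq}. I expect the main obstacle to be the second-derivative term $2\int g^{0a}(\pa_a\pa_\tau v)\pa_\tau v$ together with the other contributions \emph{not} seen by the coercivity of $E_{\alpha,\beta}$ (the $g^{00}u$ and $g^{0a}\pa_a u$ commutators and the source coupling): each must be either reduced by integration by parts to a first-order quantity with a decaying coefficient, or absorbed using the extra $\tau^{-2}$ gain the Poincar\'e inequality provides for zero-average functions.
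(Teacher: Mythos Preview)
Your proposal is correct and follows essentially the same approach as the paper: set $v=\tau^{\alpha}u$, differentiate the unweighted energy $\mc{E}[v]$, integrate by parts (including the extra integration by parts on $2\int g^{0a}(\pa_a\pa_\tau v)\pa_\tau v$), expand $\wh{\square}_g v = \tau^{\alpha}P_\alpha u + \text{lower order}$, and control the lower-order pieces with the bootstrap bound and Poincar\'e. One small correction: in your final Young split you cannot simultaneously obtain the source weight $\tau^{2(\alpha-\beta)+1}$ and an energy contribution of size $E_{\alpha,\beta}/\tau^{1+\delta}$ --- the paper instead splits as $\tfrac{\beta}{\tau}E_{\alpha,\beta}+C\tau^{2(\alpha-\beta)+1}\|P_\alpha u\|_{L^2}^2$ (absorbing the $\tfrac{\beta}{\tau}E$ into the leading $-\tfrac{2\beta}{\tau}E$), which is a harmless adjustment to your argument.
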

\begin{remark}[Justification of weights in the energy $E_{\alpha,\beta}$]
    The introduction of the weights in time for the energy $E_{\alpha,\beta}[u]$ rescale the leading-order behaviour of the function $u$, so that (using Lemma~\ref{sec:energyests.lem:energycoercivity}) one can prove decay of $u$. The first parameter $\alpha$ introduces the principal damping term $2\alpha\tau^{-1}g^{00}\pa_\tau$ (contained in $P_{\alpha}$) to the differential inequality for $E_{\alpha,\beta}$. Such damping terms are present in the gauge-fixed system \eqref{sec:gaugedeqs.eq:wave} (and therefore in the equations for the oscillatory remainders $(h_{\osc}^{\mu\nu},\psi_{\osc})$). More specifically, $\alpha = q$ for the damping term in the equation \eqref{sec:gaugedeqs.eq:wavelapse} for $h^{00}$, while $\alpha = q+1$ for the damping term in all the other equations. We can fix $\alpha$ in the energies for these quantities to exactly cancel with the damping terms in the gauge-fixed system. The actual decay of the quantities $(h^{\mu\nu},\psi)$ are less than what the damping term $2\alpha\tau^{-1}g^{00}\pa_\tau$ predicts, however. This is due to the linear and nonlinear coupling of the equations for the $h^{00}$, $h^{0i}$, $h^{ij}$, $\psi$. The weight $\tau^{-2\beta}$ in $E_{\alpha,\beta}$ are introduced to compensate for this reduced decay. See \eqref{sec:oscsys.eq:energyhochoice} for the specific choice of energies we make for the quantities $(h_{\osc}^{\mu\nu},\psi_{\osc})$.
\end{remark}
\begin{proof}
    Let $v = \tau^{\alpha}u$, so that
    \begin{equation}
        E_{\alpha,\beta}[u] = \frac{1}{2\tau^{2\beta}}\int_{\TT^3}\Big[-g^{00}(\pa_\tau v)^2 + g^{ij}\pa_i v\pa_j v\Big].
    \end{equation}
    We differentiate $E_{\alpha,\beta}[u]$ in time:
    \begin{multline}
        \pa_\tau E_{\alpha,\beta}[u] = -\frac{2\beta}{\tau}E_{\alpha,\beta} +\frac{1}{\tau^{2\beta}}\int_{\TT^3}\Big[-g^{00}\pa_\tau v\pa_\tau^2 v + g^{ab}\pa_a v \pa_b \pa_\tau v\Big]\\
        - \frac{1}{2\tau^{2\beta}}\int_{\TT^3}\Big[(\pa_\tau h^{00})(\pa_\tau v)^2 - (\pa_\tau h^{ab})\pa_a v \pa_b v \Big].
    \end{multline}
    The product rule implies
    \begin{subequations}
        \begin{align}
            \pa_b (g^{ab}\pa_a v \pa_\tau v) =&\> \pa_b h^{ab} \pa_a v\pa_\tau v + g^{ab} \pa_a \pa_b v \pa_\tau v + g^{ab}\pa_a v \pa_\tau\pa_b v,\\
            \pa_a (g^{0a}(\pa_\tau v)^2) =&\> \pa_b g^{0a} (\pa_\tau v)^2 + 2g^{0a} \pa_\tau \pa_a v \pa_\tau v,
        \end{align}
    \end{subequations}
    and so we integrate by parts and compute
    \begin{align}
        \int_{\TT^3}\Big[ -\pa_\tau vg^{00}\pa_\tau^2 v + g^{ab}\pa_a v \pa_b \pa_\tau v\Big] =& -\int_{\TT^3}(\pa_\tau v)\Big[g^{00}\pa_\tau^2 v + 2g^{0a}\pa_\tau\pa_a v + g^{ab} \pa_a\pa_b v\Big]\de^3 x\nonumber\\
        &-\int_{\TT^3}\Big[\frac{1}{2}\pa_a h^{0a} (\pa_\tau v)^2 + \pa_b h^{ab} \pa_a v\pa_\tau v\Big]\de^3\nonumber\\
        =& -\int_{\TT^3}(\pa_\tau v)(\wh{\square}_g v)\de^3 x-\int_{\TT^3}\Big[\frac{1}{2}\pa_a h^{0a} (\pa_\tau v)^2 + \pa_b h^{ab} \pa_a v\pa_\tau v \Big]\de^3 x.
    \end{align}
    Thus we obtain the following differential equation for $E_{\alpha,\beta}[u]$:
    \begin{multline}\label{sec:energyests.eq:energyineqproof1}
        \pa_\tau E_{\alpha,\beta}[u] = -\frac{2\beta}{\tau}E_{\alpha,\beta}[u] -\frac{1}{\tau^{2\beta}}\int_{\TT^3}(\pa_\tau v)(\wh{\square}_g v)
        \\-\frac{1}{2\tau^{2\beta}}\int_{\TT^3}\Big[(\pa_\tau h^{00})(\pa_\tau v)^2 - (\pa_\tau h^{ab})\pa_a v \pa_b v
        +(\pa_a h^{0a}) (\pa_\tau v)^2 + 2(\pa_b h^{ab}) \pa_a v\pa_\tau v\Big].
    \end{multline}
    The final integral in \eqref{sec:energyests.eq:energyineqproof1} is an error term. We estimate
    \begin{multline}\label{sec:energyests.eq:energyineqproof2}
        \frac{1}{\tau^{2\beta}}\Big|\int_{\TT^3}\Big[(\pa_\tau h^{00})(\pa_\tau v)^2 - (\pa_\tau h^{ab})\pa_a v \pa_b v +(\pa_a h^{0a}) (\pa_\tau v)^2 + 2(\pa_b h^{ab}) \pa_a v\pa_\tau v\Big]\de^3 x\Big|\leq\\
        \leq\frac{C}{\tau^{2\beta}}\sum_{\mu,\nu=0}^3\|\pa h^{\mu\nu}\|_{L^\infty}\Big(\|\pa v\|_{L^2}^2 + \|v\|_{L^2}^2\Big) \leq C\sum_{\mu,\nu=0}^3 \|\pa h^{\mu\nu}\|_{H^2} E_{\alpha,\beta}[u] \leq C\ve \tau^{-(q-\delta)}E_{\alpha,\beta}[u],
    \end{multline}
    where the first inequality is due to Cauchy-Schwartz, the second uses the Sobolev embedding and Lemma~\ref{sec:energyests.lem:energycoercivity}, and the last inequality follows from the bootstrap assumption.
    
    Next we estimate the integral of $(\pa_\tau v)(\wh{\square}_g v)$ in \eqref{sec:energyests.eq:energyineqproof1}. We calculate
    \begin{align}
        \wh{\square}_g v  =&\> \tau^\alpha \Big(\wh{\square}_g u + \frac{2\alpha}{\tau}g^{00}\pa_\tau u + \frac{2\alpha}{\tau}h^{0a}\pa_a u + \frac{\alpha(\alpha-1)}{\tau^2}g^{00} u\Big)\nonumber\\
        =&\> \tau^\alpha P_\alpha u+\frac{2\alpha}{\tau}h^{0a}\pa_a v + \frac{\alpha(\alpha-1)}{\tau^2}g^{00}v,
    \end{align}
    so that
    \begin{equation}\label{sec:energyests.eq:energyineqproof3}
        -\frac{1}{\tau^{2\beta}}\int_{\TT^3}(\pa_\tau v)(\square_g v)
        = -\tau^{\alpha-2\beta}\int_{\TT^3}(\pa_\tau v)(P_\alpha u)\\
         - \frac{2\alpha}{\tau^{2\beta+1}}\int_{\TT^3}h^{0a}\pa_a v\pa_\tau v + \frac{\alpha(\alpha-1)}{\tau^{2\beta+2}}\int_{\TT^3}g^{00}v\pa_\tau v.
    \end{equation}
    We estimate the three integrals on the RHS of the above equation using a combination of Cauchy-Schwartz, Lemma~\ref{sec:energyests.lem:energycoercivity}, and the bootstrap assumption:
    \begin{subequations}
    \begin{gather}
            \tau^{\alpha-2\beta}\Big|\int_{\TT^3}(\pa_\tau v)(P_\alpha u)\Big| \leq C\tau^{\alpha-2\beta}\|\pa_\tau v\|_{L^2}\|P_\beta u\|_{L^2}\leq \frac{\beta}{\tau}E_{\alpha,\beta}[u](\tau) + C\tau^{2(\alpha-\beta)+1}\|P_\alpha u\|_{L^2}^2,\\
            \tau^{-(1+2\beta)} \Big|\int_{\TT^3}h^{0a}\pa_\tau v\pa_a v\>\de^3 x\Big| \leq C\tau^{-(1+2\beta)} \sum_{a=1}^3\|h^{0a}\|_{L^\infty}\|\pa_\tau v\|_{L^2}\|\pa_a v\|_{L^2}\leq C\tau^{-(q+1-\delta)}E_{\alpha,\beta}[u],\\
            \tau^{-(2+2\beta)}\Big|\int_{\TT^3}g^{00}v\pa_\tau v\Big| \leq C\tau^{-(2+2\beta)}\|g^{00}\|_{L^\infty}\|v\|_{L^2}\|\pa_\tau v\|_{L^2} \leq \frac{C}{\tau^2}E_{\alpha,\beta}[u].\label{sec:energyestimates.sec:energyests.eq:energyineqproof4}
        \end{gather}
    \end{subequations}
    Plugging these bounds into \eqref{sec:energyests.eq:energyineqproof3}, then combining with \eqref{sec:energyests.eq:energyineqproof1}, \eqref{sec:energyests.eq:energyineqproof2} implies the final estimate \eqref{sec:energyests.eq:energyineq}.
\end{proof}
\subsection{Higher order energy}
Next we derive a higher-order version of the energy inequality from Lemma~\ref{sec:energyests.lem:energyineq} which controls all higher derivatives in the bootstrap norm $S_K(\tau)$. We introduce the energies
\begin{equation}\label{sec:energyests.eq:energyhodef}
    E_{\alpha,\beta}^{(K)}[u](\tau):= \sum_{|I|\leq K}E_{\alpha,\beta}[\ol{\pa}{}^I u](\tau).
\end{equation}
As an immediate consequence of Lemma~\ref{sec:energyests.lem:energycoercivity}, this energy obeys the inequalities
\begin{equation}\label{sec:energyests.eq:energyhocoercivity}
    \frac{1}{C}\tau^{2(\alpha-\beta)} (\|u\|_{H^K}^2 + \|\pa u\|_{H^K}^2) \leq E_{\alpha,\beta}^{(K)}[u](\tau) \leq C\tau^{2(\alpha-\beta)} (\|u\|_{H^K}^2 + \|\pa u\|_{H^K}^2).
\end{equation}
In order to derive an inequality for the higher-order energy $E_{\alpha,\beta}^{(K)}[u]$ we first derive an estimate for the commutator $[P_\alpha,\ol{\pa}{}^I]$.
\begin{lemma}[Wave operator commutator estimate]\label{sec:energyests.lem:wavecom}
    The wave operators $P_\alpha$ obeys the estimate
    \begin{equation}\label{sec:energyests.eq:wavecom}
        \sum_{|I|\leq K}\|[P_\alpha, \ol{\pa}{}^I]u\|_{L^2} \leq C\tau^{-(q-\delta)}\Big(\|P_\alpha u\|_{H^{K-1}} + \|\pa u\|_{H^K} \Big).
    \end{equation}
\end{lemma}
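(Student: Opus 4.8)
The plan is to expand the commutator by the Leibniz rule into products of a derivative of the metric perturbation $h^{\mu\nu}$ with a derivative of $u$, to bound all such products directly except the one involving $\pa_\tau^2 u$, and to rewrite that exceptional term by solving the equation $P_\alpha u = g^{\mu\nu}\pa_\mu\pa_\nu u + \tfrac{2\alpha}{\tau}g^{00}\pa_\tau u$ for the top time derivative. First, since $\ol\pa{}^I = \pa_1^{I_1}\pa_2^{I_2}\pa_3^{I_3}$ has constant coefficients it commutes with $\pa_\mu\pa_\nu$, with $\pa_\tau$, and with multiplication by $\tau^{-1}$; hence, recalling $P_\alpha = \wh{\square}_g + \tfrac{2\alpha}{\tau}g^{00}\pa_\tau$ with $\wh{\square}_g = g^{\mu\nu}\pa_\mu\pa_\nu$,
\[
    [P_\alpha,\ol\pa{}^I]u = -[\ol\pa{}^I, g^{\mu\nu}]\pa_\mu\pa_\nu u - \frac{2\alpha}{\tau}[\ol\pa{}^I, g^{00}]\pa_\tau u .
\]
Since $g^{\mu\nu} = m^{\mu\nu} + h^{\mu\nu}$ with $m^{\mu\nu}$ constant, the Leibniz rule $[\ol\pa{}^I,f]w = \sum_{0\neq J\le I}\binom{I}{J}(\ol\pa{}^J f)(\ol\pa{}^{I-J}w)$ writes each commutator as a sum of terms $(\ol\pa{}^J h^{\mu\nu})(\ol\pa{}^{I-J}\pa_\mu\pa_\nu u)$ with $1\le|J|\le|I|\le K$ (and analogously for the $g^{00}\pa_\tau u$ piece), so I split according to whether the $u$-factor contains $\pa_\tau^2$.

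For the contributions with $(\mu,\nu)\neq(0,0)$, and for the piece $\tfrac1\tau[\ol\pa{}^I,g^{00}]\pa_\tau u$, the $u$-factor can be written as $\ol\pa{}^M(\pa_\rho u)$ with $\rho\in\{0,1,2,3\}$ and $|M|\le K$, hence is controlled in $L^2$ by $\|\pa u\|_{H^K}$. Using the Sobolev embedding (Lemma~\ref{sec:prelims.lem:sobolev}) and the product estimate (Lemma~\ref{sec:prelims.lem:sobolevproduct}) — i.e.\ the standard Moser commutator bound $\|[\ol\pa{}^I,f]w\|_{L^2}\lesa\|\ol\pa f\|_{H^{K-1}}\|w\|_{H^{K-1}}$, valid since $K\ge 3$ — together with the bootstrap bound $\|\pa h^{\mu\nu}\|_{H^K}\le C\ve\tau^{-(q-\delta)}$ from \eqref{sec:globex.eq:bootstrap}, these contributions are $\le C\tau^{-(q-\delta)}\|\pa u\|_{H^K}$, with an extra (harmless) factor $\tau^{-1}$ in the second piece.

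The one remaining contribution is $-[\ol\pa{}^I, g^{00}]\pa_\tau^2 u$, whose $u$-factor is \emph{not} controlled by $\|\pa u\|_{H^K}$. Here I invoke the equation, solving for the second time derivative:
\[
    \pa_\tau^2 u = (g^{00})^{-1}\Big(P_\alpha u - 2g^{0a}\pa_\tau\pa_a u - g^{ab}\pa_a\pa_b u - \tfrac{2\alpha}{\tau}g^{00}\pa_\tau u\Big) =: (g^{00})^{-1}\big(P_\alpha u + R\big),
\]
in which $R$ involves only spatial derivatives of first derivatives of $u$; hence, by Lemma~\ref{sec:prelims.lem:sobolevproduct}, the boundedness of the $g^{\mu\nu}$ in $H^{K-1}$, and the estimate for $(g^{00})^{-1}$ in Lemma~\ref{sec:bootbound.lem:miscests}, one gets $\|R\|_{H^{K-1}}\lesa\|\pa u\|_{H^K}$ and $\|(g^{00})^{-1}(P_\alpha u + R)\|_{H^{K-1}}\lesa\|P_\alpha u\|_{H^{K-1}} + \|\pa u\|_{H^K}$. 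Expanding $[\ol\pa{}^I, g^{00}]\big((g^{00})^{-1}(P_\alpha u + R)\big)$ by Leibniz — every term still carrying at least one derivative of $h^{00}$, and $|I-J|\le K-1$ — and applying the same Moser bound yields $\le C\|\ol\pa h^{00}\|_{H^{K-1}}\big(\|P_\alpha u\|_{H^{K-1}} + \|\pa u\|_{H^K}\big)\le C\tau^{-(q-\delta)}\big(\|P_\alpha u\|_{H^{K-1}} + \|\pa u\|_{H^K}\big)$. Summing over the finitely many multi-indices $J\le I$ with $|I|\le K$ gives \eqref{sec:energyests.eq:wavecom}.

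The main obstacle is exactly the treatment of $\pa_\tau^2 u$: it cannot be bounded by the perturbation norms directly, and the equation must be used to express it through $P_\alpha u$ plus strictly lower-order terms — which is why $\|P_\alpha u\|_{H^{K-1}}$, one spatial degree below the top order, appears on the right-hand side (the substitution itself consumes one of the $|I|$ derivatives). A secondary, purely bookkeeping, nuisance is that at the endpoint $K=3$ some Moser products need to be split via $H^1\hookrightarrow L^6$ rather than $H^2\hookrightarrow L^\infty$ for the middle-order factor; this changes nothing substantive.
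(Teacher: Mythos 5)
Your proof is correct and follows essentially the same strategy as the paper: Leibniz-expand the commutator into terms with at most one $\pa_\tau$ on the $u$-factor (bounded directly by the bootstrap estimate on $\|\ol{\pa} h\|_{H^K}$) versus the $\pa_\tau^2 u$ term, which is handled by solving the reduced equation $\pa_\tau^2 u = (g^{00})^{-1}(P_\alpha u + R)$. The one organizational difference is that you substitute for $\pa_\tau^2 u$ \emph{before} distributing the spatial derivatives $\ol{\pa}^{I-J}$, which lets you bound $\|(g^{00})^{-1}(P_\alpha u + R)\|_{H^{K-1}}$ directly via the Sobolev algebra; the paper instead substitutes for $\pa_\tau^2\ol{\pa}^J u$, which introduces $P_\alpha \ol{\pa}^J u$ rather than $\ol{\pa}^J P_\alpha u$ and therefore requires a final absorption of $\sum_{|J|\le K-1}\|[P_\alpha,\ol{\pa}^J]u\|_{L^2}$ back into the left-hand side, a step your ordering avoids.
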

\begin{proof}
    The operator $P_\alpha$ satisfies the commutator identity
    \begin{equation}
    [P_\alpha,\ol{\pa}{}^I]u = [\wh{\square}_g,\ol{\pa}{}^I]u + \frac{2\alpha}{\tau}[g^{00}\pa_\tau,\ol{\pa}{}^I]u =[\wh{\square}_g,\ol{\pa}{}^I]u + \frac{2\alpha}{\tau}[g^{00},\ol{\pa}{}^I]\pa_\tau u.
    \end{equation}
    We estimate the $[\wh{\square}_g,\ol{\pa}{}^I]u$ term first.
    For each multi-index $I$: $|I| \leq K$, the commutator $[\wh{\square}_g,\ol{\pa}{}^I]u$ satisfies the identity
    \begin{equation}
        [\wh{\square}_g,\ol{\pa}{}^I]u = \sum_{0 < J \leq I}\sum_{\alpha,\beta=0}^3 c_{I,J}(\ol{\pa}{}^J  h^{\alpha\beta})(\pa_\alpha\pa_\beta \ol{\pa}{}^{I-J}u),
    \end{equation}
    for some constants $c_{I,J}$. Observe that all terms on the right hand side of the above equation contain at least one spatial derivative of the metric components $g^{\mu\nu}$, and at most $|I|+1$ spacetime derivatives of $u$, at most $2$ of which are purely time derivatives. We sum over all $|I| \leq K$ and estimate the factors with fewer than $\lfloor K/2\rfloor$ derivatives in $L^\infty$:
    \begin{align}
        \sum_{|I|\leq K}\Big(\sum_{0 < J \leq I}\sum_{\alpha,\beta=0}^3 \|(\ol{\pa}{}^J  h^{\alpha\beta})(\pa_\alpha\pa_\beta& \ol{\pa}{}^{I-J}u)\|_{L^2}\Big)\nonumber\\\leq &\> \sum_{\alpha,\beta = 0}^3\Big\{\sum_{\lfloor K/2\rfloor +1 \leq |I| \leq K} \Big(\sum_{|J| \leq \lfloor K/2\rfloor - 1 }\|\ol{\pa}{}^I  h^{\alpha\beta}\|_{L^2} \|\pa_\alpha\pa_\beta \ol{\pa}{}^{J}u\|_{L^\infty}\Big)\nonumber\\
        &+\sum_{1 \leq |I| \leq \lfloor K/2\rfloor +1}\Big(\sum_{\lfloor K/2 \rfloor-1 \leq |J| \leq K-1}\|\ol{\pa}{}^I  h^{\alpha\beta}\|_{L^\infty} \|\pa_\alpha\pa_\beta \ol{\pa}{}^{J}u\|_{L^2}\Big)\Big\}.
    \end{align}
    Since $K \geq 3$, we have $\lfloor K/2\rfloor \leq K-2$, and so the Sobolev embedding implies the inequalities
    \begin{subequations}
        \begin{align} 
            \sum_{\alpha,\beta = 0}^3\sum_{|J| \leq \lfloor K/2\rfloor-1 } \|\pa_\alpha\pa_\beta \ol{\pa}{}^{J}u\|_{L^\infty} \leq&\> C\sum_{\alpha,\beta = 0}^3\sum_{|J| \leq K-1}\|\pa_\alpha\pa_\beta \ol{\pa}{}^{J}u\|_{L^2},\\
            \sum_{\alpha,\beta = 0}^3\sum_{1 \leq |I| \leq \lfloor K/2\rfloor+1}\|\ol{\pa}{}^I  h^{\alpha\beta}\|_{L^\infty}  \leq&\> C\sum_{\alpha,\beta = 0}^3 \|\ol{\pa} h^{\alpha\beta}\|_{H^K},
        \end{align}
    \end{subequations}
    therefore by the bootstrap assumption we obtain the estimate
    \begin{equation}\label{sec:energyests.eq:wavecomproof1}
        \sum_{|I|\leq K}\|[\wh{\square}_g,\ol{\pa}{}^I]u\|_{L^2} \leq C\sum_{\alpha,\beta=0}^3\sum_{|I|\leq K-1}\|\ol{\pa}h^{\alpha\beta}\|_{H^K}\|\pa_\alpha\pa_\beta \ol{\pa}{}^Iu\|_{L^2} \leq C\ve\tau^{-(q-\delta)}\sum_{\alpha,\beta=0}^3\sum_{|I|\leq K-1}\|\pa_\alpha\pa_\beta \ol{\pa}{}^Iu\|_{L^2}.
    \end{equation}
    By a similar argument, for the commutator $\frac{2\alpha}{\tau}[g^{00},\ol{\pa}{}^I]\pa_\tau u$ we have
    \begin{equation}\label{sec:energyests.eq:wavecomproof2}
        \frac{2\alpha}{\tau}\sum_{|I|\leq K}\|[g^{00},\ol{\pa}{}^I]\pa_\tau u\|_{L^2} \leq \frac{C}{\tau}\|\ol{\pa}h^{00}\|_{H^K}\|\pa_\tau u\|_{H^K}\leq C\ve\tau^{-(q+1-\delta)}\|\pa_\tau u\|_{H^K}.
    \end{equation}
    Combining the estimates \eqref{sec:energyests.eq:wavecomproof1} and \eqref{sec:energyests.eq:wavecomproof2} yields the bound
    \begin{align}
        \sum_{|I|\leq K}\big\|[P_\alpha,\ol{\pa}{}^I]u\|_{L^2} \leq&\>C\ve\tau^{-(q-\delta)}\sum_{\alpha,\beta=0}^3 \sum_{|I|\leq K-1}\|\pa_\alpha\pa_\beta\ol{\pa}^I u\|_{L^2} +C\ve\tau^{-(q+1-\delta)}\|\pa_\tau u\|_{H^K}\nonumber\\
        \leq &\>C\ve\tau^{-(q-\delta)} \Big(\sum_{|I|\leq K-1}\|\pa_\tau^2 \ol{\pa}{}^I u\|_{L^2} +\|\pa u\|_{H^K}\Big).\label{sec:energyests.eq:wavecomproof3}
    \end{align}
    It remains to estimate the $\pa_\tau^2 \ol{\pa}{}^I u$ terms. We compute
    \begin{equation}
        \pa_\tau^2 \ol{\pa}{}^Iu = (g^{00})^{-1}\Big(P_\alpha \ol{\pa}{}^Iu - 2 h^{0a}\pa_a\tau \pa_\tau \ol{\pa}{}^Iu - g^{ab}\pa_a\pa_b \ol{\pa}{}^Iu - \frac{2\alpha}{\tau}g^{00}\pa_\tau \ol{\pa}{}^I u \Big),
    \end{equation}
    and estimate via Cauchy-Schwartz, the Sobolev inequality, and the bootstrap assumption:
    \begin{align}
        \sum_{\alpha,\beta =0}^3 \sum_{|I|\leq K-1}\|\pa_\tau^2 \ol{\pa}{}^I u\|_{L^2} \leq C&\sum_{|I|\leq K-1}\Big\{\|P_\alpha \ol{\pa}{}^I u\|_{L^2} + \|g^{0a}\pa_a \ol{\pa}{}^I \pa_\tau u\|_{L^2}\nonumber\\
        &+\|g^{ab}\pa_a\pa_b\ol{\pa}{}^I u\|_{L^2} +\frac{1}{\tau}\|g^{00}\pa_\tau \ol{\pa}{}^I u\|_{L^2} \Big\}\nonumber\\
        \leq C&\Big(\sum_{|I|\leq K-1}\|P_\alpha \ol{\pa}{}^I u\|_{L^2} + \|\pa u\|_{H^K}\Big).
    \end{align}
    Finally, we bound
    \begin{equation}
        \sum_{|J|\leq K-1}\|P_\alpha \ol{\pa}{}^J u\|_{L^2} \leq \|P_\alpha u\|_{H^{K-1}} + \sum_{|J|\leq K-1}\|[P_\alpha,\ol{\pa}{}^J]u\|_{L^2},
    \end{equation}
    which implies
    \begin{equation}
        \sum_{|I|\leq K}\big\|[P_\alpha ,\ol{\pa}{}^I]u\|_{L^2}\leq \>C\ve\tau^{-(q-\delta)} \Big(\|P_\alpha u\|_{H^{K-1}}+\|\pa u\|_{H^K}\Big)\\+ C\ve \tau^{-(q-\delta)}\sum_{|J|\leq K-1}\|[P_\alpha,\ol{\pa}^J]u\|_{L^2}. 
    \end{equation}
    The last term can be absorbed into the left hand side, hence we obtain \eqref{sec:energyests.eq:wavecom}.
\end{proof}

We now state the differential inequalities satisfied by the higher order energy $E_{\alpha,\beta}^{(K)}$ defined in \eqref{sec:energyests.eq:energyhodef}.
\begin{proposition}[Higher-order energy inequality]\label{sec:energyests.prop:energyhoineq}
    Let $u$ be a sufficiently regular function with zero spatial average. The energy $E_{\alpha,\beta}^{(K)}$ satisfies the following differential inequality:
    \begin{equation}\label{sec:energyests.eq:energyhoineq}
        \pa_\tau E_{\alpha,\beta}^{(K)}[u](\tau)  \leq -\frac{\beta}{\tau} E_{\alpha,\beta}^{(K)}[u](\tau) + C\Big(\frac{1}{\tau^{1+\delta}}E_{\alpha,\beta}^{(K)}[u](\tau) + \tau^{2(\alpha-\beta)+1}\|P_\alpha u\|_{H^K}^2\Big).
    \end{equation}
\end{proposition}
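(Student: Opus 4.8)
The plan is to reduce the higher-order inequality to the first-order one from Lemma~\ref{sec:energyests.lem:energyineq}, applied to each spatial derivative $\ol{\pa}{}^I u$ with $|I| \leq K$, and then to absorb the commutators $[P_\alpha,\ol{\pa}{}^I]u$ that appear using Lemma~\ref{sec:energyests.lem:wavecom}. The first point to check is that every $\ol{\pa}{}^I u$ still has zero spatial average: for $I = 0$ this is the hypothesis on $u$, and for $I \neq 0$ the function $\ol{\pa}{}^I u$ is a total spatial derivative, hence integrates to zero over $\TT^3$. Thus Lemma~\ref{sec:energyests.lem:energyineq} applies to each $\ol{\pa}{}^I u$, giving
\[
    \pa_\tau E_{\alpha,\beta}[\ol{\pa}{}^I u] \leq -\frac{2\beta}{\tau} E_{\alpha,\beta}[\ol{\pa}{}^I u] + C\Big(\frac{1}{\tau^{1+\delta}} E_{\alpha,\beta}[\ol{\pa}{}^I u] + \tau^{2(\alpha-\beta)+1}\|P_\alpha \ol{\pa}{}^I u\|_{L^2}^2\Big),
\]
and summing over $|I| \leq K$ via \eqref{sec:energyests.eq:energyhodef} yields
\[
    \pa_\tau E_{\alpha,\beta}^{(K)}[u] \leq -\frac{2\beta}{\tau} E_{\alpha,\beta}^{(K)}[u] + \frac{C}{\tau^{1+\delta}} E_{\alpha,\beta}^{(K)}[u] + C\tau^{2(\alpha-\beta)+1}\sum_{|I|\leq K}\|P_\alpha \ol{\pa}{}^I u\|_{L^2}^2.
\]

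Next I would dispose of the last sum. Writing $P_\alpha \ol{\pa}{}^I u = \ol{\pa}{}^I P_\alpha u + [P_\alpha,\ol{\pa}{}^I]u$, using $\sum_{|I|\leq K}\|\ol{\pa}{}^I P_\alpha u\|_{L^2}^2 = \|P_\alpha u\|_{H^K}^2$, the commutator bound \eqref{sec:energyests.eq:wavecom}, and the elementary inequality $\sum a_I^2 \leq (\sum |a_I|)^2$ for the commutator norms, one obtains
\[
    \sum_{|I|\leq K}\|P_\alpha \ol{\pa}{}^I u\|_{L^2}^2 \leq C\|P_\alpha u\|_{H^K}^2 + C\tau^{-2(q-\delta)}\big(\|P_\alpha u\|_{H^K}^2 + \|\pa u\|_{H^K}^2\big),
\]
where I bounded $\|P_\alpha u\|_{H^{K-1}} \leq \|P_\alpha u\|_{H^K}$. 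Since $\tau^{-2(q-\delta)}$ is bounded on $[\tau_0,\infty)$, the $\|P_\alpha u\|_{H^K}^2$ contributions merge into $C\|P_\alpha u\|_{H^K}^2$, which (after multiplication by $C\tau^{2(\alpha-\beta)+1}$) is exactly the source term in the claim. For the $\|\pa u\|_{H^K}^2$ contribution I would invoke the coercivity estimate \eqref{sec:energyests.eq:energyhocoercivity}, namely $\|\pa u\|_{H^K}^2 \leq C\tau^{-2(\alpha-\beta)} E_{\alpha,\beta}^{(K)}[u]$, so that
\[
    C\tau^{2(\alpha-\beta)+1}\cdot\tau^{-2(q-\delta)}\cdot\|\pa u\|_{H^K}^2 \leq C\tau^{1-2(q-\delta)} E_{\alpha,\beta}^{(K)}[u].
\]

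It remains to absorb this into the $\frac{1}{\tau^{1+\delta}}$ term. The running assumption $5\delta \leq q-1$ gives $q \geq 1 + 5\delta$, hence $2 - 2q + 3\delta \leq 0$ and therefore $\tau^{1-2(q-\delta)} \leq C\tau^{-1-\delta}$ on $[\tau_0,\infty)$ (with $C$ allowed to depend on $\tau_0$). Collecting everything gives
\[
    \pa_\tau E_{\alpha,\beta}^{(K)}[u] \leq -\frac{2\beta}{\tau} E_{\alpha,\beta}^{(K)}[u] + C\Big(\frac{1}{\tau^{1+\delta}} E_{\alpha,\beta}^{(K)}[u] + \tau^{2(\alpha-\beta)+1}\|P_\alpha u\|_{H^K}^2\Big),
\]
which, since $-\tfrac{2\beta}{\tau} \leq -\tfrac{\beta}{\tau}$, is stronger than \eqref{sec:energyests.eq:energyhoineq}. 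The argument uses no ingredient beyond Lemmas~\ref{sec:energyests.lem:energyineq} and~\ref{sec:energyests.lem:wavecom} and the coercivity \eqref{sec:energyests.eq:energyhocoercivity}; accordingly I expect the only (mild) obstacle to be the bookkeeping of $\tau$-weights, i.e.\ verifying that every term produced by commuting spatial derivatives past $P_\alpha$ carries sufficient decay to be reclassified either as a multiple of $\tfrac{1}{\tau^{1+\delta}}E_{\alpha,\beta}^{(K)}[u]$ or as a harmless multiple of the stated source term.
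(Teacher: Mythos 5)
Your proof is correct and takes essentially the same route as the paper: apply the first-order energy inequality (Lemma~\ref{sec:energyests.lem:energyineq}) to each $\ol{\pa}{}^I u$ (after noting these still have zero average), sum, split $P_\alpha \ol{\pa}{}^I u$ into $\ol{\pa}{}^I P_\alpha u$ plus a commutator, control the commutator via Lemma~\ref{sec:energyests.lem:wavecom}, and absorb the resulting $\|\pa u\|_{H^K}^2$ term using the coercivity bound \eqref{sec:energyests.eq:energyhocoercivity}. You are somewhat more explicit than the paper about the $\tau$-weight bookkeeping and the role of $q \geq 1+5\delta$, and you correctly observe that you in fact obtain the stronger coefficient $-2\beta/\tau$; none of this changes the substance.
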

\begin{proof}
    Lemma~\ref{sec:energyests.lem:energyineq} implies the bound
    \begin{equation}\label{sec:energyests.eq:energyhoineqproof1}
        \pa_\tau E_{\alpha,\beta}^{(K)}[u]\leq -\frac{2\beta}{\tau} E_{\alpha,\beta}^{(K)}[u](\tau) + \frac{C}{\tau^{1+\delta}}E_{\alpha,\beta}^{(K)}[u](\tau) + C\tau^{2(\alpha-\beta)+1+\delta}\sum_{|I|\leq K}\|P_\alpha( \ol{\pa}{}^I u)\|_{L^2}^2.
    \end{equation}
    Then we apply the wave commutator Lemma~\ref{sec:energyests.lem:wavecom} to estimate the final term:
    \begin{align}
        C\tau^{2(\alpha-\beta)+1+\delta}\sum_{|I|\leq K}\|P_\alpha( \ol{\pa}{}^I u)\|_{L^2}^2. &\leq C\tau^{2(\alpha-\beta)+1+\delta}\Big(\|P_\alpha u\|_{H^K}^2 + \sum_{|I|\leq K}\|[P_\alpha,\ol{\pa}{}^I]u\|_{L^2}^2\Big)\nonumber\\
        &\leq C\tau^{2(\alpha-\beta)+1+\delta}\big(\|P_\alpha u\|_{H^K}^2 + \ve\tau^{-2(q-\delta)}\|\pa u\|_{H^K}^2\big)\nonumber\\
        &\leq C\tau^{2(\alpha-\beta)+1+\delta}\|P_\alpha u\|_{H^K}^2 + \frac{C}{\tau^{1+\delta}}E_{\alpha,\beta}^K,\label{sec:energyests.eq:energyhoineqproof2}
    \end{align}
    where the last line holds due to \eqref{sec:energyests.eq:energyhocoercivity}. The result now follows.
\end{proof}
\section{Estimates for the oscillatory remainders \texorpdfstring{$h_{\osc}^{\mu\nu}$, $\psi_{\osc}$}{ }}\label{sec:oscsys}
In this section we will derive energy estimates for the oscillatory remainders $h_{\osc}^{\mu\nu}$, $\psi_{\osc}$ using the higher-order energy inequality of Proposition~\ref{sec:energyests.prop:energyhoineq}. We will choose different parameter values $\alpha,\beta$ for the energies $E_{\alpha,\beta}^{(K)}$ for each of $h_{\osc}^{00}$, $h_{\osc}^{0i}$, $h_{\osc}^{ij}$, $\psi_{\osc}$, specifically we consider the following energies:
\begin{equation}\label{sec:oscsys.eq:energyhochoice}
    E^{00}:= E_{q,\delta}^{(K)}[h_{\osc}^{00}],\quad E^{0i}:= E_{q+1,1+\delta}^{(K)}[h_{\osc}^{0i}],\quad E^{ij}:= E_{q+1,1+\delta}^{(K)}[h_{\osc}^{ij}],\quad E^{(\psi)}:= E_{q+1,1}^{(K)}[\psi_{\osc}].
\end{equation}
A short computation using the bound \eqref{sec:avgsys.eq:avgbound} and the energy coercivity inequalities \eqref{sec:energyests.eq:energyhocoercivity} shows that these energies are controlled by the bootstrap norm $S_K^2(\tau)$, from which one immediately obtains the bound
\begin{equation}\label{sec:oscsys.eq:oscroughest}
    E^{00} + \sum_{a=1}^3 E^{0a} + \sum_{a,b=1}^3 E^{ab} + E^{(\psi)} \leq C\ve^2.
\end{equation}
The goal of this section is to improve this bound. First we derive $H^K$-Sobolev norm bounds on the quantities 
\begin{equation}\label{sec:oscsys.eq:oscwaveop}
P_q h_{\osc}^{00}, \quad P_{q+1} h_{\osc}^{0i}, \quad P_{q+1} h_{\osc}^{ij},\quad  P_{q+1} \psi_{\osc},
\end{equation}
which appear in the differential inequalities for the energies \eqref{sec:oscsys.eq:energyhochoice}.
\begin{lemma}\label{sec:oscsys.lem:oscoperatorest}
    Suppose the gauge-fixed equations \eqref{sec:gaugedeqs.eq:wave} hold for $(h^{\mu\nu},\psi)$. Then the quantities \eqref{sec:oscsys.eq:oscwaveop} satisfy the following norm bounds:
    \begin{subequations}
        \begin{align}
            \|P_q h_{\osc}^{00}\|_{H^K} &\leq C\Big(\frac{1}{\tau}\|\pa \psi_{\osc}\|_{H^K} + \frac{1}{\tau^2}\|\pa h_{\osc}^{00}\|_{H^K} + \ve^2 \tau^{-(q+1+\delta)}\Big),\label{sec:oscsys.eq:oscwaveestlapse}\\
            \|P_{q+1} h_{\osc}^{0i}\|_{H^K} &\leq C\Big(\frac{1}{\tau}\|\pa h_{\osc}^{00}\|_{H^K} + \frac{1}{\tau}\|\pa \psi_{\osc}\|_{H^K}+ \frac{1}{\tau^2}\|\pa h_{\osc}^{0i}\|_{H^K}+\ve^2\tau^{-2(q-\delta)}\Big),\label{sec:oscsys.eq:oscwaveestshift}\\
            \|P_{q+1} h_{\osc}^{ij}\|_{H^K} &\leq C\Big(\frac{1}{\tau^2}\|\pa \psi_{\osc}\|_{H^K} + \ve^2\tau^{-(q+1+\delta)}\Big),\label{sec:oscsys.eq:oscwaveesthori}\\
            \|P_{q+1} \psi_{\osc}\|_{H^K} &\leq C\Big(\frac{1}{\tau^2}\|\pa \psi_{\osc}\|_{H^K} + \ve^2\tau^{-2(q-\delta)}\Big).\label{sec:oscsys.eq:oscwaveestscalar}
        \end{align}
    \end{subequations}
\end{lemma}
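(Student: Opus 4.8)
The plan is to read $P_\alpha h^{\mu\nu}$ (resp.\ $P_{q+1}\psi$) off the gauge-fixed system \eqref{sec:gaugedeqs.eq:wave}, subtract the contribution of the spatial average, and then estimate what remains with the averaged ODEs of Section~\ref{sec:avgsys}, the error bounds of Proposition~\ref{sec:bootbound.prop:errorests}, and the Poincar\'e inequality. Since $h_{\av}^{\mu\nu}$ and $\psi_{\av}$ depend on $\tau$ only, all spatial derivatives kill them, so $\wh{\square}_g h_{\av}^{\mu\nu} = g^{00}\pa_\tau^2 h_{\av}^{\mu\nu}$ and hence $P_\alpha h_{\av}^{\mu\nu} = g^{00}\big(\pa_\tau^2 h_{\av}^{\mu\nu} + \tfrac{2\alpha}{\tau}\pa_\tau h_{\av}^{\mu\nu}\big)$, and likewise for $\psi_{\av}$. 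Using the ODE inequalities \eqref{sec:avgsys.eq:odeavglapseextra}, \eqref{sec:avgsys.eq:odeavgshiftextra} (from Remark~\ref{sec:avgsys.rmk:odelapseshiftextra}) and \eqref{sec:avgsys.eq:odeavghori}, \eqref{sec:avgsys.eq:odeavgscalar}, I would replace $\pa_\tau^2 h_{\av}^{\mu\nu} + \tfrac{2\alpha}{\tau}\pa_\tau h_{\av}^{\mu\nu}$ by the corresponding undifferentiated $\tau^{-2}$-linear term plus a scalar remainder $R^{\mu\nu}(\tau)$ (resp.\ $R^{(\psi)}$). Combined with the exact identity $P_q h^{00} = \tfrac{4q+2}{\tau^2}g^{00}h^{00} + \mc{L}^{00} + \mc{N}^{00}$ and its analogues from \eqref{sec:gaugedeqs.eq:wave}, the linearity of $P_\alpha$ then gives
\begin{equation*}
    P_q h_{\osc}^{00} = \frac{4q+2}{\tau^2}\,g^{00}h_{\osc}^{00} + \mc{L}^{00} + \mc{N}^{00} - g^{00}R^{00},
\end{equation*}
and similarly $P_{q+1}h_{\osc}^{0i} = \tfrac{2q+2}{\tau^2}g^{00}h_{\osc}^{0i} + \mc{L}^{0i} + \mc{N}^{0i} - g^{00}R^{0i}$, then $P_{q+1}h_{\osc}^{ij} = \mc{L}^{ij} + \mc{N}^{ij} - g^{00}R^{ij}$, and $P_{q+1}\psi_{\osc} = -\tfrac{2(2q+1)(q+2)}{\tau^2}g^{00}\psi_{\osc} + \mc{N}^{(\psi)} - g^{00}R^{(\psi)}$.

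Next I would bound the $H^K$-norm of each term on the right. For the $\tau^{-2}g^{00}h_{\osc}^{\mu\nu}$ and $\tau^{-2}g^{00}\psi_{\osc}$ pieces I apply the Sobolev product estimate (Lemma~\ref{sec:prelims.lem:sobolevproduct}), the bootstrap bound on $\|g^{\mu\nu}\|_{H^K}$, and --- crucially --- the Poincar\'e inequality (Lemma~\ref{sec:prelims.lem:poincare}), which applies because every $\ol{\pa}{}^I h_{\osc}^{\mu\nu}$ and $\ol{\pa}{}^I\psi_{\osc}$ has zero average; this turns $\tfrac{1}{\tau^2}\|h_{\osc}^{\mu\nu}\|_{H^K}$ into $\tfrac{1}{\tau^2}\|\pa h_{\osc}^{\mu\nu}\|_{H^K}$ (and similarly for $\psi_{\osc}$). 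In the linear terms $\mc{L}^{\mu\nu}$, the spatial-derivative contributions ($\pa_a h^{00}=\pa_a h_{\osc}^{00}$, $\pa_a\psi = \pa_a\psi_{\osc}$ in $\mc{L}^{0i}$) are automatically oscillatory and give $\tfrac{C}{\tau}\big(\|\pa h_{\osc}^{00}\|_{H^K}+\|\pa\psi_{\osc}\|_{H^K}\big)$; the terms containing undifferentiated $\psi$ or $\pa_\tau\psi$ (in $\mc{L}^{00}$ and $\mc{L}^{ij}$) I split as $\psi = \psi_{\av} + \psi_{\osc}$, controlling the $\psi_{\osc}$-part by Poincar\'e and the $\psi_{\av}$-part directly via Proposition~\ref{sec:avgsys.prop:avgest}, which yields $|\psi_{\av}|\le C\ve^2\tau^{-(q-1+\delta)}$ and $|\pa_\tau\psi_{\av}|\le C\ve^2\tau^{-(q+\delta)}$; the $\tau^{-1}$ or $\tau^{-2}$ prefactors then make these contribute $\le C\ve^2\tau^{-(q+1+\delta)}$. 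The error terms $\mc{N}^{\mu\nu}$, $\mc{N}^{(\psi)}$ are $\le C\ve^2\tau^{-2(q-\delta)}$ by Proposition~\ref{sec:bootbound.prop:errorests}. Finally each $g^{00}R^{\mu\nu}$ is $g^{00}$ times a function of $\tau$ whose size is governed by the right-hand sides of \eqref{sec:avgsys.eq:odeavglapseextra}--\eqref{sec:avgsys.eq:odeavgscalar}; together with Proposition~\ref{sec:avgsys.prop:avgest} these give $|R^{00}|, |R^{ij}|\le C\ve^2\tau^{-(q+1+\delta)}$ and $|R^{0i}|, |R^{(\psi)}|\le C\ve^2\tau^{-2(q-\delta)}$, hence the same bounds for $\|g^{00}R^{\mu\nu}\|_{H^K}$.

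Collecting the pieces and using $5\delta\le q-1$ (so that $2(q-\delta)\ge q+1+\delta$, i.e.\ $\tau^{-2(q-\delta)}\le\tau^{-(q+1+\delta)}$ wherever needed) yields the four inequalities \eqref{sec:oscsys.eq:oscwaveestlapse}--\eqref{sec:oscsys.eq:oscwaveestscalar}. The only delicate point is the bookkeeping in the subtraction: one must use the genuine averaged ODEs \eqref{sec:avgsys.eq:odeavglapseextra}--\eqref{sec:avgsys.eq:odeavgscalar}, not a naive constant-coefficient substitute, so that the slowly decaying $\tau^{-2}$-weighted linear contributions of $h_{\av}^{\mu\nu}$ cancel exactly and what survives is either genuinely oscillatory (hence handled by Poincar\'e) or as fast-decaying as $\mc{N}^{\mu\nu}$; one also has to observe that the linear source $\mc{L}^{\mu\nu}$ couples $h_{\osc}^{\mu\nu}$ to the full scalar $\psi$, whose average decays only at the rate $\tau^{-(q-1+\delta)}$ --- it is the explicit $\tau^{-1}$/$\tau^{-2}$ weights on those terms, together with $q>1$, that render this contribution acceptable.
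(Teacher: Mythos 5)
Your proposal follows the paper's proof essentially step for step: decompose $P_\alpha h_{\osc}^{\mu\nu} = P_\alpha h^{\mu\nu} - P_\alpha h_{\av}^{\mu\nu}$, read off the first piece from the gauge-fixed wave equations, rewrite the second as $g^{00}$ times the averaged ODE residual, and control the result with Sobolev product estimates, Poincar\'e for the zero-mean pieces, Proposition~\ref{sec:avgsys.prop:avgest} for the averages, and Proposition~\ref{sec:bootbound.prop:errorests} for the nonlinearities. The only cosmetic difference is your explicit introduction of the residuals $R^{\mu\nu}$, $R^{(\psi)}$, which the paper carries implicitly in absolute-value brackets; your bookkeeping of their decay rates agrees with the paper's.
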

\begin{proof}
    We will prove \eqref{sec:oscsys.eq:oscwaveestlapse} first. Since $P_q h_{\osc}^{00} = P_q h^{00} - P_q h_{\av}^{00}$, we see from the wave equation \eqref{sec:gaugedeqs.eq:wavelapse} for $h^{00}$ that
    \begin{equation}
        P_q h^{00} = \frac{4q+2}{\tau}g^{00}h^{00} + \mc{L}^{00} + \mc{N}^{00},
    \end{equation}
    moreover we have
    \begin{align}
        P_q h_{\av}^{00} &= g^{00}\pa_\tau^2 h_{\av}^{00} + \frac{2q}{\tau}g^{00}\pa_\tau h^{00}\nonumber\\
        &= g^{00}\Big(\pa_\tau^2 h_{\av}^{00} + \frac{2q}{\tau}\pa_\tau h_{\av}^{00} - \frac{4q+2}{\tau^2}h_{\av}^{00}\Big) + \frac{4q+2}{\tau^2}g^{00}h_{\av}^{00}.
    \end{align}
    Combining these yields the identity
    \begin{equation}
        P_q h_{\osc}^{00} =  \frac{4q+2}{\tau^2}g^{00}h_{\osc}^{00} - g^{00}\Big(\pa_\tau^2 h_{\av}^{00} + \frac{2q}{\tau}\pa_\tau h_{\av}^{00} - \frac{4q+2}{\tau^2}h_{\av}^{00}\Big) + \mc{L}^{00} + \mc{N}^{00}.
    \end{equation}
    It follows from the Sobolev product lemma and the bootstrap assumption that
    \begin{align}
        \|P_q h_{\osc}^{00}\|_{H^K}
        \leq&\> \frac{4q+2}{\tau^2}\|g^{00}h_{\osc}^{00}\|_{H^K} + \Big|\pa_\tau^2 h_{\av}^{00} + \frac{2q}{\tau}\pa_\tau h_{\av}^{00} - \frac{4q+2}{\tau^2}h_{\av}^{00}\Big|\cdot\|g^{00}\|_{H^K}+\|\mc{L}^{00}\|_{H^K} + \|\mc{N}^{00}\|_{H^K}\nonumber\\
        \leq&\> C\Big(\frac{1}{\tau^2}\|h_{\osc}^{00}\|_{H^K} + \Big|\pa_\tau^2 h_{\av}^{00} + \frac{2q}{\tau}\pa_\tau h_{\av}^{00} - \frac{4q+2}{\tau^2}h_{\av}^{00}\Big| + \|\mc{L}^{00}\|_{H^K} +  \|\mc{N}^{00}\|_{H^K}\Big).
    \end{align}
    Now we estimate the right hand side terms. The differential inequality \eqref{sec:avgsys.eq:odeavglapseextra} for $h_{\av}^{00}$ (see  Remark~\ref{sec:avgsys.rmk:odelapseshiftextra} after Lemma~\ref{sec:avgsys.lem:odehoriscalar}) and the estimate from Proposition~\ref{sec:avgsys.prop:avgest} implies
    \begin{equation}
        \Big|\pa_\tau^2 h_{\av}^{00} + \frac{2q}{\tau}\pa_\tau h_{\av}^{00} - \frac{4q+2}{\tau^2}h_{\av}^{00}\Big| \leq C\Big(\frac{1}{\tau^2}|\psi_{\av}| + \frac{1}{\tau}|\pa_\tau \psi_{\av}|\Big) \leq C\ve^2\tau^{-(q+1+\delta)},
    \end{equation}
    while the bootstrap assumption, Proposition~\ref{sec:avgsys.prop:avgest} and the Poincar\'e inequality gives
    \begin{align}
        \|\mc{L}^{00}\|_{H^K} \leq& \frac{C}{\tau}\|g^{00}\|_{H^K}^2\|\pa_\tau \psi\|_{H^K} + \frac{C}{\tau^2}\|g^{00}\|_{H^K} \|\psi\|_{H^K}\nonumber\\
        \leq&\frac{C}{\tau}(\|\pa_\tau \psi_{\osc}\|_{H^K} + |\pa_\tau \psi_{\av}|) + \frac{C}{\tau^2}(\|\psi_{\osc}\|_{H^K} + |\psi_{\av}|)\nonumber\\
        \leq&\frac{C}{\tau}\|\pa \psi_{\osc}\|_{H^K} + C\ve^2\tau^{-(q+1+\delta)}.
    \end{align}
    The error estimate from Proposition~\ref{sec:bootbound.prop:errorests} reads
    \begin{equation}
        \|\mc{N}^{00}\|_{H^K} \leq C \ve^2\tau^{-2(q-\delta)},
    \end{equation}
    and the Poincar\'e estimate implies
    \begin{equation}
        \frac{1}{\tau^2}\|h_{\osc}^{00}\|_{H^K} \leq \frac{1}{\tau^2}\|\ol{\pa}h_{\osc}^{00}\|_{H^K}.
    \end{equation}
    We note that $\tau^{-(q+1+\delta)}$ has the slowest decay rate out of all the decaying errors, so combining these estimates yields \eqref{sec:oscsys.eq:oscwaveestlapse}.
    For the other estimates, a computation analogous to the one for $h_{\osc}^{00}$ reveals that $h_{\osc}^{0i}$, $h_{\osc}^{ij}$, $\psi_{\osc}$ obey the identities
    \begin{subequations}
        \begin{align}
        P_{q+1}h_{\osc}^{0i} &= \frac{2q+2}{\tau^2}g^{00}h_{\osc}^{0i} - g^{00}\Big(\pa_\tau^2 h_{\av}^{0i} + \frac{2q+2}{\tau}\pa_\tau h_{\av}^{0i} - \frac{2q+2}{\tau^2}h_{\av}^{0i}\Big)+\mc{L}^{0i}+\mc{N}^{0i},\\
        P_{q+1}h_{\osc}^{ij} &=  - g^{00}\Big(\pa_\tau^2 h_{\av}^{ij} + \frac{2q+2}{\tau}\pa_\tau h_{\av}^{ij}\Big)+\mc{L}^{ij}+\mc{N}^{ij},\\
        P_{q+1}\psi_{\osc} &= -\frac{2(2q+1)(q+2)}{\tau^2}g^{00}\psi_{\osc} - g^{00}\Big(\pa_\tau^2 \psi_{\av} + \frac{2q+2}{\tau}\pa_\tau \psi_{\av} + \frac{2(2q+1)(q+2)}{\tau^2}\psi_{\av}\Big)+\mc{N}^{(\psi)},
        \end{align}
    \end{subequations}
    and therefore $P_{q+1} h_{\osc}^{0i}$, $P_{q+1} h_{\osc}^{ij}$ $P_{q+1} \psi_{\osc}$ satisfy the Sobolev estimates
    \begin{subequations}
    \begin{align}
        \|P_{q+1}h_{\osc}^{0i}\|_{H^K} &\leq C \Big(\frac{1}{\tau^2}\|h_{\osc}^{0i}\|_{H^K} + \Big|\pa_\tau^2 h_{\av}^{0i} + \frac{2q+2}{\tau}\pa_\tau h_{\av}^{0i} - \frac{2q+2}{\tau^2}h_{\av}^{0i}\Big|+\|\mc{L}^{0i}\|_{H^K}+\|\mc{N}^{0i}\|_{H^K}\Big),\\
        \|P_{q+1}h_{\osc}^{ij}\|_{H^K} &\leq C\Big(\Big|\pa_\tau^2 h_{\av}^{ij} + \frac{2q+2}{\tau}\pa_\tau h_{\av}^{ij}\Big|+\|\mc{L}^{ij}\|_{H^K}+\|\mc{N}^{ij}\|_{H^K}\Big),\\
        \|P_{q+1}\psi_{\osc}\|_{H^K} &\leq C\Big( \frac{1}{\tau^2}\|\psi_{\osc}\|_{H^K} + \Big|\pa_\tau^2 \psi_{\av} + \frac{2q+2}{\tau}\pa_\tau \psi_{\av} + \frac{2(2q+1)(q+2)}{\tau^2}\psi_{\av}\Big|+\|\mc{N}^{(\psi)}\|_{H^K}\Big).
        \end{align}
    \end{subequations}
    The terms $\mc{N}^{0i}$, $\mc{N}^{ij}$, $\mc{N}^{(\psi)}$ are all estimated via Proposition~\ref{sec:bootbound.prop:errorests}. By Lemma~\ref{sec:avgsys.lem:odehoriscalar}, \eqref{sec:avgsys.eq:odeavgshiftextra}, and Proposition~\ref{sec:avgsys.prop:avgest}, we have
    \begin{subequations}
        \begin{align}
            \Big|\pa_\tau^2 h_{\av}^{0i} + \frac{2q+2}{\tau}\pa_\tau h_{\av}^{0i} - \frac{2q+2}{\tau^2}h_{\av}^{0i}\Big| &\leq C\ve^2\tau^{-2(q-\delta)},\\
            \Big|\pa_\tau^2 h_{\av}^{ij} + \frac{2q+2}{\tau}\pa_\tau h_{\av}^{ij}\Big| &\leq C\Big(\frac{1}{\tau^2}|\psi_{\av}| + \ve^2\tau^{-2(q-\delta)}\Big) \leq C\ve^2\tau^{-(q+1+\delta)},\\
            \Big|\pa_\tau^2 \psi_{\av} + \frac{2q+2}{\tau}\pa_\tau \psi_{\av} + \frac{2(2q+1)(q+2)}{\tau^2}\psi_{\av}\Big| &\leq C\ve^2\tau^{-2(q-\delta)}.
        \end{align}
    \end{subequations}
    For the $\mc{L}^{0i}$, $\mc{L}^{ij}$ terms we have
    \begin{subequations}
        \begin{align}
            \|\mc{L}^{0i}\|_{H^K} &\leq \frac{C}{\tau}(\|\ol{\pa} h_{\osc}^{00}\|_{H^K} + \|\ol{\pa}\psi_{\osc}\|_{H^K}),\\
            \|\mc{L}^{ij}\|_{H^K} &\leq \frac{C}{\tau^2}(\|\psi_{\osc}\|_{H^K} + |\psi_{\av}|) \leq \frac{C}{\tau^2}\|\psi_{\osc}\|_{H^K} + C\ve^2\tau^{-(q+1+\delta)}.
        \end{align}
    \end{subequations}
    The result now follows.
\end{proof}
We now derive energy estimates for the energies $E^{\mu\nu}$, $E^{(\psi)}$ defined in \eqref{sec:oscsys.eq:energyhochoice}. We note that by the energy coercivity inequality \eqref{sec:energyests.eq:energyhocoercivity} we have the estimates
\begin{subequations}\label{sec:oscsys.eq:energyequiv}
    \begin{align}
        \tau^{2(q-\delta)}(\|h_{\osc}^{00}\|_{H^K}^2 + \|\pa h_{\osc}^{00}\|_{H^K}^2) &\leq CE^{00},\\
        \tau^{2(q-\delta)}(\|h_{\osc}^{0i}\|_{H^K}^2 + \|\pa h_{\osc}^{0i}\|_{H^K}^2) &\leq CE^{0i},\\
        \tau^{2(q-\delta)}(\|h_{\osc}^{ij}\|_{H^K}^2 + \|\pa h_{\osc}^{ij}\|_{H^K}^2) &\leq CE^{ij},\\
        \tau^{2q}(\|\psi_{\osc}\|_{H^K}^2 + \|\pa \psi_{\osc}\|_{H^K}^2) &\leq CE^{(\psi)}.
    \end{align}
\end{subequations}
In the following proposition we improve the bound \eqref{sec:oscsys.eq:oscroughest} for these energies. We will use this improved bound, in combination with the estimate for the averages $(h_{\av}^{\mu\nu},\psi_{\av})$ from Proposition~\ref{sec:avgsys.prop:avgest}, to improve the bound on $S_N(\tau)$ from that of the bootstrap assumption \eqref{sec:globex.eq:bootstrap}.
\begin{proposition}[Energy estimate for the oscillatory remainders]\label{sec:oscsys.prop:oscest}
    The energies $E^{00}$, $E^{0i}$, $E^{ij}$, $E^{(\psi)}$ satisfy the following estimate for all $\tau \in [\tau_0,T]$:
    \begin{equation}\label{sec:oscsys.eq:oscest}
        E^{00}(\tau) + \sum_{a=1}^3 E^{0a}[\tau] + \sum_{a,b=1}^3 E^{ab}[\tau] + E^{(\psi)} \leq C\ve^4.
    \end{equation}
\end{proposition}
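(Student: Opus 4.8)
The plan is to run a Grönwall argument for the four energies $E^{00}$, $E^{0i}$, $E^{ij}$, $E^{(\psi)}$ one at a time, in an order dictated by the (almost) one-directional coupling visible in Lemma~\ref{sec:oscsys.lem:oscoperatorest}: the scalar bound \eqref{sec:oscsys.eq:oscwaveestscalar} involves only $\psi_{\osc}$; the horizontal bound \eqref{sec:oscsys.eq:oscwaveesthori} involves only $\psi_{\osc}$ and $h_{\osc}^{ij}$; the lapse bound \eqref{sec:oscsys.eq:oscwaveestlapse} involves only $\psi_{\osc}$ and $h_{\osc}^{00}$; and the shift bound \eqref{sec:oscsys.eq:oscwaveestshift} involves $\psi_{\osc}$, $h_{\osc}^{00}$ and $h_{\osc}^{0i}$. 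So I would estimate in the order $E^{(\psi)}$, then $E^{00}$, then $E^{ij}$, and finally $E^{0i}$. At the initial time, the smallness hypothesis \eqref{sec:gaugedeqs.eq:bounddata} together with the Poincaré inequality (Lemma~\ref{sec:prelims.lem:poincare}) and the coercivity estimates \eqref{sec:energyests.eq:energyhocoercivity} gives $E^{00}(\tau_0) + \sum_a E^{0a}(\tau_0) + \sum_{a,b} E^{ab}(\tau_0) + E^{(\psi)}(\tau_0) \leq C\ve^4$, which is already the target bound on the data.

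For each energy I would insert the corresponding bound from Lemma~\ref{sec:oscsys.lem:oscoperatorest} into the higher-order energy inequality of Proposition~\ref{sec:energyests.prop:energyhoineq}, with the parameters $(\alpha,\beta)$ read off from \eqref{sec:oscsys.eq:energyhochoice}, and use the coercivity estimates \eqref{sec:oscsys.eq:energyequiv} to re-express every $H^K$-norm of a first derivative on the right-hand side in terms of the energies themselves. Tracking powers of $\tau$ --- and using repeatedly that $q>1$ and $5\delta\le q-1$, so in particular $q-2\delta>1$ --- one finds that, after also inserting the already-established bounds, the differential inequality for each of $E^{(\psi)}$, $E^{ij}$, $E^{0i}$ takes the schematic form $\pa_\tau E \leq \tfrac{C}{\tau^{1+\delta}}E + C\ve^4\tau^{-1-\sigma}$ for some $\sigma>0$, after simply discarding the genuine damping term $-\tfrac{\beta}{\tau}E$. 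Grönwall and the data bound then give $E(\tau)\le C\ve^4$ on $[\tau_0,T]$. Concretely, \eqref{sec:oscsys.eq:oscwaveestscalar} closes $E^{(\psi)}$ with no external input (the only inhomogeneity is a constant multiple of $\ve^4\tau^{-(2q-1-4\delta)}$, which is integrable since $q-2\delta>1$), and \eqref{sec:oscsys.eq:oscwaveesthori} closes $E^{ij}$ once $E^{(\psi)}\le C\ve^4$ is available.

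The one place where discarding the damping is not allowed is the lapse energy $E^{00}=E_{q,\delta}^{(K)}[h_{\osc}^{00}]$. Inserting \eqref{sec:oscsys.eq:oscwaveestlapse}, the coercivity bounds, and the bound on $E^{(\psi)}$ produces
\[
    \pa_\tau E^{00} \leq -\frac{\delta}{\tau}E^{00} + C\Big(\frac{1}{\tau^{1+\delta}}E^{00} + \ve^4\tau^{-(1+2\delta)}\Big),
\]
and I would integrate this while keeping the $-\delta/\tau$ term: since $\delta/\tau$ dominates $C\tau^{-1-\delta}$ for $\tau$ beyond a fixed threshold, multiplying by an integrating factor comparable to $\tau^{\delta}$ yields not just $E^{00}\le C\ve^4$ but the mild decay $E^{00}(\tau)\le C\ve^4\tau^{-\delta}$. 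This decay is exactly what is needed in the last step: feeding \eqref{sec:oscsys.eq:oscwaveestshift} into the energy inequality for $E^{0i}$ and using \eqref{sec:oscsys.eq:energyequiv} produces, among the error terms, a contribution $\tau^{-1}E^{00}$ coming from the $\tfrac1\tau\|\pa h_{\osc}^{00}\|_{H^K}$ term in \eqref{sec:oscsys.eq:oscwaveestshift}; this would be borderline non-integrable if $E^{00}$ were merely bounded, but becomes $C\ve^4\tau^{-1-\delta}$ once the decay of $E^{00}$ is used. The $E^{0i}$ inequality is then again of the form $\pa_\tau E^{0i}\le\tfrac{C}{\tau^{1+\delta}}E^{0i}+C\ve^4\tau^{-1-\delta}$, and Grönwall finishes it; summing the four bounds gives \eqref{sec:oscsys.eq:oscest}.

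The main obstacle --- and the reason both for the staggered order of the estimates and for building the weight $\tau^{-2\delta}$ into $E^{00}$ in the first place --- is precisely this $\tau^{-1}E^{00}$ term in the shift inequality: the lapse carries only the very weak damping rate $\beta=\delta$, so its decay is only $\tau^{-\delta}$, and the argument would break down entirely if the lapse damping rate were zero (this is the analytic shadow of the restriction $p>2/3$ discussed in Section~\ref{sec:intro.subsec:breakdown}). Everything else is bookkeeping of powers of $\tau$ together with routine Cauchy--Schwarz and Grönwall estimates carried out in the same style as in Sections~\ref{sec:bootbound}--\ref{sec:energyests}.
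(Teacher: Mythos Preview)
Your proof is correct, and it follows a genuinely different (but equally valid) strategy from the paper's.

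The paper does \emph{not} estimate the four energies sequentially. Instead it writes down the four differential inequalities simultaneously, observes that the only troublesome term is the $\tfrac{C}{\tau}E^{00}$ contribution appearing in the inequality for $E^{0i}$, and deals with it by forming the combination $c_1E^{00}+\sum_a E^{0a}+\sum_{a,b}E^{ab}+E^{(\psi)}$ with $c_1$ chosen large enough that the damping $-\tfrac{c_1\delta}{\tau}E^{00}$ coming from the $E^{00}$ inequality absorbs the $\tfrac{C}{\tau}E^{00}$ coupling. A single Gr\"onwall on this combined quantity then gives \eqref{sec:oscsys.eq:oscest}. Your approach, by contrast, first closes $E^{(\psi)}$ and $E^{ij}$ on their own, then uses the damping $-\tfrac{\delta}{\tau}E^{00}$ via an integrating factor to extract the genuine decay $E^{00}(\tau)\le C\ve^4\tau^{-\delta}$, and finally feeds this decay into the $E^{0i}$ inequality so that the dangerous $\tfrac{C}{\tau}E^{00}$ becomes $C\ve^4\tau^{-1-\delta}$, which is integrable. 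Both methods rest on exactly the same mechanism --- the positivity of the weight $\beta=\delta$ in $E^{00}=E_{q,\delta}^{(K)}[h_{\osc}^{00}]$ --- and both would fail if $\delta=0$; the paper uses the damping to cancel a bad term in a combined energy, while you use it to prove decay and then substitute. Your route yields slightly more information (the $\tau^{-\delta}$ decay of $E^{00}$), the paper's route is a bit more economical in that it needs only one Gr\"onwall step.
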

\begin{proof}
    Combining the energy estimate from Proposition~\ref{sec:energyests.prop:energyhoineq} and the estimates from Lemma~\ref{sec:oscsys.lem:oscoperatorest} the energies $E^{00}$, $E^{0i}$, $E^{ij}$, $E^{(\psi)}$ satisfy the differential inequalities
    \begin{subequations}
        \begin{align}
            \pa_\tau E^{00} \leq &-\frac{\delta}{\tau}E^{00} + \frac{C}{\tau^{1+\delta}}E^{00} + C\tau^{2(q-\delta)+1}\Big(\frac{1}{\tau^2}\|\pa \psi_{\osc}\|_{H^K}^2 + \frac{1}{\tau^4}\|\pa h_{\osc}^{00}\|_{H^K}^2 + \ve^4\tau^{-2(q+1+\delta)}\Big),\\
            \pa_\tau E^{0i}\leq &-\frac{1+\delta}{\tau}E^{0i} + \frac{C}{\tau^{1+\delta}}E^{0i}\\
            &+ C\tau^{2(q-\delta)+1}\Big(\frac{1}{\tau^2}\|\pa h_{\osc}^{00}\|_{H^K}^2+\frac{1}{\tau^2}\|\pa \psi_{\osc}\|_{H^K}^2 + \frac{1}{\tau^4}\|\pa h_{\osc}^{0i}\|_{H^K}^2+\ve^4 \tau^{-4(q-\delta)}\Big),\\
            \pa_\tau E^{ij}\leq &-\frac{1+\delta}{\tau}E^{ij} + \frac{C}{\tau^{1+\delta}}E^{ij}+ C\tau^{2(q-\delta)+1}\Big(\frac{1}{\tau^4}\|\pa \psi_{\osc}\|_{H^K}^2+\ve^4 \tau^{-2(q+1+\delta)}\Big),\\
            \pa_\tau E^{(\psi)}\leq &-\frac{1}{\tau}E^{(\psi)} + \frac{C}{\tau^{1+\delta}}E^{(\psi)}+ C\tau^{2q+1}\Big(\frac{1}{\tau^4}\|\pa \psi_{\osc}\|_{H^K}^2+\ve^4 \tau^{-4(q-\delta)}\Big),
        \end{align}
    \end{subequations}
    which implies by the bounds \eqref{sec:gaugedeqs.eq:wave} that
    \begin{subequations}
        \begin{align}
            \pa_\tau E^{00} \leq &-\frac{\delta}{\tau}E^{00}+\boxed{C\Big(\frac{1}{\tau^{1+\delta}}E^{00} + \frac{1}{\tau^{1+\delta}}E^{(\psi)} + \ve^4 \tau^{-(1+4\delta)}\Big)},
            \label{sec:oscsys.eq:oscestproof1}\\
            \pa_\tau E^{0i}\leq &-\frac{1+\delta}{\tau}E^{0i}+ \frac{C}{\tau}E^{00} + \boxed{C\Big(\frac{1}{\tau^{1+\delta}}E^{(\psi)}+\frac{1}{\tau^{1+\delta}}E^{0i}+\ve^4\tau^{-(2q-1-4\delta)}\Big)},\label{sec:oscsys.eq:oscestproof2}\\
            \pa_\tau E^{ij}\leq &-\frac{1+\delta}{\tau}E^{ij} + \boxed{C\Big(\frac{1}{\tau^{1+\delta}}E^{ij}+\frac{1}{\tau^{3+2\delta}}E^{(\psi)}+\ve^4 \tau^{-(1+4\delta)}\Big)},\\
            \pa_\tau E^{(\psi)}\leq &-\frac{1}{\tau}E^{(\psi)} + \boxed{C\Big(\frac{1}{\tau^{1+\delta}}E^{(\psi)} + \ve^4\tau^{-(2q-1-4\delta)}\Big)}.
        \end{align}
    \end{subequations}
    The boxed terms in each equation are integrable error terms (provided $\delta$ satisfies $q \geq 1+5\delta$), which can be estimated in a straightforward manner via a Gr\"onwall-type inequality. All but one remaining term on the right hand side of the above equations are negative. To deal with this, we combine the inequalities \eqref{sec:oscsys.eq:oscestproof1}, \eqref{sec:oscsys.eq:oscestproof2}. For any constant $c_1 > 0$, this implies the bound
    \begin{align}
         \sum_{a=1}^3\pa_\tau E^{0a} + c_1\pa_\tau E^{00} \leq&\> \boxed{\frac{3C-c_1\delta}{\tau}E^{00}} -\frac{1+\delta}{\tau}\sum_{a=1}^3E^{0a}\nonumber\\
        &+\frac{C}{\tau^{1+\delta}}\Big(E^{00}+\sum_{a=1}^3 E^{0a} + E^{(\psi)} + \ve^4\Big).\label{sec:oscsys.eq:oscestproof3}
    \end{align}
    We fix $c_1$ to be sufficiently large that the boxed term in \eqref{sec:oscsys.eq:oscestproof3} is non-positive. Thus we obtain the following estimate:
    \begin{equation}
        \pa_\tau \Big(c_1E^{00} + \sum_{a=1}^3 E^{0a} + \sum_{a,b=1}^3E^{ab} + E^{(\psi)}\Big)\\
        \leq \frac{C}{\tau^{1+\delta}}\Big(c_1 E^{00} + \sum_{a=1}^3 E^{0a} + \sum_{a,b=1}^3E^{ab} + E^{(\psi)} + \ve^4\Big),
    \end{equation}
    which by a Gr\"onwall-type estimate implies the bound
    \begin{equation}
        c_1E^{00}(\tau) + \sum_{a=1}^3 E^{0a}(\tau) + \sum_{a,b=1}^3E^{ab}(\tau) + E^{(\psi)}(\tau)
        \leq C\Big(E^{00}(\tau_0) + \sum_{a=1}^3 E^{0a}(\tau_0) + \sum_{a,b=1}^3E^{ab}(\tau_0) + E^{(\psi)}(\tau_0) + \ve^4\Big).
    \end{equation}
    Using the coercivity inequality \eqref{sec:energyests.eq:energyhocoercivity} for the energies $E^{\mu\nu}$, $E^{(\psi)}$ the bound \eqref{sec:avgsys.eq:avgbound} on functions with zero spatial average, and the bound \eqref{sec:gaugedeqs.eq:bounddata} on the initial data, we estimate
    \begin{align}
        E^{00}(\tau_0) + \sum_{a=1}^3 E^{0a}(\tau_0) + \sum_{a,b=1}^3E^{ab}(\tau_0) + E^{(\psi)}(\tau_0) &\leq C\sum_{\mu,\nu=1}^3\|\pa h_{\osc}^{\mu\nu}(\tau_0)\|_{H^K}^2  + \|\pa \psi_\osc(\tau_0)\|_{H^K}^2 \nonumber\\&\leq CS_K^2(\tau_0)\leq C\ve^4,
    \end{align}
    and we are done.
\end{proof}
\begin{remark}
    Th
\end{remark}
\section{Completing the proof of \texorpdfstring{Theorem~\ref{sec:globex.thm:globalexistence}}{the main theorem}}\label{sec:completethm}
In this section we complete the proof of the Theorem~\ref{sec:globex.thm:globalexistence} by (i)~improving the bootstrap assumption \eqref{sec:globex.eq:bootstrap}, (ii)~proving the perturbed spacetime $\wt{g}$ is future-causally geodesically complete, and (iii)~deriving the asymptotics \eqref{sec:globex.eq:asymp} and \eqref{sec:globex.eq:asymphori}.
\subsection{Improving the bootstrap assumption}
We improve the bootstrap assumption from Theorem~\ref{sec:globex.thm:globalexistence}, which we restate here:
\begin{equation}\label{sec:completethm.eq:bootstrap}
        S_K(\tau) \leq \ve,\qquad \tau \in [\tau_0,T].
\end{equation}
Recall from the continuity argument following Theorem~\ref{sec:globex.thm:globalexistence} that by improving this bound, we are in fact showing that the time interval of existence for the solutions $(h^{\mu\nu},\psi)$ to the gauge-fixed ENSF system \eqref{sec:gaugedeqs.eq:wave} is $[\tau_0,\infty)$, i.e. the solutions are future-global.
\begin{theorem}\label{sec:completethm.thm:bootimp}
    Let $(h^{\mu\nu},\psi)$ be solutions to the gauge-fixed Einstein-nonlinear scalar field system \eqref{sec:gaugedeqs.eq:wave} on $[\tau_0,T]\times \TT^3$. Assume that the norm $S_K$ of $(h^{\mu\nu},\psi)$ satisfies the initial bound
    \begin{equation}\label{sec:completethm.eq:initialbound}
        S_K(\tau_0) \leq \ve^2.
    \end{equation}
    Moreover, assume that $S_K$ obeys the bound from the bootstrap assumption \eqref{sec:completethm.eq:bootstrap}, where $\ve$ is fixed to be sufficiently small that the results from Sections~\ref{sec:bootbound}-\ref{sec:oscsys} hold. Then $S_K$ obeys the improved bound
    \begin{equation}\label{sec:completethm.eq:bootimp}
        S_K(\tau) \leq C\ve^2,
    \end{equation}
    for all $\tau \in [\tau_0,T]$.
\end{theorem}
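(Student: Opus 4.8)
The plan is to assemble \eqref{sec:completethm.eq:bootimp} from the two sets of estimates already in hand: Proposition~\ref{sec:avgsys.prop:avgest} for the spatial averages and Proposition~\ref{sec:oscsys.prop:oscest} for the oscillatory remainders. Write $h^{\mu\nu} = h_{\av}^{\mu\nu} + h_{\osc}^{\mu\nu}$ and $\psi = \psi_{\av} + \psi_{\osc}$; by the triangle inequality every term in $S_{h^{00},K}$, $S_{h^{0*},K}$, $S_{h^{**},K}$, $S_{\psi,K}$ is controlled by its averaged part plus its oscillatory part, and I would bound these two pieces by $C\ve^2$ separately. Recall that $\ol{\pa} h_{\av}^{\mu\nu} = 0$ and $\ol{\pa}\psi_{\av} = 0$, so that $\|\pa h_{\av}^{\mu\nu}\|_{H^K}$ collapses to $|\pa_\tau h_{\av}^{\mu\nu}|$ (up to the fixed volume factor) and similarly for $\psi_{\av}$, and that $h_{\osc}^{\mu\nu} = g_{\osc}^{\mu\nu}$, $\psi_{\osc} = \phi_{\osc}$. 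We may also assume throughout that $0 < \delta < 1 \le q$, since $\delta$ was taken arbitrarily small and $q > 1$ for $p \in (2/3,1)$.

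For the averaged contributions I would invoke Proposition~\ref{sec:avgsys.prop:avgest}, more precisely the sharpened intermediate bounds \eqref{sec:avgsys.eq:avgestproof7}--\eqref{sec:avgsys.eq:avgestproof10} obtained in its proof for the rescaled quantities $\mf{u}^{\mu\nu}, \mf{v}^{ij}, \mf{r}, \mf{s}$, together with the first-order ODE inequalities \eqref{sec:avgsys.eq:odeavglapse2}, \eqref{sec:avgsys.eq:odeavgshift2} for $\pa_\tau h_{\av}^{00}$, $\pa_\tau h_{\av}^{0i}$. These give $\tau^{q-1}|h_{\av}^{00}|$, $\tau^{q}|\pa_\tau h_{\av}^{00}|$, $\tau^{q}|h_{\av}^{0a}|$, $\tau^{q}|\pa_\tau h_{\av}^{0a}|$, $|h_{\av}^{ab}| + \tau^{q}|\pa_\tau h_{\av}^{ab}|$, $\tau^{q-1+\delta}|\psi_{\av}| + \tau^{q+\delta}|\pa_\tau \psi_{\av}| \le C\ve^2$ on $[\tau_0,T]$. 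Since $\tau \ge \tau_0$ and $\delta < 1$, every averaged term that actually appears in $S_K$ is dominated by one of these up to a constant depending only on $\tau_0$ and $\delta$; for instance $\tau^{q-\delta}|h_{\av}^{0a}| = \tau^{-\delta}\cdot\tau^{q}|h_{\av}^{0a}|$ and $\tau^{q-\delta}|\pa_\tau h_{\av}^{00}| = \tau^{-\delta}\cdot\tau^{q}|\pa_\tau h_{\av}^{00}|$. For the oscillatory contributions I would use the coercivity inequality \eqref{sec:energyests.eq:energyhocoercivity} for the energies \eqref{sec:oscsys.eq:energyhochoice}: since $E^{00} = E_{q,\delta}^{(K)}[h_{\osc}^{00}]$ has $2(\alpha-\beta) = 2(q-\delta)$ one gets $\tau^{q-\delta}(\|h_{\osc}^{00}\|_{H^K} + \|\pa h_{\osc}^{00}\|_{H^K}) \le C(E^{00})^{1/2}$, and likewise $\tau^{q-\delta}(\|h_{\osc}^{0i}\|_{H^K} + \|\pa h_{\osc}^{0i}\|_{H^K}) \le C(E^{0i})^{1/2}$, $\tau^{q-\delta}(\|h_{\osc}^{ij}\|_{H^K} + \|\pa h_{\osc}^{ij}\|_{H^K}) \le C(E^{ij})^{1/2}$, $\tau^{q}(\|\psi_{\osc}\|_{H^K} + \|\pa\psi_{\osc}\|_{H^K}) \le C(E^{(\psi)})^{1/2}$. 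Proposition~\ref{sec:oscsys.prop:oscest} then bounds each of these energies by $C\ve^4$, so each oscillatory piece is $\le C\ve^2$; the undifferentiated terms $\tau^{q-1}\|h_{\osc}^{00}\|_{L^2}$ and $\tau^{q-1+\delta}\|\psi_{\osc}\|_{L^2}$ are handled in the same way using $\tau^{q-1} \le \tau_0^{\delta-1}\tau^{q-\delta}$ and $\tau^{q-1+\delta} \le \tau_0^{\delta-1}\tau^{q}$.

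Adding the averaged and oscillatory bounds over all the (finitely many) terms making up $S_K$ yields $S_K(\tau) \le C\ve^2$ for every $\tau \in [\tau_0,T]$, which is \eqref{sec:completethm.eq:bootimp}; since $C\ve^2 < \ve$ once $\ve < 1/C$, this strictly improves the bootstrap assumption \eqref{sec:completethm.eq:bootstrap} and closes the continuity argument following Theorem~\ref{sec:globex.thm:globalexistence}. I do not expect a genuine analytic obstacle in this step: the Grönwall, commutator, and nonlinear-error estimates were all carried out in Sections~\ref{sec:bootbound}--\ref{sec:oscsys}. The only thing that requires real care is the bookkeeping --- verifying term by term that the $\tau$-weight demanded in the definitions of $S_{h^{00},K}$, $S_{h^{0*},K}$, $S_{h^{**},K}$, $S_{\psi,K}$ is matched, exactly or up to a harmless negative power of $\tau$ bounded on $[\tau_0,\infty)$, either by the weight built into $S_{\av}$ (equivalently the rescalings $\mf{u},\mf{v},\mf{r},\mf{s}$) or by the factor $\tau^{2(\alpha-\beta)}$ in the coercivity of $E_{\alpha,\beta}^{(K)}$ with the parameter choices \eqref{sec:oscsys.eq:energyhochoice}.
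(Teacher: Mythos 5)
Your proposal is correct and follows exactly the paper's argument: split $S_K$ into averaged and oscillatory parts via the triangle inequality, bound the averaged contribution using Proposition~\ref{sec:avgsys.prop:avgest} and its intermediate estimates, and bound the oscillatory contribution using the coercivity inequality \eqref{sec:energyests.eq:energyhocoercivity} with the parameter choices \eqref{sec:oscsys.eq:energyhochoice} together with Proposition~\ref{sec:oscsys.prop:oscest}. The careful $\tau$-weight bookkeeping you carry out (in particular invoking $\mf{u}^{0i}=\tau^q h_{\av}^{0i}\le C\ve^2$ rather than the $\tau^{q-1}|h_{\av}^{0a}|$ entry as printed in $S_{\av}$) is exactly the right way to match the $\tau^{q-\delta}$ weight that $S_{h^{0*},K}$ demands.
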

\begin{proof}
    Using its definition, we bound $S_K$ by the averages $(h_{\av}^{\mu\nu},\psi_{\av})$ and the oscillating terms $(h_{\osc}^{\mu\nu},\psi_{\osc})$:
    \begin{align}
        S_K(\tau) \leq & \tau^{q-1}|h_{\av}^{00}| + \tau^{q-\delta}|\pa_\tau h_{\av}^{00}| + \sum_{a=1}^3\Big\{\tau^{q-\delta}|h_{\av}^{0a}| + \tau^{q-\delta}|\pa_\tau h_{\av}^{0a}|\Big\}\nonumber\\
        &+ \sum_{a,b=1}^3\Big\{|h_{\av}^{ab}| + \tau^{q-\delta}|\pa_\tau h_{\av}^{ab}|\Big\} + \tau^{q-1+\delta}|\psi_{\av}| + \tau^q|\pa_\tau \psi_{\av}|\nonumber\\
        &\tau^{q-1}\|h_{\osc}^{00}\|_{L^2} + \tau^{q-\delta}\|\pa h_{\osc}^{00}\|_{H^K} + \sum_{a=1}^3\Big\{\tau^{q-\delta}\|h_{\osc}^{0a}\|_{L^2} + \tau^{q-\delta}\|\pa h_{\osc}^{0a}\|_{H^K}\Big\}\nonumber\\
        &+ \sum_{a,b=1}^3\Big\{\|h_{\osc}^{ab}\|_{L^2} + \tau^{q-\delta}\|\pa h_{\osc}^{ab}\|_{H^K}\Big\} + \tau^{q-1+\delta}\|\psi_{\osc}\|_{L^2} + \tau^q\|\pa \psi_{\osc}\|_{H^K}.\label{sec:completethm.eq:bootimpproof1}
    \end{align}
    We estimate the averages $(h_{\av}^{\mu\nu},\psi_{\av})$ using the bound on $S_{\av}$ from Proposition~\ref{sec:avgsys.prop:avgest}:
    \begin{align}
        \tau^{q-1}|h_{\av}^{00}| + \tau^{q-\delta}|\pa_\tau h_{\av}^{00}| + \sum_{a=1}^3\Big\{\tau^{q-\delta}|h_{\av}^{0a}| + \tau^{q-\delta}|\pa_\tau h_{\av}^{0a}|\Big\}&\nonumber\\
        +\sum_{a,b=1}^3\Big\{|h_{\av}^{ab}| + \tau^{q-\delta}|\pa_\tau h_{\av}^{ab}|\Big\} + \tau^{q-1+\delta}|\psi_{\av}| + \tau^q|\pa_\tau \psi_{\av}| &\leq CS_{\av}(\tau)\nonumber\\
        &\leq C\ve^2.\label{sec:completethm.eq:bootimpproof2}
    \end{align}
    To estimate the oscillating remainders $(h_{\osc}^{\mu\nu},\psi_{\osc})$, recall the energies $E^{\mu\nu}$, $E^{(\psi)}$ defined in \eqref{sec:oscsys.eq:energyhochoice}, as well as the bounds \eqref{sec:oscsys.eq:energyequiv} satisfied by them. The estimate \eqref{sec:oscsys.eq:oscest} from Proposition~\ref{sec:oscsys.prop:oscest} for these energies then implies
    \begin{align}
        \tau^{q-1}\|h_{\osc}^{00}\|_{L^2} + \tau^{q-\delta}\|\pa h_{\osc}^{00}\|_{H^K} + \sum_{a=1}^3&\Big\{\tau^{q-\delta}\|h_{\osc}^{0a}\|_{L^2} + \tau^{q-\delta}\|\pa h_{\osc}^{0a}\|_{H^K}\Big\}\nonumber\\
        + \sum_{a,b=1}^3\Big\{\|h_{\osc}^{ab}\|_{L^2} + \tau^{q-\delta}\|\pa h_{\osc}^{ab}\|_{H^K}\Big\}& + \tau^{q-1+\delta}\|\psi_{\osc}\|_{L^2} + \tau^q\|\pa \psi_{\osc}\|_{H^K}\nonumber\\
        &\leq C\Big((E^{00})^{1/2} + \sum_{a=1}^3 (E^{0a})^{1/2} + \sum_{a,b=1}^3 (E^{ab})^{1/2} + (E^{(\psi)})^{1/2}\Big)\nonumber\\
        &\leq C\ve^2.\label{sec:completethm.eq:bootimpproof3}
    \end{align}
    The inequality \eqref{sec:completethm.eq:bootimp} now follows after plugging the bounds \eqref{sec:completethm.eq:bootimpproof2} and \eqref{sec:completethm.eq:bootimpproof3} into \eqref{sec:completethm.eq:bootimpproof1},
\end{proof}
\subsection{Future geodesic completeness}
We have demonstrated that the perturbed solution exists globally. Next we demonstrate that the spacetime $([\tau_0,\infty)\times\TT^3,\wt{g})$ is future-causally geodesically complete, proving part (III) of Theorem~\ref{sec:globex.thm:globalexistence}. Recall that a curve $\gamma:[s_0,s_1) \ra \mc{M}$ is \emph{future-causal} if for all $s \in [s_0,s_1)$, $\gamma$ obeys the inequalities
\begin{subequations}\label{sec:completethm.eq:causalineq} 
    \begin{align}
        \wt{g}[\gamma(s)](\dot{\gamma}(s),\dot{\gamma}(s)) &\leq 0,\label{sec:completethm.eq:causalineq1}\\
        \wt{g}[\gamma(s)](\dot{\gamma}(s),\pa_\tau) &< 0,\label{sec:completethm.eq:causalineq2}
    \end{align}
\end{subequations}
where $\wt{g}[\gamma(s)]$ is the metric $\wt{g}$ evaluated at the point $\gamma(s)$. A curve $\gamma$ is a geodesic of $(\mc{M},\wt{g})$ if it satisfies the geodesic equations
\begin{equation}\label{sec:completethm.eq:geodesic}
    \ddot{\gamma}^\mu(s) + \Gamma_{\alpha\beta}^\mu\big(\wt{g}[\gamma(s)]\big)\dot{\gamma}^{\alpha}(s)\dot{\gamma}^{\beta}(s) = 0,\qquad \mu=0,1,2,3.
\end{equation}
In the case that $\gamma$ is a geodesic, the causal inequalities \eqref{sec:completethm.eq:causalineq} are propagated, so that if they are satisfied at a single point (say $\gamma(s_0)$) then $\gamma$ is future-causal for all $s \in [s_0,s_1)$.

Recall also that a geodesic $\gamma:[s_0,s_{\max}) \ra \mc{M}$ is \emph{future-inextendible} if the limit $\lim_{s \nearrow s_{\max}}\gamma(s)$ does not exist. An inextendible geodesic is \emph{future-complete} if $s_{\max} = \infty$. One can equivalently define the past versions of causality, inextendibility and completeness. We are only interested in the future versions of these notions, so we will often drop the ``future'' specifier.

Finally, a Lorentzian manifold is \emph{future-causal geodesically complete} if all future-causal inextendible geodesics are complete. The goal of this section is to prove this property holds for the class of spacetime for which we just proved global existence. 

\begin{remark}[Geodesic completeness: accelerated expansion and decelerated expansion]
We base our argument for proving geodesic completeness around the one found in \cite{Rin:powerlaw:09}. In \cite{Rin:powerlaw:09}, however, the spacetimes under consideration are undergoing accelerated expansion. A key property of such spacetimes are that the spatial component of any future-causal geodesic $\gamma$ has a limit. That is, the limit
\begin{equation}
    \gamma_\infty^i = \lim_{s \nearrow s_{\max}}\gamma^i(s)
\end{equation}
exists for $i = 1,2,3$. This is used in \cite{Rin:powerlaw:09} to prove that future-causal inextendible geodesics are complete. In contrast, the functions $\gamma^i(s)$ do not have a limit for most future-causal geodesics on slowly expanding spacetimes.
While their spatial velocity decreases, future-causal geodesics can cycle around the spatial sections indefinitely.\footnote{One can see this by explicitly calculating the geodesics on the background FLRW solution. With initial position and velocity $\gamma(0) = (1,x_0)$, $\dot{\gamma}(0) = \xi$ where $\xi$ is future-pointing and null, the resulting geodesic $\gamma(s) = (\tau(s),x(s))$ is
\begin{equation*}
    \gamma^\mu(s) = \frac{1}{\xi^0}\Big(1+\xi^0\big[\frac{1+p}{1-p}\big]s\Big)^{\frac{1-p}{1+p}}\xi^\mu + \delta_{a}^\mu (x_0^a-(\xi^0)^{-1}\xi^a),
\end{equation*}
Since $\xi$ is null, $\xi^i \neq 0$ for one of $i \in \{1,2,3\}$, and thus from the above formula we see that $\gamma^i(s)$ does not have a limit as $s \ra \infty$.} This requires a modified argument to show that $\lim_{s \nearrow s_{\max}} \gamma^0(s) = \infty$ in Proposition~\ref{sec:completethm.prop:complete}, which uses an additional lower bound for $\dot{\gamma}^0$ which we prove in Lemma~\ref{sec:completethm.lem:geodesicbound}.
\end{remark}

First we prove some standard inequalities satisfied by future-causal curves on manifolds with Lorentzian metrics which are ``close'' to a zero shift metric.
\begin{lemma}\label{sec:completethm.lem:causalbounds}
    Let $(\mc{M},\wt{g})$ be a solution to the Einstein-nonlinear scalar field equation constructed in the first part of Theorem~\ref{sec:globex.thm:globalexistence} and let $\gamma:[s_0,s_{\max}) \ra \mc{M}$ be a future-causal curve on that spacetime. Then the time component of $\dot{\gamma}$ controls the spatial components, in the sense that
    \begin{equation}\label{sec:completethm.eq:causalbounds}
        \frac{2}{(1-p)^2}(\dot{\gamma}^0(s))^2 \geq \delta_{ab} \dot{\gamma}^a(s) \dot{\gamma}^b(s).
    \end{equation}
    Moreover, $\dot{\gamma}^0(s) > 0$ for all $s \in [s_0,s_{\max})$.
\end{lemma}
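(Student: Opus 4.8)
The plan is to establish the two claimed inequalities using only the smallness of the metric perturbations $h^{\mu\nu} = g^{\mu\nu} - m^{\mu\nu}$ coming from the bootstrap assumption \eqref{sec:globex.eq:bootstrap} (via the coercivity bounds $1 \leq -Cg^{00}$, $\delta^{ij} \leq Cg^{ij}$ already recorded in the proof of Lemma~\ref{sec:energyests.lem:energycoercivity}), together with the conformal relation $\wt{g} = \tau^{2p/(1-p)}g$ and the fact that $g$ is a small perturbation of Minkowski. The key observation is that the conformal factor $\tau^{2p/(1-p)}$ is positive and common to all components, so the causal characters of $\wt{g}$ and $g$ agree; hence it suffices to prove the analogous statements for the conformal metric $g$ evaluated along $\gamma$, and then note the conformal factor cancels in the ratio appearing in \eqref{sec:completethm.eq:causalbounds}. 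In particular, a curve is future-causal for $\wt{g}$ if and only if it is future-causal for $g$.

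First I would expand the causal inequality \eqref{sec:completethm.eq:causalineq1} in coordinates. Writing $g_{\mu\nu}$ for the lowered components of the conformal metric, future-causality $g_{\mu\nu}\dot\gamma^\mu\dot\gamma^\nu \leq 0$ reads
\begin{equation*}
    g_{00}(\dot\gamma^0)^2 + 2g_{0a}\dot\gamma^0\dot\gamma^a + g_{ab}\dot\gamma^a\dot\gamma^b \leq 0.
\end{equation*}
By Lemma~\ref{sec:bootbound.lem:loweredmetests}, $g_{00}$ is close to $m_{00} = -(1-p)^{-2}$, the $g_{0a}$ are close to $0$, and $g_{ab}$ is close to $\delta_{ab}$; in particular for $\ve$ small we have $g_{ab}\dot\gamma^a\dot\gamma^b \geq \tfrac12 \delta_{ab}\dot\gamma^a\dot\gamma^b$ and $g_{00} \leq -\tfrac12 (1-p)^{-2}$, while the cross term is bounded by $C\ve |\dot\gamma^0|(\delta_{ab}\dot\gamma^a\dot\gamma^b)^{1/2}$. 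Rearranging the causal inequality and absorbing the cross term via Cauchy--Schwarz (or Young's inequality with a small parameter) into the two definite terms gives a bound of the form $\delta_{ab}\dot\gamma^a\dot\gamma^b \leq \big((1-p)^{-2} + C\ve\big)(\dot\gamma^0)^2$; shrinking $\ve_0$ so that $C\ve_0 \leq (1-p)^{-2}$ yields the stated constant $2(1-p)^{-2}$. (One should double-check the arithmetic of the constant, but any ambient-constant-times-$(\dot\gamma^0)^2$ bound suffices for the applications, and $2(1-p)^{-2}$ is a safe choice.)

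For the second claim, $\dot\gamma^0(s) > 0$, I would use \eqref{sec:completethm.eq:causalineq2}: $\wt{g}(\dot\gamma,\partial_\tau) < 0$, equivalently (after dividing by the conformal factor) $g(\dot\gamma,\partial_\tau) = g_{0\mu}\dot\gamma^\mu = g_{00}\dot\gamma^0 + g_{0a}\dot\gamma^a < 0$. Since $g_{00} < 0$ and the $g_{0a}$ are $O(\ve)$ while the $\dot\gamma^a$ are controlled by $\dot\gamma^0$ through the inequality \eqref{sec:completethm.eq:causalbounds} just proved, we get $g_{00}\dot\gamma^0 < -g_{0a}\dot\gamma^a \leq C\ve|\dot\gamma^0|$; combined with $g_{00} \leq -\tfrac12(1-p)^{-2}$ and $C\ve_0$ small this forces $\dot\gamma^0 > 0$ (if $\dot\gamma^0 \leq 0$ the left side would be non-negative, contradicting strict negativity of $g(\dot\gamma,\partial_\tau)$ unless $\dot\gamma = 0$, which is excluded for a causal curve). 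Here one uses that $\dot\gamma$ does not vanish, which follows from \eqref{sec:completethm.eq:causalineq2} being a strict inequality.

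The main obstacle is purely bookkeeping: making sure the perturbative estimates for the lowered components $g_{\mu\nu}$ from Lemma~\ref{sec:bootbound.lem:loweredmetests} are applied pointwise (they are stated in $L^2$/Sobolev norms, but combined with Sobolev embedding $H^K \hookrightarrow L^\infty$ for $K \geq 3$, or more simply with the bootstrap $L^\infty$ closeness of $g^{\mu\nu}$ to $m^{\mu\nu}$ and the algebraic relation between raised and lowered components, they give the needed $L^\infty$ bounds along $\gamma$), and tracking the smallness threshold on $\ve_0$ so that all the ``absorb into definite terms'' steps are legitimate. No genuinely hard analysis is involved; the estimate is local and algebraic at each point $\gamma(s)$.
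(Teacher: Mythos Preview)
Your proposal is correct and follows essentially the same approach as the paper: both reduce to the conformal metric $g$, expand the two causal inequalities in coordinates, use the smallness of $g_{\mu\nu}-m_{\mu\nu}$ to absorb the cross term and obtain \eqref{sec:completethm.eq:causalbounds}, and then combine \eqref{sec:completethm.eq:causalbounds} with the strict inequality $g_{00}\dot\gamma^0+g_{0a}\dot\gamma^a<0$ to force $\dot\gamma^0>0$. Your remark about needing Sobolev embedding to pass from the $H^K$ bounds of Lemma~\ref{sec:bootbound.lem:loweredmetests} to pointwise bounds along $\gamma$ is a correct bookkeeping point the paper leaves implicit.
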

\begin{proof}
    The inequalities \eqref{sec:completethm.eq:causalineq} are preserved under conformal transformation, so we may replace the metric $\wt{g}$ with $g = \tau^{-2p/(1-p)}\wt{g}$ in these equations, which can  then be written as
    \begin{subequations}
        \begin{align}
            g_{00}(\dot{\gamma}^0)^2 + g_{0a}\dot{\gamma}^0\dot{\gamma}^a + g_{ab} \dot{\gamma}^a\dot{\gamma}^b &\leq 0,\label{sec:completethm.eq:causalboundsproof1}\\
            g_{00}\dot{\gamma}^0 + g_{0a} \dot{\gamma}^a &< 0.\label{sec:completethm.eq:causalboundsproof2}
        \end{align}
    \end{subequations}
    The global bound \eqref{sec:globex.eq:bootstrap} implies the following bounds:
    \begin{equation}
        g_{00}(\dot{\gamma}^0)^2 \geq - \Big[\frac{1}{(1-p)^2} + C\ve\Big](\dot{\gamma}^0)^2,\quad
        |2g_{0a}\dot{\gamma}^0\dot{\gamma}^a| \leq C\ve\big((\dot{\gamma}^0)^2 + \delta_{ab}\dot{\gamma}^a\dot{\gamma}^b\big),\quad
        g_{ab}\dot{\gamma}^a\dot{\gamma}^b \geq (1-C\ve)\delta_{ab}\dot{\gamma}^a\dot{\gamma}^b.
    \end{equation}
    Plugging the above inequalities into \eqref{sec:completethm.eq:causalboundsproof1} then yields
    \begin{equation}
         \Big[\frac{1}{(1-p)^2} + C\ve\Big](\dot{\gamma}^0)^2 \geq (1-C\ve)\delta_{ab}\dot{\gamma}^a\dot{\gamma}^b - C\ve\big((\dot{\gamma}^0)^2 + \delta_{ab}\dot{\gamma}^a\dot{\gamma}^b\big),
    \end{equation}
    which implies \eqref{sec:completethm.eq:causalbounds} after setting $\ve$ to be sufficiently small.

    The inequality \eqref{sec:completethm.eq:causalbounds} we just derived implies that $|\dot{\gamma}^0| > 0$, since if $\dot{\gamma} = 0$, then $\dot{\gamma}^i = 0$ for $i = 1,2,3$, therefore $\dot{\gamma} = 0$, contradicting \eqref{sec:completethm.eq:causalboundsproof2}. By the global bound \eqref{sec:globex.eq:bootstrap} we have
    \begin{subequations}
        \begin{gather}
            g_{00}\dot{\gamma}^0 \geq -(1-C\ve)|\dot{\gamma}^0|,\\
            |g_{0a}\dot{\gamma}^a| \geq -C\ve\sqrt{\delta_{ab}\dot{\gamma}^a\dot{\gamma}^b}, \geq -C\ve|\dot{\gamma}^0|,
        \end{gather}
    \end{subequations}
    where the last inequality follows from \eqref{sec:completethm.eq:causalbounds}. Then \eqref{sec:completethm.eq:causalboundsproof2} implies
    \begin{equation}
        -(1-C\ve)|\dot{\gamma}^0| <  |g_{0a} \dot{\gamma}^a| \leq C\ve|\dot{\gamma}^0|,
    \end{equation}
    and so $\dot{\gamma}^0 > 0$ (provided $\ve$ is sufficiently small).
\end{proof}

Next we use the geodesic equation \eqref{sec:completethm.eq:geodesic} to compute quantitative bounds on the tangent vector $\dot{\gamma}$ for a future-causal geodesic $\gamma:[s_0,s_{\max}) \ra \mc{M}$. We will use this to prove that if $\gamma$ is inextendible, then $s_{\max} = \infty$, implying that $\gamma$ is complete. 
\begin{lemma}[Causal geodesic differential inequality]\label{sec:completethm.lem:geodesicbound}
    Let $\gamma:[s_0,s_{\max}) \ra \mc{M}$ be a future-causal geodesic. Then there exist constants $c_2 > c_1 > 0$ such that for all $s \in [s_0,s_{\max})$, the zeroth component $\gamma^0$ obeys the differential inequality
    \begin{equation}
        \frac{c_1}{[\gamma^0(s)]^{4p/(1-p)}} \leq \dot{\gamma}^0(s) \leq \frac{c_2}{[\gamma^0(s)]^{p/(2(1-p))}}.
    \end{equation}
\end{lemma}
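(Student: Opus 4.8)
The plan is to extract a closed differential inequality for $\gamma^0$ from the $\mu=0$ component of the geodesic equation \eqref{sec:completethm.eq:geodesic}, using the causality bound \eqref{sec:completethm.eq:causalbounds} of Lemma~\ref{sec:completethm.lem:causalbounds} to eliminate the spatial velocities, and then to integrate after reparametrising the geodesic by $\tau=\gamma^0$.

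\emph{Step 1: Christoffel symbols of $\wt g$.} Writing $\wt g_{\mu\nu}=\tau^{2p/(1-p)}g_{\mu\nu}$ and using that the conformal factor depends on $\tau$ only, one gets $\Gamma^0_{\alpha\beta}(\wt g)=\Gamma^0_{\alpha\beta}(g)+\tfrac{p}{1-p}\tfrac1\tau\big(2\delta^0_\alpha\delta^0_\beta-g^{00}g_{\alpha\beta}\big)$; in particular for the background ($g=m$) the only nonzero components are $\Gamma^0_{00}(\wt g_b)=\tfrac{p}{1-p}\tfrac1\tau$ and $\Gamma^0_{ij}(\wt g_b)=p(1-p)\tfrac1\tau\delta_{ij}$. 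Subtracting, $\Gamma^0_{\alpha\beta}(\wt g)=\Gamma^0_{\alpha\beta}(\wt g_b)+\mc E^0_{\alpha\beta}$ with $\mc E^0_{\alpha\beta}=\Gamma^0_{\alpha\beta}(g)+\tfrac{p}{1-p}\tfrac1\tau\big(m^{00}m_{\alpha\beta}-g^{00}g_{\alpha\beta}\big)$. The estimates of Section~\ref{sec:bootbound} (the bound on $\pa g_{\alpha\beta}$ from Lemma~\ref{sec:bootbound.lem:loweredmetests}, the $L^\infty$-control of $h^{00}$ and of $g_{\alpha\beta}-m_{\alpha\beta}$ via Sobolev embedding and the bootstrap bound \eqref{sec:globex.eq:bootstrap}) give $\|\mc E^0_{\alpha\beta}\|_{L^\infty}\le C\ve\,\tau^{-1}$ on $[\tau_0,\infty)$.

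\emph{Step 2: the ODE inequality.} Insert this into $\ddot\gamma^0=-\Gamma^0_{\alpha\beta}(\wt g)\dot\gamma^\alpha\dot\gamma^\beta$. By Lemma~\ref{sec:completethm.lem:causalbounds} we have $\dot\gamma^0>0$ and $\delta_{ij}\dot\gamma^i\dot\gamma^j\le\tfrac{2}{(1-p)^2}(\dot\gamma^0)^2$, hence $|\dot\gamma^i|\le C\dot\gamma^0$; thus the cross term $2\Gamma^0_{0i}(\wt g)\dot\gamma^0\dot\gamma^i=2\mc E^0_{0i}\dot\gamma^0\dot\gamma^i$ and all $\mc E^0_{\alpha\beta}$-contributions are $O\big(\ve\,\tau^{-1}(\dot\gamma^0)^2\big)$, the background term $\Gamma^0_{00}(\wt g_b)(\dot\gamma^0)^2=\tfrac{p}{1-p}\tfrac1\tau(\dot\gamma^0)^2$ is nonnegative, and $0\le\Gamma^0_{ij}(\wt g_b)\dot\gamma^i\dot\gamma^j\le\tfrac{2p}{1-p}\tfrac1\tau(\dot\gamma^0)^2$. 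Collecting terms and taking $\ve$ small enough that the $C\ve$-errors are absorbed into the structural constants yields
\[
-\frac{4p}{1-p}\,\frac{(\dot\gamma^0)^2}{\tau}\;\le\;\ddot\gamma^0\;\le\;-\frac{p}{2(1-p)}\,\frac{(\dot\gamma^0)^2}{\tau}.
\]
Since $\dot\gamma^0>0$, the map $s\mapsto\tau=\gamma^0(s)$ is a strictly increasing $C^1$ change of parameter; writing $\dot\gamma^0=F(\tau)$ one has $\ddot\gamma^0=F'F$, so dividing by $(\dot\gamma^0)^2=F^2>0$ gives $-\tfrac{4p}{1-p}\tfrac1\tau\le\tfrac{d}{d\tau}\log F\le-\tfrac{p}{2(1-p)}\tfrac1\tau$, and integrating from $\tau_0=\gamma^0(s_0)$ produces
\[
\dot\gamma^0(s_0)\Big(\tfrac{\gamma^0(s)}{\tau_0}\Big)^{-\frac{4p}{1-p}}\;\le\;\dot\gamma^0(s)\;\le\;\dot\gamma^0(s_0)\Big(\tfrac{\gamma^0(s)}{\tau_0}\Big)^{-\frac{p}{2(1-p)}},
\]
which is the claim, with $c_1,c_2$ depending on $\dot\gamma^0(s_0)$ and $\tau_0$ (shrinking $c_1$ and enlarging $c_2$ if necessary to arrange $c_2>c_1$).

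\emph{Main obstacle.} The genuine difficulty is that the correction $\mc E^0_{\alpha\beta}$ to the background Christoffel symbols decays only like $\tau^{-1}$, i.e. at the same rate as the leading terms, so the sign of $\ddot\gamma^0$ — which is exactly what drives the monotonicity of $\log F$ used in the integration — is preserved only because $\ve$ is small relative to the fixed constants $\tfrac{p}{1-p}$ and $p(1-p)$. One must also avoid controlling $\dot\gamma^i$ through any limit of the spatial components of $\gamma$ (which need not exist for these slowly expanding spacetimes), and route the estimate entirely through the algebraic bound \eqref{sec:completethm.eq:causalbounds}.
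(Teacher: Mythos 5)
Your proof is correct and follows essentially the same route as the paper: decompose $\Gamma^0_{\alpha\beta}(\wt g)$ into the background FLRW Christoffel symbols $\Gamma^0_{00}(\wt g_b)=\tfrac{p}{1-p}\tfrac1\tau$, $\Gamma^0_{ij}(\wt g_b)=p(1-p)\tfrac1\tau\delta_{ij}$ plus an $O(\ve\tau^{-1})$ error, use the causality bound of Lemma~\ref{sec:completethm.lem:causalbounds} to close the inequality in $(\dot\gamma^0)^2$, absorb the error for $\ve$ small, and integrate the resulting two-sided inequality for $\ddot\gamma^0/(\dot\gamma^0)^2$ (or equivalently $\ddot\gamma^0/\dot\gamma^0$) in the $\tau=\gamma^0$ variable. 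Your bookkeeping (dividing by $(\dot\gamma^0)^2$ and reparametrising) is a minor variant of the paper's (dividing by $\dot\gamma^0$), and your $C\ve\tau^{-1}$ estimate for $\mc E^0_{\alpha\beta}$ is in fact the safer one to quote, since the horizontal lowered components $g_{ij}-m_{ij}$ carry no extra $\tau$-decay in $L^\infty$.
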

\begin{proof}
    Consider the $0$-component of the geodesic equations
    \begin{equation}\label{sec:completethm.eq:geodesicproof1}
        \ddot{\gamma}^0(s) + \Gamma_{\alpha\beta}^0\big(\wt{g}(\gamma(s))\big)\dot{\gamma}^\alpha(s)\dot{\gamma}^\beta(s) = 0.
    \end{equation}
    For convenience we write $\Gamma_{\alpha\beta}^0\big(\wt{g}(\gamma(s))\big) = \Gamma_{\alpha\beta}^0(\wt{g})$, and compute
    \begin{equation}
        \Gamma_{\alpha\beta}^0(\wt{g}) = \Gamma_{\alpha\beta}^0(g) + \frac{2p}{1-p}\frac{1}{\tau}\delta_\alpha^0\delta_\beta^0 - \frac{p}{1-p}\frac{1}{\tau}g^{00}g_{\alpha\beta}.
    \end{equation}
    As a consequence of the global bound \eqref{sec:globex.eq:bootstrap}, the Christoffel symbols all obey the bound $|\Gamma_{\mu\nu}^\lambda(g)| \leq C\ve \tau^{-(q-\delta)}$. Moreover, we have $|g_{\mu\nu} - m_{\mu\nu}| \leq C\ve \tau^{-(q-1-\delta)}$, and $|g^{\mu\nu} - m^{\mu\nu}| \leq C\ve \tau^{-(q-1-\delta)}$. Since $g^{00}g_{\alpha\beta} =\delta_\alpha^0 \delta_\beta^0 - (1-p)^2 \delta_\alpha^a\delta_\beta^b \delta_{ab} + O(g-m)$, we have the bounds
    \begin{equation}
        |\Gamma_{00}^0(\wt{g})- \frac{p}{1-p}\frac{1}{\tau}\big| \leq C\ve\tau^{-(q-\delta)},\qquad
        \big|\Gamma_{0i}^0(\wt{g})| \leq C\ve\tau^{-(q-\delta)},\qquad
    \big|\Gamma_{ij}^0(\wt{g}) - \frac{p(1-p)}{\tau}\delta_{ij}\big| \leq C\ve\tau^{-(q-\delta)}\delta_{ij}.
    \end{equation}
    The curve $\gamma$ is future-causal, so Lemma~\ref{sec:completethm.lem:causalbounds} implies the bounds
    \begin{equation}
        0 < \delta_{ab}\dot{\gamma}^a\dot{\gamma}^b \leq \frac{2}{(1-p)^2}(\dot{\gamma}^0)^2.
    \end{equation}
    Since $q-1-\delta > 0$ and $\tau = \gamma^0$, we may bound the error term like
    \begin{equation}
        \tau^{-(q-\delta)}\big((\dot{\gamma}^0)^2 + \delta_{ab}\dot{\gamma}^a\dot{\gamma}^b\big) \leq \frac{3(\dot{\gamma}^0)^2}{(\gamma^0)^{q-\delta}} \leq \frac{3}{(\gamma^0(s_0))^{q-1-\delta}}\frac{(\dot{\gamma}^0(s))^2}{\gamma^0(s)},
    \end{equation}
    and therefore (as long as $\ve$ is sufficiently small) we may safely absorb the error term into the principle terms, which implies the upper and lower bound
    \begin{subequations}
        \begin{gather}
            \Gamma_{\alpha\beta}^0(\wt{g})\dot{\gamma}^\alpha\dot{\gamma}^\beta \leq \frac{p}{1-p}\frac{1}{\tau}(\dot{\gamma}^0)^2 + \frac{p(1-p)}{\tau}\delta_{ab}\dot{\gamma}^a\dot{\gamma}^b  + C\ve\tau^{-(q-\delta)}\big((\dot{\gamma}^0)^2+\delta_{ab}\dot{\gamma}^a\dot{\gamma}^b\big)\leq \frac{4p}{1-p}\frac{1}{\tau}(\dot{\gamma}^0)^2,\\
            \Gamma_{\alpha\beta}^0(\wt{g})\dot{\gamma}^\alpha\dot{\gamma}^\beta \geq \frac{p}{1-p}\frac{1}{\tau}(\dot{\gamma}^0)^2 + \frac{p(1-p)}{\tau}\delta_{ab}\dot{\gamma}^a\dot{\gamma}^b  - C\ve\tau^{-(q-\delta)}\big((\dot{\gamma}^0)^2+\delta_{ab}\dot{\gamma}^a\dot{\gamma}^b\big)\geq \frac{p}{2(1-p)}\frac{1}{\tau}(\dot{\gamma}^0)^2
        \end{gather}
    \end{subequations}
    Plugging this into the geodesic equation \eqref{sec:completethm.eq:geodesicproof1} we obtain the bounds
    \begin{equation}\label{sec:completethm.eq:geodesicproof2}
        -\frac{4p}{1-p}\frac{1}{\tau}(\dot{\gamma}^0)^2\leq \ddot{\gamma}^0 \leq  - \frac{p}{2(1-p)}\frac{1}{\tau}(\dot{\gamma}^0)^2.
    \end{equation}
    We know that $\dot{\gamma}^0(s) > 0$ for all $s \in [0,s_{\max})$, and so we divide through by $\dot{\gamma}^0$ and integrate \eqref{sec:completethm.eq:geodesicproof2} over $[s_0,s]$. Keeping in mind that $\tau = \gamma^0(s)$, this yields
    \begin{equation}
        -\frac{4p}{1-p}\log\Big[\frac{\gamma^0(s)}{\gamma^0(s_0)}\Big] \leq \log \Big[\frac{\dot{\gamma}^0(s)}{\dot{\gamma}^0(s_0)}\Big] \leq -\frac{p}{2(1-p)}\log \Big[\frac{\gamma^0(s)}{\gamma^0(s_0)}\Big].
    \end{equation}
    Thus we have
    \begin{equation}
        \frac{c_1}{[\gamma^0(s)]^{4p/(1-p)}} \leq \dot{\gamma}^0(s) \leq \frac{c_2}{[\gamma^0(s)]^{p/(2(1-p))}},
    \end{equation}
    where $c_1,c_2$ are the positive constants 
    \begin{equation}
        c_1 = \dot{\gamma}^0(s_0)[\gamma_0(s_0)]^{4p/(1-p)},\qquad c_2 = \dot{\gamma}^0(s_0)[\gamma_0(s_0)]^{p/(2(1-p))}.
    \end{equation}.
\end{proof}
Finally we prove completeness of future-causal inextendible geodesics. This completes the proof of Theorem~\ref{sec:globex.thm:globalexistence}.
\begin{proposition}[Completeness of future-causal inextendible geodesics]\label{sec:completethm.prop:complete}
    Let $(\mc{M},\wt{g})$ be a Lorentzian manifold constructed in Theorem~\ref{sec:globex.thm:globalexistence}, and let $\gamma:[s_0,s_{\max}) \ra \mc{M}$ be a future-causal inextendible geodesic. Then $\lim_{s\nearrow s_{\max}}\gamma^0(s) = \infty$. Moreover, $s_{\max} = \infty$, hence $\gamma$ is complete.
\end{proposition}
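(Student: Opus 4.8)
The plan is to carry out the strategy announced in the remark before Lemma~\ref{sec:completethm.lem:geodesicbound}: monitor the time component $\gamma^0$ along $\gamma$ using the two-sided bound on $\dot{\gamma}^0$ from Lemma~\ref{sec:completethm.lem:geodesicbound} together with the causality inequality \eqref{sec:completethm.eq:causalbounds} of Lemma~\ref{sec:completethm.lem:causalbounds}. To begin, since $\dot{\gamma}^0(s) > 0$ for all $s \in [s_0,s_{\max})$ by Lemma~\ref{sec:completethm.lem:causalbounds}, the function $\gamma^0$ is strictly increasing on $[s_0,s_{\max})$, so the limit $\tau_* := \lim_{s\nearrow s_{\max}} \gamma^0(s)$ exists in $(\tau_0,\infty]$; in fact $\tau_* > \gamma^0(s_0) \geq \tau_0$.

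The first step is to prove $\tau_* = \infty$, by contradiction. Assume $\tau_* < \infty$; then $\gamma^0$ stays in the compact interval $[\gamma^0(s_0),\tau_*]$, on which $\tau \geq \tau_0 > 0$. By Lemma~\ref{sec:completethm.lem:geodesicbound}, $\dot{\gamma}^0$ is then bounded above and below by strictly positive constants, and by \eqref{sec:completethm.eq:causalbounds} the spatial velocities $\dot{\gamma}^a$ are bounded as well. Now distinguish two cases. If $s_{\max} = \infty$, the uniform lower bound $\dot{\gamma}^0 \geq c > 0$ gives $\gamma^0(s) \geq \gamma^0(s_0) + c\,(s - s_0) \to \infty$, contradicting $\tau_* < \infty$. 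If $s_{\max} < \infty$, then $\dot{\gamma}$ is bounded on the finite interval $[s_0, s_{\max})$, so $\gamma$ is Lipschitz there and $\lim_{s\nearrow s_{\max}} \gamma(s)$ exists, contradicting that $\gamma$ is future-inextendible. Either way we obtain a contradiction, so $\tau_* = \infty$, which is the first assertion of the proposition.

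It remains to show $s_{\max} = \infty$, and here I would use the upper half of Lemma~\ref{sec:completethm.lem:geodesicbound}. With $a := p/(2(1-p)) > 0$ and $c_2$ the constant from that lemma, we have $[\gamma^0(s)]^{a}\,\dot{\gamma}^0(s) \leq c_2$ for all $s$, and the left side is $\tfrac{\de}{\de s}\tfrac{[\gamma^0(s)]^{a+1}}{a+1}$; integrating over $[s_0, s]$ yields $\tfrac{1}{a+1}\big([\gamma^0(s)]^{a+1} - [\gamma^0(s_0)]^{a+1}\big) \leq c_2\,(s - s_0)$. Since $\gamma^0(s) \to \infty$ as $s \nearrow s_{\max}$ by the previous step, the left-hand side tends to $+\infty$, forcing $s - s_0 \to \infty$; hence $s_{\max} = \infty$ and $\gamma$ is future-complete. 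The substance of the argument — and the place where the decelerated setting genuinely differs from the accelerated one — is that both sides of the inequality in Lemma~\ref{sec:completethm.lem:geodesicbound} are needed: the lower bound on $\dot{\gamma}^0$ prevents $\gamma^0$ from converging to a finite value (compensating for the fact that the spatial components $\gamma^i$ need not converge, unlike in the accelerated case), while the upper bound prevents $\gamma^0$ from reaching $+\infty$ in finite affine parameter. Everything else is routine manipulation of inequalities already established.
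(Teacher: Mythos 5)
Your proof is correct and follows essentially the same approach as the paper: show $\gamma^0 \to \infty$ by contradiction in the two cases $s_{\max} = \infty$ and $s_{\max} < \infty$ using, respectively, the lower and upper bounds on $\dot\gamma^0$ from Lemma~\ref{sec:completethm.lem:geodesicbound} (together with Lemma~\ref{sec:completethm.lem:causalbounds} for the spatial components), then integrate the upper bound on $\dot\gamma^0$ to rule out blow-up of $\gamma^0$ in finite affine parameter. The only cosmetic differences are that you deduce extendibility directly from Lipschitz continuity of the whole curve rather than isolating a non-convergent spatial component, and you integrate the sharper power-law upper bound rather than replacing it by a constant, neither of which changes the substance.
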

\begin{proof}
    First we show that $\lim_{s \nearrow s_{\max}} \gamma^0(s) = \infty$. As $\dot{\gamma}^0(s) > 0$ for all $s \in [s_0,s_{\max})$, $\gamma^0$ is monotonically increasing, and so the only other possibility is that $\gamma^0$ has a finite limit which is approached from below. Suppose $\lim_{s \nearrow s_{\max}}\gamma^0(s) = L < \infty$. Then we have the two cases $s_{\max} = \infty$ or $s_{\max} < \infty$. If $s_{\max} = \infty$, let $s_n$ be a sequence $s_n \ra \infty$. For all $s \in [s_0,s_{\max})$, $\gamma^0(s) \leq L$, and so the bound from Lemma~\ref{sec:completethm.lem:geodesicbound} implies
    \begin{equation}
        \dot{\gamma}^0(s_n) \geq  \frac{c_1}{L^{4p/(1-p)}}.
    \end{equation}
    Integrating this we obtain the lower bound
    \begin{equation}
        \gamma^0(s_n) \geq \gamma^0(s_0) + c(s_n-s_0),
    \end{equation}
    which is unbounded. For the second case $s_{\max} < \infty$, we note that $\gamma^0$ has a limit, and therefore $\lim_{s \nearrow s_{\max}} \gamma^i(s)$ does not exist for one of $i = 1,2,3$. $\gamma$ is future causal, so $\dot{\gamma}^0$ controls all other components of $\dot{\gamma}$, and therefore by Lemma~\ref{sec:completethm.lem:geodesicbound} we have
    \begin{equation}
        |\dot{\gamma}^i(s)|\leq \frac{c_2}{(\gamma^0(s))^{p/(2p(1-p))}},\qquad i=1,2,3.
    \end{equation}
    Let $s_n$ be a sequence such that $s_n \nearrow s_{\max} <\infty$. Then there exists $N \in \NN$ such that $\gamma^0(s_n) \geq L/2$ for all $n \geq N$. The above bound then implies that
    \begin{equation}
        |\dot{\gamma}^i(s_n)| \leq C
    \end{equation}
    for all $n \geq N$, and so integrating over $[s_m,s_n]$ implies
    \begin{equation}
        |\gamma^i(s_n) - \gamma^i(s_m)| \leq C|s_m - s_n|,
    \end{equation}
    implying that $\gamma^i(s_n)$ is a Cauchy, therefore convergent sequence. But this is true for $i=1,2,3$, contradicting the fact that $\lim_{s\nearrow s_{\max}} \gamma(s)$ does not exist. Hence $\lim_{s\nearrow s_{\max}}\gamma^0(s) = \infty$.

    Next we prove that $s_{\max} = \infty$. Once again we use Lemma~\ref{sec:completethm.lem:geodesicbound}, which by the monotonicity of $\gamma^0$ implies
    \begin{equation}
        \dot{\gamma}^0(s) \leq \frac{c_2}{(\gamma^0(s))^{p/(2(1-p))}} \leq \frac{c_2}{(\gamma^0(s_0))^{p/(2(1-p))}}.
    \end{equation}
    Thus we have
    \begin{equation}
        \gamma^0(s) -\gamma^0(s_0) = \int_{s_0}^s \dot{\gamma}^0(\sigma)\de \sigma \leq C(s-s_0),
    \end{equation}
    and hence 
    \begin{equation}
        s_{\max} \geq \frac{1}{C}\big(\lim_{s \nearrow s_{\max}} \gamma^0(s) - 1\big) = \infty.
    \end{equation}
    This completes the proof of geodesic completeness, moreover of Part (II), of Theorem~\ref{sec:globex.thm:globalexistence}.
\end{proof}
\subsection{Future asymptotics}
In the following proposition we derive the asymptotics contained in part (IV) of Theorem~\ref{sec:globex.thm:globalexistence}, thus completing its proof.
\begin{proposition}[Asymptotics of perturbed solutions to the ENSF system]\label{sec:completethm.prop:asymp}
    Let $([\tau_0,\infty)\times\TT^3,\wt{g},\phi)$ be a solution to the Einstein-nonlinear scalar field system such that $S_K(\tau) \leq \ve$ for all $\tau \in [\tau_0,\infty)$. Then $\wt{g}^{00}$, $\wt{g}^{0i}$, $\phi$ satisfy the inequalities \eqref{sec:globex.eq:asymplapse}, \eqref{sec:globex.eq:asympshift}, and \eqref{sec:globex.eq:asympscalar} respectively. Moreover, there exists a symmetric, positive definite $3\times3$ matrix $(g_\infty^{ij})_{i,j=1,2,3}$ such that $|g_{\infty}^{ij} - \delta^{ij}| \leq C\ve$, and $\wt{g}^{ij}$, $g_\infty^{ij}$ satisfy the inequality \eqref{sec:globex.eq:asymphori}. 
\end{proposition}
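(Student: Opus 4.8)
The plan is to deduce everything from the global bound $S_K(\tau)\leq\ve$ together with the sharper bounds on the spatial averages (Proposition~\ref{sec:avgsys.prop:avgest}) and the oscillatory energies (Proposition~\ref{sec:oscsys.prop:oscest}), after unwinding the relation between the physical and conformal inverse metrics. Concretely, since $g^{\mu\nu}=\tau^{2p/(1-p)}\wt{g}^{\mu\nu}$, $m^{00}=\tau^{2p/(1-p)}\wt{g}_b^{00}$, $\wt{g}_b^{0i}=0$, $m^{ij}=\delta^{ij}$ and $\phi-\phi_b=\psi$, one has $\tau^{2p/(1-p)}(\wt{g}^{00}-\wt{g}_b^{00})=h^{00}$, $\tau^{2p/(1-p)}\wt{g}^{0i}=h^{0i}$ and $\tau^{2p/(1-p)}\wt{g}^{ij}=\delta^{ij}+h^{ij}$. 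Recalling that $q-1=p/(1-p)-2$, $q-\delta=p/(1-p)-1-\delta$ and $q-1+\delta=p/(1-p)-2+\delta$, the proposition reduces to proving $W^{K-1,\infty}$-decay of $h^{00}$, $h^{0i}$, $\psi$ at rates $q-1$, $q-\delta$, $q-1+\delta$, and to producing a constant symmetric positive-definite matrix $(g_\infty^{ij})$, close to the identity, with $\|\delta^{ij}+h^{ij}(\tau)-g_\infty^{ij}\|_{W^{K-2,\infty}}\lesssim\ve\tau^{-(q-1)}$.

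For the first three quantities I would simply invoke the Sobolev embedding $H^{K+1}(\TT^3)\hookrightarrow W^{K-1,\infty}(\TT^3)$ (Lemma~\ref{sec:prelims.lem:sobolev} with regularity $K+1$) and $\|u\|_{H^{K+1}}\leq\|u\|_{L^2}+\|\pa u\|_{H^K}$. Reading off the weights in the definition of $S_K$ then gives $\|h^{00}\|_{W^{K-1,\infty}}\lesssim\ve(\tau^{-(q-1)}+\tau^{-(q-\delta)})$, $\|h^{0i}\|_{W^{K-1,\infty}}\lesssim\ve\tau^{-(q-\delta)}$ and $\|\psi\|_{W^{K-1,\infty}}\lesssim\ve(\tau^{-(q-1+\delta)}+\tau^{-q})$; since $0<\delta<1<q$ the slowest rate dominates in each line, which is precisely \eqref{sec:globex.eq:asymplapse}, \eqref{sec:globex.eq:asympshift}, \eqref{sec:globex.eq:asympscalar}.

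The horizontal metric is the one part that requires an argument, and the key is to keep the decomposition $h^{ij}=h_{\av}^{ij}+h_{\osc}^{ij}$. For the oscillatory piece, the coercivity inequality \eqref{sec:energyests.eq:energyhocoercivity} applied to $E^{ij}=E_{q+1,1+\delta}^{(K)}[h_{\osc}^{ij}]$ and Proposition~\ref{sec:oscsys.prop:oscest} yield $\|h_{\osc}^{ij}\|_{H^K}\lesssim\ve\tau^{-(q-\delta)}$, hence $\|h_{\osc}^{ij}\|_{W^{K-2,\infty}}\lesssim\ve\tau^{-(q-\delta)}\to0$. For the averaged piece, the weight $\tau^q$ on $\pa_\tau h_{\av}^{ij}$ in $S_{\av}$ together with Proposition~\ref{sec:avgsys.prop:avgest} gives $|\pa_\tau h_{\av}^{ij}(\tau)|\lesssim\ve\tau^{-q}$; since $q>1$ the limit $\ell^{ij}:=\lim_{\tau\to\infty}h_{\av}^{ij}(\tau)$ exists with $|h_{\av}^{ij}(\tau)-\ell^{ij}|\leq\int_\tau^\infty|\pa_\tau h_{\av}^{ij}(s)|\,ds\lesssim\ve\tau^{-(q-1)}$. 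I would then set $g_\infty^{ij}:=\delta^{ij}+\ell^{ij}$; adding the two estimates and using $q-\delta>q-1$ gives $\|\delta^{ij}+h^{ij}(\tau)-g_\infty^{ij}\|_{W^{K-2,\infty}}\lesssim\ve\tau^{-(q-1)}$, i.e.\ \eqref{sec:globex.eq:asymphori}. Symmetry of $(g_\infty^{ij})$ is inherited from $h^{ij}=h^{ji}$; the bound $|g_\infty^{ij}-\delta^{ij}|=|\ell^{ij}|\leq\sup_\tau C\|h^{ij}(\tau)\|_{L^2}\leq C\ve$ follows from \eqref{sec:avgsys.eq:avgbound} and $S_K\leq\ve$; and for $\ve$ small a symmetric matrix within $C\ve$ of the identity is positive definite.

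The main obstacle --- such as it is --- is obtaining the sharp horizontal rate $\tau^{-(q-1)}$ rather than $\tau^{-(q-1-\delta)}$: integrating the crude bootstrap bound $\|\pa_\tau h^{ij}\|_{H^K}\lesssim\ve\tau^{-(q-\delta)}$ directly loses a power of $\tau^\delta$, so one must use the better structure of the averaged ODE for $h_{\av}^{ij}$ (which, as explained in Remark~\ref{sec:avgsys.rmk:bottleneck}, decays one power faster and carries all the asymptotic information), while noting that the oscillatory remainder --- though it decays only at rate $q-\delta$ --- still decays strictly faster than $\tau^{-(q-1)}$ and so contributes neither to the limit nor to the leading error. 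Everything else is a routine application of the Sobolev embeddings of Section~\ref{sec:prelims.subsec:sobolev} and the a priori estimates of Sections~\ref{sec:avgsys} and~\ref{sec:oscsys}.
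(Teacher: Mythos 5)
Your treatment of the lapse, shift, and scalar field is essentially identical to the paper's: read off the weights in $S_K$, combine $\|u\|_{H^{K+1}}\lesssim\|u\|_{L^2}+\|\pa u\|_{H^K}$ with the Sobolev embedding, and unwind the conformal rescaling. For the horizontal components, however, your argument is genuinely different, and in fact sharper. The paper proves convergence of $\|g^{ij}(\tau)\|_{W^{K-2,\infty}}$ and then derives the decay rate by a second application of the fundamental theorem of calculus using only the bootstrap bound $\|\pa_\tau g^{ij}\|_{W^{K-2,\infty}}\lesssim\ve\tau^{-(q-\delta)}$, which after integration yields only $\tau^{-(q-1-\delta)}$ --- strictly slower than the rate $\tau^{-(q-1)}=\tau^{-(p/(1-p)-2)}$ asserted in \eqref{sec:globex.eq:asymphori}. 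Your argument avoids this small loss by keeping the decomposition $h^{ij}=h^{ij}_{\av}+h^{ij}_{\osc}$ all the way through: Proposition~\ref{sec:avgsys.prop:avgest} supplies the stronger bound $|\pa_\tau h^{ij}_{\av}|\lesssim\ve^2\tau^{-q}$, whose integral tail is $O(\tau^{-(q-1)})$, while the oscillatory piece is bounded pointwise at rate $\tau^{-(q-\delta)}$ via Proposition~\ref{sec:oscsys.prop:oscest} and the coercivity \eqref{sec:energyests.eq:energyhocoercivity}, which (since $0<\delta<1$) decays faster than $\tau^{-(q-1)}$ and so never enters the leading error. What your route buys is precisely the $\delta$-free rate claimed in the theorem, at the cost of invoking the averaged and oscillatory estimates a second time instead of the crude $S_K$ bound; the paper's route is slightly shorter but, as written, only delivers the weaker exponent $q-1-\delta$.
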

\begin{proof}
    First, we note that the inequality $S_K(\tau) \leq \ve$ holds for the global solutions constructed in part (II) of Theorem~\ref{sec:globex.thm:globalexistence}. By this inequality and the Sobolev embedding theorem, we have
    \begin{gather}
        \tau^{q-1}\|g^{00} - m^{00}\|_{W^{K-1,\infty}} \leq C\ve,\qquad 
        \tau^{q-\delta}\sum_{a=1}^3 \|g^{0a}\|_{W^{K-1,\infty}} \leq C\ve,\qquad 
        \tau^{q-1+\delta}\|\phi - \phi_b\|_{W^{K-1,\infty}}\leq C\ve,\\
        \sum_{a,b=1}^3 \Big(\|g^{ab} - \delta^{ab}\|_{W^{K-1,\infty}}  + \tau^{q-\delta}\|\pa g^{ab}\|_{W^{K-1,\infty}} \Big)\leq C\ve.\label{sec:completethm.eq:horibound}
    \end{gather}
    The inequalities \eqref{sec:globex.eq:asymplapse}, \eqref{sec:globex.eq:asympshift}, and \eqref{sec:globex.eq:asympscalar} then follow  from the first three inequalities respectively, since $q = \frac{p}{1-p}-1$, and $\wt{g}^{\mu\nu} = \tau^{-2p/(1-p)}g^{\mu\nu}$. For the horizontal metric components, we note that the Poincar\'e inequality and the Sobolev inequality implies
    \begin{equation}
        \|g_{\osc}^{ij}\|_{W^{K-2,\infty}} \leq C\|g_\osc^{ij}\|_{H^K} \leq C\|\ol{\pa} g^{ij}\|_{H^K} \leq C\ve \tau^{-(q-\delta)}, 
    \end{equation}
    and therefore
    \begin{equation}\label{sec:completethm.eq:horilim1}
        \lim_{\tau \ra \infty}\|g^{ij}(\tau) - g_{\av}^{ij}(\tau)\|_{W^{K-2,\infty}} = 0.
    \end{equation}
    Additionally by the fundamental theorem of calculus, the Sobolev inequality and the bound $S_K(\tau) \leq \ve$, it holds for all $\tau_2 > \tau_1 \geq \tau_0$ that
    \begin{align}
        \big|\|g^{ij}(\tau_2)\|_{W^{K-2,\infty}} - \|g^{ij}(\tau_1)\|_{W^{K-2,\infty}}\big| &\leq C\int_{\tau_1}^{\tau_2}\|\pa_\tau g^{ij}(\tau)\|_{W^{K-2,\infty}}\de \tau\nonumber\\
        &\leq C\ve\int_{\tau_1}^{\tau_2}\tau^{-(q-\delta)} \leq C\ve\tau_1^{-(q-1-\delta)},
    \end{align}
    and so $\|g^{ij}(\tau)\|_{W^{K-2,\infty}}$ converges to some value, which by \eqref{sec:completethm.eq:horilim1} must be 
    \begin{equation}
        \lim_{\tau\ra\infty}\|g^{ij}(\tau)\|_{W^{K-2,\infty}} = \lim_{\tau \ra \infty}|g_\av^{ij}(\tau)| =: g_\infty^{ij}.
    \end{equation}
    By \eqref{sec:completethm.eq:horibound} we have
    \begin{align}
        |g_\infty^{ij}-\delta^{ij}| &\leq \limsup_{\tau\ra\infty}\|g_{\infty}^{ij} - g^{ij}(\tau)\|_{W^{K-2,\infty}}+\limsup_{\tau\ra\infty}\|g^{ij}(\tau) - \delta^{ij}\|_{W^{K-2,\infty}}\nonumber\\
        & = \limsup_{\tau\ra\infty}\|g^{ij}(\tau) - \delta^{ij}\|_{W^{K-2,\infty}} \leq C\ve.
    \end{align}
    This also implies that $(g_\infty^{ij})_{i,j=1,2,3}$ is positive definite (provided $\ve$ is sufficiently small). The bound \eqref{sec:globex.eq:asymphori} follows from a second application of the fundamental theorem of calculus:
    \begin{equation}
        \|\tau^{2p/(1-p)}\wt{g}^{ij}(\tau) - g_{\infty}^{ij}\|_{W^{K-2,\infty}} = \|g^{ij}(\tau) - g_{\infty}^{ij}\|_{W^{K-2,\infty}}\leq \int_{\tau}^\infty\|\pa_\tau g^{ij}\|_{W^{K-2,\infty}} \leq C\ve \tau^{-(p/(1-p)-2-\delta)},
    \end{equation}
    and we are done.
\end{proof}
\begin{remark}[Higher-order asymptotics]
    We derive only leading order asymptotics, which are determined primarily from the boundedness of the norm $S_K(\tau)$ and the weights in $\tau$ contained therein. We do not derive higher-order asymptotics, (such as asymptotic expansions) because of the oscillatory nature of the higher derivatives, which ultimately stems from the setting of decelerated expansion.

    The oscillating remainders $(h_{\osc}^{\mu\nu},\psi_{\osc})$ have a leading-order decay due to spacetime expansion, but once rescaled are purely oscillatory. For example, $|h_{\osc}^{00}| =  O(\tau^{-(q-\delta)})$, so take $v = \tau^{q-\delta} \psi_{\osc}$. By the arguments in Sections~\ref{sec:energyests} and~\ref{sec:oscsys}, the energy of $v$ is bounded (and small), but $v$ is oscillatory in nature, and does not possess an asymptotic expansion. This is analogous to the flat homogeneous wave equation $\square u = 0$ on $\RR\times\TT^3$. The remainder $u_{\osc}$ has finite energy, but is an infinite sum of the oscillating modes $\exp(ik\cdot x\pm i|k|t)$, $k \in \ZZ^3\bs\{0\}$.

    We contrast this with the regime of accelerated expansion, where one can derive asymptotic expansions of solutions to the Einstein equations in various settings (see for example \cite{HinVas:KdSexpandingstab:24}). This is possible to do because of the strong localisation that occurs within the spacetimes. Points in spacetime interact with a smaller and smaller portion of the spatial geometry over time, resulting in perturbations which ``smooth out'' in time (but not necessarily in space).
\end{remark}
\appendix
\section{Proof of Sobolev composition lemma}\label{app:sobolevcomp}
\begin{proof}[Proof of Lemma~\ref{sec:prelims.lem:sobolevcomp}]
    We prove the case where $S \subset \RR$, the general case follows by the same argument. We Taylor expand $F$ about $s = 0$ to order $l$:
    \begin{equation}
        |F(s)| = \Big|F(s) - \sum_{j=0}^{l-1}\frac{F^{(j)}(0)}{j!}s^j\Big| \leq C\sup_{S}|F^{(l)}|\cdot|s|^{l} \leq C|s|^{l}.
    \end{equation}
    Using the Sobolev product inequality from Lemma~\ref{sec:prelims.lem:sobolevproduct}, this implies the $L^2$ bound
    \begin{equation}
        \|F\circ u\|_{L^2} \leq C\|u^{l}\|_{L^2} \leq C\|u\|_{H^K}^{l}.
    \end{equation}
    For the higher derivatives, we note that for all $1 \leq |I| \leq K$, the quantity $\ol{\pa}{}^I(F \circ u)$ is equal to a sum of terms of the form
    \begin{equation}
        (\ol{\pa}{}^{J_1}u) \cdots (\ol{\pa}{}^{J_i}u) (F^{(i)}\circ u),
    \end{equation}
    where $1 \leq i \leq |I|$, and $J_1,\dots,J_i$ are multiindices with $|J_1| + \dots |J_i| = |I|$. If $i \leq l-1$ then we Taylor expand $F^{(i)}$ about $s = 0$ to order $l-i$:
    \begin{equation}
        |F^{(i)}(s)| = \Big|F^{(i)}(s) - \sum_{j=0}^{l-i-1}\frac{F^{(i+j)}(0)}{j!}s^j\Big| \leq C\sup_{S}|F^{(l)}|\cdot|s|^{l-i} \leq C|s|^{l-i},
    \end{equation}
    and so
    \begin{equation}
        \|(\ol{\pa}{}^{J_1}u) \cdots (\ol{\pa}{}^{J_i}u) (F^{(i)}\circ u)\|_{L^2} \leq C\|(\ol{\pa}{}^{J_1}u) \cdots (\ol{\pa}{}^{J_i}u) u^{l-i}\|_{L^2}.
    \end{equation}
    Since $K \geq 3$, only one of the factors $\ol{\pa}{}^{J_1}u,\dots,\ol{\pa}{}^{J_i}u$ have more than $K-2$ derivatives, we call this $J_{i_0}$. It follows from Cauchy-Schwartz and the Sobolev embedding from Lemma~\ref{sec:prelims.lem:sobolev} that
    \begin{align}
        C\|(\ol{\pa}{}^{J_1}u) \cdots (\ol{\pa}{}^{J_i}u) u^{l-i}\|_{L^2} &\leq C\|u\|_{L^\infty}^{l-i}\|\ol{\pa}{}^{J_{i_0}}u\|_{L^{2}}\prod_{1 \leq i' \leq i,i'\neq i_0}\|\ol{\pa}{}^{J_{i'}}u\|_{L^{\infty}}\nonumber\\
        &\leq C\|u\|_{H^2}^{l-i}\cdot\|u\|_{H^1}\cdot\|u\|_{H^K}^{i-1} \nonumber\\
        &\leq C\|u\|_{H^K}^{l}.
    \end{align}
    If $i \geq l$ then we do not Taylor expand $F^{(i)}$, we just bound
    \begin{equation}
        \|(\ol{\pa}{}^{J_1}u) \cdots (\ol{\pa}{}^{J_i}u) (F^{(i)}\circ u)\|_{L^2} \leq C\|(\ol{\pa}{}^{J_1}u) \cdots (\ol{\pa}{}^{J_i}u)\|_{L^2}.
    \end{equation}
    There are at least $l$ factors of $\ol{\pa}{}^{J_{i'}} u$, and again at most one has $|J_{i'}| \geq K-2$. Thus we bound again by Cauchy-Schwartz and the Sobolev embedding
    \begin{equation}
        \|(\ol{\pa}{}^{J_1}u) \cdots (\ol{\pa}{}^{J_i}u)\|_{L^2} \leq C\|u\|_{H^K}^{l},
    \end{equation}
    and we are done.
\end{proof}
\section{Proofs of Ricci curvature identities}\label{app:ricciids}
In this appendix we give proofs of the Ricci curvature identities in Lemmas~\ref{sec:gaugedeqs.lem:ricciconformal} and~\ref{sec:gaugedeqs.lem:ricciwave}.

\begin{proof}[Proof of Lemma~\ref{sec:gaugedeqs.lem:ricciconformal}]
    Let $g = \Omega^2 \wt{g}$, where $\Omega^2$ is a nonzero scalar function on our Lorentzian manifold. We will denote covariant differentiation with respect to $g$ by $\nabla$.  We begin with the identity
    \begin{equation}\label{app:ricciids.eq:ricciconfproof0}
        \Ric_{\mu\nu}[g] = \Ric_{\mu\nu}[\wt{g}] - 2\nabla_\mu\nabla_\nu (\log \Omega) - g_{\mu\nu}\square_g (\log \Omega) - 2\nabla_\mu (\log \Omega) \nabla_\nu(\log \Omega) + 2g_{\mu\nu}g^{\alpha\beta}\nabla_\alpha (\log\Omega)\nabla_\beta (\log \Omega).
    \end{equation}
    This can be found in for example in equation (D.6) in Appendix D of \cite{Wal:GRbook:84},\footnote{We note that in \cite{Wal:GRbook:84} the roles of the metric and conformal metric are reversed compared to Lemma~\ref{sec:gaugedeqs.lem:ricciconformal} here. This means one must swap $g \leftrightarrow \wt{g}$, and therefore change conformal factor from $\Omega^{2}$ to $\Omega^{-2}$.} and is derived from the Ricci curvature identity
    \begin{equation}\label{app:ricciids.eq:riccifundid}
        \Ric_{\mu\nu}[g] = \pa_\lambda \Gamma_{\mu\nu}^\lambda(g) - \pa_{(\mu}\Gamma_{\nu)\lambda}^\lambda(g) + \Gamma_{\mu\nu}^{\lambda}(g)\Gamma_{\lambda \delta}^{\delta}(g) - \Gamma_{\mu\lambda}^{\delta}(g)\Gamma_{\nu\delta}^{\lambda}(g),
    \end{equation}
    along with the identity that relates the Christoffel symbols of $g$ and $\wt{g}$:
    \begin{equation}
        \Gamma_{\mu\nu}^\lambda(g) - \Gamma_{\mu\nu}^\lambda(\wt{g}) = 2\delta_{(\mu}^\lambda \pa_{\nu)}(\log \Omega) - g_{\mu\nu}g^{\lambda\delta}\pa_\delta(\log \Omega).
    \end{equation}
    We take \eqref{app:ricciids.eq:ricciconfproof0} and raise indices with respect to the \emph{conformal metric} $g$:
    \begin{multline}\label{app:ricciids.eq:ricciconfproof1}
        \Ric^{\mu\nu}[g] = g^{\mu\alpha}g^{\nu\beta}\Ric_{\alpha\beta}[\wt{g}]-2\nabla^\mu\nabla^\nu (\log \Omega) - g^{\mu\nu}\square_g (\log\Omega) \\- 2\nabla^\mu(\log\Omega)\nabla^\nu(\log\Omega) + 2g^{\mu\nu}g^{\alpha\beta}\nabla_\alpha (\log \Omega)\nabla_\beta(\log\Omega).
    \end{multline}
    Then we decompose the terms on the right hand side containing second-order derivatives of $\log \Omega$:
    \begin{subequations}
        \begin{align}
            -2\nabla^\mu\nabla^\nu (\log \Omega) &= -2g^{\mu\alpha}g^{\nu\beta}\big(\pa_\alpha\pa_\beta (\log \Omega) - \Gamma_{\alpha\beta}^\delta(g)\pa_\delta (\log \Omega)\big),\label{app:ricciids.eq:ricciconfproof2}\\
            -g^{\mu\nu}\square_g (\log\Omega) &= -g^{\mu\nu}\big(g^{\alpha\beta}\pa_\alpha\pa_\beta (\log \Omega) - \Gamma^\delta(g)\pa_\delta (\log\Omega)\big),\label{app:ricciids.eq:ricciconfproof3}
        \end{align}
    \end{subequations}
    where the contracted Christoffel symbol $\Gamma^\delta(g) = g^{\alpha\beta}\Gamma_{\alpha\beta}^\delta(g)$. Derivatives of raised and lowered metric components are related by the identity
    \begin{equation}\label{app:ricciids.eq:diffmetidentity}
        g^{\mu\alpha}g^{\nu\beta} \pa_\lambda g_{\alpha\beta} = - \pa_\lambda g^{\mu\nu},
    \end{equation}
    and therefore
    \begin{equation}\label{app:ricciids.eq:ricciconfproof4}
        2g^{\mu\alpha}g^{\nu\beta}\Gamma_{\alpha\beta}^\delta(g) = g^{\mu\alpha}g^{\nu\beta}g^{\delta\lambda}\big(2\pa_{(\alpha} g_{\beta)\lambda} - \pa_\lambda g_{\alpha\beta}\big) = g^{\delta\lambda}\pa_\lambda g^{\mu\nu}- 2g^{\lambda(\mu}\pa_\lambda g^{\nu) \delta}.
    \end{equation}
    Combining the equations \eqref{app:ricciids.eq:ricciconfproof2}, \eqref{app:ricciids.eq:ricciconfproof3}, \eqref{app:ricciids.eq:ricciconfproof4} and plugging them into \eqref{app:ricciids.eq:ricciconfproof1} we obtain
    \begin{align}
        \Ric^{\mu\nu}[g] =&\> g^{\mu\alpha}g^{\nu\beta}\Ric_{\alpha\beta}[\wt{g}]-2g^{\mu\alpha}g^{\nu\beta}\pa_\alpha\pa_\beta (\log \Omega) + g^{\delta\lambda}\pa_\lambda g^{\mu\nu}\pa_\delta (\log \Omega)- 2g^{\lambda(\mu}\pa_\lambda g^{\nu)\delta}\pa_\delta (\log \Omega) \nonumber\\
        &- g^{\mu\nu} g^{\alpha\beta}\pa_\alpha\pa_\beta (\log\Omega) + g^{\mu\nu}\Gamma^{\lambda}(g)\pa_\lambda (\log \Omega)\nonumber\\
        &- 2g^{\mu\alpha}g^{\nu\beta}\pa_\alpha(\log\Omega)\pa_\beta(\log\Omega) + 2g^{\mu\nu}g^{\alpha\beta}\pa_\alpha (\log \Omega)\pa_\beta(\log\Omega).
    \end{align}
    The result now follows after rearranging and factoring terms appropriately.
\end{proof}

\begin{proof}[Proof of Lemma~\ref{sec:gaugedeqs.lem:ricciwave}]
    We begin with the following identity for the \emph{covariant} Ricci tensor:
    \begin{multline}\label{app:ricciids.eq:riccicovarid}
        2\Ric_{\mu\nu}[g] = -\wh{\square}_g g_{\mu\nu} +2g_{\lambda(\mu}\pa_{\nu)}\Gamma^\lambda(g) + \Gamma^{\lambda}(g)\pa_\lambda g_{\mu\nu}\\
        + 2g^{\alpha\beta}g^{\lambda\delta}\pa_\alpha g_{\lambda(\mu}\pa_{\nu_)}g_{\beta\delta} 
        - \frac{1}{2}g^{\alpha\beta}g^{\lambda\delta}\pa_{\mu}g_{\alpha\lambda}\pa_{\nu}g_{\beta\delta}
        + g^{\alpha\beta}g^{\lambda\delta}\pa_\alpha g_{\mu\lambda}\pa_\beta    g_{\nu\delta} - g^{\alpha\beta}g^{\lambda\delta}\pa_\alpha g_{\mu\delta}\pa_\lambda g_{\nu\beta}.
    \end{multline}
    This can be also derived from the Ricci curvature identity \eqref{app:ricciids.eq:riccifundid}; see for example Lemma~3.3 in \cite{HunLuk:Burnett:24}. We use this to derive the analogous expression for the contravariant Ricci tensor $R^{\mu\nu}$. Raising both indices, we note that
    \begin{align}
        -g^{\mu\mu'}g^{\nu\nu'}\wh{\square}_g g_{\mu'\nu'} &= -g^{\alpha\beta}\pa_\alpha (g^{\mu\mu'}g^{\nu\nu'}\pa_\beta g_{\mu'\nu'}) + g^{\alpha\beta} \pa_\alpha (g^{\mu\mu'}g^{\nu\nu'} )\pa_\beta g_{\mu'\nu'}\nonumber\\
        &= \wh{\square}_g g^{\mu\nu} + 2g^{\alpha\beta} g^{\lambda (\mu|}\pa_\alpha g^{|\nu)\delta} \pa_\beta g_{\lambda \delta}\label{app:ricciids.eq:ricciidentityraisedindices1},
    \end{align}
    where the last line follows from \eqref{app:ricciids.eq:diffmetidentity} as well as the symmetry of $g$. Applying \eqref{app:ricciids.eq:diffmetidentity} to the terms on the RHS of \eqref{app:ricciids.eq:riccicovarid} (relabelling summed indices where necessary) we have
    \begin{subequations}
        \begin{align}
            g^{\mu\mu'}g^{\nu\nu'} \Gamma^\lambda(g) \pa_\lambda g_{\mu'\nu'} &= -g^{\alpha\beta}g^{\lambda\delta}\pa_\lambda g^{\mu\nu}\pa_{(\alpha} g_{\beta)\delta} + \frac{1}{2}g^{\alpha\beta}g^{\lambda\delta}\pa_\lambda g^{\mu\nu}\pa_\delta g_{\alpha\beta},\\
            g^{\mu\mu'}g^{\nu\nu'}g_{\lambda(\mu'}\pa_{\nu')}\Gamma^\lambda(g) &= g^{\lambda (\mu|}\pa_\lambda \Gamma^{|\nu)}(g),\\
            2g^{\mu\mu'}g^{\nu\nu'}g^{\alpha\beta}g^{\lambda\delta}\pa_\alpha g_{\lambda(\mu'}\pa_{\nu')}g_{\beta\delta} &= -2g^{\lambda(\mu}g^{\nu)\delta}\pa_\delta g^{\alpha\beta}\pa_\alpha g_{\beta\lambda},\\
            - \frac{1}{2}g^{\mu\nu'}g^{\nu\nu'}g^{\alpha\beta}g^{\lambda\delta}\pa_{(\mu'}g_{\alpha\lambda}\pa_{\nu')}g_{\beta\delta} &= \frac{1}{2}g^{\lambda(\mu}g^{\nu)\delta}\pa^\lambda g_{\alpha\beta}\pa_\delta g_{\alpha\beta},\\
            g^{\mu\mu'}g^{\nu\nu'}g^{\alpha\beta}g^{\lambda\delta}\pa_\alpha g_{\mu'\lambda}\pa_\beta g_{\nu'\delta}  &= - g^{\alpha\beta}g^{\lambda(\mu|}\pa_\alpha g^{\nu)\delta}\pa_\beta g_{\lambda\delta},\label{app:ricciids.eq:ricciidentityraisedindices2}\\
            -g^{\mu\mu'}g^{\nu\nu'}g^{\alpha\beta}g^{\lambda\delta}\pa_\alpha g_{\mu'\delta}\pa_\lambda g_{\nu'\beta} &= 
            g^{\alpha\beta}g^{\lambda(\mu|}\pa_\alpha g^{|\nu)\delta}\pa_\delta g_{\beta\lambda}.
    \end{align}
    \end{subequations}
    Plugging these into \eqref{app:ricciids.eq:riccicovarid}, and noting the partial cancellation of terms in \eqref{app:ricciids.eq:ricciidentityraisedindices1} and \eqref{app:ricciids.eq:ricciidentityraisedindices2}, we obtain the desired identity.
\end{proof}
\footnotesize
\bibliographystyle{amsalpha}
\bibliography{main.bib}
\end{document}